\title{Epstein curves and holography of the Schwarzian action}
\setlist[enumerate]{topsep = 1ex, leftmargin=.5cm, itemsep= -3pt}
\setlist[itemize]{topsep = 1ex, leftmargin=.5cm, itemsep= -3pt}
\let\OLDthebibliography\thebibliography
\renewcommand\thebibliography[1]{
  \OLDthebibliography{#1}
  \setlength{\parskip}{1pt}
  \setlength{\itemsep}{2pt}
}
\newtheorem{thm}{Theorem}[section]
\newtheorem{cor}[thm]{Corollary}
\newtheorem{lem}[thm]{Lemma}
\newtheorem{prop}[thm]{Proposition}
\newtheorem{definition}[thm]{Definition}
\theoremstyle{definition} 
\newtheorem{ex}[thm]{Example}
\newtheorem{remark}[thm]{Remark}
\def\weld{\varphi}
\def\cayley{\mathfrak c}
\numberwithin{equation}{section}
\global\long\def\ii{\mathfrak{i}}
\newcommand{\abs}[1]{\left\lvert #1 \right \rvert}
\newcommand{\norm}[1]{\lVert #1 \rVert}
\newcommand{\mc}[1]{\mathcal{#1}}
\newcommand{\m}[1]{\mathbb{#1}}
\def\ie{i.e.,\,}
\renewcommand\Re{\operatorname{Re}}
\renewcommand\Im{\operatorname{Im}}
\def\PSL{\operatorname{PSL}}
\def\PSU{\operatorname{PSU}}
\def\Diff{\operatorname{Diff}}
\def\QS{\operatorname{QS}}
\def\WP{\operatorname{WP}}
\newcommand{\Ep}{\operatorname{Ep}}
\def\a{\alpha}
\def\g{\gamma}
\def\t{\theta}
\def\k{\kappa}
\def\s{\sigma}
\def\S{\Sigma}
\def\o{\omega}
\def\O{\Omega}
\def\vare{\varepsilon}
\def\Chat{\hat{\m{C}}}
\def\Rhat{\hat{\m{R}}}
 \def\AdS{\operatorname{AdS}}
  \def\dS{\operatorname{dS}}
  \def\CFT{\operatorname{CFT}}
\def\dS{\operatorname{dS}}
 \def\Isom{\operatorname{Isom}}
\def\dd{\mathrm{d}}
\def\1{\mathbf{1}}
\def\ee{\mathrm{e}}
\def \Isch{I_{\operatorname{Sch}}}
\newcommand{\emu}{{\vare \mu}}
\def \Mobn{\operatorname{M\ddot{o}b}_n(\m S^1)}
\author{Franco Vargas Pallete\thanks{\protect\url{fevargas@asu.edu} School of Mathematical and Statistical Sciences, Arizona State University, Tempe, AZ, USA} \qquad
Yilin Wang\thanks{\protect\url{yilin.wang@math.ethz.ch} Eidgen\"ossische Technische Hochschule, Z\"urich, ZH, Switzerland} \qquad Catherine Wolfram\thanks{\protect\url{catherine.wolfram@yale.edu} or \protect\url{cwolfram@math.ethz.edu}, Yale University, New Haven, CT, USA and Eidgen\"ossische Technische Hochschule, Z\"urich, ZH, Switzerland}}
\begin{document}

\maketitle

\begin{abstract}
We apply Epstein's construction of hypersurfaces in the hyperbolic disk $\mathbb D$ to prove identities between the Schwarzian action on $\operatorname{PSL}_2(\mathbb R)\backslash \mathrm{Diff}^3 (\mathbb S^1)$, the length of the corresponding Epstein curve in $\mathbb D$, and the area enclosed by the Epstein curve.  These results are inspired by the holographic duality between Jackiw--Teitelboim gravity and Schwarzian field theory. We also show that the horocycle truncation used in the construction of the Epstein curve defines a renormalized length of hyperbolic geodesics in $\mathbb D$, which coincides with the logarithm of the bi-local observable of Schwarzian field theory. The construction of the Epstein curve also extends to the coadjoint orbits $\operatorname{PSL}_2^{(n)}(\mathbb R)\backslash \mathrm{Diff}^3 (\mathbb S^1)$, and we obtain the same identities for the analog of the Schwarzian action on these coadjoint orbits.

Furthermore, we show that the Schwarzian action is the derivative of the Loewner energy of the welded Jordan curve. This energy is the action functional of Schramm--Loewner evolutions and holographically expressed as a renormalized volume in hyperbolic $3$-space. As a by-product of these relations, we obtain two immediate proofs of the non-negativity of the Schwarzian action using the isoperimetric inequality and the monotonicity of the Loewner energy.

\end{abstract}

\tableofcontents

\section{Introduction}

We write $\m D = \{z \in \m C \colon |z| < 1\}$, $\m D^* = \{z \in \Chat = \m C \cup \{\infty\} \colon |z| > 1\}$, and $\m S^1 = \partial \m D  = \partial \m D^*$. We endow $\m D$ with the complete hyperbolic metric, and $\m S^1$ is its boundary at infinity. Let  $\varphi : \m S^1 \to \m S^1$ be a $C^3$ diffeomorphism. 
The \emph{Schwarzian action} of $\varphi$ is defined as
\begin{equation}\label{eq:def_Sch_action}
    \Isch (\varphi) : = \int_{0}^{2\pi} \ee^{2\ii \t} \mc S [\varphi] (\ee^{\ii \t}) \,\dd \t
\end{equation}
where $\mc S [\varphi] = (\varphi''/\varphi')' - (\varphi''/\varphi')^2/2$ is the Schwarzian derivative of $\varphi$. This is the action of the Schwarzian field theory, the same as considered in, e.g.,  \cite{BLW_schwarzian,Stanford_Witten}, up to a multiplicative and an additive constant. See Remark~\ref{rem:positivity_Losev} and Remark~\ref{rem:action_form_k*}. 

In this paper, we relate the Schwarzian action of $\varphi: \m S^1\to \m S^1$ to the following geometric quantities in 2D: 
\begin{enumerate}
    \item it is the signed hyperbolic area enclosed by the Epstein curve associated with the pushforward metric $\varphi_* \dd \t$ on $\m S^1$;
    \item it is the asymptotic isoperimetric excess of the corresponding Epstein foliation of $\m D$;
    \item it is the infinitesimal variation of the Loewner energy of the Jordan curve, whose welding homeomorphism is given by $\varphi$, along its equipotential foliation; 
    \item it is the asymptotic change in  hyperbolic area under a conformal distortion extending $\varphi$ in a neighborhood of $\m S^1$, when $\varphi$ is analytic.
\end{enumerate}
Using the second and third relations, we obtain two new proofs of the non-negativity of $\Isch$ (also shown in, e.g., \cite{BLW_schwarzian,LosevLDP}) for circle diffeomorphisms, one using the isoperimetric inequality in the hyperbolic disk and the other using the monotonicity of the Loewner energy.  Furthermore, we extend Relation 1 to the dual Epstein curve in the de Sitter space $\operatorname{dS}_2$, to the Schwarzian action associated with the Virasoro coadjoint orbits $\operatorname{M\ddot{o}b}_n (\mathbb S^1) \backslash \operatorname{Diff}(\mathbb S^1)$, and to piecewise M\"obius circle diffeomorphisms which are only $C^{1,1}$.

Although our viewpoint is purely geometric, our work is particularly motivated by the proposed holographic relationship between Schwarzian field theory \cite{BLW_schwarzian,Belokurov_exact,Stanford_Witten} and Jackiw--Teitelboim (JT) gravity \cite{JACKIW1985343,Teitelboim1983GravitationAH,Maldacena_Stanford_Yang,Ferrari_JT,EngelsoyAdS,JensenChaos}. We comment on this further at the end of the introduction, see Section~\ref{sec:intro_comments}.

Another key object in the Schwarzian field theory is the \textit{bi-local observable} $\mc O(\varphi; u,v)$ associated with the circle diffeomorphism $\varphi$ and any two distinct points $u,v\in \m S^1$. 
Bi-local observables are predicted to correspond to boundary-anchored Wilson lines for JT gravity \cite{Blommaert}.
It was also predicted in physics \cite{MertensEtAl} and verified mathematically \cite{LosevCorrelations, Belokurov_corr} that the correlation functions of bi-local observables characterize the Schwarzian field theory.
Here, we observe that bi-local observables can be expressed as the length of hyperbolic geodesics using the same truncation as in the construction of Epstein curves, and the bi-local observables along the edges of any ideal triangulation of $\m D$ fully determine the diffeomorphism up to M\"obius transformations.

\subsection{Relation to Epstein curve}\label{subsec:intro_ep}

We now give a conceptual description of the Epstein curve used in our first main result. 
The construction of 1-dimensional Epstein curves in the disk is inspired by the analogous construction in higher dimensions. 

The most studied setup is for hyperbolic $3$-manifolds. In fact, a complete convex co-compact hyperbolic $3$-manifold $M$ determines the conformal structure on its conformal boundary $\partial_\infty M$ at infinity.  
This property appeals to theoretical physicists, providing a framework for describing an $\AdS_3/\CFT_2$ holographic correspondence \cite{KrasnovSchlenker_CMP,krasnov2000holography,TT_Liouville,Maldacena}, and is also interesting from the point of view of hyperbolic geometry and Teichm\"uller theory  \cite{McMullen2000, Brock_Canary_Minsky_ending_lamination}. 

Epstein~\cite{epstein-envelopes} 
proposed a way to further associate with each conformal metric on $\partial_\infty M$ a truncation of $M$ by horospheres, where the ``size'' of the horosphere $H_z$ centered at each $z \in \partial_\infty M$ is determined by the metric tensor at $z$. The Epstein hypersurface associated with the conformal metric is the hypersurface in $M$ that is tangent to all horospheres and is called ``the envelope'' of the horospheres.  See Section~\ref{subsec:epstein_general} for more details and Figure~\ref{fig:horocycle} for an illustration. 
Moreover, for hyperbolic $3$-manifolds, the renormalized volume enclosed by the Epstein surface can be viewed as a renormalized Einstein--Hilbert action of $M$, and is shown to have the same conformal anomalies as the Liouville action defined for the conformal metrics on $\partial_\infty M$ and has deep links to Teichm\"uller theory \cite{TT_Liouville,KrasnovSchlenker_CMP}.

\begin{figure}
    \centering
    \includegraphics[width=.45\textwidth]{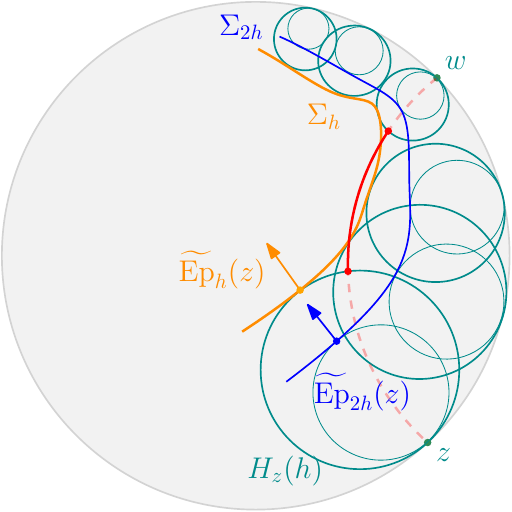}
    \caption{Illustration of part of the Epstein curves associated with a metric $h$ (orange) and $2h$ (blue), where $H_z$ is a horocycle determined by $h$ at $z$, the Epstein curve is the envelope of the family of corresponding horocycles $(H_z)_{z \in \m S^1}$, and $\widetilde{\Ep}_h (z) \in T^1 \m D$ is the outward unit normal with base point $\Ep_h (z) \in \m D$. The hyperbolic length of the red geodesic segment gives a positive renormalized length between $z, w \in \m S^1$ as in Proposition~\ref{prop:intro_length}.}
    \label{fig:horocycle}
\end{figure}

Although hyperbolic $3$-manifolds are by far the most studied, Epstein's construction is valid in hyperbolic spaces of any dimension. Given that the Schwarzian field theory shall be obtained from a dimensional reduction of the Liouville theory \cite{Mertens}, it appears natural to us to consider the Epstein construction in the two-dimensional hyperbolic disk $\m D$, whose conformal boundary is identified with $\m S^1$. 
Our first result is the following.

\begin{thm}[See  Theorem~\ref{thm:SchArea}]\label{thm:Ep_Sch}
    Let $\varphi : \m S^1 \to \m S^1$ be a $C^3$ diffeomorphism. Let $\dd \t$ be the Euclidean arc-length on $\m S^1$ with total length $2\pi$ and $h = \varphi_* \dd \t$ the pushforward of $\dd \t$ by $\varphi$. Let $A(\Ep_h)$ denote the signed hyperbolic area enclosed by the Epstein curve associated with  $h$ and $L(\Ep_h)$ the signed hyperbolic length of the Epstein curve, then we have
    \begin{align}\label{eq:Sch_L_A}
        \Isch(\varphi) = L(\Ep_h) =  -A(\Ep_h).
    \end{align}
\end{thm}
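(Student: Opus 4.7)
The plan is to reduce \eqref{eq:Sch_L_A} to two explicit computations, once $\Ep_h$ is written in coordinates in terms of $\varphi$. First I would produce such a parametrization. Writing $h = \varphi_\ast \dd \t$ as a conformal metric $e^{\sigma} |\dd z|$ on $\m S^1$, so that $\sigma$ is an explicit expression in $\log \varphi'$, the horocycle $H_z$ determined by $h$ at $z \in \m S^1$ has size (Busemann height) set by $\sigma(z)$, and hence its Euclidean center and radius in the disk model are explicit functions of $\sigma(z)$. The point $\Ep_h(z)$ is then obtained by imposing the envelope condition --- differentiating the defining equation of $H_z$ with respect to $z$ --- which yields $\Ep_h(e^{\ii\t})$, together with its unit-normal lift $\widetilde{\Ep}_h$ to $T^1 \m D$, as a rational expression in $\varphi'(\t)$ and $\varphi''(\t)$.

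For the identity $\Isch(\varphi) = L(\Ep_h)$, I would differentiate this parametrization in $\t$ and compute the signed hyperbolic length element $|\Ep_h'(\t)|_{\operatorname{hyp}} \dd\t$ --- signed because $\Ep_h$ may fail to be an immersion and $L$ is a signed length. The key identification is that after simplification this integrand matches $e^{2\ii \t} \mc S[\varphi](e^{\ii\t}) \dd\t$ modulo an exact differential on $\m S^1$; integrating over a full turn kills the exact part and yields $\Isch(\varphi)$ (with the imaginary part of $e^{2\ii\t} \mc S[\varphi](e^{\ii\t})$ automatically integrating to zero, since the left-hand side is real). For the identity $L(\Ep_h) = -A(\Ep_h)$, I would pick a primitive $\omega$ of the hyperbolic area form on $\m D$ --- for example $\omega = \tfrac{2r^2}{1-r^2}\, \dd\psi$ in polar coordinates $(r,\psi)$ --- so that by Stokes $A(\Ep_h) = \int_{\m S^1} \Ep_h^\ast \omega$, and then verify in the same coordinates that $\Ep_h^\ast \omega = -|\Ep_h'(\t)|_{\operatorname{hyp}} \dd\t$. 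This reduces the second equality to a finite calculation using the formula obtained in the first step.

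The main obstacle is keeping these computations tractable: the envelope condition combined with the hyperbolic density $4|\dd z|^2/(1-|z|^2)^2$ produces many terms in $\varphi', \varphi'', \varphi'''$, and one must isolate cleanly the combination $(\varphi''/\varphi')' - \tfrac12 (\varphi''/\varphi')^2$ defining the Schwarzian. A useful structural check comes from M\"obius equivariance: the three quantities $\Isch(\varphi)$, $L(\Ep_h)$, and $A(\Ep_h)$ all vanish on $\varphi \in \PSU(1,1)$ and transform in the same way under pre-composition by M\"obius maps of $\m S^1$. Since $\Isch$ is, up to scale, the unique such third-order integral functional on $\Diff(\m S^1)/\mob$, the two identities in \eqref{eq:Sch_L_A} must hold up to multiplicative constants, which can then be pinned down by testing on a single explicit one-parameter family of non-M\"obius diffeomorphisms.
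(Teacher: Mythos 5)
Your strategy for the first identity $\Isch(\varphi) = L(\Ep_h)$ is essentially the paper's: parametrize the envelope explicitly, compute the signed length element, and recognize the Schwarzian. One bookkeeping point you gloss over: the conformal factor of $h = \varphi_\ast \dd\t$ at $w = e^{\ii\t}$ is $|(\varphi^{-1})'(w)|$, so the direct computation produces $\mc S[\varphi^{-1}]$, not $\mc S[\varphi]$; the paper passes between the two via the chain-rule identity $\Isch(\varphi) = -\int_0^{2\pi} |\phi'|^{-1} e^{2\ii\t}\mc S[\phi](e^{\ii\t})\,\dd\t$ with $\phi = \varphi^{-1}$ (its Lemma~\ref{lem:Schwarzian_inverse}). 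This is a detail your computation would force you to confront, not a gap.

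The second identity is where your proposal breaks. The pointwise claim $\Ep_h^\ast\omega = -\dd\ell$ is false. Test it on $h_t = e^t\,\dd\t$: the Epstein curve is the circle $r = \tanh(t/2)$, so with your primitive $\omega = \frac{2r^2}{1-r^2}\dd\psi$ one gets $\Ep_{h_t}^\ast\omega = (\cosh t - 1)\,\dd\t$, whereas $-\dd\ell_t = -\sinh(t)\,\dd\t$. Since any $C^2$ metric of total length $2\pi$ arises as $\varphi_\ast\dd\t$ for some diffeomorphism $\varphi$, and one can arrange such a metric to coincide with $e^t\,\dd\t$ on an arc, the pointwise identity also fails within the class of metrics covered by the theorem. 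The obstruction is structural: $L(\Ep_h) = -A(\Ep_h)$ is \emph{not} a pointwise statement — it fails for $h_t$ with $t \neq 0$ — and it depends essentially on the global normalization $\int_{\m S^1} h = 2\pi$, which your argument never uses. The correct route (the paper's) is: (i) the pointwise identities $\dd\ell = \frac12(1+k^\ast)h$ and $k\,\dd\ell = \frac12(1-k^\ast)h$, so that $\dd\ell + k\,\dd\ell = h$ and hence $L(\Ep_h) + \int k\,\dd\ell = 2\pi$; and (ii) a Gauss--Bonnet statement $A(\Ep_h) = \int k\,\dd\ell - 2\pi$, valid for the signed area and signed curvature even at non-immersed points (which itself requires an argument, since the curve need not be embedded — the paper proves it by a homotopy/Stokes computation on the cylinder of Epstein curves interpolating between $h$ and a reference metric). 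Your fallback symmetry argument does not rescue this: the uniqueness of $\Isch$ as a ``third-order integral functional'' is itself unproved, and in any case it could at best fix constants in identities that are true, not repair one that is false as stated.
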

See Definitions~\ref{def:length_curv} and \ref{def:signed_A} for the notion of \emph{signed} length and enclosed area of Epstein curves.

In Section~\ref{subsec:dS}, we describe \emph{dual Epstein curves} in de Sitter space $\dS_2$ and express the Schwarzian action in terms of geometric quantities in $\dS_2$ (Proposition~\ref{prop:dualISch}).

We also consider piecewise M\"obius circle diffeomorphisms which have only $C^{1,1}$ regularity in Section~\ref{subsec:pw}. In this case, the Schwarzian action is interpreted in the distributional sense, the Epstein map sends each piece of the circular arc in $\m S^1$ to a point in $\m D$, and we join the points by arcs of horocycles and show that Theorem~\ref{thm:Ep_Sch} is still valid with this completion. 
This class of circle homeomorphisms has been studied extensively \cite{Penner1993UniversalCI,SWW_shears,Wang_optimization,RW} for its link to shear coordinates on the universal Teichm\"uller space  (see next section)  and to real projective structures on the punctured sphere \cite{BJRW_pwg}. 

\bigskip 

More generally, the Epstein map $\Ep_h \colon \m S^1 \to \m D$ is well-defined for any metric $h = \ee^{\s} \dd \t$ on $\m S^1$ where $\s \in C^1 (\m S^1, \m R)$. See Section~\ref{subsec:eps_1}. If $\s$ is $C^2$, then the curvature of the Epstein curve is well-defined.  
In other words, Theorem~\ref{thm:Ep_Sch} studies the special case where the metric $h$ has total length $2\pi$. We also consider metrics of the form $h_t = \ee^t h$, where $t \in \m R$, which then encompasses metrics of arbitrary total length. Moreover, for large enough $t_0$, the Epstein curves associated with the family $(h_t = \ee^t h)_{t \ge t_0}$ form an equidistant foliation in a neighborhood of $\m S^1$, whose leaves converge to $\m S^1$ as $t \to \infty$. See 
Section~\ref{sec:isoperimetric} for details and Figure~\ref{fig:example_1} for an example (see Appendix~\ref{sec:examples} for additional examples). 

\begin{figure}[ht]
    \centering
    \includegraphics[width=\textwidth]{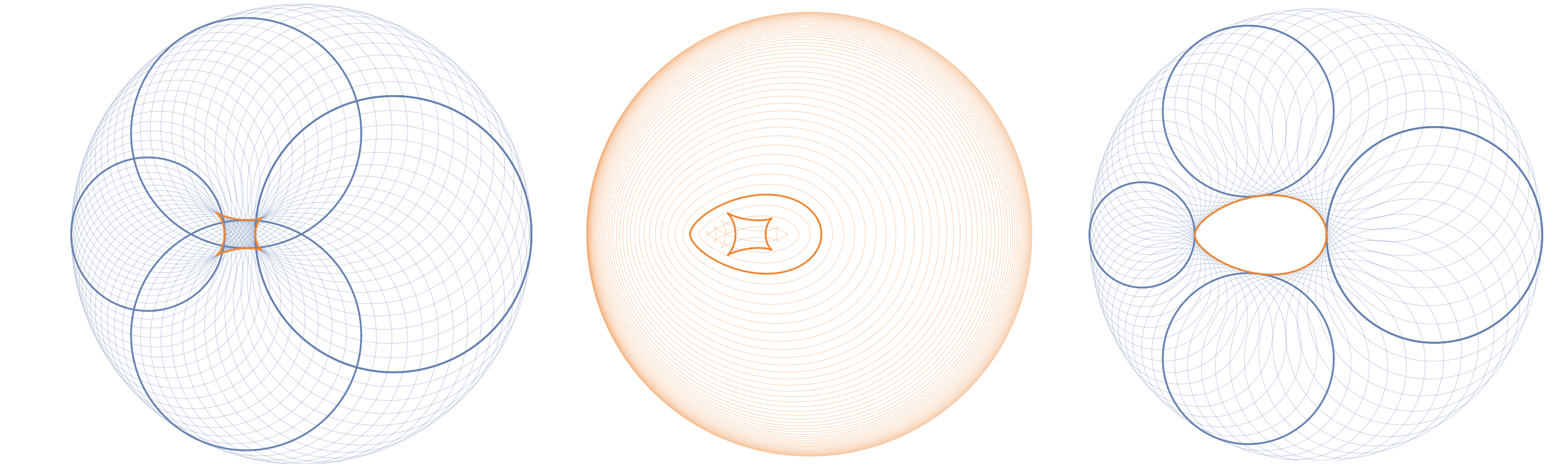}
    \caption{Left: Epstein curve (orange) and the horocycles (blue) associated with the metric $h = \varphi_* \dd \t$, where $\varphi(\theta) = \frac{1}{2}\sin(\theta) + \theta$. Right: Epstein curve and the horocycles associated with the metric $e^{1/2} h $. Middle: the family of Epstein curves associated with the metrics $\{h_t = \ee^{t} h\}_{t \ge 0}$ which form a foliation near $\m S^1$.}
    \label{fig:example_1}
\end{figure}

We obtain the following result:

\begin{lem}[See Lemma~\ref{lem:asymptoticLA}]  \label{lem:intro_t}
If $h$ is a $C^2$ metric of total length $2\pi$ and $h_t : = \ee^t h$ for $t \in \m R$, then
    \begin{enumerate}[label=(\alph*)]
        \item  $L(\Ep_h) + A(\Ep_h) = 0$;
        \item $L(\Ep_{h_t}) = 2\pi\sinh(t) + \ee^{-t} L(\Ep_h)$;
        \item $A(\Ep_{h_t}) = 2\pi(\cosh(t)-1) +\ee^{-t}A(\Ep_h)$.
    \end{enumerate}
\end{lem}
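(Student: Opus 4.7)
The plan has three parts.

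For (a), any $C^2$ metric $h = \ee^\sigma \dd \theta$ with $\int_0^{2\pi} \ee^\sigma \dd \theta = 2\pi$ can be written as $\varphi_* \dd \theta$ for some $C^3$ diffeomorphism $\varphi : \m S^1 \to \m S^1$ (obtained by integrating $\ee^\sigma$ to construct $\varphi^{-1}$, uniquely up to precomposition by a rotation). Applying Theorem~\ref{thm:Ep_Sch} to this $\varphi$ then gives $L(\Ep_h) = \Isch(\varphi) = -A(\Ep_h)$, which is (a).

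For (b) and (c), I would reduce to the classical length-area evolution under the outward normal flow in $\m D$. The crucial pointwise observation is that scaling $h \mapsto \ee^t h$ adds $t$ to the log-conformal factor $\sigma$, so the Epstein point moves along the outward normal by hyperbolic distance $t$:
\[
\Ep_{h_t}(z) = \exp_{\Ep_h(z)} \bigl( t \cdot \widetilde{\Ep}_h(z) \bigr), \qquad z \in \m S^1.
\]
Writing $L(t) := L(\Ep_{h_t})$, $A(t) := A(\Ep_{h_t})$, and $K(t)$ for the total signed geodesic curvature of $\Ep_{h_t}$, the standard Jacobi-field and first-variation identities along this flow give the coupled system $\dot L = K$, $\dot K = L$, $\dot A = L$, whose first two equations integrate to $L(t) = \cosh(t) L(0) + \sinh(t) K(0)$. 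A signed Gauss-Bonnet identity for the Epstein curve yields $K(0) = 2\pi + A(0)$. Substituting $A(0) = -L(0)$ from (a) collapses this to
\[
L(t) = \cosh(t) L(0) + \sinh(t)(2\pi - L(0)) = 2\pi \sinh(t) + \ee^{-t} L(0),
\]
which is (b); integrating $\dot A(t) = L(t)$ from $0$ to $t$ and again invoking (a) yields (c).

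The main anticipated obstacle is justifying the signed equidistant and Gauss-Bonnet identities outside the smooth embedded regime. The paper establishes that $(\Ep_{h_t})_{t \ge t_0}$ is an honest foliation for $t_0$ large enough, so everything above is classical hyperbolic geometry in that range. To extend the conclusion to all $t \in \m R$, my approach would be real-analyticity in $t$: both sides of (b) and (c) depend analytically on $t$ through the substitution $\sigma \mapsto \sigma + t$, so an identity valid on $[t_0, \infty)$ propagates to all of $\m R$.
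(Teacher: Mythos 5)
Your argument is correct and arrives at the same algebra as the paper, but it organizes the justification differently. For (a) you reduce to Theorem~\ref{thm:Ep_Sch} via the correspondence \eqref{eq:11_2pi} between length-$2\pi$ metrics and diffeomorphisms; the paper instead computes directly that $\dd\ell + k\,\dd\ell = h$ (from \eqref{eq:dl_k*}--\eqref{eq:kdl_k*}) and invokes the signed Gauss--Bonnet identity \eqref{eq:Epstein_Gauss-Bonnet}, so that $L+A=\int_{\m S^1}h-2\pi=0$; both are fine. For (b)--(c) the paper proves the \emph{pointwise} signed identities $\dd\ell_t=\cosh(t)\,\dd\ell+\sinh(t)\,k\,\dd\ell$ and $k_t\,\dd\ell_t=\sinh(t)\,\dd\ell+\cosh(t)\,k\,\dd\ell$ for all $t$ (Lemma~\ref{lem:asymptoticforms}, via $k^*_t=\ee^{-2t}k^*$), together with a signed Gauss--Bonnet proved by Stokes on the cylinder $\Psi(\ee^{\ii\theta},t)=\Ep_{h_t}(\ee^{\ii\theta})$; your ODE system $\dot L=K$, $\dot K=L$, $\dot A=L$ is exactly the $t$-derivative of those closed forms, so the two routes are mathematically equivalent. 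What your route buys is that in the embedded regime everything is classical; what it costs is the continuation step, and there you should be slightly more careful on two points. First, analyticity of $L(t)$ and $K(t)$ is immediate from Lemma~\ref{lem:length_curv_local} (they are of the form $a\ee^t+b\ee^{-t}$), but analyticity of the signed area $A(t)$ is not immediate from Definition~\ref{def:signed_A}; you need either the cylinder computation showing $\dot A(t)=L(t)$ in the signed sense for all $t$, or the signed Gauss--Bonnet $A(t)=K(t)-2\pi$ --- i.e.\ precisely one of the ``signed identities outside the embedded regime'' you were hoping to avoid. Second, your flow formula $\Ep_{h_t}(z)=\exp_{\Ep_h(z)}\bigl(t\,\widetilde{\Ep}_h(z)\bigr)$ has the wrong sign relative to the paper's conventions: by Theorem~\ref{thm:basic_epstein}(2) the flow is $\mathfrak g_{-t}$, so the Epstein point moves in the direction $-\widetilde{\Ep}_h$ (toward $\m S^1$, into the horoball), and it is with this sign that the first-variation formulas give $\dot L=+K$ and $\dot A=+L$ for curvature measured with respect to $\widetilde{\Ep}_h$ (check against Example~\ref{ex:Ep_point}: $L(t)=2\pi\sinh t$, $K(t)=2\pi\cosh t$). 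Your final formulas use the correct signs, so this is a presentational slip rather than an error, but as written the stated flow direction and the stated ODE are inconsistent with each other.
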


When $h = \dd \t$ and $t > 0$, then  $L(\Ep_h) = A(\Ep_h) = 0$ and the image of $\Ep_{h_t}$ is the circle of hyperbolic radius $t$ centered at $0 \in \m D$ which has length $L(\Ep_{h_t}) = 2\pi\sinh(t)$ and bounds a disk of area $A(\Ep_{h_t}) = 2\pi(\cosh(t)-1)$.  See Example~\ref{ex:Ep_point}.

\begin{cor}\label{cor:fol}
  Let $\varphi : \m S^1 \to \m S^1$ be a $C^3$ diffeomorphism and $h = \varphi_* \dd \t$. We have
  $$\Isch (\varphi) = \lim_{t \to \infty} \ee^t \left(L(\Ep_{h_t}) - 2\pi \sinh (t)\right) = \lim_{t \to \infty} \ee^t \left( 2\pi (\cosh (t)-1) - A(\Ep_{h_t})\right).$$
\end{cor}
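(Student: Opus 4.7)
The plan is to observe that this corollary is an immediate algebraic consequence of Theorem~\ref{thm:Ep_Sch} combined with Lemma~\ref{lem:intro_t}, and in fact the two expressions being limited over are already constant in $t$, so no genuine limiting argument is required.

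First I would note that since $\varphi$ is a $C^3$ diffeomorphism, $h = \varphi_* \dd \t$ is a $C^2$ metric on $\m S^1$ of total length $2\pi$, so Lemma~\ref{lem:intro_t} applies. From part (b) of that lemma, I rearrange to get
\begin{equation*}
L(\Ep_{h_t}) - 2\pi\sinh(t) = \ee^{-t} L(\Ep_h),
\end{equation*}
and multiplying both sides by $\ee^{t}$ gives $\ee^{t}(L(\Ep_{h_t}) - 2\pi\sinh(t)) = L(\Ep_h)$ for every $t \in \m R$. Applying Theorem~\ref{thm:Ep_Sch}, the right-hand side equals $\Isch(\varphi)$, so the first limit exists trivially and equals $\Isch(\varphi)$.

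For the second equality, I would do the analogous manipulation using part (c): rearranging yields
\begin{equation*}
2\pi(\cosh(t) - 1) - A(\Ep_{h_t}) = -\ee^{-t} A(\Ep_h),
\end{equation*}
so $\ee^{t}(2\pi(\cosh(t)-1) - A(\Ep_{h_t})) = -A(\Ep_h)$, which again by Theorem~\ref{thm:Ep_Sch} equals $\Isch(\varphi)$.

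There is essentially no obstacle here — this corollary packages the statement of Theorem~\ref{thm:Ep_Sch} in a form that exhibits $\Isch(\varphi)$ as an asymptotic isoperimetric excess: the leading terms $2\pi\sinh(t)$ and $2\pi(\cosh(t)-1)$ are precisely the length and area of the hyperbolic circle of radius $t$ (see Example~\ref{ex:Ep_point}, corresponding to $h = \dd\t$), so the corollary expresses $\Isch$ as the first-order deviation of $L(\Ep_{h_t})$ and $A(\Ep_{h_t})$ from the round model at scale $\ee^{-t}$. The only ``work'' is to recognize that the exactness of relations (b) and (c) in Lemma~\ref{lem:intro_t} makes the limits degenerate, and to emphasize to the reader that this is what justifies the interpretation as ``asymptotic isoperimetric excess'' foreshadowed in the introduction.
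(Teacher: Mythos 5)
Your proposal is correct and matches the paper's (implicit) argument: the corollary is stated without separate proof precisely because it follows by rearranging Lemma~\ref{lem:intro_t}(b),(c) and invoking Theorem~\ref{thm:Ep_Sch}, exactly as you do. Your observation that the quantities inside the limits are in fact constant in $t$ is accurate and a fair point to make explicit.
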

The \emph{isoperimetric profile} is the function $J : \m R_+ \to \m R_+$, 
$$J(A) = 2\pi\sinh\left(\cosh^{-1}\left(\frac{A}{2\pi}+1\right)\right),$$
such that any domain in $\m D$ with hyperbolic area $A$ has boundary length larger or equal to $J(A)$. The equality is realized only by round disks.

Our next result expresses the Schwarzian action as an asymptotic isoperimetric excess along the Epstein equidistant foliation. We obtain the non-negativity of $\Isch$ using the isoperimetric inequality for the hyperbolic disk. We believe similar proofs are known in the physics literature.
\begin{thm}[See Theorem~\ref{thm:excess} and Corollary~\ref{cor:non-negativeSch_isoperimetric}]
    Let $\varphi:\m S^1\rightarrow\m S^1$ be a $C^3$ diffeomorphism and $h_t=\ee^t \,\varphi_* \dd \t$, then
    \begin{align}
        2\Isch(\varphi) = \lim_{t\rightarrow+\infty} \ee^t\Big( L(\Ep_{h_t}) - J(A(\Ep_{h_t}))  \Big).
    \end{align}
   Moreover, $\Isch (\varphi)\ge 0$ with equality if and only if $\varphi$ is M\"obius.
\end{thm}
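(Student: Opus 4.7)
The plan is to obtain the limit from an \emph{exact identity} exploiting the closed form of the hyperbolic isoperimetric profile: the relation $J(A)=2\pi\sinh(\cosh^{-1}(A/(2\pi)+1))$ together with $\sinh^{2}=\cosh^{2}-1$ yields
\begin{align*}
J(A)^{2}=A(A+4\pi).
\end{align*}
Writing $L_t:=L(\Ep_{h_t})$, $A_t:=A(\Ep_{h_t})$, $L_0:=L(\Ep_h)=\Isch(\varphi)$, and $A_0:=A(\Ep_h)$, Lemma~\ref{lem:intro_t}(a) supplies $A_0=-L_0$, while parts (b) and (c) give
\begin{align*}
L_t = 2\pi\sinh(t)+\ee^{-t}L_0, \quad A_t = 2\pi(\cosh(t)-1)+\ee^{-t}A_0, \quad A_t+4\pi = 2\pi(\cosh(t)+1)+\ee^{-t}A_0.
\end{align*}
Expanding $L_t^{2}$ and $A_t(A_t+4\pi)$ and using $\sinh(t)+\cosh(t)=\ee^{t}$ together with $A_0=-L_0$, the cross terms collapse and the $\ee^{-2t}$ contributions cancel (since $L_0^{2}=A_0^{2}$), leaving the key identity
\begin{align*}
L_t^{2}-J(A_t)^{2}=4\pi\,\Isch(\varphi) \qquad \text{for every } t\in\m R.
\end{align*}

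Rationalizing gives $\ee^{t}\bigl(L_t-J(A_t)\bigr) = 4\pi\,\Isch(\varphi)\,\ee^{t}/(L_t+J(A_t))$. Since both $L_t$ and $J(A_t)$ behave like $2\pi\sinh(t)\sim\pi\ee^{t}$ as $t\to\infty$, the denominator satisfies $(L_t+J(A_t))/\ee^{t}\to 2\pi$, whence the limit equals $2\,\Isch(\varphi)$ as claimed.

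For non-negativity, by the foliation property recalled before Lemma~\ref{lem:intro_t}, $\Ep_{h_t}$ is an embedded simple closed curve bounding a domain in $\m D$ for all sufficiently large $t$, so signed and unsigned length/area coincide and the hyperbolic isoperimetric inequality yields $L_t\ge J(A_t)$ in this range; in particular each pre-limit term is non-negative, giving $\Isch(\varphi)\ge 0$. The main obstacle will be the equality case: $\Isch(\varphi)=0$ forces $L_t=J(A_t)$ for all large $t$ by the boxed identity, so the rigidity in the hyperbolic isoperimetric inequality compels each $\Ep_{h_t}$ to be a round hyperbolic circle. Using the equivariance of the Epstein construction under Möbius transformations, one can normalize so that these concentric round circles are centered at $0\in\m D$; comparison with the standard case $h=\dd\t$, which produces exactly this concentric foliation (cf.\ Example~\ref{ex:Ep_point}), together with the fact that $h$ is recovered from its Epstein foliation (Section~\ref{subsec:eps_1}), then forces $h=\dd\t$, i.e., $\varphi\in\mob$.
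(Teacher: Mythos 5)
Your proposal is correct and follows essentially the same route as the paper: both rest on Lemma~\ref{lem:asymptoticLA} and on rationalizing $L_t - J(A_t)$ using $J(A)^2 = A(A+4\pi)$, and both settle non-negativity and the equality case via the hyperbolic isoperimetric inequality, its rigidity, and the naturality/geodesic-flow properties of the Epstein construction. Your observation that $L_t^2 - J(A_t)^2 = 4\pi\,\Isch(\varphi)$ holds \emph{exactly} (for $t$ large enough that $A_t \ge 0$) is a pleasant streamlining of the paper's limit computation, but not a different argument.
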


\bigskip 

We can also use the Epstein construction to give a geometric interpretation of bi-local observables, which are important observables in the Schwarzian field theory. See Section~\ref{sec:bi-local} for more background.
More precisely, for $\varphi\in \Diff^1(\m S^1)$, we let $(H_z)_{z\in \m S^1}$ denote the horocycles associated with $\varphi_*\dd \theta$. For $u,v\in \m S^1$ distinct, we define the \textit{renormalized length} $\mathrm{RL}_{\varphi}(u,v)$ to be the signed hyperbolic distance from $H_{\varphi(u)}$ to $H_{\varphi(v)}$ along the geodesic $(\varphi(u),\varphi(v))$. The sign is positive if $H_{\varphi(u)}, H_{\varphi(v)}$ are disjoint and negative otherwise. The bi-local observables of Schwarzian field theory correspond to the exponential of the renormalized length.

\begin{prop}[See Proposition~\ref{prop:bilocal_is_length}]\label{prop:intro_length}
    Fix $\varphi\in \Diff^1(\m S^1)$ and $u,v\in \m S^1$ distinct. 
    The bi-local observable and renormalized length are related by: 
    \begin{align*}
        \mc O(\varphi;u,v)^2 := \frac{|\varphi'(u)\varphi'(v)|}{|\varphi(u)-\varphi(v)|^2} =\frac{1}{4}\exp(-\mathrm{RL}_{\varphi}(u,v)). 
    \end{align*}
\end{prop}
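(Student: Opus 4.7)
The plan is to compute the renormalized length directly in the disk model using the Busemann function $B_z(\zeta):=\log\bigl(|z-\zeta|^2/(1-|\zeta|^2)\bigr)$ on $\m D$, normalized by $B_z(0)=0$. The Epstein construction recalled in Section~\ref{subsec:eps_1} identifies the horocycle at $z\in\m S^1$ associated with a metric $\ee^{\sigma}\dd\theta$ with the level set $\{B_z=-\sigma(z)\}$. For $h=\varphi_*\dd\theta$ the density at $\varphi(u)$ in the standard coordinate is $1/|\varphi'(u)|$, so $\sigma(\varphi(u))=-\log|\varphi'(u)|$ and $H_{\varphi(u)}=\{B_{\varphi(u)}=\log|\varphi'(u)|\}$; I set $a:=\varphi(u)$, $b:=\varphi(v)$, $c_a:=\log|\varphi'(u)|$, $c_b:=\log|\varphi'(v)|$.

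Next I would parametrize the geodesic $(a,b)$ by hyperbolic arc-length $s$ in the direction $a\to b$. A short calculation (immediate after mapping $(a,b)$ to a diameter of $\m D$) shows that $\tfrac{\dd}{\dd s}B_a=1$ and $\tfrac{\dd}{\dd s}B_b=-1$ along this geodesic, so $B_a+B_b$ is a constant $K_{a,b}$ on $(a,b)$ and the signed distance from $H_a\cap(a,b)$ to $H_b\cap(a,b)$ equals $K_{a,b}-c_a-c_b$. The sign convention of the proposition is automatic: since $B_a+B_b\ge K_{a,b}$ on $\m D$ (equality on the geodesic), the two horodisks overlap exactly when there is some point with $B_a<c_a$ and $B_b<c_b$, equivalently $K_{a,b}<c_a+c_b$, i.e., $K_{a,b}-c_a-c_b<0$.

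The main step is identifying $K_{a,b}=\log(|a-b|^2/4)$. I would obtain this by taking the limit $\zeta\to a$ along the geodesic: since $(a,b)$ is orthogonal to $\m S^1$ at $a$, its tangent there is radial, so $\zeta=(1-\varepsilon)a+o(\varepsilon)$ as $\varepsilon\to 0^+$, and the three quantities $|a-\zeta|^2\sim\varepsilon^2$, $|b-\zeta|^2\to|a-b|^2$, $1-|\zeta|^2\sim 2\varepsilon$ combine to give $B_a(\zeta)+B_b(\zeta)\to\log(|a-b|^2/4)$. Alternatively, rotational invariance reduces to the symmetric case $a=\ee^{\ii\alpha}$, $b=\ee^{-\ii\alpha}$ and one evaluates at the real point of the geodesic. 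I expect this identification of $K_{a,b}$ to be the only delicate step; a full M\"obius reduction to antipodal $a,b$ is available but requires careful bookkeeping of how $B_z$ shifts under a change of reference point in $\m D$, which is no shorter than the limit above.

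Assembling the pieces gives
\[
\mathrm{RL}_\varphi(u,v)=K_{a,b}-c_a-c_b=\log\frac{|\varphi(u)-\varphi(v)|^2}{4\,|\varphi'(u)\varphi'(v)|},
\]
so $\tfrac14\exp(-\mathrm{RL}_\varphi(u,v))=|\varphi'(u)\varphi'(v)|/|\varphi(u)-\varphi(v)|^2=\mc O(\varphi;u,v)^2$, which is the claim.
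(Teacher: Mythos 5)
Your argument is correct, and it takes a genuinely different route from the paper. The paper proceeds in three steps: it first computes $\mathrm{RL}_{\mathrm{Id}}(u,v)=-\log(4/|u-v|^2)$ explicitly via the Cayley map to the upper half-plane (Lemma~\ref{lem:identity_bilocal}), then establishes the naturality statement $H_{\varphi(z)}=\a_z(H_z^0)$ for an osculating M\"obius map $\a_z$ (Lemma~\ref{lem:naturality2}), and finally reduces the general case to a $\varphi$ fixing $u,v$ by post-composing with a M\"obius map and using $\a'(z)\a'(w)=(\a(z)-\a(w))^2/(z-w)^2$. You instead work with the Busemann functions $B_z$ directly: the identification $H_z=\{B_z=-\sigma(z)\}$, the constancy of $B_a+B_b$ on the geodesic $(a,b)$, and the evaluation $K_{a,b}=\log(|a-b|^2/4)$ by a radial limit give the formula for general $\varphi$ in one stroke, with the sign convention falling out of $B_a+B_b\ge K_{a,b}$. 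I checked the one step where a normalization error could hide: with the paper's convention that $h=\ee^{\sigma}\dd\theta$ is a length element, the explicit horocycle in Section~\ref{subsec:eps_1} has its point nearest the origin at $\tanh(\sigma/2)z$, where indeed $B_z=-\sigma$, so your level-set identification is consistent with \eqref{eq:def_horo}. What each approach buys: yours is shorter, avoids the M\"obius reduction entirely, and makes transparent that $\mathrm{RL}_\varphi$ is the Gromov-product constant $K_{a,b}$ shifted by the horocycle decorations (i.e., a $\log\Lambda$-length in Penner's sense, cf.\ Remark~\ref{rem:log_lambda}); the paper's route reuses the naturality lemma, which is needed again in Section~\ref{subsec:pw}, and keeps everything at the level of explicit M\"obius maps.
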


Renormalized length (and bi-local observables) are naturally viewed as functions on the space of geodesics in $\m D$, which are parametrized by their endpoints $u,v\in \m S^1$. This is called the \textit{kinematic space} $\mc K = \m S^1\times \m S^1\setminus \{(u,u):u\in \m S^1\}$. Furthermore, each $\varphi$ gives a Lorentzian metric on $\mc K$
\begin{align}
    \mc O(\varphi;u,v)^2 \, \dd u \,\dd v = \frac{1}{4}\exp(-\mathrm{RL}_{\varphi}(u,v)) \,\dd u \,\dd v
\end{align}
 with constant curvature, hence solving the Liouville equation and isometric to the de Sitter space $\dS_2$.

In Section~\ref{subsec:inverse_pb}, we show that it suffices to know $\{\mc O(\varphi; u,v)\}$ where $(u,v)$ runs over the set of edges of any ideal triangulation of $\m D$ (see, e.g., Figure~\ref{fig:farey}) to fully recover $\varphi$ up to M\"obius transformations (Proposition~\ref{prop:ideal_triangulation}). This is closely related to the parametrization of the decorated universal Teichm\"uller space by log $\Lambda$-lengths by Penner \cite{PennerBook,Penner1993UniversalCI} and the diamond shears coordinates studied by the last two authors \cite{SWW_shears}.

\bigskip
Now let us turn to another variant of the Schwarzian action.  For $n \in \m Z_{>0}$, let
$$\Isch^n (\phi) : = \Isch (\phi) + \frac{1-n^2}{2} \int_0^{2\pi} |\phi'(\ee^{\ii \t})|^2 \, \dd \t,$$ 
which is invariant under post-composition by elements of the subgroup $\Mobn \subset \Diff(\m S^1)$ obtained by conjugating $\PSU(1,1)$ by the power map $z \mapsto z^n$.  In other words,  $\Mobn  \simeq \PSL^{(n)}(2,\m R)$ is the subgroup whose Lie algebra is spanned by $(L_{-n}, L_0, L_n)$, where $(L_k = \ii \ee^{\ii k \t} \dd /\dd \t)_{k\in \m Z}$ is a basis of the Lie algebra of $\Diff (\m S^1)$.
Therefore, $\Isch^n (\phi)$ is of particular interest in the study of the special Virasoro coadjoint orbits $\Mobn \backslash \Diff(\m S^1)$ \cite{witten88,Alekseev}. 

We show that this variant of the Schwarzian action also has a geometric interpretation in $\m D$.  
Let $\phi:\m S^1 \rightarrow \m S^1$ a $C^3$ diffeomorphism. Let $\varphi (z) : =\phi (z)^n$ be the $n$-fold cover of $\m S^1$, where $\phi (z)^n$ is the $n$-th power of $\phi(z)$ for multiplication in $\m S^1$ (as complex numbers). Let  $h=\varphi_*(\dd\t)$ be the associated multi-valued metric on $\m S^1$ and $\Ep_h$ the associated Epstein curve. 

\begin{thm}[See Theorem~\ref{thm:IschnLength}]\label{thm:intro_n_fold}
    We have the following identity:
    \[
    \Isch^n(\phi) = L(\Ep_h) = -A(\Ep_h)-2\pi(n-1).
    \]
\end{thm}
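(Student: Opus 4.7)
The plan is to factor $\varphi(z) = p_n(\phi(z))$ with $p_n(z) := z^n$ and reduce to Theorem~\ref{thm:Ep_Sch} in three steps.

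\textbf{Step 1: $\Isch(\varphi) = \Isch^n(\phi)$.} A direct calculation gives $\mc S[p_n](w) = (1-n^2)/(2w^2)$, so the Schwarzian chain rule yields
\[
\mc S[\varphi](z) = \frac{(1-n^2)(\phi'(z))^2}{2\phi(z)^2} + \mc S[\phi](z).
\]
For an orientation-preserving diffeomorphism $\phi : \m S^1 \to \m S^1$, writing $\phi(e^{i\theta}) = e^{i\psi(\theta)}$ gives $\phi'(e^{i\theta}) = e^{i(\psi(\theta)-\theta)}\psi'(\theta)$, hence the identity $e^{i\theta}\phi'(e^{i\theta})/\phi(e^{i\theta}) = \psi'(\theta) = |\phi'(e^{i\theta})|$. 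Integrating the displayed identity against $e^{2i\theta}\,d\theta$ thus reduces the cross term to $\int_0^{2\pi}|\phi'(e^{i\theta})|^2\,d\theta$ and gives $\Isch(\varphi) = \Isch^n(\phi)$.

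\textbf{Step 2: $L(\Ep_h) = \Isch(\varphi)$.} Although $\varphi$ is an $n$-to-$1$ cover of $\m S^1$ and $h$ is multi-valued, the Epstein construction and its signed length are local in the source parameter $\theta$, and $\mc S[\varphi]$ remains a well-defined $2\pi$-periodic function on $\m S^1$. Lifting to the universal cover, $\varphi$ becomes a genuine diffeomorphism $\hat\varphi : \m R \to \m R$ with $\hat\varphi(\theta+2\pi) = \hat\varphi(\theta) + 2\pi n$, and the local integral identity derived in the proof of Theorem~\ref{thm:Ep_Sch} applies on any fundamental domain, giving
\[
L(\Ep_h) = \int_0^{2\pi} e^{2i\theta}\,\mc S[\varphi](e^{i\theta})\,d\theta = \Isch(\varphi).
\]

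\textbf{Step 3: $L(\Ep_h) + A(\Ep_h) = -2\pi(n-1)$ via Gauss--Bonnet.} The image of $\Ep_h$ is an immersed closed curve in $\m D$ whose total turning has absolute value $n$, since $\varphi$ winds $n$ times around $\m S^1$ as its source variable traverses once. The special case $\phi = \mathrm{id}$, $\varphi = p_n$ is illuminating: here $h = n^{-1} d\theta$ is constant and $\Ep_h$ is the hyperbolic circle of radius $\log n$ centered at $0$ traversed $n$ times with reversed orientation (matching Lemma~\ref{lem:intro_t}(b)--(c) with $t = -\log n$), yielding $L = -\pi(n^2-1)$ and $A = \pi(n-1)^2$, consistent with both $L + A = -2\pi(n-1)$ and $L = \Isch^n(\mathrm{id}) = \pi(1-n^2)$. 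In general, Gauss--Bonnet in $\m D$ (Gauss curvature $-1$) applied to this immersed curve contributes an additional $-2\pi(n-1)$ relative to the single-valued identity $L + A = 0$ from Lemma~\ref{lem:intro_t}(a), yielding the claim.

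\textbf{Main obstacle.} Step 3 carries the genuine work: the signed quantities $L$ and $A$ must be computed on an immersed Epstein curve whose local orientation can be reversed relative to its source parametrization, and the Gauss--Bonnet accounting for the extra $n-1$ windings must be reconciled with the paper's conventions for signed length and signed area from Definitions~\ref{def:length_curv} and~\ref{def:signed_A}. By contrast, Steps 1 and 2 reduce cleanly to the explicit pole $\mc S[p_n](w) = (1-n^2)/(2w^2)$ and to the pointwise nature of the Epstein construction in the source parameter $\theta$.
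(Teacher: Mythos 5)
Your overall route is the same as the paper's: express $\Isch^n(\phi)$ as $\Isch(\varphi)$ for the $n$-fold cover $\varphi = \phi^n$ via the chain rule and the pole of $\mc S[z\mapsto z^n]$, identify the integrand of $\Isch(\varphi)$ with the signed arc-length element of $\Ep_h$ by locality of the computation in Theorem~\ref{thm:SchArea}, and then get the area identity from a Gauss--Bonnet argument with the constant $2\pi$ replaced by $2\pi n$. Steps 1 and 2 are correct and essentially identical to the paper's.

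Step 3, however, is asserted rather than proved, and you say so yourself (``the main obstacle''). Two concrete ingredients are missing. First, you never use the identity $\dd\ell + k\,\dd\ell = h$ (equations \eqref{eq:dl_k*} and \eqref{eq:kdl_k*}), which is pointwise and hence valid branch-by-branch for the multi-valued metric; integrating it over the source parameter gives $L(\Ep_h) + \int_{\m S^1} k\,\dd\ell = \int_0^{2\pi} \varphi^*h = 2\pi$ (note: $2\pi$, not $2\pi n$ -- this is where the total length of $\dd\t$ enters). Second, you need the extended Gauss--Bonnet formula $A(\Ep_h) = \int_{\m S^1} k\,\dd\ell - 2\pi n$. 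Your appeal to ``Gauss--Bonnet applied to this immersed curve'' of total turning $n$ does not directly apply: by Proposition~\ref{prop:non-immersed} (and its obvious extension), the Epstein curve of a generic $\phi$ has non-immersed points, which is precisely why the paper proves Gauss--Bonnet for signed length and area via the homotopy/Stokes argument of Lemma~\ref{lem:Gauss_Bonnet}. That argument transfers verbatim to multi-valued metrics with a fixed number of branches, reducing the formula to a single base case -- and your explicit computation for $\phi = \mathrm{id}$ (correctly giving $L = -\pi(n^2-1)$, $A = \pi(n-1)^2$ for the $n$-times-traversed circle of radius $\log n$) is exactly that base case. Once both ingredients are in place, subtracting gives $L(\Ep_h) + A(\Ep_h) = 2\pi - 2\pi n = -2\pi(n-1)$, which is the paper's conclusion. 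So the gap is fillable with machinery already in the paper, but as written the key step of your argument is a claim, not a derivation.
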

See Figure~\ref{fig:n_fold_cover} and Figure~\ref{fig:n_fold_cover2} for examples.

\begin{figure}[ht]
    \centering
    \includegraphics[width=\linewidth]{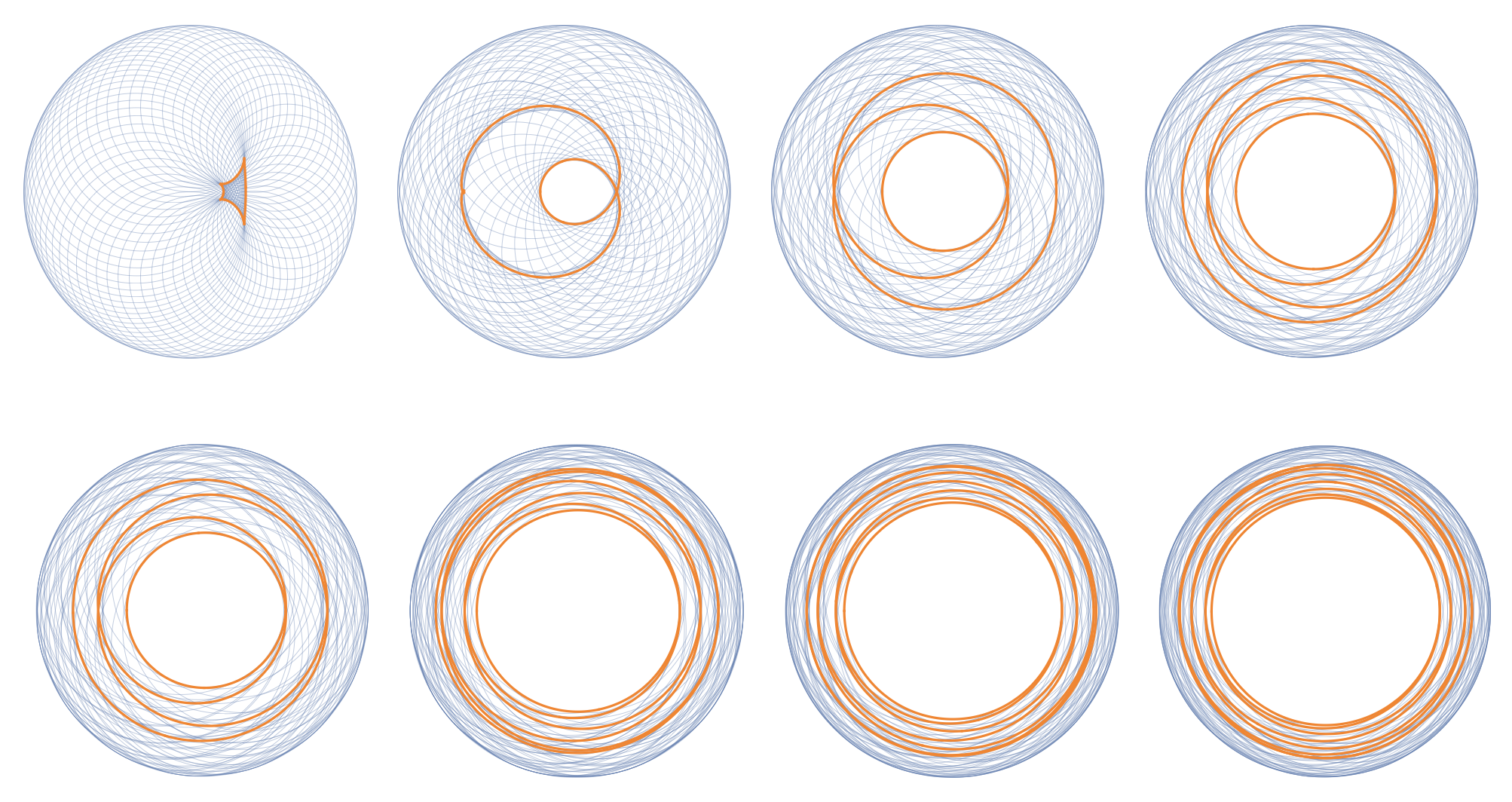}
    \caption{The horocycles and Epstein curve for $\varphi_*\dd \theta$, where $\varphi= (\phi(z))^n$, and $\phi^{-1}(\theta) = \frac{1}{2} \sin(\theta) + \theta$, for $n$ from $1$ to $8$.}
    \label{fig:n_fold_cover}
\end{figure}

\subsection{Relation to Loewner energy}
The Schwarzian action $\Isch$ is invariant under post-composition by M\"obius transformations in $\PSU(1,1)$, hence, it is defined on the space $\PSU(1,1)\backslash \Diff^3 (\m S^1)$. This is a subspace of the \emph{universal Teichm\"uller space} $T(1): = \PSU(1,1)\backslash \QS (\m S^1)$ which is modeled on the space of quasisymmetric homeomorphisms of the circle. The space $T(1)$ is further identified with the space of quasicircles via conformal welding:
$$ \PSL(2,\m C) \backslash \{\text{Quasicircles in } \Chat\} \xleftrightarrow[1:1]{\text{conformal welding}}  \PSU(1,1)\backslash \QS (\m S^1)/\PSU(1,1)$$
where $\PSL(2,\m C)$ denotes the group of M\"obius transformations of $\Chat$.

Another important quantity --- the Loewner energy $I^L$ which is also known as the universal Liouville action --- arises as a K\"ahler potential for the unique homogeneous K\"ahler metric on $T(1)$ and has been studied extensively \cite{TT06,bbvw,W2,michelat2021loewner,johansson_viklund,bishop-wp}. The Loewner energy is finite on the space of Weil--Petersson quasisymmetric circle homeomorphisms $\WP(\m S^1) \subset \QS (\m S^1)$, which contains all $C^{1.5+}$ regular circle homeomorphisms. Moreover, $I^L$ is bi-invariant under the action of $\PSU(1,1)$, and therefore is also considered a function on the space of Jordan curves modulo $\PSL(2,\m C)$.
See Section~\ref{subsec:Loewner} for more details.

Our next result shows that $\Isch$ is equal to a derivative of $I^L$.
For this, let $\g$ be a $C^{3,\alpha}$ curve for some $\a > 0$, $f$ (resp., $g$) be a conformal map $\m D \to \O$ (resp., $\m D^* \to \O^*$), where $\O$ (resp., $\O^*$) is the connected component of $\Chat \smallsetminus \g$ which does not contain $\infty$ (resp., which contains $\infty$). 
By Kellogg's theorem, $f$ and $g$ extend to $C^{3,\a}$ diffeomorphisms on the closures $\overline{\m D}$ and $\overline{\m D^*}$. In particular, the welding homeomorphism of $\g$, $\varphi_\g := g^{-1} \circ f |_{\m S^1}$ is a $C^{3,\a}$ diffeomorphism.

\begin{thm}[See Theorem~\ref{thm:var_IL_Sch}]\label{thm:var_IL_Sch_intro}
Let $\g$ be a $C^{3,\a}$ Jordan curve for some $\a > 0$. Using the notations above, we consider the family of \emph{equipotentials} $\Big(\g^\vare := f((1-\vare) \m S^1)\Big)_{0 \le \vare <1}$ of $\O$. 
We have 
    $$\frac{\dd I^{L} (\g^{\vare})}{\dd \vare}\Big|_{\vare = 0} = - \frac{2}{\pi} \, \Isch (\varphi_\g). $$
\end{thm}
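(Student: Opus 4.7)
The plan is to apply the Takhtajan--Teo formula for the Loewner energy and differentiate its three terms along the equipotential family, then match the result against the expression for $\Isch(\varphi_\g)$ produced by the chain rule for the Schwarzian. Starting from
\[
12\pi I^L(\g) = \int_{\m D}\left|\frac{f''}{f'}\right|^2 \dd A + \int_{\m D^*}\left|\frac{g''}{g'}\right|^2 \dd A - 4\pi \log\left|\frac{g'(\infty)}{f'(0)}\right|,
\]
with normalizations $f(0) = 0$ and $g(\infty) = \infty$, the interior term is the easy one: the inner uniformization of $\g^\vare$ is $f_\vare(z) = f((1-\vare)z)$, so after a change of variables the first integral reads $\int_{(1-\vare)\m D}|f''/f'|^2\,\dd A$, whose derivative at $\vare = 0$ is the boundary integral $-\int_0^{2\pi}|f''/f'|^2(\ee^{\ii\t})\,\dd\t$; the $\log|f_\vare'(0)|$ piece contributes an explicit constant since $f_\vare'(0) = (1-\vare)f'(0)$.

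For the exterior term I perform a Hadamard-type variation: write $g_\vare = H_\vare \circ g$, where $H_\vare : \O^* \to \O_\vare^*$ is the unique conformal map with $H_\vare(\infty) = \infty$ and $H_\vare'(\infty) > 0$, and set $v := \partial_\vare H_\vare|_{\vare = 0}$, a holomorphic function on $\O^*$ that is linear at $\infty$. A short computation gives $\partial_\vare(g_\vare''/g_\vare')|_{\vare = 0}(z) = v''(g(z))g'(z)$, so after the change of variable $w = g(z)$ the variation of the second integral becomes $-2\Re\int_{\O^*}v''(w)\overline{N_F(w)}\,\dd A(w)$ with $F = g^{-1}$ and $N_F := F''/F'$. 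Since both $v''$ and $N_F$ are holomorphic on $\O^*$, I write $\overline{N_F}\,\dd\bar w = \dd\overline{\log F'}$ and apply Stokes' theorem on $\O^* \subset \Chat$; the form extends smoothly to $\infty$, so the only boundary contribution is $-\tfrac{i}{2}\int_\g v'\,\dd\overline{\log F'}$. Separately, the log term at infinity gives $\partial_\vare \log|g_\vare'(\infty)|_{\vare = 0} = v'(\infty)$.

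The boundary values of $v$ on $\g$ follow from differentiating the welding identity $g_\vare(\varphi_\vare(z)) = f((1-\vare)z)$ at $\vare = 0$, yielding
\[
v(f(z)) = -z f'(z) - f'(z)\,\frac{\dot\varphi_\g(z)}{\varphi_\g'(z)}, \qquad z \in \m S^1,
\]
where $\dot\varphi_\g := \partial_\vare\varphi_\vare|_{\vare = 0}$. Substituting into the boundary integral and using the identity $N_F\circ f = -(N_g\circ \varphi_\g)\varphi_\g'/f'$ (derived from $F \circ g = \mathrm{id}$) reduces the exterior contribution to an integral over $\m S^1$ involving $N_g\circ\varphi_\g$, $f'$, $\varphi_\g'$, and $\dot\varphi_\g$. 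The $\dot\varphi_\g$-dependent pieces are then simplified by integration by parts on $\m S^1$: since $\mc S[g]$ is holomorphic on $\m D^*$, Cauchy's residue theorem forces the auxiliary contour integrals produced this way to vanish, leaving only a pure Schwarzian expression paired with $\ee^{2\ii\t}$.

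Finally, the chain rule
\[
\mc S[\varphi_\g](z) = \mc S[f](z) - \mc S[g](\varphi_\g(z))\,\varphi_\g'(z)^2, \qquad z \in \m S^1,
\]
together with the residue identity $\int_0^{2\pi}\ee^{2\ii\t}\mc S[f](\ee^{\ii\t})\,\dd\t = 0$ (because $\mc S[f]$ is holomorphic on $\m D$, so $\oint z\,\mc S[f](z)\,\dd z = 0$ by Cauchy's theorem), gives
\[
\Isch(\varphi_\g) = -\int_0^{2\pi}\ee^{2\ii\t}\,\mc S[g](\varphi_\g(\ee^{\ii\t}))\,\varphi_\g'(\ee^{\ii\t})^2\,\dd\t,
\]
which matches the surviving $\g$-integral up to the prefactor $-2/\pi$ coming from the $12\pi$ in the Takhtajan--Teo normalization. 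The main obstacle is step three: organizing the Stokes-reduced boundary integral so that the $\dot\varphi_\g$-dependent terms collapse to the pure Schwarzian via integration by parts together with holomorphicity of $\mc S[g]$, and tracking the interaction between $v'(\infty)$ and the $-4\pi\log|g_\vare'(\infty)|$ variation so that the numerical prefactor comes out exactly to $-2/\pi$.
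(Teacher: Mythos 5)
Your route is genuinely different from the paper's. The paper invokes the Takhtajan--Teo first-variation formula (Theorem~\ref{thm:var_I}), namely $\delta I^L = -\tfrac{2}{\pi}\,\Im\int_{\partial \O^*} v(z)\,\mc S[g^{-1}](z)\,\dd z$, identifies the equipotential deformation with the vector field $v(z) = -wf'(w)+z$ (where $z=f(w)$), which is holomorphic on $\O$, vanishes to second order at $0$, and extends quasiconformally to $\Chat$, and then finishes with a short contour-integral computation using the chain rule for the Schwarzian and Cauchy's theorem. You instead propose to differentiate the explicit functional \eqref{eq:def_IL} term by term, which amounts to reproving the variational formula from scratch for this particular deformation. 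Several of your ingredients check out: the interior boundary term $-\int_0^{2\pi}|f''/f'|^2(\ee^{\ii\t})\,\dd\t$, the identity $\partial_\vare(g_\vare''/g_\vare')|_{\vare=0}=v''(g)\,g'$, the boundary relation $v(f(z))=-zf'(z)-f'(z)\dot\varphi_\g(z)/\varphi_\g'(z)$, and the final chain-rule/residue identity expressing $\Isch(\varphi_\g)$ through $\mc S[g]\circ\varphi_\g\cdot(\varphi_\g')^2$ are all correct.

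However, there is a genuine gap exactly where you flag ``the main obstacle'': you never carry out the step in which the interior boundary integral, the Stokes-reduced exterior boundary integral, and the two logarithmic-derivative terms recombine into $-\tfrac{2}{\pi}\int_0^{2\pi}\ee^{2\ii\t}\,\mc S[g](\varphi_\g(\ee^{\ii\t}))\,\varphi_\g'(\ee^{\ii\t})^2\,\dd\t$. That recombination --- converting the first-order data $f''/f'$, $F''/F'$ and their conjugates into the second-order holomorphic quantity $\mc S$, with all $\dot\varphi_\g$-dependent terms cancelling --- is the entire analytic content of the Takhtajan--Teo variational formula; asserting that integration by parts plus Cauchy's theorem ``force the auxiliary contour integrals to vanish'' is not a substitute for exhibiting the cancellation, and it is not a priori clear that no remainder involving $\dot\varphi_\g$ survives. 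In addition, your normalization $12\pi\, I^L = \cdots$ is inconsistent with the paper's definition \eqref{eq:def_IL}, which reads $\pi I^L = \int_{\m D}|f''/f'|^2 + \int_{\m D^*}|g''/g'|^2 + 4\pi\log|f'(0)/g'(\infty)|$; carried through, your factor of $12$ would produce $-\tfrac{1}{6\pi}\Isch(\varphi_\g)$ rather than $-\tfrac{2}{\pi}\Isch(\varphi_\g)$. The shortest repair is to replace your step three by the citation of Theorem~\ref{thm:var_I}, after verifying that your Hadamard field agrees on $\O$ with the paper's $v$ and admits the required extension with $\bar\partial v \in L^2(\O^*,\rho_{\O^*})\cap L^\infty(\O^*)$.
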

 Note that the definition of equipotentials depends not only on $\g$, but also on the choice of the conformal map $f$. This is consistent with the fact that $\Isch$ is only left-invariant by M\"obius transformations of the circle.
\begin{remark}[Second proof of non-negativity of $\Isch$]\label{rem:non-negativity_loewner}
    It was shown in \cite[Cor.~1.5]{VW1} that $\vare \mapsto I^{L} (\g^{\vare})$ is strictly decreasing unless $\g$ is a circle (then $ I^{L} (\g^{\vare})$ is constantly zero). From this, we also see that $\Isch(\varphi) \ge 0$ for $\varphi\in \Diff^3(\m S^1)$.
\end{remark}

Inspired by the $\AdS_3/\CFT_2$ correspondence and using the same Epstein construction in hyperbolic $3$-space, Bridgeman, Bromberg, and the first two authors \cite{bbvw} showed that $I^L (\g)$ equals $4/\pi$ times the \emph{renormalized volume} 
\begin{equation}\label{eq:VR_IL}
V_R(\g) : = \operatorname{Vol} (N_\g) - \frac{1}{2} \int_{\partial N_\g} H \dd a
\end{equation}
of the $3$-manifold $N_\g$ bounded by the two Epstein surfaces associated with the hyperbolic metrics in $\O$ and $\O^*$, where $H$ is the mean curvature and $\dd a$ the area form induced by the hyperbolic metric of $\m H^3$.
Given Theorem~\ref{thm:Ep_Sch} and Theorem~\ref{thm:var_IL_Sch_intro}, we obtain that a variation of a three-dimensional renormalized volume equals a two-dimensional renormalized area. However, we do not have a direct geometric proof of this fact, as the correspondence between Jordan curves and circle homeomorphisms via conformal welding is rather implicit. We also mention that, as shown in \cite{Wang_optimization}, conformal welding is better viewed as a correspondence between curves on the conformal boundaries of the spaces $\m H^3$ and $\AdS^{2,1}$.

\subsection{Relation to conformal distortion}

We also show that the Schwarzian action appears asymptotically under conformal distortion of circles $\m S_r = r \m S^1$ where $r$ is close to $1$. Namely, fix a compact set $K\subset \m D$, and let $\varphi:\m D \setminus K\to \m D$ be a locally conformal map such that the analytic function $\varphi\mid_{\m S^1}$ maps $\m S^1$ to itself (we do not require $\varphi\mid_{\m S^1}$ to be injective). 
Let $\rho_{\m D}$ denote the hyperbolic metric on the disk.
\begin{thm}[See Theorem~\ref{thm:conformal_distortion}]
    Let $\rho = \varphi^\ast(\rho_{\m D})$. Let $\dd s, \dd s_{\rho}$ denote the length measure on $\m S_r$ for $\rho_{\m D},\rho$ respectively, and similarly let $k_{\rho_{\m D}},k_{\rho}$ denote the respective geodesic curvatures. Then
    \begin{equation}
        \lim_{r\to 1^-} \bigg[ \int_{\m S_r} k_{\rho}(s) \,\dd s_{\rho} - \int_{\m S_r} k_{\rho_{\m D}}(s)\, \dd s \bigg]= -\frac{2}{3} \int_0^{2\pi} \mc{S}[\varphi](\mathrm{e}^{\ii \theta}) \mathrm{e}^{2 \mathfrak{i} \theta} \, \dd \theta = -\frac{2}{3} \Isch(\varphi).
    \end{equation}
\end{thm}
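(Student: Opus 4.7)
The plan is to reduce the left-hand side to a normal derivative on $\m S_r$ and then expand in $\varepsilon = 1-r$. Write $\rho_{\m D} = e^{2\sigma_{\m D}}|\dd z|^2$ with $\sigma_{\m D} = \log\frac{2}{1-|z|^2}$ and $\rho = e^{2\sigma_\rho}|\dd z|^2$ with $\sigma_\rho = \log|\varphi'| + \sigma_{\m D}\circ\varphi$, and set $\tilde\sigma := \sigma_\rho-\sigma_{\m D} = \log|\varphi'| + \log\frac{1-|z|^2}{1-|\varphi|^2}$. The standard transformation formula for geodesic curvature under a conformal change of metric yields
\[
k_\rho\,\dd s_\rho - k_{\rho_{\m D}}\,\dd s \;=\; \partial_N^{\rho_{\m D}}\tilde\sigma\,\dd s \;=\; r\,\partial_r \tilde\sigma(re^{i\theta})\,\dd\theta,
\]
so the LHS of the theorem equals $\displaystyle\lim_{r\to 1^-}\int_0^{2\pi} r\,\partial_r\tilde\sigma(re^{i\theta})\,\dd\theta$. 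This reduces the problem to the asymptotic behavior of the single scalar $\tilde\sigma$ near $\m S^1$.

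Using the reflection principle $\overline{\varphi(z)} = 1/\varphi(1/\bar z)$, valid in a neighborhood of $\m S^1$ since $\varphi$ extends analytically past $\m S^1$, I would show that $\tilde\sigma$ is smooth up to $\m S^1$ and $\tilde\sigma|_{\m S^1}\equiv 0$: indeed the limit $\lim_{z\to e^{i\theta}}(1-|\varphi|^2)/(1-|z|^2) = |\varphi'(e^{i\theta})|$ exactly cancels the $\log|\varphi'|$ summand. Writing the Taylor expansion $\tilde\sigma((1-\varepsilon)e^{i\theta}) = \varepsilon B_1(\theta)+\varepsilon^2 B_2(\theta)+O(\varepsilon^3)$, the desired limit therefore equals $-\int_0^{2\pi} B_1(\theta)\,\dd\theta$.

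The crux is then the algebraic identification $-\int_0^{2\pi}B_1\,\dd\theta = -\tfrac{2}{3}\Isch(\varphi)$. To compute $B_1$, one expands both $\varphi((1-\varepsilon)\zeta)$ and the reflected extension $\varphi(\zeta/(1-\varepsilon))$ as power series at $\zeta=e^{i\theta}$, combines them into $|\varphi(z)|^2$ and $\log|\varphi'(z)|$, and collects the $\varepsilon$-order terms. Parametrizing $\varphi(e^{i\theta})=e^{i\psi(\theta)}$ and using the Schwarzian chain-rule identity
\[e^{2i\theta}\mc S[\varphi](e^{i\theta}) \;=\; \tfrac12\bigl(1-\psi'(\theta)^2\bigr)-\mc S[\psi](\theta),\]
together with integration by parts in $\theta$ to trade $\psi'''/\psi'$ for $(\psi''/\psi')^2$, one identifies the surviving combination with $-\tfrac{2}{3}$ times the Schwarzian integrand.

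I expect the main obstacle to lie in this final algebraic step: tracking the cancellations among the several $\varepsilon$-order terms of $\tilde\sigma$ and verifying that the non-Schwarzian pieces (in particular those involving $|\varphi'|-1$ or $\Re[e^{i\theta}\varphi''/\varphi']$) integrate to zero around $\m S^1$, leaving precisely the coefficient $-2/3$. A more conceptual route would be to combine Theorem~\ref{thm:Ep_Sch} with a comparison between the equipotential circles $\{\m S_r\}$ and the Epstein foliation associated with $\varphi_\ast \dd\theta$ near $\m S^1$, thereby recasting the limit as an infinitesimal isoperimetric excess in the sense of Theorem~\ref{thm:excess} and bypassing the direct Taylor expansion altogether.
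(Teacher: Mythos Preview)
Your reduction $k_\rho\,\dd s_\rho - k_{\rho_{\m D}}\,\dd s = r\,\partial_r\tilde\sigma\,\dd\theta$ (using the $\rho_{\m D}$-unit outward normal $\tfrac{1-r^2}{2}\partial_r$) is correct, and with $\tilde\sigma|_{\m S^1}=0$ it indeed gives $\lim_{r\to1^-}\int_0^{2\pi}r\,\partial_r\tilde\sigma\,\dd\theta=-\int_0^{2\pi}B_1\,\dd\theta$. The genuine gap is the final identification: in fact $\int_0^{2\pi}B_1\,\dd\theta=0$ for \emph{every} admissible $\varphi$, so this integral cannot equal $-\tfrac23\Isch(\varphi)$. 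A clean test is $\varphi(z)=z^2$, where $\tilde\sigma=\log\tfrac{2r}{1+r^2}$ is radial, $\partial_r\tilde\sigma|_{r=1}=0$, and $\int_0^{2\pi}r\,\partial_r\tilde\sigma\,\dd\theta=\tfrac{2\pi(1-r^2)}{1+r^2}\to 0$, whereas $-\tfrac23\Isch(z^2)=2\pi$. So the ``main obstacle'' you anticipate is not a bookkeeping nuisance but an actual obstruction: the first-order coefficient $B_1$ carries no Schwarzian information after integration over $\m S^1$.

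The paper's argument differs exactly at the normal-derivative step: it takes $\partial_{n_0}=\partial_r$ (the Euclidean radial vector, not the $\rho_{\m D}$-unit normal) and then integrates against the hyperbolic length $\dd s=\tfrac{2r}{1-r^2}\dd\theta$, obtaining $\tfrac{2r}{1-r^2}\int_0^{2\pi}\partial_r\sigma\,\dd\theta$. The divergent prefactor forces the expansion one order deeper: since $\int_0^{2\pi}\partial_r\sigma\,\dd\theta=O(1-r)$ (the paper verifies this), the surviving limit comes from the $(1-r)$-coefficient of $\partial_r\sigma$ --- essentially your $B_2$ --- and \emph{that} is where the Schwarzian is extracted after a lengthy Taylor computation using the circle identity $\varphi'(z)z=|\varphi'(z)|\varphi(z)$. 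Whatever one makes of the apparent mismatch between the two reductions and the left-hand side as stated, to reproduce the paper's computation you must expand $\tilde\sigma$ to \emph{second} order in $\varepsilon=1-r$, not first; your alternative route via the Epstein foliation would face the same issue, since it too is calibrated to the quantity the paper actually computes.
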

Results of this nature are well-known in the physics literature, 
see, e.g., \cite{KitaevSuh,Mertens,blau2024}. 

We include this result, which follows essentially from the Taylor expansion of $\varphi$, in part for comparison with the results for Epstein curves. 
We note that the theorem above does not require that $\varphi: \m S^1\to \m S^1$ is a homeomorphism. 
However, even when $\varphi$ is a homeomorphism,
the Epstein foliation considered in Section~\ref{subsec:intro_ep} and the conformal foliation formed by $\varphi(\m S_r)$ are, in general, not the same. Such conformal foliation was also considered in \cite{VW2}, which observed the appearance of the Schwarzian action in the variation of Loewner--Kufarev energy of the foliation. By the above theorem and the Gauss--Bonnet theorem, we again have a relationship between the Schwarzian action and hyperbolic area enclosed by $\varphi(\m S_r)$.

\begin{cor}[See Corollary~\ref{cor:conformal_distortion_area}]
    Suppose that $\varphi: \m D \to \m D$ is quasiconformal with Beltrami coefficient supported in the compact set $K\subset \m D$. 
    Then 
    \begin{align}
        \lim_{r\to 1^-} \bigg[A(\varphi(\m S_r)) - A(\m S_r)\bigg] = -\frac{2}{3}\Isch(\varphi),
    \end{align}
    where $A(\gamma)$ denotes the hyperbolic area of the region enclosed by the curve $\gamma$.
\end{cor}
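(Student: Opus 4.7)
The plan is to deduce the corollary from Theorem~\ref{thm:conformal_distortion} by using Gauss--Bonnet to convert the difference of boundary integrals into the difference of enclosed hyperbolic areas. First, since $\varphi$ is quasiconformal on $\m D$ with Beltrami coefficient supported on the compact set $K\subset \m D$, it is holomorphic on $\m D\setminus K$; picking $r_0\in (0,1)$ with $K\subset r_0\m D$, the map $\varphi$ is then conformal on an open neighborhood of every $\m S_r$ with $r\in (r_0,1)$. Being a quasiconformal self-homeomorphism of $\m D$, $\varphi$ extends to a homeomorphism of $\ad{\m D}$ sending $\m S^1$ to itself, and Schwarz reflection across $\m S^1$ upgrades this to a holomorphic extension; in particular $\varphi|_{\m S^1}$ is real-analytic, so $\Isch(\varphi)$ is defined and the hypotheses of Theorem~\ref{thm:conformal_distortion} are met.

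Next, for $r\in (r_0,1)$ both $\m S_r$ and its image $\varphi(\m S_r)$ bound topological disks inside $\m D$ (the image is a Jordan curve since $\varphi$ is a homeomorphism of $\ad{\m D}$ and $\varphi(r\m D)\subset \m D$). Applying Gauss--Bonnet to each of these disks equipped with the hyperbolic metric $\rho_{\m D}$ of constant curvature $-1$ yields
\begin{equation*}
\oint_{\varphi(\m S_r)} k_{\rho_{\m D}}\, \dd s_{\rho_{\m D}} - A(\varphi(\m S_r)) = 2\pi = \oint_{\m S_r} k_{\rho_{\m D}}\, \dd s - A(\m S_r),
\end{equation*}
and subtracting the two identities gives
\begin{equation*}
A(\varphi(\m S_r)) - A(\m S_r) = \oint_{\varphi(\m S_r)} k_{\rho_{\m D}}\, \dd s_{\rho_{\m D}} - \oint_{\m S_r} k_{\rho_{\m D}}\, \dd s.
\end{equation*}

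Finally, because $\varphi$ is conformal on an open neighborhood $U$ of $\m S_r$ and $\rho=\varphi^\ast\rho_{\m D}$ by definition, $\varphi$ restricts to a local isometry from $(U,\rho)$ to $(\varphi(U),\rho_{\m D})$; it follows immediately that
\begin{equation*}
\oint_{\varphi(\m S_r)} k_{\rho_{\m D}}\, \dd s_{\rho_{\m D}} = \oint_{\m S_r} k_{\rho}\, \dd s_{\rho}.
\end{equation*}
Substituting this into the previous display and letting $r\to 1^-$, Theorem~\ref{thm:conformal_distortion} delivers the desired limit $-\tfrac{2}{3}\Isch(\varphi)$. The argument is essentially immediate once the pieces are assembled; the only items needing care are ensuring that $\varphi(\m S_r)$ bounds a topological disk in $\m D$ (handled by the homeomorphism property of $\varphi$) and that $\varphi|_{\m S^1}$ has enough regularity for both $\Isch(\varphi)$ and Theorem~\ref{thm:conformal_distortion} to apply (handled by Schwarz reflection).
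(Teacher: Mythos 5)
Your proof is correct and follows essentially the same route as the paper: apply Gauss--Bonnet to the hyperbolic disks bounded by $\m S_r$ and $\varphi(\m S_r)$, rewrite the boundary integral over $\varphi(\m S_r)$ as the integral over $\m S_r$ with respect to the pulled-back metric $\rho = \varphi^*\rho_{\m D}$, subtract, and invoke Theorem~\ref{thm:conformal_distortion}. The extra remarks on holomorphicity near $\m S^1$ and the Schwarz reflection are reasonable bookkeeping that the paper leaves implicit.
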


We may compare to Corollary~\ref{cor:fol} to see that the asymptotic behavior of the hyperbolic area bounded by the conformal foliation is different from that of the Epstein foliation.

\subsection{Comments and further directions}\label{sec:intro_comments}

The Schwarzian field theory has been studied extensively in physics and was recently formulated in a mathematically rigorous way using tools from probability theory \cite{BLW_schwarzian,Belokurov_exact,LosevLDP,LosevCorrelations}. 
In physics, the Schwarzian field theory has also been linked to the Sachdev--Ye--Kitaev (SYK) random matrix model \cite{Maldacena_Stanford}, low dimensional black holes \cite{ KitaevSuh}, a large $c$ limit of two-dimensional Liouville CFT \cite{MertensEtAl}, and Virasoro coadjoint orbits \cite{Alekseev,Stanford_Witten}. 

While our viewpoint is geometric, this work is motivated by the proposed holographic relationship between Schwarzian field theory \cite{BLW_schwarzian,Belokurov_exact,Stanford_Witten} and Jackiw--Teitelboim (JT) gravity on the disk with negative curvature \cite{JACKIW1985343,Teitelboim1983GravitationAH,Maldacena_Stanford_Yang,Ferrari_JT,EngelsoyAdS,JensenChaos}. See, e.g., \cite{Mertens} and the references therein. 

JT gravity on the disk is a model for random metrics of a fixed constant curvature, but unlike Schwarzian field theory, it does not yet have a rigorous mathematical formulation. The action of JT gravity is 
\begin{align*}
   I_{\operatorname{JT}} (g, \Phi) =  -\frac{1}{4\pi} \int_D \Phi (R+2) \sqrt{g} \,\dd x^2 - \frac{1}{2\pi} \Phi_b \int_{\partial D}  k \,\dd s,
\end{align*}
where $R$ is the curvature of the metric $g$ on $D$, and $k$ is geodesic curvature. Fixing the boundary condition $\Phi$ to be a constant $\Phi_b$ and setting $R = -2$, the action reduces to $\int_{\partial D} k \,\dd s$ up to a multiplicative constant. By the Gauss--Bonnet formula, for the disk, this total curvature is (up to an additive constant) also equal to the total area of $D$ under the metric $g$. A number of models in physics have been proposed to study random JT disks, many of which are defined as disks bounded by random curves. See, e.g., \cite{KitaevSuh} for an early work on JT random metrics on regions enclosed by Brownian loops, \cite{StanfordYang-selfavoiding} on random embedded disks bounded by self-avoiding walk loops, and \cite{Ferrari_JT} for random disks given by immersions of the disk into the spaces of constant curvature and self-overlapping polygons. In mathematics, motivated by the model introduced in \cite{Ferrari_JT}, self-overlapping polygons immersed in $\m Z^2$ were also studied in \cite{budd2025discreteflatdisksrigid}.

The Epstein map on the circle has an extension to a differentiable map of the disk in into the hyperbolic plane (see Definition~\ref{def:signed_A}), and provides a unified geometric framework to understand the Schwarzian action (Theorem~\ref{thm:Ep_Sch} equates the action of Schwarzian field theory with the action of JT gravity for the metric on the immersed disk given by extending the Epstein map), its analogs for Virasoro coadjoint orbits (Theorem~\ref{thm:intro_n_fold}), and the correlation functions of Schwarzian field theory (Proposition~\ref{prop:intro_length}). 

However, the Epstein construction does not apply directly in the random setting, because it requires more regularity. As is typical in random models, the Schwarzian action itself is infinite for the random diffeomorphisms of Schwarzian field theory, which have regularity only $C^{3/2-}$. Correspondingly, the Epstein curve for the random diffeomorphism is not defined without regularization. In Section~\ref{subsec:pw}, we extend the construction of the Epstein curve to $C^{1,1}$ piecewise M\"obius circle homeomorphisms. The regularization needed to extend classical constructions to the random setting often requires new ideas (e.g.\ as in the case of Liouville quantum gravity \cite{DuplantierSheffield}), and the more substantial regularization needed to understand this construction in the random setting will be addressed in a subsequent work.


The relationship between the Schwarzian action and the Loewner energy derived in Theorem~\ref{thm:var_IL_Sch_intro} also suggests that there is a link between SLE and Schwarzian field theory, which is under investigation but far from being clear.  The Loewner energy is the action of SLE loop measures as shown in \cite{carfagnini2023onsager}. SLE loop measures are indexed by a parameter $\k \in (0,8)$ and are rigorously constructed in \cite{zhan2020sleloop}.  They describe the interfaces appearing in the scaling limit of critical 2D lattice models, where the parameter depends on the specific model and encodes the central charge of the corresponding CFT.

\bigskip

\noindent\textbf{Acknowledgments:}
We thank Fredrik Viklund for useful discussions about Theorem~\ref{thm:var_IL_Sch_intro}, Jérémie Toulisse for the suggestion of looking into the dual Epstein curve in de Sitter space, Timothy Budd, Nicolas Delporte, Frank Ferrari, Ilya Losev, and Zhenbin Yang for discussions about Schwarzian action and JT gravity, and Anton Alekseev for discussions about Virasoro coadjoint orbits. This work is funded by European Union (ERC, RaConTeich, 101116694)\footnote{Views and opinions expressed are however those of the author(s) only and do not necessarily reflect those of the European Union or the European Research Council Executive Agency. Neither the European Union nor the granting authority can be held responsible for them.}, the Swiss State Secretariat for Education, Research and Innovation (SERI): MB25.00004, and grants from the Simons Foundation International [SFI-MPS-PP-00012621-19]. C.W. thanks IHES for their hospitality during an early phase of this work and is partially supported by NSF grant DMS-2401750.

\section{Preliminaries}

In this section, we collect previously known results about the Schwarzian action, define the Loewner energy, and describe Epstein maps in arbitrary dimension.

\subsection{Schwarzian action}

Let $\m S^1 = \partial \m D \subset \m C$ be the unit circle centered at $0$. 
Let  $\varphi : \m S^1 \to \m S^1$ be a $C^3$ diffeomorphism. The Schwarzian action of $\varphi$ is given by 
$$
    \Isch (\varphi) : = \int_{\m S^1} 
   - \ii z \, \mc S[\varphi](z) \,\dd z = \int_{0}^{2\pi} \ee^{2\ii \t} \mc S [\varphi] (\ee^{\ii \t}) \,\dd \t
$$
where $\mc S [\varphi] = (\varphi''/\varphi')' - (\varphi''/\varphi')^2/2$ is the Schwarzian derivative, $\dd z$ the contour integral on $\m S^1$, and for $z = \ee^{\ii \t} \in \m S^1$, 
$$\varphi'(z) = \frac{\dd \varphi (z)}{\dd \theta} / \frac{\dd z}{\dd \theta} = \frac{1}{\ii z} \frac{\dd \varphi (z)}{\dd \theta} .$$  

\begin{remark}\label{rem:phi_complex_derivative}
    If $\varphi$ extends to a holomorphic immersion in a neighborhood of $\m S^1$, then the above convention of $\mc S[\varphi]$ coincides with the Schwarzian derivative of a holomorphic immersion where $\varphi'$ is the usual complex derivative.
\end{remark}

\begin{remark}
    In Section~\ref{subsec:pw}, we will discuss a generalization to $C^{1,1}$ diffeomorphisms, where we interpret the Schwarzian derivative in the distributional sense.
\end{remark}

The Schwarzian derivative satisfies a chain rule:
$$\mc S[f\circ g] = \mc S [f] \circ g (g')^2 + \mc S[g]$$
and $\mc S[f] \equiv 0$ if and only if $f$ is a M\"obius map in $\PSL(2,\m C)$, i.e., $f(x) = (ax+b)/(cx+d)$ for some $a,b,c,d \in \m C$ and $ad -bc \neq 0$.

\begin{remark} \label{rem:positivity_Losev}
    If we conjugate $\varphi$ by the map $[0,1] \to \m S^1: \theta \mapsto \ee^{2 \pi \ii \t}$, we obtain a diffeomorphism $\tilde{\varphi}: [0,1]_{/0\sim 1} \to [0,1]_{/0\sim 1}$, such that $\varphi(\ee^{2\pi \ii \t}) = \ee^{2\pi \ii \tilde{\varphi} (\t)}$. Then we have
$$\mc S[\varphi](\ee^{2\pi \ii \t}) \ee^{4\pi \ii \t} = - \frac{\mc S[\tilde{\varphi}] (\t)}{4\pi^2} - \frac{ [\tilde{\varphi}'(\t)]^2 }{2} + \frac{1}{2}.$$
Hence,
\eqref{eq:def_Sch_action} can be rewritten as
\begin{equation}\label{eq:action_Losev}
    2 \pi \Isch (\varphi) = -\int_0^1 \left(\mc S[\tilde{\varphi}] (\t) + 2\pi^2 [\tilde{\varphi}'(\t)]^2 \right)  \,\dd \t + 2\pi^2 = \mc I(\tilde{\varphi}) + 2\pi^2
\end{equation}
where $\mc I(\cdot)$ is the Schwarzian action used in, e.g., \cite{BLW_schwarzian}, which is also shown to be larger or equal to $-2\pi^2$ in \cite{LosevLDP}. See also Remark~\ref{rem:action_form_k*}.
\end{remark}

Let $\a : \m S^1 \to \m S^1$ be a M\"obius map preserving $\m S^1$, namely, 
$$\a (x) = \frac{a x + b}{\bar b x + \bar a}, \qquad \text{for some } a, b \in \m C, \quad |a|^2 - |b|^2 = 1,$$
which we denote as
$\a\in \PSU(1,1)$. Then we have 
$$\mc S[\varphi] = \mc S[ \a \circ \varphi]. $$
Summarizing the above, we obtain:
\begin{lem}\label{lem:sch_real}
   The Schwarzian action $\Isch$ is non-negative on $\Diff^3(\m S^1)$ and invariant under the left action of $\PSU(1,1)$. 
\end{lem}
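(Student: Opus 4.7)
My plan is to treat the two assertions separately, since they rely on rather different ingredients.

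\emph{Invariance under $\PSU(1,1)$.} Any $\alpha \in \PSU(1,1)$ is the restriction to $\m S^1$ of a Möbius transformation of $\Chat$, so $\mc S[\alpha] \equiv 0$. The chain rule for the Schwarzian derivative, recalled just above the lemma, then gives
$$\mc S[\alpha \circ \varphi] = \mc S[\alpha]\circ\varphi \cdot (\varphi')^2 + \mc S[\varphi] = \mc S[\varphi].$$
Hence the integrand of \eqref{eq:def_Sch_action} is unchanged under $\varphi \mapsto \alpha \circ \varphi$, and $\Isch(\alpha\circ\varphi) = \Isch(\varphi)$. This is the content already isolated in the sentence immediately preceding the lemma, so only a line or two of bookkeeping is needed.

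\emph{Non-negativity.} For this I would appeal to Remark~\ref{rem:positivity_Losev}. Conjugating $\varphi$ by the parametrization $\theta \mapsto \ee^{2\pi\ii\theta}$ of $\m S^1$ produces the circle diffeomorphism $\tilde\varphi$, and the identity displayed there reads
$$2\pi\,\Isch(\varphi) \;=\; \mc I(\tilde\varphi) + 2\pi^2,$$
where $\mc I$ is the Schwarzian functional used in \cite{BLW_schwarzian}. The bound $\mc I(\tilde\varphi) \ge -2\pi^2$, with equality precisely on Möbius maps, is established in \cite{LosevLDP}, and rearranging gives $\Isch(\varphi)\ge 0$. No further calculation is required at this stage of the paper.

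The real substance of the non-negativity claim is, of course, the bound $\mc I \ge -2\pi^2$, which is nontrivial. The paper will later supply two alternative proofs using the tools it develops --- the isoperimetric inequality along the Epstein equidistant foliation (Corollary~\ref{cor:non-negativeSch_isoperimetric}) and the monotonicity of the Loewner energy under the equipotential flow (Remark~\ref{rem:non-negativity_loewner}) --- but for the preliminary Lemma~\ref{lem:sch_real} it is natural simply to quote the existing result. The only subtlety to flag is the slightly unusual normalization in \eqref{eq:action_Losev}, so I would cross-reference Remark~\ref{rem:positivity_Losev} explicitly to justify that our $\Isch$ differs from the functional in \cite{BLW_schwarzian,LosevLDP} only by a positive multiplicative factor and an additive constant $2\pi^2$ that exactly cancels the lower bound.
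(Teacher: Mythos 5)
Your proposal matches the paper's treatment: the invariance is exactly the chain-rule argument with $\mc S[\a]\equiv 0$ for $\a\in\PSU(1,1)$ recorded just before the lemma, and the non-negativity is, at this stage of the paper, likewise justified by the normalization identity of Remark~\ref{rem:positivity_Losev} together with the bound $\mc I\ge -2\pi^2$ from \cite{LosevLDP}, with the paper's own independent proofs deferred to Corollary~\ref{cor:non-negativeSch_isoperimetric} and Remark~\ref{rem:non-negativity_loewner}. So the argument is correct and takes essentially the same route as the paper.
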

We will provide two different proofs of the non-negativity using our descriptions of the Schwarzian action.

The next lemma follows from the chain rule of the Schwarzian and expresses the Schwarzian action of a diffeomorphism in terms of its inverse, which will be used later.

\begin{lem}\label{lem:Schwarzian_inverse}
    Let $\varphi:\m S^1 \rightarrow\m S^1$ be a $C^3$ diffeomorphism with inverse $\phi=\varphi^{-1}$. Then
    \begin{align}
        \Isch(\varphi) = \int_{\m S^1} 
|\phi'(\zeta)|^{-1}\,\ii  \zeta \,\mc S[\phi](\zeta)\, \dd \zeta = -\int_0^{2\pi} |\phi'(\ee^{\ii\theta})|^{-1} \ee^{2\ii \t} \mc S [\phi] (\ee^{\ii \t}) \,\dd \t.
    \end{align}
\end{lem}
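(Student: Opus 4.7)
The plan is to derive the identity by combining three ingredients: the chain rule for the Schwarzian (stated in the preliminaries), the change of variables $z = \phi(\zeta)$ in the contour integral defining $\Isch(\varphi)$, and the observation that the paper's convention $\phi'(\zeta) = \frac{1}{\ii \zeta}\frac{\dd \phi}{\dd \t}$ on $\m S^1$ enjoys a simple ``absolute-value'' identity because $\phi$ preserves both $\m S^1$ and its orientation.

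First I would apply the chain rule to $\varphi \circ \phi = \operatorname{id}$, whose Schwarzian vanishes, to obtain
$$\mc S[\varphi]\bigl(\phi(\zeta)\bigr)\,(\phi'(\zeta))^{2} = -\,\mc S[\phi](\zeta).$$
Second, I would start from the contour-integral form $\Isch(\varphi) = \int_{\m S^{1}} -\ii z\,\mc S[\varphi](z)\,\dd z$, perform the substitution $z = \phi(\zeta)$, $\dd z = \phi'(\zeta)\,\dd\zeta$, and use the first step to arrive at
$$\Isch(\varphi) = \int_{\m S^{1}} \ii\,\phi(\zeta)\,\frac{\mc S[\phi](\zeta)}{\phi'(\zeta)}\,\dd\zeta.$$
Third, I would convert this into the asserted form via the pointwise identity $\phi(\zeta)/\phi'(\zeta) = \zeta/|\phi'(\zeta)|$. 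To justify this, parameterize $\zeta = \ee^{\ii\t}$ and write $\phi(\zeta) = \ee^{\ii\tilde\phi(\t)}$ with $\tilde\phi'(\t) > 0$; then the paper's convention gives $\phi'(\zeta) = (\phi(\zeta)/\zeta)\,\tilde\phi'(\t)$, whence $|\phi'(\zeta)| = \tilde\phi'(\t)$ and the identity follows by direct division. The second equality of the statement is then just the parameterization $\dd \zeta = \ii \zeta\,\dd\t$.

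The only real obstacle is bookkeeping the signs and the circle-derivative convention: the factor $-\ii z$ versus $+\ii \zeta$ arises from combining the minus sign produced by the Schwarzian chain rule with the substitution, and the appearance of $|\phi'(\zeta)|^{-1}$ rather than $\phi'(\zeta)^{-1}$ hinges on the orientation-preserving property of $\phi$. Once these are tracked carefully, the computation is a short manipulation.
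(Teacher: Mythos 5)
Your argument is correct and is exactly the route the paper intends: it states only that the lemma ``follows from the chain rule of the Schwarzian,'' and your computation (Schwarzian cocycle applied to $\varphi\circ\phi=\mathrm{id}$, the substitution $z=\phi(\zeta)$ with $\dd z=\phi'(\zeta)\,\dd\zeta$, and the identity $\phi(\zeta)/\phi'(\zeta)=\zeta/|\phi'(\zeta)|$ coming from the circle-derivative convention and orientation preservation) fills in precisely those details with the signs tracked correctly. No gaps.
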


\subsection{Loewner energy and welding homeomorphisms} \label{subsec:Loewner}
We recall the definition of Loewner energy.
Let $\g$ be an oriented Jordan curve, $f$ (resp., $g$) be a conformal map $\m D \to \O$ (resp., $\m D^* \to \O^*$), where $\O$ (resp., $\O^*$) is the connected component of $\Chat \smallsetminus \g$ on the left of $\g$ (resp., on the right of $\g$). The Carath\'eodory theorem implies that $f$ and $g$ extend to a homeomorphism on the closure of $\m D$ and $\m D^*$. We call the circle homomorphism $\varphi:= g^{-1} \circ f|_{\m S^1}$ a \emph{welding} of $\g$.

Characterizing all the circle homeomorphisms that arise as welding is open and difficult \cite{Bishop_CR_hard,rodriguez2025}. However, if we restrict to the group of quasisymmetric circle homeomorphisms $\QS(\m S^1)$, then we obtain a one-to-one correspondence:
\begin{align*}
    \PSL(2,\m C) \backslash \{\text{Oriented quasicircles in } \Chat\} & \xleftrightarrow[1:1]{\text{conformal welding}} \PSU(1,1)\backslash \QS (\m S^1)/\PSU(1,1),\\
    [\g] & \xleftrightarrow[]{\hspace{60pt}}  [\weld  = g^{-1} \circ f|_{\m S^1}].
\end{align*}

Considering the equivalence classes above is natural. On the right-hand side, one may pre-compose $f$ and $g$ by M\"obius transformations preserving $\m S^1$, \ie $\PSU(1,1)$, this does not change the curve $\g$ and gives an equivalent welding. On the left-hand side, replacing $(\g; f,g)$ by $(\a \circ \g; \a \circ f, \a \circ g)$, where $\a \in \PSL(2, \m C)$ is a conformal automorphism of $\Chat$, shows that $\g$ and $\a \circ \g$ share the same welding.

 The \emph{Loewner energy} of a Jordan curve $\g$ is first defined as the Dirichlet energy of its Loewner driving function in \cite{RW}. It was shown to be $\PSL(2,\m C)$-invariant and independent of its orientation. The second author \cite{W2} also showed that it coincides with the universal Liouville action in \cite{TT06}, which can be computed by choosing the orientation of $\g$ such that $\O$ does not contain $\infty$, and $g : \m D^* \to \O^*$ fixing $\infty$:
 \begin{equation}\label{eq:def_IL}
     I^L(\g) = \frac{1}{\pi} \left(\int_{\m D} \abs{\frac{f''}{f'}}^2 |\dd z|^2 + \int_{\m D^*} \abs{\frac{g''}{g'}}^2 |\dd z|^2 \right) + 4 \log \abs{\frac{f'(0)}{g'(\infty)}} \in [0,\infty],
 \end{equation}
 where $|\dd z|^2$ denotes the Euclidean area measure. 
 Since $I^L(\g)$ is $\PSL(2,\m C)$-invariant, it defines a bi-invariant function on the space of weldings.
 \begin{thm}[See \cite{shen13}]
    A Jordan curve $\g$ satisfies $I^L (\g) < \infty$ if and only if any welding of $\g$ belongs to $\WP(\m S^1)$, 
    where 
    $\WP(\m S^1)$ is the space of circle homeomorphisms $\varphi$ that are absolutely continuous and
    $\log |\varphi'|$ is in the Sobolev $H^{1/2}(\m S^1)$ space.
 \end{thm}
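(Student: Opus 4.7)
The plan is to translate finiteness of $I^L(\gamma)$ into an $H^{1/2}$ condition on the boundary traces of $\log f'$ and $\log g'$, and then relate this to $\log \varphi' \in H^{1/2}$ via the chain rule; the last step is the Weil--Petersson characterization due to Shen.

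First, since $f: \m D \to \Omega$ is conformal, $f'$ is non-vanishing and $\log f'$ is a single-valued holomorphic function on $\m D$. By the Cauchy--Riemann equations,
\[
\int_{\m D} \abs{\frac{f''}{f'}}^2 |\dd z|^2 = \int_{\m D} |(\log f')'|^2 |\dd z|^2
\]
is (up to a factor of two) the Dirichlet energy of the harmonic functions $\log|f'|$ and $\arg f'$ on $\m D$. The same identity holds for $g$ on $\m D^*$. The additive boundary term $4\log|f'(0)/g'(\infty)|$ is always finite, so finiteness of $I^L(\gamma)$ is equivalent to the simultaneous finiteness of the Dirichlet energies of $\log f'$ on $\m D$ and of $\log g'$ on $\m D^*$. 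Now I invoke the classical trace theorem: the harmonic extension is an isomorphism from $H^{1/2}(\m S^1)$ onto the space of Dirichlet-class harmonic functions on the disk (modulo constants), with equivalent norms. Applied to the real and imaginary parts of $\log f'$ and $\log g'$, this gives
\[
I^L(\gamma) < \infty \iff \log f'|_{\m S^1}, \; \log g'|_{\m S^1} \in H^{1/2}(\m S^1, \m C).
\]

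Next, apply the chain rule on $\m S^1$: $\log \varphi'(z) = \log f'(z) - (\log g')(\varphi(z))$. Thus one needs to show that, for a quasisymmetric $\varphi$, the conditions ``$\log \varphi' \in H^{1/2}(\m S^1)$'' and ``$\log f'|_{\m S^1},\log g'|_{\m S^1} \in H^{1/2}(\m S^1, \m C)$'' are equivalent. This is the content of \cite{shen13}. Shen's approach uses the double-integral form of the $H^{1/2}$ seminorm together with a change of variables controlled by the quasisymmetric distortion of $\varphi$, combined with the fact that $\WP(\m S^1)$ is closed under composition and inversion. Concretely, one shows that $\WP(\m S^1)$ is precisely the (largest) class of circle homeomorphisms for which the composition operator $u \mapsto u\circ \varphi$ is bounded on $H^{1/2}(\m S^1)$, which makes the chain rule rigorous. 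In both implications one must also bootstrap Weil--Petersson regularity of $\varphi$ itself from one of the Sobolev hypotheses so that the composition appearing in the chain rule is well-defined as an $H^{1/2}$ function.

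The main obstacle is this last step. Composition with an arbitrary quasisymmetric homeomorphism does not in general preserve $H^{1/2}(\m S^1)$, so neither implication can be reduced to a direct change of variables. Breaking this apparent circularity --- simultaneously extracting Weil--Petersson regularity of $\varphi$ from the joint Sobolev regularity of $\log f'$ and $\log g'$, and conversely --- requires the density/approximation argument together with the quantitative control of the composition operator that constitutes the technical heart of Shen's proof.
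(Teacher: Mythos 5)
The paper does not actually prove this statement: it is imported verbatim from Shen's work (see \cite{shen13}), together with the Takhtajan--Teo identification of finite universal Liouville action with the Weil--Petersson Teichm\"uller space, so there is no internal argument for you to match. Your opening reduction is correct as far as it goes: since $|f''/f'|^2=|(\log f')'|^2$, and the additive term $4\log|f'(0)/g'(\infty)|$ is always finite, $I^L(\g)<\infty$ is equivalent to $\log f'$ and $\log g'$ lying in the holomorphic Dirichlet space, hence by the trace (Douglas) theorem to their boundary values lying in $H^{1/2}(\m S^1)$; and $\log\varphi'=\log f'-(\log g')\circ\varphi$ is the right identity to aim at.

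Beyond that point the proposal has a genuine gap, and one misstatement. The misstatement: by the Nag--Sullivan theorem, precomposition $u\mapsto u\circ\varphi$ is bounded on $H^{1/2}(\m S^1)$ (mod constants) for \emph{every} quasisymmetric $\varphi$; it is $\QS(\m S^1)$, not $\WP(\m S^1)$, that is characterized by boundedness of the composition operator, so the ``largest class'' claim on which you hang the argument is false. The real difficulties are elsewhere: in the direction $I^L(\g)<\infty\Rightarrow\varphi\in\WP(\m S^1)$ one must first establish that $\varphi$ is absolutely continuous with $\varphi'\neq 0$ a.e., so that $\log|\varphi'|$ is even defined (this uses rectifiability/asymptotic smoothness of finite-energy curves and Hardy-space properties of $f'$, $g'$), and only then does the chain rule combined with Nag--Sullivan yield $\log|\varphi'|\in H^{1/2}$; in the converse direction one must, starting from an abstract homeomorphism with $\log|\varphi'|\in H^{1/2}$, solve the welding problem and produce conformal maps with $\log f',\log g'$ of finite Dirichlet energy (equivalently a quasiconformal extension with hyperbolically square-integrable Beltrami coefficient). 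These are exactly the steps you delegate to \cite{shen13}, so the proposal is a (partially flawed) reduction to the cited theorem rather than a proof of it; as a self-contained argument it does not stand, and as a citation it adds nothing beyond what the paper already does.
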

 From this, we have the inclusions:
 \begin{lem}
      For all $\vare > 0$, 
 $$C^{1.5+\vare} (\m S^1) \subset \WP(\m S^1) \subset \QS(\m S^1).$$
 In particular, the Loewner energy is well-defined, finite, and bi-invariant in the space of $C^3$ circle homeomorphisms.
 \end{lem}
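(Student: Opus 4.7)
The plan is to establish the two inclusions separately and then read off the consequence for $C^3$ homeomorphisms from the Shen theorem stated above.

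For the first inclusion $C^{1.5+\vare}(\m S^1) \subset \WP(\m S^1)$, I would argue via a direct Sobolev embedding. If $\varphi \in C^{1.5+\vare}(\m S^1)$ is a (orientation-preserving) diffeomorphism, then $\varphi'$ is continuous and strictly positive on the compact set $\m S^1$, hence bounded away from $0$. By writing $\varphi' = 1 + (\varphi'-1)$ and composing with the smooth map $t \mapsto \log t$ on a neighborhood of the image of $\varphi'$, one checks that $\log|\varphi'|$ inherits the H\"older regularity $C^{0.5+\vare}$ from $\varphi'$. It then remains to verify the embedding $C^{0.5+\vare}(\m S^1) \subset H^{1/2}(\m S^1)$. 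Using the Gagliardo--Slobodeckij characterization of the fractional Sobolev norm,
\[
\|f\|_{H^{1/2}(\m S^1)}^2 \asymp \|f\|_{L^2}^2 + \int_{\m S^1}\int_{\m S^1} \frac{|f(x)-f(y)|^2}{|x-y|^2}\, \dd x \, \dd y,
\]
a function $f \in C^{0.5+\vare}$ satisfies $|f(x)-f(y)|^2 \leq C |x-y|^{1+2\vare}$, so the integrand is dominated by $|x-y|^{-1+2\vare}$, which is integrable on $\m S^1 \times \m S^1$. The absolute continuity of $\varphi$ is immediate from its $C^1$ regularity. Hence $\varphi \in \WP(\m S^1)$.

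For the second inclusion $\WP(\m S^1) \subset \QS(\m S^1)$, I would appeal to the classical theory (see, e.g., the references in \cite{TT06,shen13}): any $\varphi$ with $\log|\varphi'| \in H^{1/2}(\m S^1)$ admits a quasiconformal extension to $\m D$ whose Beltrami coefficient lies in the Bers-type $L^2$ space with respect to the hyperbolic area, so that $\varphi$ is quasisymmetric. The quickest path is to use the Douady--Earle or Beurling--Ahlfors extension together with the Sobolev regularity of $\log |\varphi'|$, or more directly, to invoke the equivalent characterizations of $\WP(\m S^1)$ in \cite{TT06,shen13} which already include quasisymmetry. I expect this to be the least self-contained step; its full justification is not trivial but is standard and well-documented in the cited literature.

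For the consequence about $C^3$ homeomorphisms: since $C^3 \subset C^{1.5+\vare}$ for any $\vare \in (0, 1.5)$, the two inclusions give $C^3(\m S^1) \subset \WP(\m S^1) \subset \QS(\m S^1)$. Hence for any $\varphi \in C^3(\m S^1)$, conformal welding produces a Jordan curve $\g$ with welding $\varphi$, and Shen's theorem yields $I^L(\g) < \infty$; i.e., the Loewner energy is well-defined and finite on $C^3(\m S^1)$. Bi-invariance under $\PSU(1,1)$ follows from the $\PSL(2,\m C)$-invariance of $I^L$ on Jordan curves already recorded above (together with the fact that pre- or post-composing the welding by an element of $\PSU(1,1)$ corresponds to replacing $f$ or $g$ by its post-composition with a M\"obius transformation of $\Chat$, i.e., to conjugating the associated quasicircle by an element of $\PSL(2,\m C)$).
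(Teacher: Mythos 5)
Your proposal is correct: the paper states this lemma without proof, treating it as an immediate consequence of Shen's theorem and standard facts, and your argument supplies exactly those standard details (the H\"older-to-$H^{1/2}$ Gagliardo--Slobodeckij embedding for the first inclusion, the characterization of the Weil--Petersson class in \cite{TT06,shen13} for the second, and Shen's theorem together with the $\PSL(2,\m C)$-invariance of $I^L$ and conformal welding for the consequence). The only implicit point worth flagging is that $C^{1.5+\vare}(\m S^1)$ must be read as orientation-preserving diffeomorphisms of that regularity, so that $\varphi'$ is bounded away from zero and $\log|\varphi'|$ is defined, which is how both you and the paper use it.
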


\subsection{Epstein map in any dimension} \label{subsec:epstein_general}
We describe briefly the general construction of Epstein maps introduced in \cite{epstein-envelopes}. This construction will be used to define renormalized area and renormalized length.

Let $ n \ge 1$. We use the ball model of the hyperbolic $n+1$ space
$$\m H^{n+1} = \left\{x \in \m R^{n+1} \, |\, \norm{x} < 1\right\},$$ whose conformal boundary is identified with the Euclidean unit ball $\m S^n$.
Let $h$ be a smooth Riemannian metric on $\m S^n$ conformal to the round metric (induced by the Euclidean metric of $\m R^{n+1}$). For $x \in  \m H^{n+1}$ in the ball model, we let $\nu_x$ be the hyperbolic visual metric on the unit sphere  $\m S^n = \partial \m H^{n+1}$ seen from $x$ (namely, the pull-back of the round metric on $\m S^n$ by any orientation-preserving isometry of $\m H^{n+1}$ sending $x$ to the origin $0$), then for $z\in \m S^n$, we define
\begin{equation}\label{eq:def_horo}
    H_z = H_z (h) = \{ x\in \m H^{n+1} \,|\, \nu_x(z) = h(z)\}.
\end{equation}
It is straightforward to check:
\begin{lem}\label{lem:horo_t}
    The set $H_z (h)$ is a horosphere centered at $z$ and depends only on the metric tensor $h (z)$. Let $t \in \m R$, $H_z(\ee^{2t} h)$ is the horosphere centered at $z$ at signed distance $t$ from $H_z (h)$. The sign is such that if $t > 0$, $H_z(\ee^{2t} h)$ is contained in the horoball bounded by $H_z (h)$.
\end{lem}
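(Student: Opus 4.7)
The plan is to recast the defining condition \eqref{eq:def_horo} as a level set of a Busemann function, from which both parts of the lemma follow directly. To that end I would first write the two metrics in terms of conformal factors relative to the round metric $g_0$: $h = e^{2\sigma_h}g_0$ and $\nu_x = e^{2\rho(x,\,\cdot\,)}g_0$. The pullback $\nu_x = \phi_x^\ast g_0$ is well-defined since any two hyperbolic isometries $\phi_x$ sending $x$ to $0$ differ by an element of $\mathrm{SO}(n+1)$, which preserves $g_0$. The condition $\nu_x(z) = h(z)$ is then the scalar equation $\rho(x,z) = \sigma_h(z)$, which already shows that $H_z(h)$ depends on $h$ only through $h(z)$.

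The substantive step is to identify, for each fixed $z$, the function $x \mapsto \rho(x,z)$ with $-b_z(x)$, where $b_z$ is the Busemann function at $z$ normalized by $b_z(0) = 0$. Granting this, $H_z(h)$ becomes the level set $\{x \in \m H^{n+1} : b_z(x) = -\sigma_h(z)\}$, and level sets of $b_z$ are by definition horospheres centered at $z$. I expect this identification to be the only nontrivial ingredient: it can be extracted either from a cocycle identity on $\rho$ coming from the composition law $\phi_{\psi(x)} = \phi_x \circ \psi^{-1}$ (up to the stabilizer of $0$), or by a direct computation in the half-space model with $z = \infty$, in which isometries translating toward $z$ are Euclidean dilations $y \mapsto y/t$ and the boundary derivative at $\infty$ with respect to $g_0$ is manifestly $e^{-b_\infty}$.

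For the displacement statement, replacing $h$ by $e^{2t}h$ shifts $\sigma_h$ by $t$, so the Busemann value defining $H_z(e^{2t}h)$ shifts by $-t$. Since $-\nabla b_z$ is the unit tangent field along geodesics pointing toward $z$, the horospheres $\{b_z = c\}$ and $\{b_z = c - t\}$ lie at signed hyperbolic distance $|t|$, with the latter strictly inside the horoball bounded by the former when $t > 0$. The genuinely delicate point of the whole argument is pinning down the sign in $\rho = -b_z$; once this is settled by the explicit half-space calculation, both assertions of the lemma follow immediately.
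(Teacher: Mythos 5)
Your argument is correct. The paper offers no proof of this lemma --- it is introduced with ``it is straightforward to check'' --- so there is nothing to compare against; the Busemann-function route you take is the standard way to verify it. For the one step you flag as delicate, the sign in $\rho(x,z)=-b_z(x)$, the cleanest verification is in the ball model at a finite boundary point rather than at $z=\infty$ in the half-space model (where the round metric needs a chart at infinity): the boundary conformal factor of $\phi_x$ at $z\in\m S^n$ is the Poisson kernel $P(x,z)=(1-\norm{x}^2)/\norm{x-z}^2$, so $e^{2\rho(x,z)}=P(x,z)^2$ while $b_z=-\log P(\,\cdot\,,z)$ is the Busemann function normalized by $b_z(0)=0$; its level sets are visibly spheres internally tangent to $\m S^n$ at $z$, and the displacement and containment claims follow, as you say, from $\lvert\nabla b_z\rvert\equiv 1$ with $-\nabla b_z$ pointing toward $z$. (One can also sanity-check the sign against the explicit $n=1$ horocycle formula in Section~\ref{subsec:eps_1}, where increasing $\sigma$ visibly shrinks the horoball toward $z$.)
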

The \emph{Epstein map} $\Ep_h$ is the solution to the envelope equation of these horospheres. More precisely, there exists a (unique) continuous map 
\begin{equation}\label{eq:tilde_Ep}
    \widetilde{\Ep}_h\colon \m S^n \to T^1\m H^{n+1},
\end{equation}
where $T^1\m H^{n+1}$ denotes the unit tangent bundle of $\m H^{n+1}$,
 such that $\widetilde{\Ep}_h(z)$ is an outward pointing normal to the horoball bounded by $H_z$, 
$$\Ep_h\colon \m S^n \to \m H^{n+1}$$
is the composition of $\widetilde\Ep_h$ with the projection $ T^1\m H^{n+1} \to \m H^{n+1}$, and the image of the differential of $\Ep_h$ at $z$ is orthogonal to $\widetilde\Ep_h(z)$. We call the image of $\Ep_h$ the \emph{Epstein hypersurface associated with $h$} and denote it by $\S_h$. (We sometimes also do not distinguish between $\Ep_h$ and $\S_h$.)
In particular, $\Ep_h(z)$ is a tangent point between the horosphere $H_z$ and the Epstein hypersurface.

\begin{ex}\label{ex:round}
    If $h$ is the round metric on $\m S^n$, then for all $z \in \m S^n$, $H_z$ is the horosphere passing through $0$ and centered at $z$. The function $\widetilde{\Ep}_h$ maps $z \in \m S^n$ to $-z \in T^1_0 \m H^{n+1}$ and the Epstein hypersurface degenerates to the origin. 
\end{ex}

We record some basic facts about Epstein maps.

\begin{lem}[Naturality of Epstein map]\label{lem:naturality}
    If $x\in\mathbb{H}^{n+1}$, $\a \in \Isom_+(\mathbb{H}^{n+1})$ then $\nu_{\a(x)} = \a_*(\nu_x)$, where $\a_*$ is the pushforward by $\a$. 
It follows that for $z \in \m S^{n}$,
\begin{equation}\label{eq:Eps_invariant}
\a (H_z (h)) = H_{\a(z)} (\a_* h) \quad \text{ and } \quad  \a \circ \Ep_{h} (z) = \Ep_{\a_* h} (\a (z)).
\end{equation}
\end{lem}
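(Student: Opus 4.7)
The plan is to prove the three assertions in the order stated, since each uses the previous one. The core observation is that everything in the definition of $H_z(h)$ and $\Ep_h$ is characterized intrinsically in terms of hyperbolic isometry data, so naturality should be automatic once we unpack the definitions.

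For the first claim $\nu_{\a(x)} = \a_*(\nu_x)$, I would start from the given definition: $\nu_x = \phi^* g_{\m S^n}$ for any orientation-preserving isometry $\phi \in \Isom_+(\m H^{n+1})$ sending $x$ to the origin $0$, where $g_{\m S^n}$ is the round metric. Well-definedness follows because two such isometries differ by an element of $\Isom_+(\m H^{n+1})$ fixing $0$, which acts on the boundary by a rotation and hence preserves $g_{\m S^n}$. Now the composition $\phi \circ \a^{-1}$ is an isometry sending $\a(x)$ to $0$, so
\[
\nu_{\a(x)} = (\phi \circ \a^{-1})^* g_{\m S^n} = (\a^{-1})^*\phi^* g_{\m S^n} = (\a^{-1})^* \nu_x = \a_* \nu_x,
\]
using that for a diffeomorphism, pushforward by $\a$ equals pullback by $\a^{-1}$.

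For the second identity $\a(H_z(h)) = H_{\a(z)}(\a_*h)$, I would unpack the definition \eqref{eq:def_horo}. Substituting $y = \a(x)$, one has $x \in H_z(h)$ iff $(\nu_x)_z = h_z$ iff (by the first part) $(\a^*\nu_y)_z = h_z$, and pushing forward by $\a$ turns this into an equality of quadratic forms at $\a(z)$: $(\nu_y)_{\a(z)} = (\a_*h)_{\a(z)}$, i.e., $y \in H_{\a(z)}(\a_*h)$. (One needs only that $\a$ restricts to a diffeomorphism $\m S^n \to \m S^n$ intertwining horospheres based at $z$ with horospheres based at $\a(z)$, which follows from $\a \in \Isom_+(\m H^{n+1})$ extending continuously to $\partial \m H^{n+1}$.)

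For the final identity $\a \circ \Ep_h = \Ep_{\a_* h} \circ \a$, I would argue that the envelope construction itself is natural: the Epstein map $\Ep_h(z)$ is characterized as the tangency point of $H_z(h)$ with the envelope $\S_h$, with outward unit normal $\widetilde\Ep_h(z)$ determined by the horoball. Applying the isometry $\a$ sends the family $(H_z(h))_{z \in \m S^n}$ to $(\a H_z(h))_{z} = (H_{\a(z)}(\a_* h))_{z}$, i.e., after reindexing by $z' = \a(z)$, exactly the horosphere family defining $\Ep_{\a_*h}$. Because $\a$ is an isometry, it preserves tangency of hypersurfaces and outward unit normals to horoballs, so the tangency point $\a(\Ep_h(z))$ between $\a(H_z(h)) = H_{\a(z)}(\a_*h)$ and $\a(\S_h)$ must coincide with $\Ep_{\a_*h}(\a(z))$. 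The only subtle point to check is uniqueness of the tangency point/normal in the definition \eqref{eq:tilde_Ep}, but this is already built into the setup of $\widetilde\Ep$ as a continuous section. I expect no serious obstacle here; the main care is bookkeeping the distinction between pullback and pushforward when comparing metrics at $z$ versus $\a(z)$.
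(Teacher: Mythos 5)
Your proposal is correct: the paper states this lemma without proof, treating it as an immediate consequence of the definitions, and your argument is exactly the intended unwinding — well-definedness and naturality of the visual metric $\nu_x$ via composing with $\phi\circ\a^{-1}$, the set-level identity for $H_z(h)$ from \eqref{eq:def_horo}, and transport of the horosphere family under $\a$. The one step that genuinely carries weight, invoking the uniqueness of the continuous map $\widetilde{\Ep}_h$ in \eqref{eq:tilde_Ep} to identify $d\a\circ\widetilde{\Ep}_h\circ\a^{-1}$ with $\widetilde{\Ep}_{\a_*h}$, is handled correctly by you, since an isometry preserves all three defining properties (outward normal to the horoball, projection to the base point, orthogonality of the image of the differential).
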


\begin{thm}[Scaling property of Epstein map \cite{KrasnovSchlenker_CMP,BBB}] \label{thm:basic_epstein}
Let $h$ be a $C^2$ conformal metric on $\m S^n$. Let $h_t := \ee^{2t} h$ for $t \in \m R$.
\begin{enumerate}
    \item The value of $\widetilde{\Ep}_h(z)$ is determined by $h$ and its first derivatives at $z$.

\item Let $\mathfrak g_t: T^1\m H^{n+1} \rightarrow T^1\m H^{n+1}$ be time $t$ geodesic flow.  Then $\widetilde\Ep_{h_t} = \mathfrak{g}_{-t} \circ \widetilde\Ep_h$. 
    \item Let $\mathfrak g_{-\infty}: T^1 \m H^{n+1} \rightarrow \m S^n$ be the {\em hyperbolic Gauss map} sending a tangent vector to the endpoint of the associated geodesic ray as $t \to -\infty$. Then $\mathfrak g_{-\infty} \left(\widetilde \Ep_h (z) \right) = z$.

    \item For all $z \in \m S^n$, there are at most two values of $t$ where $\Ep_{h_t}$ is not an immersion at $z$.
    \item There exists $t_0$ such that for all $t \ge t_0$, $\Ep_{h_t}$ is an embedding. Let $\operatorname{I}_t$ be the pullback metric by $\Ep_{h_t}  \colon \m S^n \to \S_{h_t}$, then $4 \ee^{-2t} \operatorname{I}_t \to h$ as $t \to \infty$.
\end{enumerate}
\end{thm}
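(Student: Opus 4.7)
The plan is to establish the five items in order, exploiting two basic facts: (a) the horosphere $H_z(h)$ depends only on the metric tensor $h(z)$ (already built into the definition via $\nu_x(z) = h(z)$), and (b) for the rescaled family $h_t := \ee^{2t}h$, the horospheres $H_z(h_t)$ are the geodesic-flow translates of $H_z(h)$ by time $-t$ along the geodesic ending at $z$, as spelled out in Lemma~\ref{lem:horo_t}.

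For item (1), I would work in an upper half-space model with $z$ the point at infinity (horospherical coordinates at $z$). In these coordinates, $H_z(h)$ is a horizontal horosphere at height determined by $h(z)$. The envelope condition that the Epstein hypersurface be tangent to each nearby $H_{z'}(h)$, linearized in $z'$ at $z'=z$, becomes a linear system whose coefficients only involve $h$ and $\nabla h$ at $z$; this pins down both $\Ep_h(z)\in H_z(h)$ and the outward normal $\widetilde\Ep_h(z)$. Item (2) is then immediate from (a), (b), and the naturality of tangency under isometries (Lemma~\ref{lem:naturality}): the envelope of $\{H_{z'}(h_t)\}$ is the $\mathfrak g_{-t}$-image of the envelope of $\{H_{z'}(h)\}$. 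Item (3) only records that the outward normal to a horosphere centered at $z$ points away from $z$, so the backward geodesic ray starting from $\widetilde\Ep_h(z)$ accumulates to $z$ on $\m S^n$.

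Item (4) is the technical heart. Fix $z$; by item (2), the curve $s\mapsto \Ep_{h_s}(z)$ is a unit-speed geodesic $\gamma_z$ in $\m H^{n+1}$. Varying $z$ with $t$ fixed, $d\Ep_{h_t}|_z$ is a linear map $T_z\m S^n\to \widetilde\Ep_{h_t}(z)^\perp$ whose matrix entries, regarded as fields along $\gamma_z$, satisfy the Jacobi equation in constant curvature $-1$. At values of $t$ where $\Ep_{h_t}$ is an immersion near $z$, the shape operator $S(t)$ of $\S_{h_t}$ at $\Ep_{h_t}(z)$ is well-defined and satisfies the hyperbolic Riccati equation $\dot S = -S^2+\Id$; its explicit solution shows that an eigenvalue of $S(t)$ can diverge at only finitely many $t$, and $d\Ep_{h_t}|_z$ fails to be an isomorphism precisely at such values. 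The sharp bound of two from \cite{KrasnovSchlenker_CMP,BBB} is then obtained by tracking the initial condition imposed on $S(0)$ by $h$ and its first two derivatives at $z$.

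For item (5), expanding in horospherical coordinates at each $z$ yields $\Ep_{h_t}(z)\to z$ in the ball model as $t\to +\infty$, together with $\operatorname{I}_t = \tfrac{\ee^{2t}}{4}h + O(\ee^{t})$; rescaling by $4\ee^{-2t}$ gives the stated convergence. Past the finitely many critical $t$-values from (4), $\Ep_{h_t}$ is a local immersion everywhere, and a standard tubular-neighborhood/compactness argument using item (3) upgrades this to an embedding for $t$ large. The main obstacle is the sharp bound of two in item (4): achieving it requires the specific form of the solutions of the hyperbolic Riccati equation and a careful analysis of the initial data coming from $h$, which is exactly the content worked out in the references.
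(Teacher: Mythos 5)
This theorem is not proved in the paper at all: it is quoted from \cite{KrasnovSchlenker_CMP,BBB}, and the only internal verification is the explicit $n=1$ computation in Section~\ref{subsec:eps_1}, where item (2) is checked directly from the formulas \eqref{eq:Epstein_point}--\eqref{eq:Epsteinnormal} and the non-immersion locus is later identified with $\{k^*_t=-1\}$, $k^*_t=\ee^{-2t}k^*$. So your sketch has to stand on its own, and as written it has two genuine soft spots.

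First, in item (2) you invoke Lemma~\ref{lem:naturality} (``naturality of tangency under isometries''), but the geodesic flow $\mathfrak g_{-t}$ is \emph{not} induced by an isometry of $\m H^{n+1}$, so that lemma does not apply. The correct justification is either the explicit formula (the route the paper takes for $n=1$) or the uniqueness clause in the definition of $\widetilde{\Ep}_h$: one checks that $\mathfrak g_{-t}\circ\widetilde{\Ep}_h$ satisfies the three defining conditions for $h_t$ --- its base point lies on $H_z(h_t)$ by Lemma~\ref{lem:horo_t}, it is an outward normal there because the geodesic ending at $z$ is orthogonal to every horosphere centered at $z$, and orthogonality of the differential is preserved under the normal (equidistant) flow --- and then concludes by uniqueness. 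Second, and more seriously, item (4) is not actually proved: the Riccati blow-up argument you describe only yields \emph{finitely many} critical times, and you explicitly defer the bound of two to the references, which is exactly where the nontrivial content lies. The standard argument, consistent with the paper's one-dimensional computation (Lemma~\ref{lem:asymptoticforms}: $\dd\ell_t=\cosh(t)\,\dd\ell+\sinh(t)\,k\,\dd\ell$), uses the Jacobi-field structure you allude to but then exploits it concretely: along the normal geodesic, $d\Ep_{h_t}|_z(v)=\cosh(t)\,P_t X(v)+\sinh(t)\,P_t Y(v)$ with $P_t$ parallel transport and $X,Y$ fixed linear maps determined by the $2$-jet of $h$ at $z$, so non-immersion at $z$ is the algebraic condition $\det\bigl(X+\tanh(t)Y\bigr)=0$; since $t\mapsto\tanh(t)$ is injective, each eigenvalue contributes at most one critical time, giving at most one value when $n=1$ and at most two in the surface case treated in \cite{KrasnovSchlenker_CMP,BBB}. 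Without carrying out this (or an equivalent) computation of the initial data $X,Y$ from $h$, your item (4) remains a plan rather than a proof; items (1), (3), and (5) are fine as sketches.
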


\begin{remark}\label{rem:equidistant_foliation}
    In particular, the conformal metric $h$ defines an equidistant foliation  $(\S_{h_t})_{t \ge t_0}$ in a neighborhood of $\m S^n$.
\end{remark}

The Epstein map is defined locally, therefore, we may extend its definition to a conformal metric defined on an open domain $\O \subset \m S^n$. In Equation~\eqref{eq:VR_IL}, $n = 2$ and $N_\g$ is the domain bounded by two Epstein surfaces $\S_{\rho_\O}$ and $\S_{\rho_{\O^*}}$, where $\rho_\O$ and $\rho_{\O^*}$ are respectively the hyperbolic metric on $\O$ and $\O^*$.

\section{Epstein curves}

\subsection{Epstein map in one dimension} \label{subsec:eps_1}

In this section, we consider Epstein maps when $n=1$. 
With slight abuse of notation, we use the length element of the metric in this section.  
Let $h = \ee^{\sigma(\theta)} \dd\theta$ be a $C^2$ metric on $\m S^1$. We use $\s_\t$ to denote the derivative of $\s$ with respect to $\t$ (and only for $\theta$).

The \emph{Epstein curve} associated with $h$ is the curve (envelope) in $\m D =  \m H^2$ tangent to the
$1$-parameter family of horocycles $$H_{\ee^{\ii\theta}} = \left\{ \frac{\ee^{\sigma(\theta)}}{\ee^{\sigma(\theta)}+1} \ee^{\ii\theta} + \frac{1}{\ee^{\sigma(\theta)}+1} Y\,\Big|\, Y\in \m S^1  \right\}$$ 
which follows from \eqref{eq:def_horo}.
Explicitly, the formula of the envelope derived in \cite{epstein-envelopes} showed that it is described by the parametrized curve $\Ep_h: \m S^1 \rightarrow \mathbb{D}$
\begin{align}\label{eq:Epstein_point}
\Ep_h(\ee^{\ii\theta}) = \frac{\sigma_{\theta}^2 + (\ee^{2\sigma}-1)}{\sigma_{\theta}^2+(\ee^{\sigma}+1)^2}\ee^{\ii\theta} + \frac{2\sigma_{\theta}}{\sigma_{\theta}^2+(\ee^{\sigma}+1)^2} \ii \ee^{\ii\theta}
\end{align}
with normal given by 
$\widetilde{\Ep}_h: \mathbb{S}^1 \rightarrow T^1\mathbb{D}$
\begin{align}\label{eq:Epsteinnormal}
    \widetilde{\Ep}_h(\ee^{\ii\theta}) = \frac{2\ee^{\sigma}\left(\sigma_{\theta}^2 - (\ee^{\sigma}+1)^2\right)}{\left(\sigma_{\theta}^2+(\ee^{\sigma}+1)^2\right)^2}\ee^{\ii\theta} + \frac{4\sigma_{\theta}\ee^{\sigma}(\ee^\sigma+1)}{\left(\sigma_{\theta}^2+(\ee^{\sigma}+1)^2\right)^2} \ii\ee^{\ii\theta}.
\end{align}

For a constant rescaling of the metric $h_t = \ee^th = \ee^t \ee^\sigma\dd \t$ ($t\in\mathbb{R}$), we check that $\widetilde{\Ep}_{h_t}(\ee^{\ii\theta})$ is obtained by flowing the vector $\widetilde{\Ep}_{h}(\ee^{\ii\theta})$ distance $-t$ along the geodesic flow, as stated in Theorem~\ref{thm:basic_epstein}. See Appendix~\ref{sec:examples} for a few examples.

\begin{ex}\label{ex:Ep_point}
    For $h=\dd \theta$, the Epstein curve $\Ep_h$ is the constant curve at the origin $(0,0)\in\mathbb{D}$, while $\widetilde\Ep_h$ is an arc-length parametrization of the unit vectors at $(0,0)$. For $h_t=\ee^th$, the Epstein curve $\Ep_{h_t}$ is a parametrization of the circle $\{z \in \m D \colon |z| = \tanh(|t|/2)\}$ of hyperbolic radius $|t|$ around $(0,0)\in\mathbb{D}$. 
    Let $T (\ee^{\ii \t})$ be the tangent vector $D_\t \Ep_h$, then when $t > 0$, $(T,\widetilde{\Ep}_h)$ is positively oriented, and when $t < 0$, $(T,\widetilde{\Ep}_h)$ is negatively oriented. 
\end{ex}

\begin{ex} \label{ex:horocycle}
    For $h = 2 \,\dd \theta / (1-\cos\theta)=\dd \theta / \sin^2 (\t/2)$, the Epstein curve $\Ep_h$ is the horocycle $H\in\mathbb{D}$ tangent at $1$ and passing through $(0,0)$, and $\widetilde{\Ep}_h$ parametrizes the inner unit normal vectors of the horocycle $H$. For $h_t=\ee^th$, the Epstein curve $\Ep_{h_t}$ is the horocycle at signed distance $t$ from $H$.
\end{ex}

\begin{remark}
    While we refer to $\Ep_h$ as an Epstein \emph{curve}, it is truly a parametrized path as in Theorem~\ref{thm:basic_epstein}. This path might fail to be immersed (Example~\ref{ex:Ep_point} as the extreme scenario), but it has a continuous choice of unit normal given by $\widetilde{\Ep}_h$. We will use this choice of normal to appropriately extend geometric quantities to Epstein curves.
\end{remark}

\begin{definition}\label{def:length_curv}
    Given $h$ a $C^2$ metric on $\m S^1$ with Epstein curve $\Ep_h$, we define the \emph{signed arc-length} $\dd\ell$ as the standard arc-length $1$-form of $\S_h$ 
    times the sign of the orientation of the pair $(T=D_\t \Ep_h,\widetilde{\Ep}_h)$ \textnormal(taken to be $0$ if $T$ vanishes\textnormal). At points where $\Ep_h$ is an immersion, we define its \emph{geodesic curvature} $k$ as the geodesic curvature of the curve $\S_h$ parametrized by $\Ep_h$ with respect to $\widetilde{\Ep}_h$.
\end{definition}

The following lemma is a useful way to compute the arc-length and geodesic curvature of the Epstein curve in terms of the conformal factor at infinity.

\begin{lem}\label{lem:length_curv_local}
    For a metric $h=\ee^\sigma \dd \theta$ in $\m S^1$, the signed arc-length $\dd \ell$ and geodesic curvature $k$ of the Epstein curve $\Ep_h$ can be expressed as
    \begin{align}
        \dd\ell &= \left(\ee^{-\sigma}\left( \sigma_{\theta\theta}-\frac{\sigma_{\theta}^2}{2}\right)+\sinh(\sigma)\right) \dd\theta,\\\nonumber
        k &= \frac{1-\ee^{-2\sigma}(2\sigma_{\theta\theta}-\sigma_{\theta}^2-1)}{1+\ee^{-2\sigma}(2\sigma_{\theta\theta}-\sigma_{\theta}^2-1)}.
    \end{align}
\end{lem}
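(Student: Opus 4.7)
The plan is to prove the arc-length formula by direct algebraic computation from the parametric expressions \eqref{eq:Epstein_point} and \eqref{eq:Epsteinnormal}, then deduce the curvature formula from the scaling behavior of $\Ep_h$ given by Theorem~\ref{thm:basic_epstein}. The single identity that organises the calculation is
\[
1 - |\Ep_h(\ee^{\ii\theta})|^2 \;=\; \frac{4\ee^\sigma}{r}, \qquad r := \sigma_\theta^2 + (\ee^\sigma + 1)^2,
\]
obtained by expanding $|\Ep_h|^2$ from \eqref{eq:Epstein_point} and using the factorization $\ee^{2\sigma}-1 = (\ee^\sigma-1)(\ee^\sigma+1)$. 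Geometrically this is equivalent to $\Ep_h(\ee^{\ii\theta})$ lying on the horocycle $H_{\ee^{\ii\theta}}(h)$, i.e.\ on the locus $(1-|x|^2)/|\ee^{\ii\theta}-x|^2 = \ee^{\sigma(\theta)}$ in the Poincar\'e disk. As consequences, the hyperbolic conformal factor at $\Ep_h$ equals $r/(2\ee^\sigma)$, and the Euclidean unit vector in the direction of $\widetilde{\Ep}_h$ is $n := (r/(2\ee^\sigma))\,\widetilde{\Ep}_h$ (as also verified directly from \eqref{eq:Epsteinnormal}).

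For the arc-length, write $\Ep_h = \ee^{\ii\theta}(A+\ii B)$ with $A = (\sigma_\theta^2+\ee^{2\sigma}-1)/r$ and $B = 2\sigma_\theta/r$, differentiate in $\theta$, and use the envelope property $\partial_\theta\Ep_h \perp \widetilde{\Ep}_h$ to express the signed hyperbolic arc-length as
\[
\frac{\dd\ell}{\dd\theta} \;=\; \frac{r}{2\ee^\sigma}\,\det\bigl(\partial_\theta\Ep_h,\,n\bigr),
\]
where $\det$ is the Euclidean $2\times 2$ determinant; this simultaneously captures $|\partial_\theta\Ep_h|_{\mathrm{hyp}}$ and the sign prescribed by Definition~\ref{def:length_curv} via the orientation of $(\partial_\theta\Ep_h, \widetilde{\Ep}_h)$. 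Systematic simplification of the resulting polynomial, using the key identity and the factorization above, reduces it to $\ee^{-\sigma}(\sigma_{\theta\theta}-\sigma_\theta^2/2)+\sinh\sigma$; as a sanity check, $\sigma\equiv t>0$ yields $\sinh t$, in agreement with Example~\ref{ex:Ep_point}.

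The curvature then follows by a Jacobi-type scaling argument. Applying the just-proved arc-length formula with $\sigma$ replaced by $\sigma+s$ (corresponding to $h_s := \ee^s h$) gives $\dd\ell_{h_s}/\dd\theta = \ee^{-(\sigma+s)}(\sigma_{\theta\theta}-\sigma_\theta^2/2) + \sinh(\sigma+s)$. On the other hand, by Theorem~\ref{thm:basic_epstein}(2), $\Ep_{h_s}$ is the parallel curve to $\Ep_h$ at signed distance $-s$ in the direction of $\widetilde{\Ep}_h$, so the Jacobi equation in $\m D$ gives $\dd\ell_{h_s}/\dd\theta = (\cosh s + k\sinh s)\,\dd\ell_h/\dd\theta$, with $k$ the geodesic curvature in the convention of Definition~\ref{def:length_curv}. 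Differentiating both expressions at $s=0$ and equating gives $k\,(\dd\ell_h/\dd\theta) = -\ee^{-\sigma}(\sigma_{\theta\theta}-\sigma_\theta^2/2) + \cosh\sigma$, which, with $P := 2\sigma_{\theta\theta}-\sigma_\theta^2-1$, rearranges to $k = (\ee^{2\sigma}-P)/(\ee^{2\sigma}+P) = (1-\ee^{-2\sigma}P)/(1+\ee^{-2\sigma}P)$.

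The main obstacle is the arc-length calculation: naive expansion produces unwieldy rational expressions in $\sigma, \sigma_\theta, \sigma_{\theta\theta}, \ee^\sigma$, and the collapse to the clean answer hinges on repeated application of the key identity $1-|\Ep_h|^2 = 4\ee^\sigma/r$. The scaling route to $k$ conveniently avoids computing a Euclidean geodesic curvature (and the third derivatives of $\sigma$ that would otherwise have to cancel); the only minor technicality is that the division by $\dd\ell_h/\dd\theta$ is valid only where this is nonzero, but the formula extends by continuity through the finitely many singular points of the Epstein parametrization.
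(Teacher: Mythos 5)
Your proposal is correct, and it splits into two halves of different character relative to the paper. For the arc-length formula you do essentially what the paper does: differentiate the explicit parametrization \eqref{eq:Epstein_point} and recognize $\partial_\theta \Ep_h$ as a scalar multiple of the rotated unit normal; your identity $1-|\Ep_h|^2 = 4\ee^\sigma/r$ (which I checked: $r^2-(\sigma_\theta^2+\ee^{2\sigma}-1)^2-4\sigma_\theta^2=4\ee^\sigma r$) and the determinant formula are just a tidy way of organizing the same computation, and you leave the final polynomial collapse asserted at about the same level of detail as the paper does. For the curvature, however, your route is genuinely different and arguably cleaner: the paper computes $\langle -\nabla_T\widetilde{\Ep}_h, T\rangle$ directly from \eqref{eq:Epsteinnormal} and divides, whereas you invoke the scaling property (Theorem~\ref{thm:basic_epstein}(2)) to identify $\Ep_{\ee^s h}$ as the normal-flow curve, use the Jacobi equation in $\m H^2$ to get $\dd\ell_s = (\cosh s + k\sinh s)\,\dd\ell$, and differentiate the already-proved arc-length formula in $s$. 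This buys you: no covariant derivative of the normal field, and you recover as a byproduct the relation that the paper only establishes later as Lemma~\ref{lem:asymptoticforms}(3) (there deduced \emph{from} this lemma; your argument is not circular because the Jacobi relation is independent hyperbolic geometry). The costs are that you import Theorem~\ref{thm:basic_epstein}(2) as an external input, and that you must pin down the sign in $\nabla_T\widetilde{\Ep}_h = -kT$ against the paper's orientation conventions --- a quick check against Example~\ref{ex:Ep_point} ($\sigma\equiv t$ gives $\dd\ell_s/\dd\theta=\sinh(t+s)$ and hence $k=\coth t$) settles this and matches the stated formula with $k^*=-1$.

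One small imprecision: your closing remark that the curvature formula ``extends by continuity through the finitely many singular points'' is both unnecessary and not quite right. At non-immersed points the expression for $k$ blows up (the denominator $1+k^*$ vanishes) and Definition~\ref{def:length_curv} only defines $k$ at immersion points, which are exactly the points where $\dd\ell/\dd\theta\neq 0$ and your division is legitimate; also nothing guarantees the singular set is finite. So no extension is needed and none is claimed by the lemma --- what does extend continuously is the product $k\,\dd\ell$, as the paper notes in Remark~\ref{rem:non-immersed}, and your identity $k\,\dd\ell/\dd\theta = -\ee^{-\sigma}(\sigma_{\theta\theta}-\sigma_\theta^2/2)+\cosh\sigma$ is precisely that extension.
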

\begin{proof}
Observe that the tangent vector $T (\ee^{\ii \t})= \partial_\t \Ep_h(\ee^{\ii \t})$ to the Epstein curve is expressed as
\begin{align*}
    T(\ee^{\ii\theta}) 
    &= 
    \frac{2\sigma_{\theta}(\ee^\sigma+1)(2\sigma_{\theta\theta}-\sigma_{\theta}^2+\ee^{2\sigma}-1)}{(\sigma_{\theta}^2+(\ee^{\sigma}+1)^2)^2}
    \ee^{\ii\theta} 
    +
    \frac{((\ee^\sigma+1)^2-\sigma_{\theta}^2)(2\sigma_{\theta\theta}-\sigma_{\theta}^2 +\ee^{2\sigma}-1)}{(\sigma_{\theta}^2+(\ee^{\sigma}+1)^2)^2}
    \,\ii\ee^{\ii\theta}\\
    &=-\ii\widetilde{\Ep}_h(\ee^{\ii\theta})\left(\ee^{-\sigma}\left( \sigma_{\theta\theta}-\frac{\sigma_{\theta}^2}{2}\right)+\sinh(\sigma)\right).
\end{align*}
It follows that the signed arc-length $\,\dd \ell$ (according to the convention in Definition~\ref{def:length_curv}) of the Epstein curve is given by
\begin{align}
    \dd\ell = \left(\ee^{-\sigma}\left( \sigma_{\theta\theta}-\frac{\sigma_{\theta}^2}{2}\right)+\sinh(\sigma)\right) \dd\theta.
\end{align}

From the expression \eqref{eq:Epsteinnormal} for the normal vector to the Epstein curve, we can explicitly calculate
\begin{align}
    \langle-\nabla_T \widetilde{\Ep}_h, T \rangle = \left(\ee^{-\sigma}\left( \sigma_{\theta\theta}-\frac{\sigma_{\theta}^2}{2}\right)+\sinh(\sigma)\right) \left(\ee^{-\sigma}\left( -\sigma_{\theta\theta}+\frac{\sigma_{\theta}^2}{2}\right)+\cosh(\sigma)\right),
\end{align}
where $\nabla, \langle.,.\rangle$ denote the connection and metric of the hyperbolic plane. Hence it follows that the geodesic curvature $k(x)$ of the Epstein curve at $\Ep_h(x)$ is given by
\begin{align}
    k = \frac{\left(\ee^{-\sigma}\left( -\sigma_{\theta\theta}+\sigma_{\theta}^2/2\right)+\cosh(\sigma)\right)}{\left(\ee^{-\sigma}\left( \sigma_{\theta\theta}-\sigma_{\theta}^2/2\right)+\sinh(\sigma)\right)} = \frac{1-\ee^{-2\sigma}(2\sigma_{\theta\theta}-\sigma_{\theta}^2-1)}{1+\ee^{-2\sigma}(2\sigma_{\theta\theta}-\sigma_{\theta}^2-1)}
\end{align}
as claimed.
\end{proof}

\begin{remark}\label{rem:non-immersed}
    The previous lemma justifies Definition~\ref{def:length_curv} in the sense that under our choice, the signed arc-length vanishes but does not lose regularity at points where $\Ep_h$ is not immersed. Similarly, while $k$ blows up at points where $\Ep_h$ is not immersed, the form $k\,\dd\ell$ can be continuously and naturally extended to these points as 
    \begin{align*}
    k\,\dd\ell : = \left(\ee^{-\sigma}\left( -\sigma_{\theta\theta}+\frac{\sigma_{\theta}^2}{2}\right)+\cosh(\sigma)\right) \dd\theta.
    \end{align*}
\end{remark}

\begin{ex}\label{ex:horocycle_2}
   We check that for the horocycle in Example~\ref{ex:horocycle}, we have $k \equiv 1$.
\end{ex}

\begin{definition}\label{def:total_length_curv}
    Given $h=\ee^\sigma \dd\theta$ a $C^2$ metric on $\m S^1$ with Epstein curve $\Ep_h$, we define its total length $L(\Ep_h)$ as
    \begin{align}
        L(\Ep_h) : = \int_{\m S^1} \dd\ell = \int_{\m S^1} \left(\ee^{-\sigma}\left( \sigma_{\theta\theta}-\frac{\sigma_{\theta}^2}{2}\right)+\sinh(\sigma)\right) \dd\theta.
    \end{align}
    Similarly, we define its total geodesic curvature as
    \begin{align*}
        \int_{\m S^1} k\,\dd\ell : = \int_{\m S^1} \left(\ee^{-\sigma}\left( -\sigma_{\theta\theta}+\frac{\sigma_{\theta}^2}{2}\right)+\cosh(\sigma)\right) \dd\theta.
    \end{align*}
\end{definition}

We now extend the notion of enclosed area for Epstein curves.
\begin{definition}\label{def:signed_A}
    Let $h=\ee^\sigma \dd\theta$ be a $C^2$ metric on $\m S^1$. We define its \emph{signed area} $A(\Ep_h)$ as
    \[
    A(\Ep_h) : = \int_{\overline{\m D}} \Phi^*(\dd a_{\rm hyp}) 
    \]
    where $\Phi:\overline{\m D}\rightarrow  \m D$ is any piecewise $C^1$ extension of $\Ep_h$ and $\dd a_{\rm hyp}$ denotes the area form of the hyperbolic disk.
\end{definition}
Observe that since for any two piecewise $C^1$ extensions $\Phi_0$, $\Phi_1$ of $\Ep_h$ there is a piecewise $C^1$ homotopy $\lbrace \Phi_t \rbrace_{0\leq t\leq 1}$ satisfying $\Phi_t(\ee^{\ii\theta}) = \Phi_0(\ee^{\ii\theta}) = \Phi_1(\ee^{\ii\theta})$, it follows by Stokes theorem that
\begin{align*}
    \int_{\overline{\m D}} \Phi_1^*(\dd a_{\rm hyp}) -  \int_{\overline{\m D}} \Phi_0^*(\dd a_{\rm hyp})=0.
\end{align*}
Hence $A(\Ep_h)$ is well-defined.

The following Lemma shows that the Gauss--Bonnet theorem extends to signed area $A(\Ep_h)$ and signed length $\dd\ell$.

\begin{lem}[Gauss--Bonnet for signed length and area] \label{lem:Gauss_Bonnet}
    Let $\hat{h}_0 = \ee^{\sigma_0}\dd\theta$, $\hat{h}_1 = \ee^{\sigma_1}\dd\theta$ be any $C^2$ metrics in $\m S^1$. Then
    \begin{align}\label{eq:AreavsCurv}
        A(\Ep_{\hat{h}_1}) - A(\Ep_{\hat{h}_0}) = \int_{\m S^1} \hat{k}_1\,\dd\hat{\ell}_1 - \int_{\m S^1} \hat{k}_0\,\dd\hat{\ell}_0,
    \end{align}
    where $\hat{k}_i, \dd\hat{\ell}_i$ are the curvature and signed arc-length of $\Ep_{\hat{h}_i}$ $(i=0,1)$. 
    In particular, it follows that Epstein curves satisfy the Gauss--Bonnet theorem. Namely, for any $C^2$ metric $h$ in $S^1$ we have
    \begin{align}\label{eq:Epstein_Gauss-Bonnet}
        A(\Ep_h) = \int_{\m S^1} k\,\dd\ell -2\pi.
    \end{align}
\end{lem}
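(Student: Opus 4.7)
The plan is to prove the Gauss--Bonnet identity \eqref{eq:Epstein_Gauss-Bonnet} first; the identity \eqref{eq:AreavsCurv} will then follow by subtracting two instances, one for $\hat h_0$ and one for $\hat h_1$. To prove \eqref{eq:Epstein_Gauss-Bonnet}, I would show that
\[
\Psi(h) := A(\Ep_h) - \int_{\m S^1} k\,\dd\ell
\]
is constant along any smooth one-parameter family of $C^2$ metrics $h_s = \ee^{\sigma_s}\dd\theta$, and then evaluate $\Psi$ at the round metric $\dd\theta$ using Example~\ref{ex:Ep_point}.

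First I would differentiate $\int k_s\,\dd\ell_s$ using the continuous expression from Remark~\ref{rem:non-immersed} and integrate by parts twice in $\theta$ to eliminate the $\dot\sigma_{s,\theta}$ and $\dot\sigma_{s,\theta\theta}$ contributions. The key simplification is that the two mixed terms involving $\dot\sigma_{s,\theta}$ cancel on integration by parts, leaving
\[
\frac{\dd}{\dd s}\int_{\m S^1} k_s\,\dd\ell_s = \int_{\m S^1}\dot\sigma_s\left[\ee^{-\sigma_s}\!\left(\sigma_{s,\theta\theta} - \tfrac{1}{2}\sigma_{s,\theta}^2\right) + \sinh\sigma_s\right]\dd\theta = \int_{\m S^1}\dot\sigma_s\,\dd\ell_s.
\]
For the signed area, since $\m D$ is contractible, the hyperbolic area form is exact, $\dd a_{\mathrm{hyp}} = \dd\omega$, for instance with $\omega = \frac{2}{1-|z|^2}(x\,\dd y - y\,\dd x)$. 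Stokes' theorem then gives $A(\Ep_h) = \int_{\m S^1}\Ep_h^*\omega$, and applying the Lie-derivative formula $\mathcal{L}_V\omega = \dd\iota_V\omega + \iota_V\dd\omega$ under the integral (the exact piece integrates to $0$ on $\m S^1$) yields
\[
\frac{\dd}{\dd s}A(\Ep_{h_s}) = \int_{\m S^1}\dd a_{\mathrm{hyp}}(V_s, T_s)\,\dd\theta,
\]
where $V_s = \partial_s\Ep_{h_s}$ and $T_s = \partial_\theta\Ep_{h_s}$. From the proof of Lemma~\ref{lem:length_curv_local}, $T_s\,\dd\theta = -\ii\,\widetilde{\Ep}_{h_s}\,\dd\ell_s$; working in the positively oriented hyperbolic orthonormal frame $(-\ii\widetilde{\Ep}_{h_s},\widetilde{\Ep}_{h_s})$ reduces the integrand to $-\langle V_s,\widetilde{\Ep}_{h_s}\rangle_{\mathrm{hyp}}\,\dd\ell_s$.

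The hard part will be the pointwise identity $\langle V_s, \widetilde{\Ep}_{h_s}\rangle_{\mathrm{hyp}} = -\dot\sigma_s$. Since $\Ep_h(z)$ depends only on the $1$-jet $(\sigma(z), \sigma_\theta(z))$ by Theorem~\ref{thm:basic_epstein}, I would decompose
\[
V_s = \dot\sigma_s\,\partial_\sigma\Ep_{h_s} + \dot\sigma_{s,\theta}\,\partial_{\sigma_\theta}\Ep_{h_s},
\]
and a direct (though algebraically lengthy) computation from \eqref{eq:Epstein_point} and \eqref{eq:Epsteinnormal}, viewing both sides as complex-valued vectors based at $\Ep_h$, should yield the two clean identities
\[
\partial_\sigma\Ep_h = -\widetilde{\Ep}_h,\qquad \bigl\langle\partial_{\sigma_\theta}\Ep_h,\widetilde{\Ep}_h\bigr\rangle_{\mathrm{Euc}} = 0.
\]
The Euclidean and hyperbolic metrics on $\m D$ are conformal, so perpendicularity in one is perpendicularity in the other; combined with $\langle\widetilde{\Ep}_h,\widetilde{\Ep}_h\rangle_{\mathrm{hyp}} = 1$, this gives the claim. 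Hence $\frac{\dd}{\dd s}A(\Ep_{h_s}) = \int\dot\sigma_s\,\dd\ell_s$, matching the first step and yielding $\frac{\dd}{\dd s}\Psi(h_s) \equiv 0$.

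Finally I would evaluate $\Psi$ at $h = \dd\theta$. By Example~\ref{ex:Ep_point}, $\Ep_{\dd\theta}$ is the constant path at $0\in\m D$, so taking the constant extension $\Phi\equiv 0$ gives $A(\Ep_{\dd\theta}) = 0$; on the other hand, the continuous integrand of Remark~\ref{rem:non-immersed} at $\sigma\equiv 0$ equals $\cosh(0)\,\dd\theta = \dd\theta$, so $\int k\,\dd\ell = 2\pi$. Therefore $\Psi\equiv -2\pi$, which is exactly \eqref{eq:Epstein_Gauss-Bonnet}, and \eqref{eq:AreavsCurv} follows by subtracting this identity written for $\hat h_0$ from that written for $\hat h_1$.
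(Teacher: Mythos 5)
Your argument is correct and is essentially the paper's: both are variational proofs whose two key inputs are that $\frac{\dd}{\dd s}\int k_s\,\dd\ell_s=\int\dot\sigma_s\,\dd\ell_s$ (your double integration by parts is equivalent to the paper's exact $1$-form $\eta$ on the cylinder) and that the normal component of the variation of $\Ep_{h_s}$ is $-\dot\sigma_s$, which indeed follows from the two identities $\partial_\sigma\Ep_h=-\widetilde{\Ep}_h$ and $\langle\partial_{\sigma_\theta}\Ep_h,\widetilde{\Ep}_h\rangle=0$ that you assert (I checked them against \eqref{eq:Epstein_point}--\eqref{eq:Epsteinnormal}; the paper obtains the same fact from the horosphere-distance interpretation). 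The only organizational difference is that you prove \eqref{eq:Epstein_Gauss-Bonnet} first and anchor at the fully degenerate metric $\dd\theta$ --- legitimate precisely because Definition~\ref{def:total_length_curv} uses the continuous extension of $k\,\dd\ell$ at non-immersed points --- whereas the paper proves \eqref{eq:AreavsCurv} first and anchors at $\ee^t\dd\theta$ with $t>0$, where the classical Gauss--Bonnet theorem applies to an embedded circle.
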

\begin{proof}
    Let $\sigma=\sigma_1-\sigma_0$ and $\sigma_t=\sigma_0+t\sigma$. Consider the $1$-parameter family of metrics $\hat{h}_t = \ee^{\sigma_t} \dd \theta$. Their respective Epstein curves define a parametrized cylinder between $\hat{h}_0, \hat{h}_1$ by $\Psi:\m S^1\times [0,1]\rightarrow \m D$ given as
    \begin{align*}
\Psi(\ee^{\ii\theta}, t) : = \Ep_{\hat{h}_t}(\ee^{\ii\theta}). 
    \end{align*}
    As we can get an extension of $\Ep_{\hat{h}_1}$ by doing a domain reparametrization of a concatenating between $\Psi$ and an extension of $\Ep_{\hat{h}_0}$, it follows that
    \begin{align*}
        A(\Ep_{\hat{h}_1}) - A(\Ep_{\hat{h}_0}) = \int_{\m S^1\times [0,1]} \Psi^*(\dd a_{\rm hyp}).
    \end{align*}
    Denote by $\hat{k}_t, \dd\hat{\ell}_t$ the curvature and signed arc-length of $\Ep_{\hat{h}_t}$. Observe that
    \begin{align*}
        \Psi^*(\dd a_{\rm hyp}) = -\sigma\,\dd \hat{\ell}_t\wedge \dd t = \sigma \left(\ee^{-\sigma_t}\left( (\sigma_t)_{\theta\theta}-\frac{((\sigma_t)_{\theta})^2}{2}\right)+\sinh(\sigma_t)\right) \dd t\wedge\dd \theta
    \end{align*}
    which can be directly computed from \eqref{eq:Epstein_point} and Lemma~\ref{lem:length_curv_local}, or seen by observing that the component of $D_t\Psi$ in $\widetilde{\Ep}_h$ is given by the rate of change of distance between horospheres with respect to $\widetilde{\Ep}_h$, namely $-\sigma$.

    In $\m S^1\times [0,1]$, define the $1$-form
    \begin{align*}
        \eta : &= \hat{k}_t\dd\theta - \sigma_\theta \ee^{-\sigma_t} \dd t \\\nonumber&=\left(\ee^{-\sigma_t}\left( -(\sigma_t)_{\theta\theta}+\frac{((\sigma_t)_{\theta})^2}{2}\right)+\cosh(\sigma_t)\right) \dd\theta - \sigma_\theta \ee^{-\sigma_t} \dd t.
    \end{align*}
    By direct computation, we find 
    \begin{align*}
        \dd\eta = \sigma \left(\ee^{-\sigma_t}\left( (\sigma_t)_{\theta\theta}-\frac{((\sigma_t)_{\theta})^2}{2}\right)+\sinh(\sigma_t)\right) \dd t\wedge\dd \theta  = \Psi^*(\dd a_{\rm hyp}).
    \end{align*}
    Applying Stokes theorem and Lemma~\ref{lem:length_curv_local} we obtain
    \begin{align*}
        \int_{\m S^1\times [0,1]} \Psi^*(\dd a_{\rm hyp}) &= \int_{\m S^1\times \lbrace 1\rbrace} \eta - \int_{\m S^1 \times \lbrace 0\rbrace} \eta\\\nonumber
        &= \int_{\m S^1}\left(\ee^{-\sigma_1}\left( (\sigma_1)_{\theta\theta}-\frac{((\sigma_1)_{\theta})^2}{2}\right)+\sinh(\sigma_1)\right) \dd\theta \\\nonumber & \quad-  \int_{\m S^1}\left(\ee^{-\sigma_0}\left( (\sigma_0)_{\theta\theta}-\frac{((\sigma_0)_{\theta})^2}{2}\right)+\sinh(\sigma_0)\right) \dd\theta\\\nonumber
        &= \int_{\m S^1} \hat{k}_1\dd\hat{\ell}_1 - \int_{\m S^1} \hat{k}_0\dd\hat{\ell}_0,
    \end{align*}
    so \eqref{eq:AreavsCurv} follows.

    In order to conclude \eqref{eq:Epstein_Gauss-Bonnet}, observe that by Example~\ref{ex:Ep_point} this equation follows for $\tilde{h}=\ee^t\dd\theta$ and $t>0$, as it can be computed directly while also being equivalent to the standard Gauss--Bonnet theorem for the associated Epstein curve. Now, for any given $h$, we can apply \eqref{eq:AreavsCurv} for $\hat h_1=h$ and $\hat h_0=\tilde{h}$. By adding this equation to \eqref{eq:Epstein_Gauss-Bonnet} for $\hat h_0$ we obtain \eqref{eq:Epstein_Gauss-Bonnet} for $h$.
\end{proof}

\begin{definition}\label{def:curvature_infty}
    Let $h=\ee^\sigma \dd\theta$ be a $C^2$ metric on $\m S^1$. The term \[k^*=\ee^{-2\sigma}(2\sigma_{\theta\theta}-\sigma_{\theta}^2-1)\] is called the \emph{curvature at infinity} of $h$.
\end{definition}

By Lemma~\ref{lem:length_curv_local}, we have
\begin{align}\label{eq:k_k*}
    k=\frac{1-k^*}{1+k^*}.
\end{align}
This also characterizes the points where $\Ep_h$ fails to be an immersion as the points where $k^*=-1$. It follows from the formulae that if for constant $t$ we denote by $k_t$, $k^*_t$ the curvature of the Epstein curve and the curvature at infinity (respectively) of the metric $h_t=\ee^t h$, we can easily verify
\begin{align*}
    k^*_t&=\ee^{-2t}k^*\\\nonumber
    k_t&=\frac{1-k^*_t}{1+k^*_t} = \frac{1-\ee^{-2t}k^*}{1+\ee^{-2t}k^*}
\end{align*}
and subsequently see that $\lim_{t\rightarrow+\infty} \ee^{2t}(1-k_t)/2 = k^*$. This justifies the term curvature at infinity for $k^*$, as it appears as an asymptotic limit of the curvature of the equidistant curves.

\begin{ex}
    For the standard metric $\dd \theta$ in $\m S^1$, $\sigma$ is identically $0$. Hence, $\dd \ell\equiv0$ and $k^*\equiv-1$, while the blow-up of $k$ corresponds to the Epstein curve collapsing to a point. It clearly has signed area $A(\Ep_h)=0$. Finally, $k_t=\coth(t)$, which is the geodesic curvature of the geodesic ball of radius $t$ with respect to the choice of normal given by $\widetilde{\Ep}_{h_t}$.
\end{ex}

\begin{remark}\label{rem:fol_circle}
    Given $h=\ee^\sigma\dd\theta$ a $C^2$ metric on $\m S^1$ and $h_t=\ee^t h$, it follows by \eqref{eq:Epstein_point} that as $t\rightarrow +\infty$ the Epstein curves $\Ep_{h_t}$ converge in $C^1$ to the identity map of $\m S^1$. Hence, for $t$ sufficiently large, $\Ep_{h_t}$ is a parametrized Jordan curve with $\widetilde{\Ep}_{h_t}$ given by the inner unit normal. Hence, $\dd\ell$, $k$, and $A(\cdot)$ are equal to the standard geometric concepts.
\end{remark}

We collect a few useful expressions for the arc-length and curvature of an equidistant Epstein foliation, both in terms of data at infinity for the initial Epstein curve.

\begin{lem}\label{lem:asymptoticforms}
    Let $h=\ee^\sigma \dd\theta$ be a metric on $\m S^1$, $k^*$ its curvature at infinity, 
    $k$ the geodesic curvature of the Epstein curve. For $t\in\mathbb{R}$, let $h_t=\ee^th$. Denote by $\dd\ell_t$ and $k_t$ the signed arc-length and geodesic curvature of the Epstein curve $\Ep_{h_t}$.
    The following identities hold:
    \begin{enumerate}
        \item $\dd\ell_t = \frac12 (\ee^t+\ee^{-t}k^*)h$
        \item $k_t\,\dd \ell_t = \frac12 (\ee^t-\ee^{-t}k^*)h$
        \item $\dd\ell_t = \cosh(t)\,\dd\ell + \sinh(t) k\,\dd\ell$
        \item $k_t\,\dd \ell_t = \sinh(t)\,\dd\ell + \cosh(t) k\,\dd\ell$.
    \end{enumerate}
\end{lem}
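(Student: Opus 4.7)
My plan is to first verify two ``base'' identities at $t=0$, then upgrade to arbitrary $t$ via the rescaling $\sigma \mapsto \sigma + t$, and finally obtain the last two identities from the first two by elementary algebra using $\cosh(t) \pm \sinh(t) = \ee^{\pm t}$. The two base identities are
\[
\dd\ell = \tfrac{1}{2}(1 + k^*)\,h, \qquad k\,\dd\ell = \tfrac{1}{2}(1 - k^*)\,h.
\]
Both follow by direct substitution of $k^* = \ee^{-2\sigma}(2\sigma_{\theta\theta} - \sigma_\theta^2 - 1)$ and $h = \ee^{\sigma}\,\dd\theta$ into the right-hand sides and matching against the explicit $\dd\theta$-densities for $\dd\ell$ and $k\,\dd\ell$ from Lemma~\ref{lem:length_curv_local} and Remark~\ref{rem:non-immersed}, using $\ee^{\pm\sigma} = \cosh(\sigma) \pm \sinh(\sigma)$. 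Equivalently, the base identities can be written as $\dd\ell + k\,\dd\ell = h$ and $\dd\ell - k\,\dd\ell = k^*\,h$, which makes the interpretation of $k^*$ as an asymptotic curvature transparent.

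To upgrade to arbitrary $t$, I note that $\sigma_t := \sigma + t$ has the same $\theta$-derivatives as $\sigma$, hence
\[
k^*_t = \ee^{-2t}\,k^*, \qquad h_t = \ee^{t}\,h.
\]
Applying the base identities to the metric $h_t$ then yields
\[
\dd\ell_t = \tfrac{1}{2}(1 + k^*_t)\,h_t = \tfrac{1}{2}(\ee^t + \ee^{-t} k^*)\,h, \qquad k_t\,\dd\ell_t = \tfrac{1}{2}(\ee^t - \ee^{-t} k^*)\,h,
\]
which are the first two identities of the lemma. For the remaining two, I substitute $\dd\ell = \tfrac{1}{2}(1+k^*)h$ and $k\,\dd\ell = \tfrac{1}{2}(1-k^*)h$ into their right-hand sides; collecting the coefficients of $1$ and of $k^*$ and recognizing $\cosh(t) \pm \sinh(t) = \ee^{\pm t}$ reduces them to the first two identities.

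Conceptually, the last two identities are precisely the standard evolution formulas for the arc-length and geodesic curvature of an equidistant curve under normal geodesic flow by hyperbolic distance $t$ in a space of constant curvature $-1$, and Theorem~\ref{thm:basic_epstein}(2) identifies the Epstein family $\{\Ep_{h_t}\}$ with exactly this flow applied to $\Ep_h$; this gives an alternative derivation that bypasses the explicit formulas of Lemma~\ref{lem:length_curv_local}. I do not expect any genuine obstacle; the only mild care needed is at points where $\Ep_h$ fails to be immersed (where $k^* = -1$ and $\dd\ell$ vanishes), but by Remark~\ref{rem:non-immersed} both $\dd\ell$ and $k\,\dd\ell$ extend continuously through such points, so all four identities hold as pointwise equalities of continuous $\dd\theta$-densities.
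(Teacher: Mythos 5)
Your proposal is correct and follows essentially the same route as the paper: establish the base identities $\dd\ell = \tfrac12(1+k^*)h$ and $k\,\dd\ell = \tfrac12(1-k^*)h$ from Lemma~\ref{lem:length_curv_local} (and Remark~\ref{rem:non-immersed}), use $k^*_t = \ee^{-2t}k^*$ and $h_t = \ee^t h$ to get items (1)--(2), and then deduce (3)--(4) by elementary algebra with $\cosh(t)\pm\sinh(t)=\ee^{\pm t}$. Your direct-substitution check of the base identities (rather than multiplying through $k=(1-k^*)/(1+k^*)$) and your remark on non-immersed points are minor refinements, not a different argument.
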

\begin{proof}
Let us start by verifying the first two items for $\dd \ell$ and $k\,\dd \ell$. Recall that from the definition of $k^*$, Lemma~\ref{lem:length_curv_local},   and the equality $k=(1-k^*)/(1+k^*)$,
we have
\begin{align}\label{eq:dl_k*}
    \dd\ell &= \frac12\left(1+k^*\right)\ee^{\sigma} \dd\t = \frac12\left(1+k^*\right)h, \\
    k\,\dd \ell &= \frac12\left(\frac{1-k^*}{1+k^*}\right) (1+k^*)h = \frac12(1-k^*)h.
    \label{eq:kdl_k*}
\end{align}
Applying this for $h_t=\ee^th$, and recalling that its curvature at infinity is $k^*_t = \ee^{-2t}k^*$, we now obtain the first two items
\begin{align}\label{eq:equidistant_length_curv}
    \dd\ell_t&= \frac12\left(1+k^*_t\right)h_t = \frac12(\ee^t+\ee^{-t}k^*)h, \\\nonumber
    k_t\,\dd \ell_t &= \frac12 \left(1-k^*_t\right)h_t = \frac12 (\ee^t - \ee^{-t}k^*) h.
\end{align}
In particular, the total signed length and total curvature of $\Ep_{h_t}$ can be expressed as a function of $t$ and the metric $h$ by
\begin{align}\label{eq:equidistanttotal}
    \int_{\m S^1} \dd\ell_t &= \frac{\ee^t}{2} \int_{\m S^1} h + \frac{\ee^{-t}}{2} \int_{\m S^1}  k^*h,\\\nonumber
    \int_{\m S^1} k_t\,\dd \ell_t &= \frac{\ee^t}{2} \int_{\m S^1}  h - \frac{\ee^{-t}}{2} \int_{\m S^1}  k^*h.
\end{align}
To express $\dd\ell_t$ and $k_t\,\dd \ell_t$ as integrals in the Epstein curve it will be useful to have the following inversion formulas, which follow easily from the previous statements
\begin{align}\label{eq:inversionformulas}
    k^*&=\frac{1-k}{1+k}, \\\nonumber
    h&= \left(\frac{2}{1+k^*}\right)\,\dd \ell =(1+k)\,\dd \ell.
\end{align}
Combining then \eqref{eq:equidistant_length_curv} and \eqref{eq:inversionformulas} we conclude the last two items as follows
\begin{align}\label{eq:length_t_from0}
    \dd\ell_t = \frac{\ee^t}{2}(1+k)\,\dd \ell + \frac{\ee^{-t}}{2} (1-k)\,\dd \ell = \cosh(t) \,\dd \ell + \sinh(t) k\,\dd \ell
\end{align}
and analogously
\begin{align}\label{eq:totalcurvature_t_from0}
    k_t\,\dd \ell_t = \frac{\ee^t}{2} (1+k)\,\dd \ell - \frac{\ee^{-t}}{2} (1-k)\,\dd \ell =  \sinh(t) \,\dd \ell + \cosh(t) k\,\dd \ell
\end{align}
which completes the proof.
\end{proof}

\subsection{Epstein curves from circle diffeomorphisms}\label{subsec:ep_diffeo}

The previous results establish the basic properties of the Epstein curve for any $C^2$ metric on the conformal boundary at infinity $\m S^1 =\partial_\infty \m H^2$, we now apply the results to metrics obtained by pushing forward $\dd \t$ by a $C^3$ diffeomorphism.

\begin{thm}\label{thm:SchArea}
        Let $\varphi : \m S^1 \to \m S^1$ be a $C^3$ diffeomorphism. Let $h = \varphi_* \dd \t$ be the pushforward of $\dd \t$ by $\varphi$. Let $A(\Ep_h)$ denote the signed hyperbolic area enclosed by the Epstein curve associated with  $h$ and $L(\Ep_h)$ the signed hyperbolic length of the Epstein curve, then we have
    \begin{align*}
        \Isch(\varphi) = L(\Ep_h) =  -A(\Ep_h).
    \end{align*}
\end{thm}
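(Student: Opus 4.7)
The plan is to prove the two equalities $\Isch(\varphi)=L(\Ep_h)$ and $L(\Ep_h)=-A(\Ep_h)$ separately. Write $\varphi(\ee^{\ii\psi})=\ee^{\ii\tilde\varphi(\psi)}$ for the lift $\tilde\varphi\colon [0,2\pi] \to [0,2\pi]$, set $\phi=\varphi^{-1}$ with lift $\tilde\phi=\tilde\varphi^{-1}$, and observe that the pushforward metric is $h=\tilde\phi'(\theta)\,\dd\theta$, so that $h=\ee^\sigma\,\dd\theta$ with $\sigma=\log\tilde\phi'$. Since $\tilde\phi$ is a diffeomorphism of $[0,2\pi]$, the total $h$-length of $\m S^1$ is $\int_0^{2\pi}\tilde\phi'(\theta)\,\dd\theta=2\pi$.

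For the equality $L(\Ep_h)=-A(\Ep_h)$, I would sum the explicit formulas for $\dd\ell$ in Lemma~\ref{lem:length_curv_local} and for $k\,\dd\ell$ in Remark~\ref{rem:non-immersed} to obtain the pointwise identity
\[
\dd\ell + k\,\dd\ell \;=\; (\sinh\sigma + \cosh\sigma)\,\dd\theta \;=\; \ee^\sigma\,\dd\theta \;=\; h.
\]
Integrating and invoking the Gauss--Bonnet formula~\eqref{eq:Epstein_Gauss-Bonnet} of Lemma~\ref{lem:Gauss_Bonnet} then yields $L(\Ep_h)+A(\Ep_h) = \int_{\m S^1} h - 2\pi = 0$.

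For the equality $\Isch(\varphi)=L(\Ep_h)$, I would start from Lemma~\ref{lem:Schwarzian_inverse}, which rewrites the Schwarzian action entirely in terms of $\phi$, and then apply the Schwarzian chain rule to the factorization $\phi = g\circ\tilde\phi\circ g^{-1}$ with $g(\theta)=\ee^{\ii\theta}$. Using $\mc S[g]=1/2$ and $\mc S[g^{-1}](z)=1/(2z^2)$, a short computation yields
\[
\ee^{2\ii\theta}\,\mc S[\phi](\ee^{\ii\theta}) \;=\; -\mc S[\tilde\phi](\theta) \;-\; \tfrac12(\tilde\phi'(\theta))^2 \;+\; \tfrac12,
\]
which is the direct analogue of the identity in Remark~\ref{rem:positivity_Losev} rescaled from $[0,1]$ to $[0,2\pi]$. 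Substituting into $\Isch(\varphi)$ and using $(\tilde\phi'-1/\tilde\phi')/2 = \sinh\sigma$ leads to
\[
\Isch(\varphi) \;=\; \int_0^{2\pi} \bigl[\, \ee^{-\sigma}\,\mc S[\tilde\phi] \;+\; \sinh\sigma \,\bigr]\,\dd\theta.
\]
Finally, the classical identity $\mc S[\tilde\phi] = (\log\tilde\phi')'' - \tfrac12((\log\tilde\phi')')^2 = \sigma_{\theta\theta} - \tfrac12\sigma_\theta^2$ matches this integrand exactly to the expression for $L(\Ep_h)$ given by Definition~\ref{def:total_length_curv} and Lemma~\ref{lem:length_curv_local}.

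The main obstacle is verifying the Schwarzian chain rule identity above: conceptually it is nothing more than two applications of the chain rule, but the interplay of the conjugating map $g$ and its inverse with the factor $\ee^{2\ii\theta}$ demands careful sign and bookkeeping. Once this identity is in hand, both equalities reduce to direct substitution into the formulas of Section~\ref{subsec:eps_1}.
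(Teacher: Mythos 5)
Your proposal is correct and follows essentially the same route as the paper: both use Lemma~\ref{lem:Schwarzian_inverse} to rewrite $\Isch(\varphi)$ in terms of $\phi=\varphi^{-1}$, identify the resulting integrand with the signed arc-length density of Lemma~\ref{lem:length_curv_local}, and then combine $\int_{\m S^1} h = 2\pi$ with the Gauss--Bonnet formula of Lemma~\ref{lem:Gauss_Bonnet} to get the area statement. The only divergence is that you obtain the key pointwise identity $\ee^{2\ii\theta}\mc S[\phi](\ee^{\ii\theta}) = -\bigl(\sigma_{\theta\theta}-\tfrac12\sigma_\theta^2\bigr)+\tfrac{1-\ee^{2\sigma}}{2}$ via the Schwarzian chain rule through the lift $\tilde\phi$ (as in Remark~\ref{rem:positivity_Losev}), rather than by the paper's direct computation with complex derivatives via \eqref{eq:sigma_jet}--\eqref{eq:phi''_identity}; this is a legitimate, somewhat cleaner verification of the same step.
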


 \begin{proof}
Let $\varphi:\m S^1\rightarrow \m S^1$ be a $C^3$ diffeomorphism with inverse $\phi = \varphi^{-1}$ and 
$h = \varphi_* \dd \t =|\phi'(\ee^{\ii\theta})|\dd \theta$ the corresponding $C^2$ metric on $\m S^1$ of length $2\pi$. Writing $\s = \log |\phi'(e^{\ii\t})|$ so that $h=\ee^\sigma \dd \theta$, we have
\begin{align}\label{eq:sigma_jet}
    \sigma_{\theta}&=\Re\left( \frac{\ii\ee^{\ii\theta}\phi''(\ee^{\ii\theta})}{\phi'(\ee^{\ii\theta})}\right),\\
    \nonumber
    \sigma_{\theta\theta}&= \Re \left(-\ee^{2\ii\theta}\left( \frac{\phi'''(\ee^{\ii\theta})}{\phi'(\ee^{\ii\theta})} - \left(\frac{\phi''(\ee^{\ii\theta})}{\phi'(\ee^{\ii\theta})}\right)^2 \right) - \frac{\phi''(\ee^{\ii\theta})\ee^{\ii\theta}}{\phi'(\ee^{\ii\theta})} \right).
\end{align}
Recalling that $\phi$ is a diffeomorphism of $\m S^1$, one can verify that
\begin{align}\label{eq:phi''_identity}
    \Re\left( -\ee^{2\ii\theta} \left( \frac{\phi''(\ee^{\ii\theta})}{\phi'(\ee^{\ii\theta})} \right)^2\right) = \Bigg\vert\frac{\phi''(\ee^{\ii\theta})}{\phi'(\ee^{\ii\theta})}\Bigg\vert^2 + 4\Re\left(\frac{\phi''(\ee^{\ii\theta})\ee^{\ii\theta}}{\phi'(\ee^{\ii\theta})}\right) +2 -2 |\phi'(\ee^{\ii\theta})|^2.
\end{align}
We can then express $\sigma_{\theta\theta}-\sigma_{\theta}^2/2$ in terms of the first three derivatives of $\phi$ using the relation $(\Re x)^2 = (\Re (x^2) + |x|^2)/2$, \eqref{eq:sigma_jet}, and \eqref{eq:phi''_identity} to obtain
\begin{align*}
    \sigma_{\theta\theta}-\frac12\sigma_{\theta}^2&=  -\Re\left(\ee^{2\ii\theta}\left( \frac{\phi'''(\ee^{\ii\theta})}{\phi'(\ee^{\ii\theta})} - \left(\frac{\phi''(\ee^{\ii\theta})}{\phi'(\ee^{\ii\theta})}\right)^2 \right)\right) - \Re\left(\frac{\phi''(\ee^{\ii\theta})\ee^{\ii\theta}}{\phi'(\ee^{\ii\theta})}\right) \\\nonumber &\quad -\frac14\Re\left( -\ee^{2\ii\theta} \left( \frac{\phi''(\ee^{\ii\theta})}{\phi'(\ee^{\ii\theta})}\right)^2\right) -\frac14\Bigg\vert\frac{\phi''(\ee^{\ii\theta})}{\phi'(\ee^{\ii\theta})}\Bigg\vert^2\\\nonumber
    &= -\Re \left(\ee^{2\ii\theta}\left( \frac{\phi'''(\ee^{\ii\theta})}{\phi'(\ee^{\ii\theta})} - \frac32\left(\frac{\phi''(\ee^{\ii\theta})}{\phi'(\ee^{\ii\theta})}\right)^2 \right)\right) +\frac{1-|\phi'(\ee^{\ii\theta})|^2}{2}\\\nonumber
    &= -\ee^{2\ii\theta}\mc S[\phi](\ee^{\ii \theta}) +  \frac{1-\ee^{2\sigma}}{2}
\end{align*}
where in the last step we used that $|\phi'(\ee^{\ii\theta})| = \ee^{\sigma}$.

In particular, by Lemma~\ref{lem:length_curv_local} and \eqref{eq:dl_k*} 
we have
\begin{align} \label{eq:k*_phi}
    -\ee^{-\sigma} \ee^{2\ii\theta}\mc S[\phi](\ee^{\ii \theta})  \,\dd \t = \dd \ell = \frac12 (1+k^*)\ee^{\sigma} \dd \t. 
\end{align}
Using Lemma~\ref{lem:Schwarzian_inverse} to express the Schwarzian action in terms of the inverse diffeomorphism and $|\phi'(\ee^{\ii\theta})|^{-1} = \ee^{-\sigma}$, we obtain
\begin{align*}
    \Isch(\varphi) = \int_{0}^{2\pi} -\ee^{-\sigma} \ee^{2\ii\theta}(\mc S[\phi](\ee^{\ii\theta})) \dd \theta =\int_{\Ep_h} \dd \ell 
\end{align*}
so it follows that $\Isch(\varphi)$ is equal to the signed length of the Epstein curve $\Ep_h$ induced by $h=\varphi_* \dd \t$.

As the total length of $h=\ee^\sigma \dd \theta$ is equal to $2\pi$, we can rewrite the previous equation and use \eqref{eq:kdl_k*} to express the Schwarzian action as
\begin{align*}
    \Isch(\varphi) = \int_{0}^{2\pi} \frac12 (1+k^*)\ee^{\sigma} \dd \theta = 2\pi - \int_{0}^{2\pi} \frac12(1-k^*)\ee^\sigma \dd \theta = 2\pi-\int_{\m S^1} k\,\dd \ell.
\end{align*}
It follows from Lemma~\ref{lem:Gauss_Bonnet} that the Schwarzian action is also equal to minus the signed area enclosed by the Epstein curve.
\end{proof}

As we prove later in two ways (Corollary~\ref{cor:non-negativeSch_isoperimetric} or Remark~\ref{rem:non-negativity_loewner}), $\Isch(\varphi)\geq 0$ if $\varphi$ is a $C^3$ circle diffeomorphism. This non-negativity can also be used to show that the Epstein curves of diffeomorphisms must have a non-immersed point. 
\begin{prop}\label{prop:non-immersed}
    If $\varphi:\m S^1\to \m S^1$ is a $C^3$ diffeomorphism and $h = \varphi_* \dd \theta$, then $\Ep_h$ has a non-immersed point.
\end{prop}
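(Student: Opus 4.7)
The plan is to extract two integral identities for the density of the signed arc-length form $\dd\ell = v(\theta)\,\dd\theta$ of $\Ep_h$, with forced opposite signs, and conclude that $v$ must change sign on $\m S^1$. Since non-immersed points of $\Ep_h$ are precisely the zeros of $v$ by Lemma~\ref{lem:length_curv_local} and Definition~\ref{def:length_curv}, this suffices. Throughout, write $\phi := \varphi^{-1}$ and $\sigma := \log|\phi'|$, so $h = \ee^{\sigma}\dd\theta$.

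For the first identity, I will invoke the pointwise formula derived inside the proof of Theorem~\ref{thm:SchArea}, namely equation~\eqref{eq:k*_phi}, which reads
$$ v(\theta) = -\ee^{-\sigma(\theta)}\,\ee^{2\ii\theta}\,\mc S[\phi](\ee^{\ii\theta}). $$
Integrating directly over $\m S^1$ and invoking Lemma~\ref{lem:Schwarzian_inverse} yields
$$ \int_{0}^{2\pi} v(\theta)\,\dd\theta \;=\; \Isch(\varphi). $$
For the second, I will multiply the pointwise formula by $\ee^{\sigma} = |\phi'|$ to cancel the conformal weighting, obtaining $v(\theta)\,|\phi'(\ee^{\ii\theta})| = -\ee^{2\ii\theta}\mc S[\phi](\ee^{\ii\theta})$. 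By the very definition of the Schwarzian action applied to $\phi$, integration gives
$$ \int_{0}^{2\pi} v(\theta)\,|\phi'(\ee^{\ii\theta})|\,\dd\theta \;=\; -\Isch(\phi). $$

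To finish, I apply Lemma~\ref{lem:sch_real} to both $\varphi$ and $\phi$, which are both $C^3$ diffeomorphisms of $\m S^1$: $\Isch(\varphi)\ge 0$ and $\Isch(\phi)\ge 0$, so the two identities become
$$ \int_{0}^{2\pi} v(\theta)\,\dd\theta \;\ge\; 0 \qquad\text{and}\qquad \int_{0}^{2\pi} v(\theta)\,|\phi'(\ee^{\ii\theta})|\,\dd\theta \;\le\; 0. $$
Suppose for contradiction that $v$ has no zero on $\m S^1$. By continuity $v$ must have strict constant sign, and since $|\phi'|>0$ everywhere, neither option is tenable: $v>0$ everywhere contradicts the second inequality, while $v<0$ everywhere contradicts the first. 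Hence $v$ vanishes somewhere, yielding the desired non-immersed point of $\Ep_h$. I do not anticipate any serious obstacle in this plan; the only twist is the realization that the same pointwise formula for $v$ produces two integrals of opposite sign when tested against the weights $1$ and $|\phi'|$, so that the non-negativity of the Schwarzian action applied in tandem to a map and its inverse does the work.
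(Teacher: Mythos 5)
Your proposal is correct and takes essentially the same route as the paper: both arguments rest on \eqref{eq:k*_phi} identifying the signed arc-length density with $-\ee^{-\sigma}\ee^{2\ii\theta}\mc S[\phi](\ee^{\ii\theta})$, on Lemma~\ref{lem:Schwarzian_inverse}, and on the non-negativity of $\Isch$ applied to both $\varphi$ and $\phi=\varphi^{-1}$, which forces this real continuous quantity to change sign or vanish. The paper merely phrases the dichotomy pointwise (the function $\ee^{2\ii\theta}\mc S[\phi](\ee^{\ii\theta})$ can be neither strictly positive nor strictly negative) instead of through your two weighted integrals, but the mechanism is identical.
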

\begin{proof}
    Let $\phi=\varphi^{-1}$. By Remark~\ref{rem:non-immersed} and \eqref{eq:k_k*}, the set of non-immersed points on $\Ep_h$ is the locus of points where $k^*=-1$. By \eqref{eq:k*_phi}, this is equivalent the set $\{\theta: \mc S[\phi](\ee^{\ii \theta}) = 0\}$.
    
    Since both $\varphi$ and $\phi$ are diffeomorphisms, $\Isch(\varphi)\geq 0$ and $\Isch(\phi)\geq 0$. In particular, $\mc S[\phi](\ee^{\ii \theta}) \ee^{2\ii \theta}$ cannot be strictly negative (we already know it is real by Remark~\ref{rem:positivity_Losev}). It also cannot be strictly positive as by Lemma~\ref{lem:Schwarzian_inverse} that would imply $\mc S[\varphi](\ee^{\ii \theta}) \ee^{2\ii \theta}$ is strictly negative. Therefore since $\mc S[\phi](\ee^{\ii \theta}) \ee^{2\ii \theta}$ is continuous it must have a zero.
\end{proof}

\subsection{Isoperimetric inequality}\label{sec:isoperimetric}

We note that a $C^3$ diffeomorphism $\varphi$ defines a $C^2$ metric $h = \varphi_* \, \dd \t$ with total length $2\pi$. Conversely, any $C^2$ metric $h = \ee^\s \, \dd \t$ with total length $2\pi$ arises this way. In fact, let $\t_0 \in [0,2\pi)$, define 
$$\phi (\ee^{\ii \t}) = \exp \left(\ii \int_{\t_0}^{\t} \ee^{\s (s)} \, \dd s \right),$$
we have $\phi \in \Diff^3(\m S^1)$ and $\phi (\ee^{\ii \t_0}) = 1$. Let $\varphi = \phi^{-1}$, then we have $\varphi_* \dd \t = h$. 
Choosing a different $\t_0$ amounts to precomposing $\varphi$ by a rotation, and if $\varphi_1$ and $\varphi_2$ give rise to the same metric $h$, then $\varphi_1$ is obtained from $\varphi_2$ by a rotation.

In other words, we have a one-to-one correspondence between
\begin{equation}\label{eq:11_2pi}
\{C^2 \text{ metrics in } \m S^1 \text{ with total length } 2 \pi\} \xleftrightarrow[]{1 \,\colon 1} \Diff^3 (\m S^1)/\m S^1.
\end{equation}

We also consider metrics of the form $h_t = \ee^t h$, where $t \in \m R$, which then encompasses metrics in $\m S^1$ with arbitrary but finite total length. 
Recall from Theorem~\ref{thm:basic_epstein} and Remark~\ref{rem:equidistant_foliation} that the family of Epstein curves $(\Ep_{h_t})_{t \ge t_0}$ for some large $t_0$ forms an equidistant foliation, which converges in $C^1$ to $\m S^1$ as $t \to \infty$ (Remark~\ref{rem:fol_circle}).

\begin{lem}\label{lem:asymptoticLA}
    Let $h$ be a $C^2$ metric on $\m S^1=\partial_\infty\mathbb{H}^2$. Let $h_t=\ee^th$, while $L(\cdot)$, $A(\cdot)$ denote the signed length and signed area enclosed by an Epstein curve. Then, if $h$ is a metric of total length equal to $2\pi$, we have
    \begin{enumerate}[label=(\alph*)]
        \item\label{item:L+A} $L(\Ep_h) + A(\Ep_h) = 0$,
        \item\label{item:L_t} $L(\Ep_{h_t}) = 2\pi\sinh(t) - \ee^{-t}A(\Ep_h)$,
        \item\label{item:A_t} $A(\Ep_{h_t}) = 2\pi(\cosh(t)-1) +\ee^{-t}A(\Ep_h)$.
    \end{enumerate}
\end{lem}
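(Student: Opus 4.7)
The plan is to reduce every claim to an elementary integral identity using the four formulae of Lemma~\ref{lem:asymptoticforms} together with the Gauss--Bonnet identity from Lemma~\ref{lem:Gauss_Bonnet}. The hypothesis $\int_{\m S^1} h = 2\pi$ is exactly what makes the constant terms collapse into the stated form.

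For part \ref{item:L+A}, I set $t=0$ in the first two items of Lemma~\ref{lem:asymptoticforms}, which gives $\dd\ell = \tfrac{1}{2}(1+k^*)h$ and $k\,\dd\ell = \tfrac{1}{2}(1-k^*)h$. Integrating over $\m S^1$ and using $\int_{\m S^1} h = 2\pi$ yields
$$L(\Ep_h) \;=\; \pi + \tfrac{1}{2}\int_{\m S^1} k^* h, \qquad \int_{\m S^1} k\,\dd\ell \;=\; \pi - \tfrac{1}{2}\int_{\m S^1} k^* h.$$
Combining the second equation with the Gauss--Bonnet identity $A(\Ep_h) = \int_{\m S^1} k\,\dd\ell - 2\pi$ and adding the result to the first equation produces $L(\Ep_h) + A(\Ep_h) = 0$. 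A convenient byproduct, which I reuse below, is
$$\int_{\m S^1} k^* h \;=\; -2A(\Ep_h) - 2\pi.$$

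For \ref{item:L_t} and \ref{item:A_t}, I apply the same mechanism to $h_t$: Lemma~\ref{lem:asymptoticforms} gives $\dd\ell_t = \tfrac{1}{2}(\ee^t + \ee^{-t} k^*)h$ and $k_t\,\dd\ell_t = \tfrac{1}{2}(\ee^t - \ee^{-t} k^*)h$. Integrating and using $\int_{\m S^1} h = 2\pi$ once more,
$$L(\Ep_{h_t}) \;=\; \pi \ee^t + \tfrac{\ee^{-t}}{2}\int_{\m S^1} k^* h, \qquad \int_{\m S^1} k_t\,\dd\ell_t \;=\; \pi \ee^t - \tfrac{\ee^{-t}}{2}\int_{\m S^1} k^* h.$$
Substituting the byproduct identity into the first expression yields $\pi(\ee^t - \ee^{-t}) - \ee^{-t}A(\Ep_h) = 2\pi \sinh(t) - \ee^{-t}A(\Ep_h)$, which is \ref{item:L_t}. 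For \ref{item:A_t}, I apply Gauss--Bonnet to $\Ep_{h_t}$ in the form $A(\Ep_{h_t}) = \int_{\m S^1} k_t\,\dd\ell_t - 2\pi$, insert the second expression above, and substitute the byproduct identity again to obtain $2\pi(\cosh(t)-1) + \ee^{-t}A(\Ep_h)$.

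There is no genuine obstacle here: the argument is purely algebraic once Lemma~\ref{lem:asymptoticforms} (which already encodes the scaling behavior of $\dd\ell$ and $k\,\dd\ell$ in terms of the curvature at infinity) and Lemma~\ref{lem:Gauss_Bonnet} are in hand. The only point worth emphasizing is that Gauss--Bonnet is applied separately at $t = 0$ to deduce \ref{item:L+A} and at general $t$ to deduce \ref{item:A_t}, and the normalization $\int_{\m S^1} h = 2\pi$ is used exactly once in each of \ref{item:L+A}, \ref{item:L_t}, \ref{item:A_t}.
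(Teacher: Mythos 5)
Your proof is correct and follows essentially the same route as the paper: both arguments reduce the lemma to integrating the identities of Lemma~\ref{lem:asymptoticforms} against the normalization $\int_{\m S^1} h = 2\pi$ and applying the Gauss--Bonnet identity of Lemma~\ref{lem:Gauss_Bonnet}, once at $t=0$ and once at general $t$. The only cosmetic difference is that you use items (1)--(2) of Lemma~\ref{lem:asymptoticforms} and carry the total curvature at infinity $\int_{\m S^1} k^*h$ as the bookkeeping quantity, whereas the paper uses the equivalent items (3)--(4) and substitutes $\int_{\m S^1}\dd\ell = -A(\Ep_h)$ and $\int_{\m S^1}k\,\dd\ell = A(\Ep_h)+2\pi$ directly.
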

\begin{proof}
    The first item follows from Theorem~\ref{thm:SchArea} as a consequence of Gauss--Bonnet, \eqref{eq:dl_k*}, and \eqref{eq:kdl_k*}:
\begin{align}\label{eq:2pilengthatinfinity}
        L(\Ep_h) + A(\Ep_h) = \int_{\m S^1} \,\dd \ell + \int_{\m S^1}k\,\dd \ell - 2\pi = \int_{\m S^1} h -2\pi = 0.
    \end{align}
    
    By \eqref{eq:length_t_from0}, \eqref{eq:totalcurvature_t_from0} and \eqref{eq:2pilengthatinfinity} we have that
    \begin{align}\label{eq:asympLAgrowth}
        L(\Ep_{h_t})=\int_{\m S^1} \,\dd \ell_t &= \cosh(t) \int_{\m S^1} \,\dd \ell + \sinh(t) \int_{\m S^1} k\,\dd \ell\\\nonumber
        &=-\cosh(t)A(\Ep_h) + \sinh(t)(A(\Ep_h)+2\pi)\\\nonumber
        &= 2\pi\sinh(t) - \ee^{-t}A(\Ep_h)\\\nonumber
        A(\Ep_{h_t})=\int_{\m S^1} k_t\,\dd \ell_t -2\pi &= \sinh(t) \int_{\m S^1} \,\dd \ell + \cosh(t) \int_{\m S^1} k\,\dd \ell -2\pi\\\nonumber
        &= -\sinh(t)A(\Ep_h) + \cosh(t)(A(\Ep_h)+2\pi) -2\pi\\\nonumber
        &= 2\pi(\cosh(t)-1) +\ee^{-t}A(\Ep_h)
    \end{align}
    which concludes \ref{item:L_t} and \ref{item:A_t}.
\end{proof}

A simple consequence of the asymptotic growth for length and area in Lemma~\ref{lem:asymptoticLA} is that for $t$ sufficiently large, both $L(\Ep_{h_t})$ and $A(\Ep_{h_t})$ are positive. This is relevant for the following theorem, where we express the Schwarzian action as the asymptotic isoperimetric excess of the Epstein curve $\Ep_{h_t}$ as $t$ goes to infinity. This should be compared to \cite[Thm.\,1.2]{VargasVianaisoperimetric}, where the renormalized volume of acylindrical convex co-compact manifolds was expressed as an asymptotic limit of the isoperimetric profile of the manifold.

The \emph{isoperimetric profile} in the hyperbolic disk is the function $J : \m R_+ \to \m R_+$, 
$$J(A) = 2\pi\sinh\left(\cosh^{-1}\left(\frac{A}{2\pi}+1\right)\right),$$
such that any (embedded) domain in $\m D$ with area $A$ has boundary length larger or equal to $J(A)$ where the equality is realized by disks. The disk with (hyperbolic) radius $t$ has boundary length $2\pi \sinh(t)$ and area $2\pi (\cosh (t)-1)$, which follows from, e.g., Lemma~\ref{lem:asymptoticLA}.

\begin{thm}[Schwarzian action as an asymptotic isoperimetric excess] \label{thm:excess}
    Let $\varphi :\m S^1\rightarrow\m S^1$ be a $C^3$ diffeomorphism, and $h_t=\ee^t \, \varphi_* \dd \t$, then
    \begin{align}\label{eq:excess}
        2\Isch(\varphi) = \lim_{t\rightarrow+\infty} \ee^t\Big( L(\Ep_{h_t}) - J(A(\Ep_{h_t}))  \Big).
    \end{align}
\end{thm}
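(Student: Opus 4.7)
The plan is to reduce the statement to an explicit asymptotic expansion of the two quantities $L(\Ep_{h_t})$ and $J(A(\Ep_{h_t}))$ using the closed-form expressions already derived. Setting $\beta := \Isch(\varphi)$ and combining Theorem~\ref{thm:SchArea} with Lemma~\ref{lem:asymptoticLA}, I would start from
\begin{align*}
L(\Ep_{h_t}) &= 2\pi\sinh(t) + \ee^{-t}\beta,\\
A(\Ep_{h_t}) &= 2\pi(\cosh(t)-1) - \ee^{-t}\beta,
\end{align*}
where I have used $A(\Ep_h) = -\Isch(\varphi) = -\beta$. The goal is then to expand $J(A(\Ep_{h_t}))$ up to errors of order $o(\ee^{-t})$ so that, after subtraction from $L(\Ep_{h_t})$ and multiplication by $\ee^t$, both residual terms combine to give $2\beta$.

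First I would apply the identity $\sinh(\cosh^{-1}(x)) = \sqrt{x^2-1}$, yielding
\[
J(A(\Ep_{h_t})) = 2\pi \sqrt{\bigl(\cosh(t) - \tfrac{\ee^{-t}\beta}{2\pi}\bigr)^{2} - 1}.
\]
Expanding the square and using the elementary identity $2\cosh(t)\ee^{-t} = 1 + \ee^{-2t}$, the quantity under the root becomes $\sinh^{2}(t) - \beta/(2\pi) + O(\ee^{-2t})$. Factoring out $\sinh^{2}(t)$ and applying $\sqrt{1+x} = 1 + x/2 + O(x^{2})$, I obtain
\[
J(A(\Ep_{h_t})) = 2\pi\sinh(t) - \frac{\beta}{2\sinh(t)} + O\bigl(\sinh^{-3}(t)\bigr).
\]

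Finally I would substitute this into the right-hand side of \eqref{eq:excess}. The leading $2\pi\sinh(t)$ terms cancel, leaving
\[
\ee^{t}\bigl(L(\Ep_{h_t}) - J(A(\Ep_{h_t}))\bigr) = \beta + \frac{\ee^{t}\beta}{2\sinh(t)} + O\bigl(\ee^{t}\sinh^{-3}(t)\bigr).
\]
Since $\ee^{t}/(2\sinh(t)) \to 1$ and $\ee^{t}\sinh^{-3}(t) \to 0$ as $t \to +\infty$, the limit equals $2\beta = 2\Isch(\varphi)$, which is the claimed identity. The computation is essentially algebraic, so I do not expect a genuine obstacle; the only subtlety worth highlighting is that the $\ee^{-t}$ term in $L(\Ep_{h_t})$ and the $O(1/\sinh(t))$ correction in $J(A(\Ep_{h_t}))$ come from different mechanisms yet contribute equally to the limit, which is the reason for the factor of $2$.
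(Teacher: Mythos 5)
Your proof is correct and follows essentially the same route as the paper: both substitute the expressions for $L(\Ep_{h_t})$ and $A(\Ep_{h_t})$ from Lemma~\ref{lem:asymptoticLA} together with $A(\Ep_h)=-\Isch(\varphi)$ from Theorem~\ref{thm:SchArea}, apply $\sinh(\cosh^{-1}(x))=\sqrt{x^2-1}$, and extract the limit from an algebraic asymptotic expansion. The only difference is cosmetic: you Taylor-expand the square root, while the paper rationalizes the difference $2\pi\sinh(t)-2\pi\sqrt{\cdots}$ by its conjugate; both yield the same two contributions summing to $2\Isch(\varphi)$.
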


\begin{proof}
    By items~\ref{item:L_t}, \ref{item:A_t} of Lemma~\ref{lem:asymptoticLA}, writing $A = A(\Ep_h)$, the right-hand side of \eqref{eq:excess} can be calculated as
    \begin{align*}
        &\lim_{t\rightarrow+\infty} \ee^t \left(  2\pi\sinh(t) - \ee^{-t}A - 2\pi\sinh\left(\cosh^{-1}\left( \cosh(t)+\ee^{-t}\frac{A}{2\pi}\right)\right) \right)\\\nonumber
        &=-A+ \lim_{t\rightarrow+\infty} \ee^t\left(  2\pi\sinh(t) - 2\pi\sinh\left(\cosh^{-1}\left( \cosh(t)+\ee^{-t}\frac{A}{2\pi}\right)\right) \right)\\\nonumber
        &=-A+ \lim_{t\rightarrow+\infty} \ee^t\left(  2\pi\sinh(t) - 2\pi\sqrt{\left( \cosh(t)+\ee^{-t}\frac{A}{2\pi}\right)^2-1}\right),
    \end{align*}
    where we are using the well-known identity $\sinh(\cosh^{-1}(x)) = \sqrt{x^2-1}$. By further simplifying the last expression, we obtain
    \begin{align*}
        &=-A+ \lim_{t\rightarrow+\infty} \ee^t\left(  2\pi\sinh(t) - 2\pi\sqrt{\sinh^2(t)+\ee^{-t}\cosh(t)\frac{A}{\pi} +\ee^{-2t}\frac{A^2}{4\pi^2}}\right)\\\nonumber
        &=-A+ \lim_{t\rightarrow+\infty} \ee^t \left( \frac{-4\pi^2\left( \ee^{-t}\cosh(t)\frac{A}{\pi} +\ee^{-2t}\frac{A^2}{4\pi^2}\right)}{2\pi\sinh(t) + 2\pi\sqrt{\sinh^2(t)+\ee^{-t}\cosh(t)\frac{A}{\pi} +\ee^{-2t}\frac{A^2}{4\pi^2}}}\right)\\\nonumber
        &=-A+ \lim_{t\rightarrow+\infty} \left( \frac{-4\pi^2\left( \ee^{-t}\cosh(t)\frac{A}{\pi} +\ee^{-2t}\frac{A^2}{4\pi^2}\right)}{\ee^{-t}\left(2\pi \sinh(t) + 2\pi\sqrt{\sinh^2(t)+\ee^{-t}\cosh(t)\frac{A}{\pi} +\ee^{-2t}\frac{A^2}{4\pi^2}}\right)}\right)\\\nonumber
        &=-A+ \frac{-2\pi A}{2\pi} = -2A
    \end{align*}
    which by Theorem~\ref{thm:SchArea} proves the desired identity.
\end{proof}

\begin{cor}
\label{cor:non-negativeSch_isoperimetric}
    The Schwarzian action of $\varphi \in \Diff^3 (\m S^1)$ is non-negative and vanishes if and only if $\varphi \in \PSU(1,1)$.
\end{cor}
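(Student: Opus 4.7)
The plan is to deduce non-negativity directly from Theorem~\ref{thm:excess} and the hyperbolic isoperimetric inequality. By Remark~\ref{rem:fol_circle}, for $t$ sufficiently large $\Ep_{h_t}$ is an embedded Jordan curve bounding a topological disk, so the isoperimetric inequality in $\m D$ gives $L(\Ep_{h_t}) \ge J(A(\Ep_{h_t}))$. Multiplying by $\ee^t > 0$ and passing to the limit in Theorem~\ref{thm:excess} yields $2\Isch(\varphi) \ge 0$. The reverse implication of the equivalence is immediate, since any $\varphi \in \PSU(1,1)$ satisfies $\mc S[\varphi] \equiv 0$ and hence $\Isch(\varphi) = 0$.

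For rigidity, I would suppose $\Isch(\varphi) = 0$ and work backwards through the same chain of estimates. By Theorem~\ref{thm:SchArea}, $\Isch(\varphi) = 0$ forces $A(\Ep_h) = 0$, and substituting into items~(b)--(c) of Lemma~\ref{lem:asymptoticLA} gives $L(\Ep_{h_t}) = 2\pi\sinh(t)$ and $A(\Ep_{h_t}) = 2\pi(\cosh(t) - 1)$; these are exactly the perimeter and area of the hyperbolic disk of radius $t$. Hence $L(\Ep_{h_t}) = J(A(\Ep_{h_t}))$ holds for every $t$ at which $\Ep_{h_t}$ is a Jordan curve, and the rigidity case of the hyperbolic isoperimetric inequality forces each such $\Ep_{h_t}$ to be a geodesic circle of constant geodesic curvature $\coth(t)$.

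Next I would push this rigidity back to data on $\m S^1$. Setting $k_t \equiv \coth(t)$ in the pointwise identity $k_t = (\ee^t - \ee^{-t} k^*)/(\ee^t + \ee^{-t} k^*)$ from Lemma~\ref{lem:asymptoticforms} and solving forces $k^* \equiv -1$ on $\m S^1$. By \eqref{eq:dl_k*} this gives $\dd \ell \equiv 0$, and the computation of $T = D_\theta \Ep_h$ in the proof of Lemma~\ref{lem:length_curv_local} shows that the tangent vector then vanishes identically, so $\Ep_h$ is a constant map to some point $p \in \m D$.

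The final step is the one I expect to be the main obstacle: converting the geometric degeneracy $\Ep_h \equiv p$ into the algebraic conclusion $\varphi \in \PSU(1,1)$. My plan is to invoke the naturality in Lemma~\ref{lem:naturality} by choosing $\alpha \in \PSU(1,1)$ with $\alpha(p) = 0$, so that $\Ep_{\alpha_* h} \equiv 0$. This says that the horocycle associated to $\alpha_* h$ at every $z \in \m S^1$ passes through the origin; from \eqref{eq:def_horo} the unique horocycle centered at $z$ through $0$ corresponds to $\alpha_* h(z) = \nu_0(z) = \dd \theta$, so $\alpha_* h = \dd \theta$. Combining with $h = \varphi_* \dd \theta$ yields $(\alpha \circ \varphi)_* \dd \theta = \dd \theta$, so $\alpha \circ \varphi$ is a rotation of $\m S^1$ and therefore $\varphi \in \PSU(1,1)$, completing the proof.
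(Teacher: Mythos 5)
Your proof is correct and takes essentially the same route as the paper's: non-negativity via Theorem~\ref{thm:excess} together with the hyperbolic isoperimetric inequality, and rigidity by showing equality forces each embedded $\Ep_{h_t}$ to be a geodesic circle, hence $\Ep_h$ is constant, after which naturality (Lemma~\ref{lem:naturality}) gives $\alpha_* h = \dd\theta$ and $\alpha\circ\varphi$ a rotation. The only cosmetic differences are that you deduce constancy of $\Ep_h$ from the pointwise identity $k_t=(1-\ee^{-2t}k^*)/(1+\ee^{-2t}k^*)$, which forces $k^*\equiv -1$ and $T\equiv 0$ (the paper instead invokes the geodesic-flow description of Theorem~\ref{thm:basic_epstein}), and that you check the converse directly from $\mc S[\varphi]\equiv 0$ rather than via Example~\ref{ex:Ep_point}.
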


\begin{proof}
    Consider the associated metric $h=\varphi_*\dd\theta$ on $\m S^1$ and consider $t$ sufficiently large so that the Epstein curve for $h_t=\ee^t h$ is embedded and $\Ep_{h_t}$ corresponds to the inner normal (Remark~\ref{rem:fol_circle}). Since by definition of the isoperimetric profile we have that $L(\Ep_{h_t}) - J(A(\Ep_{h_t})) \geq 0$, the non-negativity follows by Theorem~\ref{thm:excess}.

    If we had equality, then by Theorem~\ref{thm:SchArea}, it follows that $A(\Ep_h)=0$. Then for any $t$ so that $\Ep(h_t)$ is embedded, we will have by items \ref{item:L_t} and \ref{item:A_t} of Lemma~\ref{lem:asymptoticLA} that $\Ep_{h_t}$ is a solution of the isoperimetric problem in $\m D$. By the characterization of the solutions of the isoperimetric problem in $\m D$, we have then that $\Ep_{h_t}$ is the boundary of a ball $B_t$ of radius $t$ in $\mathbb{D}$. This implies that $\Ep_h$ is constant and equals the center of the ball $B_t$ by Theorem~\ref{thm:basic_epstein}. 
    Let $\a \in \PSU(1,1)$ be an M\"obius transformation sending the center of the ball to the origin, the naturality of the Epstein map (Lemma~\ref{lem:naturality}) shows that $\a_* h = \dd \t$ by \eqref{eq:11_2pi}. 
 We obtain that $\a \circ \varphi$ is a rotation.
 Hence, $\varphi \in \PSU(1,1)$. The converse follows by Example~\ref{ex:Ep_point} and the naturality of the Epstein map (Lemma~\ref{lem:naturality}).
\end{proof}





\begin{cor}
    Let $\varphi : \m S^1 \to \m S^1$ be a $C^3$ diffeomorphism. Let $h = \varphi_* \dd \t = \ee^\s \dd \t$ be the pushforward of $\dd \t$ by $\varphi$. Then
    \[
    \int_{\m S^1} k^*h =  2 \Isch (\varphi)  -2\pi \geq -2\pi,
    \]
    with equality if and only if $\varphi\in \PSU(1,1)$.
\end{cor}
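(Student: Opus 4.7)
The plan is to combine the two local identities for $\dd\ell$ and $k\,\dd\ell$ in terms of the curvature at infinity $k^*$, established in equations \eqref{eq:dl_k*} and \eqref{eq:kdl_k*}, with the global identifications from Theorem~\ref{thm:SchArea} and the Gauss--Bonnet relation from Lemma~\ref{lem:Gauss_Bonnet}. The work is essentially bookkeeping: everything follows by subtracting the two local formulas and integrating.

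First, I would subtract \eqref{eq:kdl_k*} from \eqref{eq:dl_k*} to get the pointwise identity
\[
k^* h = \dd\ell - k\,\dd\ell.
\]
Integrating this over $\m S^1$ produces
\[
\int_{\m S^1} k^* h = L(\Ep_h) - \int_{\m S^1} k\,\dd\ell.
\]

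Next, I would substitute the two ingredients the paper has already proved. Theorem~\ref{thm:SchArea} gives $L(\Ep_h) = \Isch(\varphi)$ and $A(\Ep_h) = -\Isch(\varphi)$, and the Epstein Gauss--Bonnet identity \eqref{eq:Epstein_Gauss-Bonnet} yields
\[
\int_{\m S^1} k\,\dd\ell = A(\Ep_h) + 2\pi = -\Isch(\varphi) + 2\pi.
\]
Plugging these back in immediately produces $\int_{\m S^1} k^* h = 2\Isch(\varphi) - 2\pi$, which is the claimed equality. The inequality $\int_{\m S^1} k^* h \ge -2\pi$ together with the characterization of the equality case then reduces directly to Corollary~\ref{cor:non-negativeSch_isoperimetric}, which states that $\Isch(\varphi) \ge 0$ with equality if and only if $\varphi \in \PSU(1,1)$.

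There is no real obstacle: the entire statement is a repackaging of Theorem~\ref{thm:SchArea}, Lemma~\ref{lem:Gauss_Bonnet}, and the definitions \eqref{eq:dl_k*}--\eqref{eq:kdl_k*} of $\dd\ell$ and $k\,\dd\ell$ in terms of $k^*$. The only care needed is making sure the signs match, which is automatic from the convention of signed length and signed area adopted in Definitions~\ref{def:length_curv} and~\ref{def:signed_A}.
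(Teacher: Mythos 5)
Your proposal is correct and takes essentially the same route as the paper: the paper obtains $\int_{\m S^1}k^*h = 2\Isch(\varphi)-2\pi$ by combining \eqref{eq:k*_phi} (equivalently $k^*h = 2\,\dd\ell - h$) with Lemma~\ref{lem:Schwarzian_inverse} and the normalization $\int_{\m S^1}h=2\pi$, while you write $k^*h=\dd\ell-k\,\dd\ell$ and invoke Theorem~\ref{thm:SchArea} together with the Gauss--Bonnet identity \eqref{eq:Epstein_Gauss-Bonnet}, which is equivalent bookkeeping. The inequality and its equality case are, as you say, exactly Corollary~\ref{cor:non-negativeSch_isoperimetric}, which is also how the paper handles them.
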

\begin{remark}\label{rem:action_form_k*}
    The Schwarzian action $\mc I(\cdot)$ in e.g.\ \cite{BLW_schwarzian,Stanford_Witten} is written with a different normalization, which by Remark~\ref{rem:positivity_Losev} is related to ours by $\frac{1}{\pi}\mc I(\varphi) = 2 \Isch(\varphi) -2\pi$. The above gives a geometric interpretation for this form of the action as the total curvature at infinity. 
\end{remark}
\begin{proof}
We write $\phi = \varphi^{-1}$. 
It follows from \eqref{eq:k*_phi} that 
$$\int_{S^1} k^* h = -\int_{0}^{2\pi} 2 \ee^{- \s} \ee^{2\ii \t} \mc S[\phi] (\ee^{\ii \t})  + \ee^{\s}\,\dd \t = -\int_{0}^{2\pi} 2 |\phi'|^{-1} \ee^{2\ii \t} \mc S[\phi] (\ee^{\ii \t})  \,\dd \t - 2\pi$$
where we used that the total length of $h$ is $2\pi$.  Lemma~\ref{lem:Schwarzian_inverse} then implies 
$$\int_{S^1} k^* h =  2 \Isch (\varphi) - 2\pi$$
as claimed.
\end{proof}

\begin{remark}\label{rem:total_curvature_*}
    Observe that for  a $C^2$ metric $h=\ee^\sigma\dd\theta$ with curvature at infinity $k^*$, the integral $\int_{\m S^1} k^*h$ can be reduce to
\begin{align}
    \int_{\m S^1} k^*h &= \int_0^{2\pi} \ee^{-\sigma}(2\sigma_{\theta\theta}-\sigma_{\theta}^2-1) \dd\theta\\\nonumber
    &= \int_0^{2\pi} \ee^{-\sigma}(\sigma_{\theta}^2-1) \dd\theta
\end{align}
by an integration by parts.
Hence, we can extend the definition of \emph{total curvature at infinity} $\int_{\m S^1} k^*h$ to $C^1$ metrics in $\m S^1$ as 
    \begin{align}
        \int_{\m S^1} k^*h = \int_0^{2\pi} \ee^{-\sigma}(\sigma_{\theta}^2-1) \dd\theta
    \end{align}
    which then also extends the definition of $\Isch (\cdot)$ to $C^2$ diffeomorphisms. 
\end{remark}

\subsection{Schwarzian action for $n$-folds covers}\label{subsec:n-fold}
We discuss the Epstein curve associated with a degree $n$ local diffeomorphism of $\m S^1$ (which we also call an $n$-fold cover of $\m S^1$).
\begin{definition}
    The group $\Mobn \simeq \PSL^{(n)}(2, \m R)$ is the group of circle diffeomorphisms obtained by conjugating $\PSU(1,1)$ with the $n$-th power map $z\mapsto z^n$. That is, \[
    B\in \Mobn \iff \exists A \in \PSU(1,1),\, (B(z))^n = A(z^n).
    \]
\end{definition}

Observe that the Schwarzian action defined by \eqref{eq:def_Sch_action} extends to any local diffeomorphism $\varphi:\m S^1 \rightarrow \m S^1$. As these maps are $n$-fold covers of the circle, it is natural to study the Schwarzian action in $\Diff_n(\m S^1): =\lbrace \varphi:\m S^1 \rightarrow \m S^1\,|\, \varphi \text{ is a $n$-fold cover} \rbrace$. To describe this action, the following lemma will be useful.

\begin{lem}
    The map $^n:\Diff(\m S^1) \rightarrow \Diff_n(\m S^1)$ given by
    \[
    \phi \in \Diff(\m S^1) \mapsto (^n\phi)(z) : = (\phi(z))^n
    \]    
    is an $\m Z/n\m Z$ principal bundle with fiber action given multiplication by a $n$-root of unity.
\end{lem}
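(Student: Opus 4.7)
The plan is to verify three ingredients: (a) $^n$ is well-defined as a map into $\Diff_n(\m S^1)$; (b) its fibers coincide with the free orbits of the group $\mu_n := \{\zeta \in \m C : \zeta^n = 1\} \simeq \m Z/n\m Z$ acting on $\Diff(\m S^1)$ by pointwise multiplication $(\zeta,\phi) \mapsto \zeta\phi$; and (c) the local triviality needed for a principal bundle structure. Throughout, write $p_n(z) := z^n$, so that $^n\phi = p_n \circ \phi$.

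For (a), if $\phi \in \Diff(\m S^1)$ then $^n\phi$ is smooth, and its tangential derivative is, up to the standard identification of $T\m S^1$, equal to $n\phi^{n-1}\phi'$, which is nonvanishing; hence $^n\phi$ is a local diffeomorphism. Since $\phi$ has degree one and $p_n$ has degree $n$, the composition has degree $n$, so $^n\phi \in \Diff_n(\m S^1)$. For (b), the action of $\mu_n$ is clearly free, and $(\zeta\phi)^n = \zeta^n \phi^n = \phi^n$ shows each orbit lies in a single fiber. Conversely, if $^n\phi_1 = {}^n\phi_2$, then $\phi_1/\phi_2 : \m S^1 \to \mu_n$ is continuous into a discrete set, hence constant. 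For surjectivity, given $\psi \in \Diff_n(\m S^1)$, the induced map $\psi_* : \pi_1(\m S^1) \to \pi_1(\m S^1)$ is multiplication by $\pm n$ and is contained in $p_{n,*}(\pi_1(\m S^1)) = n\m Z$, so the standard lifting criterion supplies a continuous $\phi$ with $p_n \circ \phi = \psi$, unique up to deck transformations (exactly multiplication by elements of $\mu_n$). Because $p_n$ is a local diffeomorphism and $\psi$ is smooth with nonvanishing derivative, the lift $\phi$ is also smooth with nonvanishing derivative, and a degree count $n = n\deg\phi$ gives $\deg\phi = 1$, so $\phi \in \Diff(\m S^1)$.

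For (c), since $\mu_n$ is a finite group acting freely and continuously on the Hausdorff topological group $\Diff(\m S^1)$, the action is automatically properly discontinuous: around any $\phi_0$ pick a neighborhood $U$ (in the natural $C^k$ topology on $\Diff(\m S^1)$) whose $\mu_n$-translates $\zeta U$ are pairwise disjoint, using continuity of left multiplication together with the fact that all stabilizers are trivial. Then $^n$ is injective on each $\zeta U$ and $^{-n}({}^n U) = \bigsqcup_{\zeta \in \mu_n} \zeta U$, giving a local trivialization $^n U \times \mu_n \xrightarrow{\sim} {}^{-n}({}^n U)$, $(\psi,\zeta) \mapsto \zeta \cdot \phi_\psi$, where $\phi_\psi \in U$ is the unique preimage of $\psi$ in $U$. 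The main (and essentially only) technical point is this properly-discontinuous property of the $\mu_n$-action on the function-space topology of $\Diff(\m S^1)$; all other ingredients reduce to elementary covering-space and degree computations.
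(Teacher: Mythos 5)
The paper states this lemma without proof (it is treated as a standard fact needed to transport $\Isch$ from $\Diff_n(\m S^1)$ back to $\Diff(\m S^1)$), so there is no in-paper argument to compare against; your covering-space proof is a legitimate way to supply one, and its core steps are correct: $^n\phi=p_n\circ\phi$ lands in $\Diff_n(\m S^1)$, the fibers are exactly the free $\mu_n$-orbits because $\phi_1/\phi_2$ is a continuous map into the discrete set $\mu_n$, and surjectivity follows from the lifting criterion plus smoothness of lifts through the local diffeomorphism $p_n$. Two points deserve tightening. First, your degree counts implicitly take $\Diff(\m S^1)$ to mean orientation-preserving diffeomorphisms (degree $+1$), which matches the paper's conventions (the Virasoro/coadjoint-orbit setting and the definition of $\Diff_n$ as $n$-fold covers), but you should say so, since for an orientation-reversing $\phi$ the composite has degree $-n$. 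Second, you locate the ``only technical point'' in proper discontinuity of the $\mu_n$-action, but that part is automatic (a free action of a finite group on a Hausdorff space always admits such separating neighborhoods, exactly as you argue); the genuinely substantive step for local triviality is the one you pass over, namely that the chart $\psi\mapsto\phi_\psi$ is continuous for the $C^\infty$ (or $C^k$) topology on $\Diff_n(\m S^1)$ — equivalently that $^n$ restricted to $U$ is open onto a neighborhood of $\phi_0^n$. This is true and easy, e.g.\ because the lift can be written as a continuously chosen branch of the $n$-th root, $\phi_\psi(z)=\exp\bigl(\tfrac1n \log\psi(z)\bigr)$ along a partition of the circle, and this expression depends continuously on $\psi$ together with all derivatives; adding a sentence to this effect closes the only real gap in your write-up.
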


Using this correspondence, we can express the Schwarzian action on $\Diff_n(\m S^1)$ as a $\Mobn$ invariant functional on $\Diff(\m S^1)$.

\begin{prop}
    We define the map $\Isch^n: \Diff(\m S^1) \rightarrow \m R$ as the composition of $^n:\Diff(\m S^1) \rightarrow \Diff_n(\m S^1)$ and $\Isch$. We have $\Isch^n$ is invariant by the left action of $\Mobn$ and is given by
    \begin{align*}
    \Isch^n(\phi) & :=  \Isch(^n \phi) = \int_{\m S^1} - \ii z \mc S [^n\phi](z) \,\dd z\\
    &= \Isch(\phi) + \int_{\m S^1} \frac{n^2-1}{2}\left(\frac{\phi'(z)}{\phi(z)}\right)^2 \ii z \,\dd z\\
    &= \Isch(\phi) + \frac{1-n^2}{2}\int_0^{2\pi} |\phi'(e^{\ii\theta})|^2 \,\dd\theta.
    \end{align*}
\end{prop}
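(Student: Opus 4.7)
The plan is to reduce everything to the chain rule for the Schwarzian derivative applied to the factorization $^n\phi = \psi_n \circ \phi$, where $\psi_n(w) = w^n$. The first identity $\Isch^n(\phi) = \Isch(^n\phi) = \int_{\m S^1} -\ii z\, \mc S[^n\phi](z)\,\dd z$ is simply the definition of $\Isch^n$ combined with the definition of $\Isch$. So I only need to produce the two algebraic rewritings and the invariance.

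First I would compute $\mc S[\psi_n]$ directly from $\psi_n'(w) = n w^{n-1}$: one finds $\psi_n''/\psi_n' = (n-1)/w$, hence
\[
\mc S[\psi_n](w) = -\frac{n-1}{w^2} - \frac{(n-1)^2}{2w^2} = \frac{1-n^2}{2w^2}.
\]
Applying the chain rule $\mc S[\psi_n \circ \phi] = \mc S[\psi_n]\circ\phi \cdot (\phi')^2 + \mc S[\phi]$ gives
\[
\mc S[^n\phi](z) = \frac{1-n^2}{2\phi(z)^2}(\phi'(z))^2 + \mc S[\phi](z),
\]
and integrating against $-\ii z\,\dd z$ yields the first rewriting
\[
\Isch^n(\phi) = \Isch(\phi) + \int_{\m S^1}\frac{n^2-1}{2}\left(\frac{\phi'(z)}{\phi(z)}\right)^2 \ii z\,\dd z.
\]

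For the second rewriting, I would use that on $\m S^1$ one has $\dd z = \ii z\,\dd\theta$ and, writing $\phi(e^{\ii\theta}) = e^{\ii \tilde\phi(\theta)}$ with $\tilde\phi$ real, the paper's convention gives $\phi'(z) = \tilde\phi'(\theta)\,\phi(z)/z$. Therefore $\phi'(z)/\phi(z) = \tilde\phi'(\theta)/z$, so
\[
\ii z\left(\frac{\phi'(z)}{\phi(z)}\right)^2\dd z = -z^2\left(\frac{\phi'(z)}{\phi(z)}\right)^2\dd\theta = -\tilde\phi'(\theta)^2\,\dd\theta = -|\phi'(e^{\ii\theta})|^2\,\dd\theta,
\]
which gives the claimed $\frac{1-n^2}{2}\int_0^{2\pi}|\phi'(e^{\ii\theta})|^2\,\dd\theta$ correction.

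Finally, for the left $\Mobn$-invariance: if $B \in \Mobn$, then by definition there exists $A \in \PSU(1,1)$ with $(B(z))^n = A(z^n)$, so
\[
\,^n(B\circ\phi)(z) = (B(\phi(z)))^n = A(\phi(z)^n) = A \circ {}^n\phi(z).
\]
By Lemma~\ref{lem:sch_real} (or directly from $\mc S[A] \equiv 0$ and the chain rule, which applies to any map whose Schwarzian is defined, not only diffeomorphisms), $\Isch(A\circ F) = \Isch(F)$ for any $A \in \PSU(1,1)$. Taking $F = {}^n\phi$ gives $\Isch^n(B\circ\phi) = \Isch({}^n\phi) = \Isch^n(\phi)$. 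I expect no serious obstacle; the only mild care is to check that the Schwarzian chain rule still applies when one factor is the non-injective $n$-th power map, which is fine since $\psi_n$ is a local biholomorphism on a neighborhood of $\m S^1$ and Remark~\ref{rem:phi_complex_derivative} aligns the circle and holomorphic conventions for its Schwarzian.
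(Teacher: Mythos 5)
Your proposal is correct and follows essentially the same route as the paper: compute $\mc S[z\mapsto z^n] = (1-n^2)/(2z^2)$, apply the Schwarzian chain rule to $^n\phi = \psi_n\circ\phi$, convert the correction term using the circle identity $\phi'(z)z = |\phi'(z)|\phi(z)$ (your lift $\tilde\phi$ computation is an equivalent phrasing), and deduce $\Mobn$-invariance from $^n(B\circ\phi) = A\circ{}^n\phi$ together with the left $\PSU(1,1)$-invariance of $\Isch$. No gaps.
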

In particular, $\Isch^n$ is well defined on the coadjoint orbit $\Mobn \backslash \Diff(\m S^1)$ \cite{witten88,Alekseev}.

\begin{proof}
    Let us prove the $\Mobn$ invariance first. Let $\phi\in \Diff(\m S^1)$, $\varphi = \prescript{n}{}{\phi}$, $B\in \Mobn$ and $A\in \PSU(1,1)$ such that $(B(z))^n = A(z^n)$. Then
    \begin{align*}
        \Isch^n (B\circ\phi) &= \Isch(\prescript{n}{}{(B\circ\phi)}) = \Isch(A(\prescript{n}{}{\phi}))\\
        &= \Isch(\prescript{n}{}{\phi}) = \Isch^n (\phi).
    \end{align*}
    To get the formula for $\Isch^n$ we can use the chain rule for the Schwarzian derivative and that the Schwarzian derivative for $z\mapsto z^n$ is given by $(1-n^2)/(2z^2)$ to obtain
    \begin{align*}
        \mc S[\varphi](z) = \frac{1-n^2}{2\phi^2(z)}\left(\phi'(z)\right)^2 + \mc S[\phi](z)
    \end{align*}
    from which we obtain
    \begin{align*}
        \Isch(\varphi) = \int_{\m S^1}
   - \ii z \, \mc S[\varphi](z) \,\dd z = \int_{\m S^1} \frac{n^2-1}{2\phi^2(z)}\left(\phi'(z)\right)^2\ii z
   - \ii z \, \mc S[\phi](z) \,\dd z
    \end{align*}
    and the first identity follows. For the second identity it remains to observe that as $\phi:\m S^1 \rightarrow \m S^1$ we have that $\phi'(z)z = |\phi'(z)|\phi(z)$.
\end{proof}

Recall that the definition of Epstein curve \eqref{eq:Epstein_point} and its normal \eqref{eq:Epsteinnormal} are given by the $1$-jet of the metric $h$ on the circle. Hence we can extend these definitions to multi-valued metrics and define the Epstein curve associated to a $n$-fold cover of $\m S^1$.

\begin{definition}
    Let $\varphi:\m S^1 \rightarrow\m S^1$ be a $C^2$ $n$-fold cover. Then its associated metric in $\m S^1$ is given by the multi-valued $C^1$ metric $h=\varphi_*(\dd\theta)$. To this multi-valued metric we have its associated Epstein curve $\Ep_h$ and normal $\widetilde{\Ep}_h$, which are curves parametrized by
    \[
    \Ep_h(e^{\ii\theta}) :  = \Ep_{\varphi_*(\dd\t)}(\varphi(e^{\ii\t}))
    \]
    \[
    \widetilde{\Ep}_h(e^{\ii\theta}) :  = \widetilde{\Ep}_{\varphi_*(\dd\t)}(\varphi(e^{\ii\t}))
    \]
\end{definition}

See Figure~\ref{fig:n_fold_cover} and Figure~\ref{fig:n_fold_cover2} for examples.

\begin{figure}
    \centering
    \includegraphics[width=0.8\linewidth]{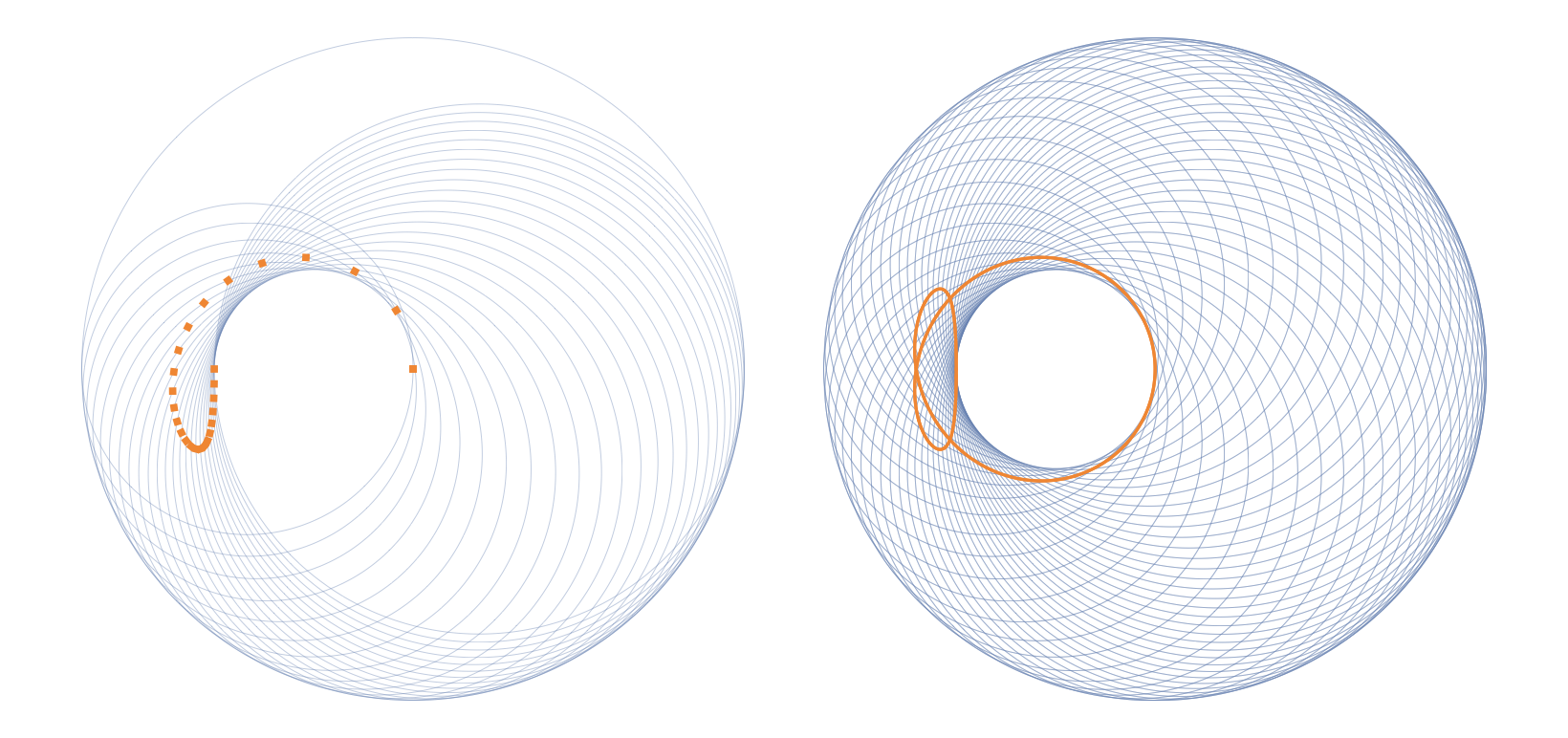}
    \caption{On the right are the Epstein curve and horocycles for the double cover corresponding to the metric $h^a$, $a=1/2$ in Example~\ref{ex:z3_conjugated} (i.e., $h^{1/2}$ pushed forward by $z^2$). On the left are twenty five evenly-space horocycles for only $\theta\in [0,\pi/2]$, with the point on the Epstein curve marked on them.}
    \label{fig:n_fold_cover2}
\end{figure}

Observe that for $n=1$ this definition gives the associated Epstein curve as in the previous sections, but after a right composition by (in this case diffeomorphism) $\varphi$. While this changes the Epstein curve and its normal as maps, it does so only through a reparametrization, so geometric quantities such as the total length and total curvature remain the same.

Local geometric concepts associated with circle diffeomorphism generalize in a straightforward fashion to $n$-fold covers. Such is  the case for the arc-length and geodesic curvature of Definition~\ref{def:length_curv} (including their extensions by Lemma~\ref{lem:length_curv_local} to non-immersion points), the total length and total geodesic curvature of Definition~\ref{def:total_length_curv}, and the geodesic curvature at infinity of Definition~\ref{def:curvature_infty} (as a parametrized multi-valued function). Moreover, since Lemma~\ref{lem:asymptoticforms} is a purely local statement, it directly extends to the multi-valued metric $h=\varphi_*(\dd\t)$.

We can also extend Definition~\ref{def:signed_A} for the signed area $A(\Ep_h)$, although here the statement of Lemma~\ref{lem:Gauss_Bonnet} needs some adjustment. Namely, \eqref{eq:AreavsCurv} still holds for any two multi-valued metrics with the same number of branches. On the other hand, equation \eqref{eq:Epstein_Gauss-Bonnet} required that we verify it for one metric. In this case, we can compute it for $e^t\dd\t$ as a $n$-valued metric, from which it follows:
\begin{align}\label{eq:nth-Epstein_Gauss-Bonnet}
        A(\Ep_h) = \int_{\m S^1} k\,\dd\ell -2\pi n.
\end{align}

We can also derive a geometric interpretation for $\Isch^n$ as follows.

\begin{thm}\label{thm:IschnLength}
    Let $\phi:\m S^1 \rightarrow \m S^1$ a $C^3$ diffeomorphism. Let $\varphi = \prescript{n}{}{\phi}$ be the associated $n$-fold cover with associated multi-valued metric $h=\varphi_*(\dd\t)$. Then $\Isch^n(\phi)$ can be computed as
    \[
    \Isch^n(\phi) = L(\Ep_h) = -A(\Ep_h)-2\pi(n-1).
    \]
\end{thm}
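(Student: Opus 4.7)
The plan is to mimic the proof of Theorem~\ref{thm:SchArea} at the local level and then use the generalized Gauss--Bonnet formula \eqref{eq:nth-Epstein_Gauss-Bonnet} for multi-valued metrics. Since $\Isch^n(\phi)$ is by definition $\Isch(\varphi)$ for $\varphi = {}^n\phi$, the first equality $\Isch^n(\phi) = L(\Ep_h)$ reduces to showing $\Isch(\varphi) = L(\Ep_h)$ for the $n$-fold cover $\varphi$. Although $\varphi$ is not a global diffeomorphism, it is a local one, and the construction of $\Ep_h$, together with all computations in Lemma~\ref{lem:length_curv_local}, depends only on the $1$-jet of $h = \varphi_*(\mathrm{d}\theta)$. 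This is the reason we should still be able to carry out the proof of Theorem~\ref{thm:SchArea} branch by branch.

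Concretely, I would partition $\m S^1$ (minus finitely many critical values) into $n$ arcs $U_1,\dots,U_n$ on which $\varphi$ is a diffeomorphism onto its image, with local inverses $\psi_i$. On $\varphi(U_i)$, the metric $h$ is single-valued and equal to $|\psi_i'(e^{\ii\beta})|\,\mathrm{d}\beta$, so the computation in the proof of Theorem~\ref{thm:SchArea} yields
\begin{equation*}
\int_{\varphi(U_i)} \mathrm{d}\ell \;=\; -\int_{\varphi(U_i)} e^{-\sigma_i(\beta)} e^{2\ii\beta}\mc S[\psi_i](e^{\ii\beta})\,\mathrm{d}\beta.
\end{equation*}
Changing variables via $\beta = \arg\varphi(e^{\ii\theta})$ on $U_i$ and using the Schwarzian chain rule for $\psi_i\circ\varphi = \operatorname{id}$ (i.e.\ $\mc S[\psi_i](\varphi)(\varphi')^2 = -\mc S[\varphi]$), together with the pushforward relation $e^{-\sigma_i(\beta)}\,\mathrm{d}\beta = \mathrm{d}\theta/|\varphi'(e^{\ii\theta})|\cdot|\varphi'(e^{\ii\theta})|^2/|\varphi'(e^{\ii\theta})| = \mathrm{d}\theta$ (after keeping careful track of the real $e^{2\ii\theta}\mc S[\varphi]$ factor as in Remark~\ref{rem:phi_complex_derivative}), this rewrites as
\begin{equation*}
\int_{\varphi(U_i)} \mathrm{d}\ell \;=\; -\int_{U_i} e^{2\ii\theta}\mc S[\varphi](e^{\ii\theta})\,\mathrm{d}\theta.
\end{equation*}
Summing over the $n$ branches gives $L(\Ep_h) = \Isch(\varphi) = \Isch^n(\phi)$. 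The main technical obstacle is exactly this bookkeeping: verifying that the signs and orientations on each branch are consistent with Definition~\ref{def:length_curv} applied to the parametrization $\Ep_h(e^{\ii\theta}) = \Ep_{\varphi_*(\mathrm{d}\theta)}(\varphi(e^{\ii\theta}))$, which traces the multi-valued Epstein curve once as $\theta$ runs over $[0,2\pi)$.

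For the second equality $L(\Ep_h) = -A(\Ep_h) - 2\pi(n-1)$, I would invoke the Gauss--Bonnet formula \eqref{eq:nth-Epstein_Gauss-Bonnet} adapted to the multi-valued setting,
\begin{equation*}
A(\Ep_h) \;=\; \int_{\m S^1} k\,\mathrm{d}\ell \;-\; 2\pi n,
\end{equation*}
combined with the identity $\mathrm{d}\ell + k\,\mathrm{d}\ell = h$ coming from \eqref{eq:dl_k*} and \eqref{eq:kdl_k*}, which is purely local and hence still holds branch by branch. Summing over branches, the total pushforward $\int_{\m S^1} h$ equals $2\pi$ (the total length of the domain of $\varphi$, distributed across the $n$ sheets), so
\begin{equation*}
L(\Ep_h) + A(\Ep_h) \;=\; \int_{\m S^1} h \;-\; 2\pi n \;=\; 2\pi - 2\pi n \;=\; -2\pi(n-1),
\end{equation*}
as desired. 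The only delicate point here is justifying \eqref{eq:nth-Epstein_Gauss-Bonnet} in the multi-valued setting; this can be done by the same homotopy-and-Stokes argument as in Lemma~\ref{lem:Gauss_Bonnet}, using that the Epstein curve for the constant multi-valued metric $e^t\,\mathrm{d}\theta$ with $n$ branches is a round hyperbolic circle traversed $n$ times and therefore bounds a disk with area counted with multiplicity $n$.
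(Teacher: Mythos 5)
Your proposal matches the paper's proof essentially step for step: the paper likewise obtains the first equality by running the computation of Theorem~\ref{thm:SchArea} on each branch of the multi-valued metric, and the second from the identity $\dd\ell + k\,\dd\ell = h$ (whose total mass is $2\pi$) combined with the $n$-fold Gauss--Bonnet formula \eqref{eq:nth-Epstein_Gauss-Bonnet}, justified by checking it for $\ee^t\dd\theta$ viewed as an $n$-valued metric exactly as you indicate. One bookkeeping slip to fix: on each branch one has $\ee^{-\sigma_i}\dd\beta = |\varphi'|^2\,\dd\theta$ (not $\dd\theta$), and after the chain rule absorbs the $(\varphi')^2$ the correct relation is $\int_{\varphi(U_i)}\dd\ell = +\int_{U_i}\ee^{2\ii\theta}\mc S[\varphi](\ee^{\ii\theta})\,\dd\theta$, which agrees with your final conclusion $L(\Ep_h)=\Isch(\varphi)$ but not with the minus sign in your second display.
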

\begin{proof}
    As in the proof of Theorem~\ref{thm:SchArea} we can verify that the integrand of $\Isch^n(\phi) = \Isch(\varphi)$ is equal to the arc-length of $\Ep_h$ for the appropriate branch of the multi-valued metric $h$, so the first equality follows.

    For the second equality, as in Lemma~\ref{lem:asymptoticLA} we have that the total length and the total geodesic curvature of $\Ep_h$ add up to the total length of $\varphi_*(\dd\t)$, which is $2\pi$. Hence by using this and \eqref{eq:nth-Epstein_Gauss-Bonnet} the second equality follows.
\end{proof}

\section{Bi-local observables} \label{sec:bi-local}

For $\varphi\in \Diff^1(\m S^1)$,  define a $C^1$ function $\tilde{\varphi}:[0,2\pi]_{/0\sim 2\pi}\to [0,2\pi]_{/0\sim 2\pi}$ by $\varphi(\ee^{\ii \theta}) = \ee^{\ii \tilde{\varphi}(\theta)}$. Pick $u,v\in \m S^1$ distinct, and let $x,y\in [0,2\pi]$ be their arguments. The \textit{bi-local observable} of $\varphi$ for $u,v$ is the function:
\begin{align}
    \mc O(\varphi;u,v) := \frac{\sqrt{\tilde{\varphi}'(x)\tilde{\varphi}'(y)}}{2\sin(\frac{1}{2}(\tilde{\varphi}(x)-\tilde{\varphi}(y)))} = \frac{\;\;|\varphi'(u)\varphi'(v)|^{1/2}}{|\varphi(u)-\varphi(v)|}.
\end{align}

It is straightforward to see that these are invariant under the left action $\varphi\mapsto \a\circ \varphi$ for $\a\in \PSU(1,1)$, and hence descend to functions on $\PSU(1,1)\backslash \Diff^1 (\m S^1)$. In this section, we relate the bi-local observables to renormalized lengths of hyperbolic geodesics truncated by the horocycles in the construction of the Epstein curve (Proposition~\ref{prop:bilocal_is_length}). From this, we show that the bi-local observables along the edges of any ideal triangulation determine the circle diffeomorphism in $\PSU(1,1)\backslash\Diff^1(\m S^1)$ (Proposition~\ref{prop:ideal_triangulation}).

Although we do not deal directly with the Schwarzian field theory, let us comment on the role played by bi-local observables there. In field theories with a gauge symmetry (i.e.\ here the $\PSU(1,1)$ action), correlation functions of gauge-invariant observables often play a central role in determining the field theory. A collection of bi-local observables $\{\mc O(\varphi; u_j,v_j)\}_{j=1}^n$ is called \textit{non-crossing} if the lines (hyperbolic geodesics) in the disk connecting $(u_j,v_j)$ do not cross. Correlation functions for non-crossing bi-local observables were computed in the physics literature using a connection to two-dimensional CFT and predicted to determine the theory \cite{MertensEtAl}; in \cite{LosevCorrelations} these correlation functions were computed mathematically rigorously and shown to determine the measure for the theory using the probabilistic framework for Schwarzian field theory. In the conjectured holography duality between Schwarzian field theory and JT gravity in the disk, bi-local observables in the Schwarzian theory are predicted to correspond to \textit{boundary-anchored Wilson lines} for JT gravity \cite{Blommaert}.

\subsection{Bi-local observables as renormalized length}

The Epstein curve for $\varphi\in \Diff^1(\m S^1)$ is the envelope of horocycles $(H_z)_{z\in \m S^1}$ in $\m D$, where the size of $H_z$ is determined at each point by $\varphi'(u)$ and $\varphi(u)$ if $z = \varphi (u)$. We saw in Theorem~\ref{thm:SchArea} that $\Isch(\varphi)$ can be computed as the hyperbolic area enclosed by the Epstein curve, loosely speaking as a ``renormalized area'' where we cut off the disk using horocycles at all points $u\in \m S^1$. 
The main result of this section is Proposition~\ref{prop:bilocal_is_length} where we show that the bi-local observable $\mc O(\varphi; u,v)$ is related to the \textit{renormalized length} for $\varphi$ of the hyperbolic geodesic from $\varphi(u)$ to $\varphi(v)$, given by truncation with the same horocycles as the Epstein curve construction. 

Let $d_{\m D}$ denote the distance function for the hyperbolic metric on $\m D$.

\begin{definition}
  Fix $\varphi\in \Diff^1(\m S^1)$ and let $(H_z)_{z\in \m S^1}$ denote the horocycles associated with the metric $\varphi_*\dd \theta$ as in Section~\ref{subsec:eps_1}. For $u,v\in \m S^1$ distinct, let $(u,v)$ be the hyperbolic geodesic between them in $\m D$. 
  We define the \emph{renormalized length} for $\varphi$ from $u$ to $v$, denoted $\mathrm{RL}_{\varphi}(u,v)$, to be the signed hyperbolic distance from $H_{\varphi(u)}$ to $H_{\varphi(v)}$ along $(\varphi(u),\varphi(v))$. The sign is positive if $H_{\varphi(u)}$ and $H_{\varphi(v)}$ are disjoint and negative otherwise. 
\end{definition}

\begin{remark}\label{rem:log_lambda}
    Renormalized length is closely related to the $\log \Lambda$ lengths of decorated Teichm\"uller theory, see e.g. \cite{PennerBook}. These are also related to diamond shears studied by the last two authors \cite{SWW_shears}. The renormalized length corresponds to $\log \Lambda$ length for the particular choice of horocycles given by the Epstein construction. 
\end{remark}

For any $\varphi$, renormalized length is invariant under post-composition by M\"obius transformations by Lemma~\ref{lem:naturality}. As the first step in understanding renormalized length, we can compute it for $\varphi = \mathrm{Id}$. 
\begin{lem}\label{lem:identity_bilocal}
    For any $u,v\in \m S^1$ distinct, the renormalized length for $\varphi=\mathrm{Id}$ is 
    \begin{align}
        \mathrm{RL}_{\mathrm{Id}}(u,v) = -\log \frac{4}{|u-v|^2}.
    \end{align}
\end{lem}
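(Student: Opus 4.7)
The plan is to identify the horocycles $H_u, H_v$ explicitly for $\varphi = \mathrm{Id}$ and compute the signed distance between them using Busemann functions. By Example~\ref{ex:Ep_point} (the $n=1$ case of Example~\ref{ex:round}), the metric $\mathrm{Id}_* \,\dd\theta = \dd\theta$ produces, at each $z \in \m S^1$, the unique horocycle $H_z$ tangent to $\m S^1$ at $z$ and passing through $0 \in \m D$. In particular $H_u$ and $H_v$ always meet at $0$, so the two horoballs always overlap (they are tangent at $0$ when $v = -u$), and one expects $\mathrm{RL}_{\mathrm{Id}}(u,v) \le 0$ with equality precisely for antipodal $u,v$, consistent with the sign of $-\log(4/|u-v|^2)$ on $|u-v| \in (0,2]$.

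I will express the signed distance via the Busemann function $\beta_z(w) := -\log\bigl((1-|w|^2)/|w-z|^2\bigr)$, normalized so that $\{\beta_z = 0\} = H_z$ (indeed $\beta_z(0) = 0$, and the level sets of $\beta_z$ are horocycles centered at $z$). For a unit-speed geodesic $\gamma \colon \m R \to \m D$ oriented with $\gamma(-\infty) = u$ and $\gamma(+\infty) = v$, the restrictions $\beta_u \circ \gamma$ and $\beta_v \circ \gamma$ are affine of slopes $+1$ and $-1$, respectively, and each has a unique zero $t_u, t_v$. A short sign check (tracking when $\gamma$ exits the horoball at $u$ and when it enters the horoball at $v$) identifies the signed distance from $H_u$ to $H_v$ along $(u,v)$ with $t_v - t_u$, which by linearity equals $(\beta_u + \beta_v)(\gamma(t))$ for every $t$. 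Hence, for any $p$ on the geodesic $(u,v)$,
\[
\mathrm{RL}_{\mathrm{Id}}(u,v) = -\log \frac{(1-|p|^2)^2}{|p-u|^2 \, |p-v|^2}.
\]

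Since both sides of the target identity are invariant under the simultaneous rotation $(u,v) \mapsto (\ee^{\ii\alpha}u, \ee^{\ii\alpha}v)$, I may assume $u = \ee^{\ii\theta}$ and $v = \ee^{-\ii\theta}$ for some $\theta \in (0,\pi]$; the geodesic $(u,v)$ is then symmetric about the real axis and meets it at $p = (1-\sin\theta)/\cos\theta$. A direct computation yields $(1-p^2)^2/(|p-u|^2\,|p-v|^2) = 1/\sin^2\theta = 4/|u-v|^2$, giving $\mathrm{RL}_{\mathrm{Id}}(u,v) = -\log(4/|u-v|^2)$. The only delicate step is the sign bookkeeping in the second paragraph; an equivalent alternative is to push forward via a M\"obius map $T\colon\m D \to \m H$ with $T(u) = \infty$, $T(v) = 0$, observe that $T(H_u) = \{\mathrm{Im}\,z = y_0\}$ and $T(H_v)$ is the horocycle at $0$ through $T(0)$, and simplify $|T(0)|^2/y_0^2 = 4/|u-v|^2$ using a half-angle identity.
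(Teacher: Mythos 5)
Your proposal is correct, but it takes a genuinely different route from the paper. The paper reduces by rotation invariance to $\mathrm{RL}_{\mathrm{Id}}(-1,v)$, pushes everything to the upper half-plane by the Cayley map (so all horocycles pass through $\ii$ and $H_{-1}$ becomes the horizontal line at height $1$), explicitly locates the two intersection points of the vertical geodesic $(\lambda,\infty)$ with the two horocycles, and integrates $\dd y/y$ between them --- essentially the "alternative" you sketch in your last sentence. Your main argument instead uses the Busemann functions $\beta_z(w)=\log\bigl(|w-z|^2/(1-|w|^2)\bigr)$ normalized to vanish on $H_z$: since $\beta_u\circ\gamma$ and $\beta_v\circ\gamma$ are affine of slopes $+1$ and $-1$ along the geodesic, the signed distance between the crossing points is $t_v-t_u=(\beta_u+\beta_v)(p)$ for \emph{any} $p$ on $(u,v)$, which immediately gives the closed formula $\mathrm{RL}_{\mathrm{Id}}(u,v)=-\log\bigl((1-|p|^2)^2/(|p-u|^2|p-v|^2)\bigr)$; I checked your sign conventions and the evaluation at the symmetric point (the ratio is indeed $1/\sin^2\theta=4/|u-v|^2$), so the argument is sound. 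What your approach buys is a point-independent formula that never requires locating the horocycle intersection points, handles the sign of $\mathrm{RL}$ uniformly through the slopes of the Busemann functions, and in fact comes close to giving Proposition~\ref{prop:bilocal_is_length} directly; the cost is that you invoke the standard (unproved here) fact that Busemann functions are affine of unit slope along asymptotic geodesics, whereas the paper's computation is entirely elementary. Two cosmetic slips: the parameter range should be $\theta\in(0,\pi)$ since $\theta=\pi$ gives $u=v=-1$, and at $\theta=\pi/2$ the expression $(1-\sin\theta)/\cos\theta$ should be read in its equivalent form $\cos\theta/(1+\sin\theta)$ (giving $p=0$), but neither affects the proof.
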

\begin{proof}

Let $(H_z^0)_{z\in \m S^1}$ denote the horocycles associated with $\dd \theta$. By Example~\ref{ex:round}, for each $z$, $H_z^0$ is the horocycle based at $z$ containing $0$. This is invariant under rotation, so it suffices to compute $\mathrm{RL}_{\mathrm{Id}}(-1,v)$ for $v\in \m S^1$. 

We compute the desired length from $H_{-1}\cap (-1,v)$ to $H_v \cap (-1,v)$ using the halfplane model $\m H = \{\zeta\in \m C \,|\, \Im (\zeta) > 0\}$ of the hyperbolic plane. Let $\cayley:\m D\to \m H$ be the Cayley map given by $\cayley(z) = \ii (1-z)/(1+z)$. Since $\cayley(0) = \ii$, $(H_z^0)_{z\in \m S^1}$ is sent by $\cayley$ to the collection of horocycles $(\tilde{H}_x)_{x\in\Rhat}$ all of which contain $\ii$. Note that $\cayley(-1) = \infty$ and let $\lambda = \cayley(w)$.

Since $\tilde{H}_{\infty}$ is the horocycle of Euclidean height $1$, $\tilde{H}_\infty$ and $(\lambda,\infty)$ intersect at $\lambda+\ii$. For $\lambda\in \m R$, $\tilde H_\lambda$ must be centered at $\lambda+\ii r$ for some $r>0$, and must contain $\lambda$ and $\ii$. We use this to solve for $r$: 
\begin{align*}
    r^2 = \lambda^2 + (r-1)^2 \implies r = \frac{1}{2}(1+\lambda^2).
\end{align*}
Therefore the intersection of $\tilde{H}_\lambda$ and $(\lambda,\infty)$ is at $\lambda+(1+\lambda^2)\ii$. Note that $1+\lambda^2\geq 1$ for all $\lambda\in \m R$, corresponding to the fact that $\tilde{H}_\lambda$ and $\tilde{H}_\infty$ always intersect. Hence, 
\begin{align*}
    \mathrm{RL}_{\mathrm{Id}}(-1,w) &= \mathrm{RL}_{\mathrm{Id}}(\lambda,\infty) = \int_{1+\lambda^2}^{1} \frac{1}{y}\, \dd y = - \log(1+\lambda^2) \\
    &= -\log \frac{4v}{(1+v)^2} = -\log \frac{4}{|v+1|^2},
\end{align*}
which completes the proof.
\end{proof}

One can observe directly from the lemma that $\mc O(\mathrm{Id};u,v)^2=\frac{1}{4} \exp(-\mathrm{RL}_{\mathrm{Id}}(u,v))$. 

\begin{prop}\label{prop:bilocal_is_length}
    Fix $\varphi\in \Diff^1(\m S^1)$ and $u,v\in \m S^1$ distinct. 
    The bi-local observable and renormalized length are related by: 
    \begin{align}\label{eq:bilocal}
        \mc O(\varphi;u,v)^2 = \frac{|\varphi'(u)\varphi'(v)|}{|\varphi(u)-\varphi(v)|^2} =\frac{1}{4}\exp(-\mathrm{RL}_{\varphi}(u,v)). 
    \end{align}
\end{prop}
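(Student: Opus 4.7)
The first equality in \eqref{eq:bilocal} holds by definition, so the content is the second. The plan is to compute $\mathrm{RL}_\varphi(u,v)$ directly by decomposing the signed distance from $H_{\varphi(u)}$ to $H_{\varphi(v)}$ along the geodesic $(\varphi(u),\varphi(v))$ (oriented from $\varphi(u)$ to $\varphi(v)$) via the ``baseline'' horocycles $H^0_z := H_z(\dd\theta)$ that pass through $0\in\m D$:
\begin{equation*}
\mathrm{RL}_\varphi(u,v) \;=\; \delta_u \;+\; \mathrm{RL}_{\mathrm{Id}}(\varphi(u),\varphi(v)) \;+\; \delta_v,
\end{equation*}
where $\delta_u$ (resp.\ $\delta_v$) is the signed hyperbolic distance along that oriented geodesic from $H_{\varphi(u)}$ to $H^0_{\varphi(u)}$ (resp.\ from $H^0_{\varphi(v)}$ to $H_{\varphi(v)}$). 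The middle term is supplied by Lemma~\ref{lem:identity_bilocal}.

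To evaluate $\delta_u$ (and symmetrically $\delta_v$), the key observation is that \eqref{eq:def_horo} makes $H_z(h)$ depend only on the pointwise value $h(z)$, so writing $h=\varphi_*\dd\theta = \ee^\sigma\dd\theta$ the horocycle $H_{\varphi(u)}$ coincides with the one associated with the constant metric $\ee^{\sigma(\varphi(u))}\dd\theta$. Push-forward gives $\sigma(\varphi(u)) = \log|\phi'(\varphi(u))| = -\log|\varphi'(u)|$, where $\phi=\varphi^{-1}$. Plugging $\sigma_\theta\equiv 0$ into \eqref{eq:Epstein_point} places the tangent point of $H_{\varphi(u)}$ with its Epstein curve at $\tanh(\sigma(\varphi(u))/2)\,\varphi(u)$, which lies on the diameter through $\varphi(u)$ at signed hyperbolic distance $\sigma(\varphi(u))$ from the origin (the sign being positive when $H_{\varphi(u)}$ is closer to $\varphi(u)$ than $H^0_{\varphi(u)}$). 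Since concentric horocycles at $\varphi(u)$ are separated by the same signed distance along \emph{any} geodesic ending at $\varphi(u)$, orienting away from $\varphi(u)$ gives $\delta_u = \sigma(\varphi(u)) = -\log|\varphi'(u)|$, and similarly $\delta_v = -\log|\varphi'(v)|$.

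Adding the three contributions and invoking Lemma~\ref{lem:identity_bilocal},
\begin{equation*}
\mathrm{RL}_\varphi(u,v) \;=\; -\log|\varphi'(u)| \;-\; \log\frac{4}{|\varphi(u)-\varphi(v)|^2} \;-\; \log|\varphi'(v)| \;=\; -\log\frac{4\,|\varphi'(u)\varphi'(v)|}{|\varphi(u)-\varphi(v)|^2},
\end{equation*}
and exponentiating yields $\tfrac{1}{4}\ee^{-\mathrm{RL}_\varphi(u,v)} = |\varphi'(u)\varphi'(v)|/|\varphi(u)-\varphi(v)|^2 = \mc O(\varphi;u,v)^2$. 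The main obstacle is careful sign-tracking: by the paper's convention, $\mathrm{RL}_\varphi(u,v)$ is negative precisely when $H_{\varphi(u)}$ and $H_{\varphi(v)}$ overlap, and the three-piece decomposition must reproduce this. It does, since when $|\varphi'(u)\varphi'(v)|$ is large both horocycles are pushed far outward and both $\delta_u$ and $\delta_v$ become strongly negative. A convenient consistency check is that both sides of \eqref{eq:bilocal} are invariant under left-composition by $\alpha\in\PSU(1,1)$ --- the right-hand side by direct M\"obius algebra and the left-hand side by naturality of the Epstein construction (Lemma~\ref{lem:naturality}) --- so Lemma~\ref{lem:identity_bilocal} alone already confirms the identity on the $\PSU(1,1)$-orbit of $\mathrm{Id}$.
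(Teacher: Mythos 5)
Your proof is correct, and it is a genuine (if mild) reorganization of the paper's argument rather than a transcription of it. The paper first reduces to the case where $\varphi$ fixes $u$ and $v$ (so that $H_{\varphi(u)}$ and $H^0_u$ are concentric at signed distance $-\log|\varphi'(u)|$, and Lemma~\ref{lem:identity_bilocal} applies with $|u-v|$), and then handles general $\varphi$ by post-composing with $\alpha\in\PSU(1,1)$ and using the identity $\alpha'(z)\alpha'(w)=(\alpha(z)-\alpha(w))^2/(z-w)^2$ to convert $|\psi'(u)\psi'(v)|/|u-v|^2$ into $|\varphi'(u)\varphi'(v)|/|\varphi(u)-\varphi(v)|^2$. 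You skip that reduction entirely: since $H_z(h)$ depends only on the pointwise value of $h$, the horocycle $H_{\varphi(u)}$ is concentric with the standard horocycle $H^0_{\varphi(u)}$ based at the \emph{image} point, at signed distance $\sigma(\varphi(u))=\log|\phi'(\varphi(u))|=-\log|\varphi'(u)|$ (this is exactly Lemma~\ref{lem:horo_t}, which you rederive from \eqref{eq:Epstein_point}), and you apply Lemma~\ref{lem:identity_bilocal} at $\varphi(u),\varphi(v)$, so the factor $|\varphi(u)-\varphi(v)|^2$ appears directly. What each route buys: yours is shorter and avoids the M\"obius-derivative computation, at the cost of having to justify the additivity of signed distances along the oriented geodesic and track the sign of $\delta_u,\delta_v$ at both endpoints (which you do correctly, and which the paper also uses implicitly in the fixed-point case); the paper's route isolates the $\PSU(1,1)$-covariance of the bi-local observable as an explicit cross-ratio identity. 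Your sign-tracking checks out: with the geodesic oriented from $\varphi(u)$ to $\varphi(v)$, the three increments of the arc-length parameter sum telescopically to $\mathrm{RL}_\varphi(u,v)$, and each $\delta$ equals $-\log|\varphi'|$ with the convention of Lemma~\ref{lem:horo_t}, giving $\mathrm{RL}_\varphi(u,v)=-\log\bigl(4|\varphi'(u)\varphi'(v)|/|\varphi(u)-\varphi(v)|^2\bigr)$ as required.
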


The key observation to prove Proposition~\ref{prop:bilocal_is_length} is the following lemma coming from the naturality of the Epstein construction.

\begin{lem}\label{lem:naturality2}
    Let $(H^0_z)_{z \in \m S^1}$ denote the horocycles associated with the metric $\dd \t$ as in Example~\ref{ex:round} 
    and $(H_z)_{z \in \m S^1}$ those associated with the metric $\varphi_*\dd \t$. Then $H_{\varphi (z)} = \a_z (H^0_z)$, where $\a_z \in \PSU(1,1) = \Isom_+(\m D)$ is any osculating M\"obius map at $z$ such that $\a_z(z) = \varphi (z)$ and $\a_z'(z) = \varphi'(z)$.
\end{lem}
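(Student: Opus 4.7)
The plan is to combine the naturality of the Epstein construction (Lemma~\ref{lem:naturality}) with the fact that the horocycle $H_z(h)$ depends only on the pointwise value $h(z)$ of the metric tensor, as recorded in Lemma~\ref{lem:horo_t}.

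First I would apply Lemma~\ref{lem:naturality} with the isometry $\a_z \in \PSU(1,1)$ to the horocycle $H_z^0 = H_z(\dd\t)$. This immediately gives
\begin{equation*}
    \a_z(H_z^0) \;=\; \a_z\bigl(H_z(\dd\t)\bigr) \;=\; H_{\a_z(z)}\bigl((\a_z)_*\dd\t\bigr) \;=\; H_{\varphi(z)}\bigl((\a_z)_*\dd\t\bigr),
\end{equation*}
using the assumption $\a_z(z) = \varphi(z)$. So the remaining task is to verify that $H_{\varphi(z)}\bigl((\a_z)_*\dd\t\bigr) = H_{\varphi(z)}(\varphi_*\dd\t)$.

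By Lemma~\ref{lem:horo_t}, this identity will follow as soon as the two metrics $(\a_z)_*\dd\t$ and $\varphi_*\dd\t$ agree as metric tensors at the single point $\varphi(z)$. Writing $w = \varphi(z)$, a pushforward computation gives $(\varphi_*\dd\t)(w) = |(\varphi^{-1})'(w)|\,\dd w = |\varphi'(z)|^{-1}\,\dd w$, and similarly $((\a_z)_*\dd\t)(w) = |\a_z'(z)|^{-1}\,\dd w$. The osculating condition $\a_z'(z) = \varphi'(z)$ forces these two expressions to coincide, which closes the argument.

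I do not expect any genuine obstacle; the proof is essentially a one-line consequence of naturality combined with the observation that the horocycle at a point uses only the $0$-jet of the metric. The only mild subtlety is keeping track of the distinction between the base point $z \in \m S^1$ where the osculation is measured and the point $\varphi(z)$ where the horocycle is centered, which the notation above makes explicit.
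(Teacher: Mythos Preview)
Your proposal is correct and follows essentially the same approach as the paper: both arguments combine the naturality of the Epstein construction (Lemma~\ref{lem:naturality}) with the fact that $H_z(h)$ depends only on the metric tensor at $z$ (Lemma~\ref{lem:horo_t}), using the osculating condition $\a_z'(z)=\varphi'(z)$ to match the pushforward metrics at $\varphi(z)$. The paper's proof runs the chain of equalities in the opposite direction but is otherwise identical in content, and slightly terser in that it does not write out the pushforward computation explicitly.
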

\begin{proof}
    We use the notation as in 
    \eqref{eq:Eps_invariant}. Since $H_{\varphi (z)} := H_{\varphi (z)} (\varphi_* \dd \t)$ is a horocycle centered at $\varphi (z)$ which only depends on the metric $\varphi_* \dd \t$ at $\varphi (z)$,  we have
    $$H_{\varphi (z)}  = H_{\varphi (z)} (\varphi_* \dd \t) = H_{\a_z (z)} (\a_{z*} \dd \t) = \a_z (H_z^0)$$
    where the last equality follows from Lemma~\ref{lem:naturality}.
\end{proof}

\begin{proof}[Proof of Proposition~\ref{prop:bilocal_is_length}] 

Fix $u,v\in \m S^1$. As in Lemma~\ref{lem:naturality2}, let $(H_{z})_{z\in \m S^1}$ and $(H_z^0)_{z\in \m S^1}$ denote the horocycles for $\varphi$ and $\mathrm{Id}$ respectively. 

First consider the case where $\varphi$ fixes $u,v$.
By Lemma~\ref{lem:naturality2}, for any $z\in \m S^1$, $H_{\varphi(z)} = \a_z (H_z^0)$, where $\a_z \in \PSU(1,1)$ has $\a_z (z) = \varphi(z)$ and $\a_z'(z) = |\varphi'(z)|$. Since $\varphi(u)=u$, $H_{\varphi(u)}=H_u$ and $H_u^0$ are both horocycles at $u\in \m S^1$, and further $H_u$ and $H_u^0$ are at signed distance $-\log |\varphi'(u)|$, which is positive when $H_u$ is contained in the horoball bounded by $H_u^0$. This also follows directly from Lemma~\ref{lem:horo_t}.
The same statement holds for $v$. 
Hence, from the definition of renormalized length, 
$$\mathrm{RL}_\varphi (u,v) = \mathrm{RL}_{\mathrm{Id}} (u,v) - \log |\varphi'(u)\varphi'(v)| = -\log \frac{4 |\varphi'(u)\varphi'(v)|}{|u-v|^2},$$
which proves \eqref{eq:bilocal} when $\varphi$ fixes $u,v$.
  
If $\varphi$ does not fix $u,v$, then we can find a M\"obius transformation $\alpha:\m D\to \m D$ such that $\psi=\alpha\circ \varphi$ fixes $u,v$. Then using $\alpha'(z) \alpha'(w) = \frac{(\alpha(z)-\alpha(w))^2}{(z-w)^2}$,  we get  
  \begin{align*}
     \psi'(u) \psi'(v) = \alpha'(\varphi(u)) \varphi'(u) \,\alpha'(\varphi(v)) \varphi'(v) = \frac{(u-v)^2}{(\varphi(u)-\varphi(v))^2}\varphi'(u) \varphi'(v).
  \end{align*}
  As such, 
  \begin{align}
      \mathrm{RL}_{\varphi}(u,v) = \mathrm{RL}_{\psi}(u,v) = -\log \frac{4|\psi'(u)\psi'(v)|}{|u-v|^2} =  - \log \frac{4|\varphi'(u)\varphi'(v)|}{|\varphi(u)-\varphi(v)|^2}
  \end{align}
  which proves the result.
\end{proof}

\subsection{From bi-local observables to a diffeomorphism} \label{subsec:inverse_pb}

In this section, we explain how to reconstruct the diffeomorphism from its bi-local observables on any ideal triangulation. This is reminiscent of the coordinate system of $\log \Lambda$ length for decorated Teichm\"uller space mentioned in Remark~\ref{rem:log_lambda}. 

A geodesic ideal triangle in $\m D$ is a triangle with vertices in $\m S^1$ and hyperbolic geodesic edges. An \textit{ideal triangulation} (or \textit{ideal tessellation}) $\mc T$ of $\m D$ is a locally-finite collection of geodesic ideal triangles that cover $\m D$ and have non-overlapping interiors. We denote the vertices and edges of $\mc T$ by $(V,E)$. In particular, $V\subset \m S^1$ is dense, and $E$ is a collection of non-crossing geodesics in the disk. See Figure~\ref{fig:farey} for an example. 
\begin{figure}[ht]
    \centering
    \includegraphics[width=.5\textwidth]{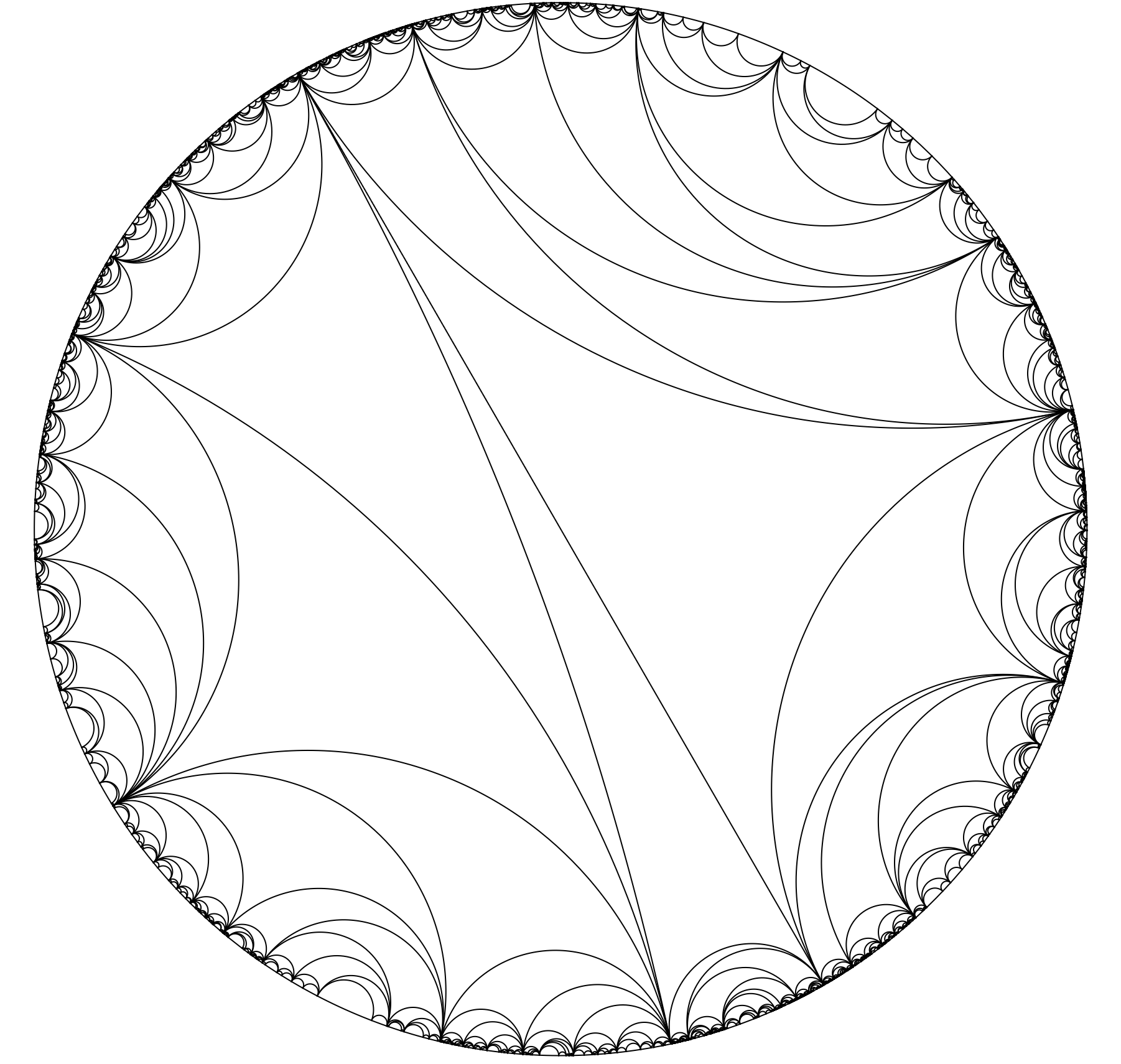}
    \caption{Example of an ideal triangulation, where only finitely many edges are drawn.}
    \label{fig:farey}
\end{figure}

\begin{prop} \label{prop:ideal_triangulation}
    Fix $[\varphi]\in \PSU(1,1)\setminus\Diff^1(\m S^1)$. For any ideal triangulation $\mc T = (V,E)$ of $\m D$, $\{ \mc O(\varphi; u,v)\}_{(u,v)\in E}$ determines $[\varphi]$.
\end{prop}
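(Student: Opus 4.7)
The plan is to normalize $\varphi$ using the $\PSU(1,1)$-action, then recover $\varphi$ on every vertex of $\mc T$ by propagating from a fixed triangle using a cross-ratio identity, and finally extend to all of $\m S^1$ by density and continuity. First, fix a triangle $T_0 \in \mc T$ with vertices $a, b, c$ in cyclic order, together with a reference triple of distinct points $(q_1, q_2, q_3)$ on $\m S^1$ in the same cyclic order. Since $\PSU(1,1)$ acts simply transitively on positively oriented triples of distinct points of $\m S^1$, each class $[\varphi] \in \PSU(1,1) \setminus \Diff^1(\m S^1)$ admits a unique representative $\tilde{\varphi}$ with $\tilde{\varphi}(a) = q_1$, $\tilde{\varphi}(b) = q_2$, $\tilde{\varphi}(c) = q_3$. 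Since bi-local observables are left-$\PSU(1,1)$-invariant, it suffices to show that two normalized representatives $\tilde{\varphi}, \tilde{\psi}$ with identical bi-local observables on every edge of $E$ coincide on $\m S^1$.

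The key identity is that for any four distinct points $p_1, p_2, p_3, p_4 \in \m S^1$ such that the four edges $(p_1,p_3), (p_2,p_4), (p_1,p_4), (p_2,p_3)$ all belong to $E$, a direct computation from the definition of $\mc O$ shows that the derivative factors cancel pairwise, yielding
\[
\frac{\mc O(\varphi; p_1, p_4)^2 \, \mc O(\varphi; p_2, p_3)^2}{\mc O(\varphi; p_1, p_3)^2 \, \mc O(\varphi; p_2, p_4)^2} = \left| \frac{(\varphi(p_1) - \varphi(p_3))(\varphi(p_2) - \varphi(p_4))}{(\varphi(p_1) - \varphi(p_4))(\varphi(p_2) - \varphi(p_3))} \right|^2.
\]
Hence the bi-local observables on these four edges determine $|\CR(\varphi(p_1), \varphi(p_2), \varphi(p_3), \varphi(p_4))|$. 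Since a diffeomorphism of $\m S^1$ preserves cyclic order, the images $\varphi(p_i)$ lie on $\m S^1$ in the same cyclic order as the $p_i$, and this cyclic order pins down the sign of the (real) cross-ratio of four concyclic points. Thus the full \emph{signed} cross-ratio is recovered from the bi-local observables.

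I would then propagate inductively. Suppose $T \in \mc T$ has vertices $p, q, s$ on which $\tilde{\varphi}$ and $\tilde{\psi}$ already agree, and $T' \in \mc T$ shares the edge $(p,q)$ with $T$ and has new third vertex $r$. Applying the identity above with $(p_1, p_2, p_3, p_4) = (p, q, s, r)$, the four required edges $(p,s), (q,s), (p,r), (q,r)$ are exactly the non-shared edges of $T$ and of $T'$, hence all lie in $E$. The signed cross-ratio $\CR(\tilde\varphi(p), \tilde\varphi(q), \tilde\varphi(s), \tilde\varphi(r))$ is therefore a common invariant of $\tilde\varphi$ and $\tilde\psi$. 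Since three distinct points on $\m S^1$ together with a signed cross-ratio uniquely determine the fourth point on $\m S^1$, this forces $\tilde\varphi(r) = \tilde\psi(r)$. Starting from $T_0$ and iterating across shared edges using the connectedness of the dual graph of $\mc T$ (which follows from connectedness of $\m D$), I obtain $\tilde\varphi = \tilde\psi$ on $V$. By density of $V$ in $\m S^1$ and continuity of both maps, $\tilde\varphi = \tilde\psi$ on $\m S^1$, as required.

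The main obstacle is essentially combinatorial bookkeeping: identifying the right symmetric product of four bi-local observables so the derivative factors cancel into a cross-ratio, and checking that adjacency of two triangles in an ideal triangulation automatically supplies the four required edges in $E$. Once these two points are in place, the sign of the cross-ratio is controlled by the orientation-preservation of $\tilde\varphi$, and the standard fact that three boundary points plus a cross-ratio pin down the fourth point completes the inductive step.
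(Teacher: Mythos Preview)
Your argument is correct and takes a genuinely different route from the paper's. The paper works in the decorated picture: it normalizes $\varphi$ on the two endpoints of a root edge and fixes one horocycle, then propagates triangle by triangle while tracking both $\varphi(v)$ and the horocycle $H_{\varphi(v)}$ at each vertex. The propagation step uses the renormalized-length interpretation (Proposition~\ref{prop:bilocal_is_length}): knowing two horocycles and two observables to a new vertex determines first the position $\varphi(w)$ (from the \emph{difference} of the two observables) and then its horocycle. Your approach instead eliminates the derivative data entirely by combining four observables across two adjacent triangles into a signed cross-ratio of the images, and then recovers the fourth image point from three known ones. This is more elementary in that it bypasses the horocycle geometry and the $\log\Lambda$-length formalism, and it uses only the raw algebraic form of $\mc O$. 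On the other hand, the paper's method extracts strictly more along the way (the horocycles, hence $|\varphi'|$ at each vertex), which explains the remark following the proposition about over-determination; your argument does not directly yield this extra information, but that is not needed for the stated proposition.
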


\begin{remark}
   For any $u,v\in \m S^1$, the observable $\mc O(\varphi; u,v)$ is determined just by the equivalence class $[\varphi]\in \PSU(1,1)\setminus \Diff^1(\m S^1)$, as for any $\alpha\in \PSU(1,1)$, $\mc O(\alpha\circ \varphi; u,v) = \mc O(\varphi; u,v)$. 
\end{remark}

\begin{remark}
    Since $V$ is dense in $\m S^1$, the data $\{\varphi(v)\}_{v\in V}$ determines $\varphi$. As we will see in the proof, $\{ \mc O(\varphi; u,v)\}_{(u,v)\in E}$ determines $\varphi(v)$ and $H_{\varphi(v)}$ (and hence $\varphi'(v)$) for all $v\in V$. We require that $\{ \mc O(\varphi; u,v)\}_{(u,v)\in E}$ come from a diffeomorphism in the statement since this over determination indicates that not all collections of numbers $(\mc O_e)_{e\in E}$ are the observables of a diffeomorphism.
\end{remark}

\begin{proof}[Proof of Proposition~\ref{prop:ideal_triangulation}]

Let $T$ denote the set of triangles in $\mc T$. The dual graph of $\mc T$, denoted $\mc T^*=(V^*,E^*)$, has vertices $V^* = T$ and edges $E^*$ in correspondence with $E$ (triangles $t_1,t_2$ are connected by an edge $e^* \in E^*$ if and only if they share an edge $e\in E$). In particular, since $\mc T$ is an ideal triangulation, $\mc T^*$ is an infinite trivalent tree.
We use this tree structure to iteratively recover the values of $\varphi (u)$ and the horocycle decoration $H_{\varphi(u)}$ for all $u \in V$ from $\{\mc O (\varphi; u,v)\}_{(u,v) \in E}$.

For this, fix a root edge $e_0=(u_0,v_0)\in E$ and assume without loss of generality that $\varphi$ fixes $u_0$ and $v_0$, and moreover that $H_{\varphi (u_0)} = H_{u_0}^0$.  Since $\mc O(\varphi;u_0,v_0)$ is the hyperbolic distance from $H_{u_0}$ to $H_{v_0}$ along $e_0$, this information determines $H_{v_0}$.

Now, take any ideal triangle $t=(u,v,w)\in T$. Let $a,b,c$ denote the bi-local observables for $(u,v)$, $(v,w)$ and $(w,u)$ respectively. Given the data $\{\varphi(u),\varphi(v),H_{\varphi(u)},H_{\varphi(v)}\}$, the position of $\varphi(w)$ is then determined by $\varphi (u)$, $\varphi (v)$, the relative distance $b-c$ to $H_{\varphi(u)}$ vs.\ to $H_{\varphi(v)}$, 
and the order (clockwise or counterclockwise) of $u,v,w$, and is independent of the horocycle $H_{\varphi(w)}$. Once $\varphi(w)$ is known, the value of $b$ (or $c$) then determines $H_{\varphi(w)}$.

Therefore knowing $\{\varphi(u),\varphi(v),H_{\varphi(u)},H_{\varphi(v)}\}$ for one edge $e=(u,v)$ of a triangle $t=(u,v,w)\in T$, the bi-local observables then determine $\{\varphi(w),H_{\varphi(w)}\}$. Iterating this for all $t\in T$ along the branches of the tree $\mc T^*$ starting from $e_0$, we see that the observables determine $\varphi(v)$ for all $v\in V$. Since $V\subset\m S^1$ is dense, this determines $\varphi$.
\end{proof}

\section{Schwarzian action and  Loewner energy} \label{sec:Sch_Loewner}

In this section, we prove Theorem~\ref{thm:var_IL_Sch} relating the Loewner energy to the Schwarzian action. Therefore, although some results in this section hold for curves with lower regularity, we restrict our consideration to Jordan curves that are at least $C^{3,\alpha}$ regular for some $\a > 0$ so that their welding homeomorphisms are at least $C^3$ regular.

 \subsection {Variation of Loewner energy}  \label{subsec:var_I_L}

We now recall briefly the variational formula of the Loewner energy proved in \cite{TT06}. 

Let $\g$ be a $C^{3,\alpha}$ curve, $f$ (resp., $g$) be a conformal map $\m D \to \O$ (resp., $\m D^* \to \O^*$), where $\O$ (resp., $\O^*$) is the connected component of $\Chat \smallsetminus \g$ which does not contain $\infty$ (resp., which contains $\infty$). By Kellogg's theorem, $f$ and $g$ extend to $C^{3,\a}$ diffeomorphisms on the closures of $\m D$ and $\m D^*$ respectively.
In particular, the \emph{welding homeomorphism} $\varphi : = g^{-1} \circ f|_{\m S^1}$ is a $C^{3,\alpha}$ diffeomorphism of $\m S^1$.

Let $\mu \in L^2 (\O^*, \rho_{\O^*}) \cap L^\infty (\O^*)$ be a Beltrami differential, where $\rho_{\O^*} (z)$ is the density of the hyperbolic metric on $\O^*$. 
For $\vare \in \m{R}$ small enough such that $\|\vare \mu\|_{\infty} < 1$, let $\omega^\emu: \Chat \to \Chat$ be any quasiconformal mapping solving the Beltrami equation
$$\bar \partial \o^\emu = \emu \,\partial \o^\emu.$$
In particular, $\o^\emu$ is conformal in $\O$. By the measurable Riemann mapping theorem, another solution to the Beltrami equation is of the form $\a \circ \o^\emu$ where $\a \in \PSL (2,\m C)$ is a M\"obius map of $\Chat$.

If $\o^\emu$ is \emph{normalized} such that $\o^\emu$ fixes $0$ and $\infty$ with $(\o^{\emu})' (0) = 1$, then the map $\o^\emu$ depends analytically on $\vare$ in a neighborhood of $0$. The vector field of the infinitesimal variation
$v = v_\mu: = \dd \o^{\emu}/ \dd \vare |_{\vare = 0} = \dot \o$ satisfies  
$\mu = \bar \partial v$, $v (z) = O (z^2)$ as $z \to 0$, and $v(z) = O(z)$ as $z \to \infty$. 

The following result was proved in \cite{TT06} (see also \cite{sung2023quasiconformal}).
\begin{thm}\label{thm:var_I} Let $\g$ be a $C^{3, \a}$ curve for some $\a > 0$ and $\mu \in L^2 (\O^*, \rho_{\O^*}) \cap L^\infty (\O^*)$.
    With the notations above,  let $\g^{\emu} =\omega^\emu(\g)$. 
     We have
      \begin{align} \label{eq:variation_Bloop}
    \frac{\dd}{\dd \vare}\bigg|_{\vare=0} I^L (\g^{\emu})& =   -\frac{4}{\pi} \, \mathrm{Re}  \left ( \int_{\O^*} \mu(z) \mc S [g^{-1}](z) \,\dd^2 z  \right)\\
    & = -\frac{2}{\pi} \, \mathrm{Im}  \left[\int_{\partial \O^*} v(z) \mc S [g^{-1}](z) \,\dd z\right], \label{eq:int_vector}
\end{align}
where $g : \m D^* \to \O^*$ are any conformal maps, $\dd^2 z$ is the Euclidean area measure, $\dd z$ is the contour integral.
\end{thm}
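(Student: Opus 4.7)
The plan is to compute the derivative directly from the formula \eqref{eq:def_IL}. Since $\supp(\mu) \subset \O^*$, the normalized quasiconformal solution $\o^{\vare \mu}$ is conformal on $\O$, so I would choose conformal representatives of the deformed domains by
\begin{align*}
f^\vare := \o^{\vare \mu} \circ f : \m D \to \O^{\vare \mu}, \qquad g^\vare := (\o^{\vare \mu} \circ g) \circ (\chi^\vare)^{-1} : \m D^* \to (\O^*)^{\vare \mu},
\end{align*}
where $\chi^\vare : \m D^* \to \m D^*$ is the normalized qc self-map fixing $\infty$ with Beltrami coefficient $g^*(\vare \mu)$. The infinitesimal deformation $v = \dot \o$ satisfies $\bar\partial v = \mu$ with $v(z) = O(z^2)$ near $0$ and $v(z) = O(z)$ near $\infty$.

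To obtain \eqref{eq:variation_Bloop}, I would insert $f^\vare, g^\vare$ into \eqref{eq:def_IL} and expand to first order in $\vare$. The chain rule for the Schwarzian ($\mc S[\o^{\vare \mu} \circ f] = (\mc S[\o^{\vare \mu}] \circ f)(f')^2 + \mc S[f]$) and the analogous identity for $g^\vare$, combined with Ahlfors' variational formula for $\mc S[\o^{\vare\mu}]$ on its domain of conformality, produce linear-in-$\mu$ contributions from each of the three pieces of \eqref{eq:def_IL}. A change of variables onto the target domains and an application of Stokes' theorem on $\O^*$ (using $\bar\partial v = \mu$ and that $\mc S[g^{-1}]$ is holomorphic on $\O^*$) move the $\bar\partial$ onto $v$ and produce a boundary term on $\g$ that must cancel against the variation of $\log|(f^\vare)'(0)/(g^\vare)'(\infty)|$ and of the reparametrization $\chi^\vare$. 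The surviving contribution is precisely the pairing of $\mu$ with the holomorphic quadratic differential $\mc S[g^{-1}]$ on $\O^*$. Conceptually, $\mc S[g^{-1}]$ is the Bers image of the pair $(\O, \O^*)$ and plays the role of the $L^2$-gradient of $I^L$ for Beltrami deformations supported in $\O^*$, which forces the answer to take this form.

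The passage from \eqref{eq:variation_Bloop} to \eqref{eq:int_vector} is a direct Stokes-theorem computation. Since $g^{-1}$ is holomorphic on $\O^*$, so is $\mc S[g^{-1}]$, and therefore $\bar\partial(v\,\mc S[g^{-1}]) = \mu\, \mc S[g^{-1}]$. Using $\dd^2 z = \tfrac{\ii}{2}\dd z \wedge \dd \bar z$, Stokes' theorem on $\O^*$ yields
\begin{align*}
\int_{\O^*} \mu \, \mc S[g^{-1}] \,\dd^2 z = \frac{1}{2\ii} \int_{\partial \O^*} v \, \mc S[g^{-1}] \,\dd z,
\end{align*}
with the boundary contribution at $\infty \in \O^*$ vanishing because $\mc S[g^{-1}](z) = O(z^{-4})$ for a univalent map at $\infty$ while $v(z) = O(z)$. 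Taking real parts and using $\Re(W/2\ii) = \tfrac{1}{2}\Im(W)$ converts \eqref{eq:variation_Bloop} into \eqref{eq:int_vector}.

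The main obstacle is the intricate bookkeeping in the central step: the variations of $(f^\vare)'(0)$, $(g^\vare)'(\infty)$, and of the reparametrization $\chi^\vare$ each produce nontrivial boundary contributions that must combine precisely with the variations of the two area integrals to yield the clean quadratic-differential pairing. This cancellation is the technical heart of the proof in \cite{TT06}; alternatively, one can invoke their first-variation formula for the universal Liouville action directly, together with the identification of $I^L$ with that action given in \cite{W2}.
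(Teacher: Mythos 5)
Your proposal ultimately rests on the same ingredients as the paper: the first variational formula \eqref{eq:variation_Bloop} is taken from \cite{TT06} (the paper does exactly this, together with the identification of $I^L$ with the universal Liouville action from \cite{W2}), and your passage from \eqref{eq:variation_Bloop} to \eqref{eq:int_vector} is the identical Stokes-theorem computation using $\bar\partial v = \mu$ and the holomorphy of $\mc S[g^{-1}]$ that appears in the paper's Remark~\ref{rem:mu_v}. Your sketched direct expansion of \eqref{eq:def_IL} is not actually carried out (the crucial cancellation bookkeeping is only asserted), but since you explicitly fall back on invoking \cite{TT06}, your treatment matches the paper's, and your remark that $\mc S[g^{-1}](z)=O(z^{-4})$ while $v(z)=O(z)$ justifies applying Stokes on the unbounded domain $\O^*$ is a correct point the paper only notes implicitly.
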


We note that $\mc S[g^{-1}] (z) = O(z^{-4})$ as $z \to \infty$ and the right-hand side of \eqref{eq:variation_Bloop} is convergent.

\begin{remark}\label{rem:mu_v}
The second identity \eqref{eq:int_vector} follows from 
$\mu = \bar \partial  v$ and  Stokes' formula:
\begin{align*}
     \int_{\O} \mu(z) \mc S [f^{-1}](z) \,\dd^2 z   & = \int_{\O} \mu(z) \mc S [f^{-1}](z) \,\dd^2 z \\
     & = \frac{1}{2 \ii} \int_{\O } \bar \partial \left(v(z) \mc S [f^{-1}](z)\right) \,\dd \bar z \wedge \dd z \\
     & = \frac{1}{2 \ii} \int_{\O} \dd \left( v(z) \mc S [f^{-1}](z) \, \dd z \right) \\
     & = \frac{1}{2 \ii} \int_{\partial \O}  v(z) \mc S [f^{-1}](z) \, \dd z.
\end{align*}
\end{remark}

\subsection{Derivation of the Schwarzian action}

Using the same notation as in the previous section,
we consider the family of \emph{equipotentials} $\Big(\g^\vare := f((1-\vare) \m S^1)\Big)_{0 \le \vare <1}$ in $\O$ bounded by  a $C^{3,\a}$ Jordan curve $\g$ for some $\a > 0$.

\begin{thm}\label{thm:var_IL_Sch}
We have 
    $$\frac{\dd I^{L} (\g^{\vare})}{\dd \vare}\Big|_{\vare = 0} = - \frac{2}{\pi} \, \Isch (\varphi), $$
    where $\varphi = g^{-1} \circ f |_{\m S^1}$ is the welding homeomorphism of $\g$.
\end{thm}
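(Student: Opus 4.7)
My plan is to realize the equipotential deformation as a one-parameter family of quasiconformal maps and invoke Theorem~\ref{thm:var_I}. I define $\omega^\vare \colon \Chat \to \Chat$ to be conformal on $\O$, with $\omega^\vare(z) = f((1-\vare) f^{-1}(z))$ there, and then extend quasiconformally to $\O^*$ fixing $\infty$ so that $\omega^\vare(\g) = \g^\vare$. The Beltrami coefficient of $\omega^\vare$ is then supported in $\O^*$, as required by Theorem~\ref{thm:var_I}, while the induced vector field $v := \dot{\omega}|_{\vare=0}$ has boundary values $v(f(w)) = -w\, f'(w)$ for $w \in \m S^1$, found by differentiating $\omega^\vare(f(w)) = f((1-\vare)w)$ at $\vare = 0$. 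Since $I^L$ is $\PSL(2,\m C)$-invariant, the rate of change is insensitive to further normalizations of $\omega^\vare$ (such as fixing $0$ and $\infty$ with derivative $1$ at $0$).

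By \eqref{eq:int_vector}, the task reduces to evaluating $\Im \int_{\partial \O^*} v(z)\, \mc S[g^{-1}](z)\, \dd z$. The plan is to parametrize $\g$ by $w = \ee^{\ii \t} \mapsto f(\ee^{\ii \t})$, so that $\dd z = \ii w f'(w)\,\dd \t$, to orient $\partial \O^* = -\g$, and to apply the chain rule for the Schwarzian to $\varphi = g^{-1} \circ f$ on $\m S^1$, giving
\[
    \mc S[g^{-1}](f(w)) = \frac{\mc S[\varphi](w) - \mc S[f](w)}{f'(w)^2}, \qquad w \in \m S^1.
\]
Substituting, the contour integrand collapses to $v(z)\,\mc S[g^{-1}](z)\,\dd z = -\ii\, \ee^{2\ii \t} \bigl(\mc S[\varphi](\ee^{\ii \t}) - \mc S[f](\ee^{\ii \t})\bigr)\,\dd \t$.

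The key closing observation is that $\mc S[f]$ is holomorphic on $\m D$, so its Taylor expansion produces a Fourier series on $\m S^1$ with only non-negative modes; multiplication by $\ee^{2\ii\t}$ shifts all modes to be strictly positive, forcing $\int_0^{2\pi}\ee^{2\ii\t}\mc S[f](\ee^{\ii\t})\,\dd \t = 0$. Combined with Lemma~\ref{lem:sch_real} (whereby $\Isch(\varphi) \in \m R$) and the orientation reversal $\partial \O^* = -\g$, this delivers the claimed identity $\frac{\dd I^L(\g^\vare)}{\dd \vare}\big|_{\vare = 0} = -\frac{2}{\pi}\,\Isch(\varphi)$.

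The main technical subtlety I anticipate is justifying the application of \eqref{eq:int_vector} to our specific $v$, which is not normalized as in the statement of Theorem~\ref{thm:var_I}. Any two admissible vector fields with the same $\bar\partial$ on $\O^*$ differ by a holomorphic function $h_0$ on $\O^*$ of at-most-linear growth at infinity, so the discrepancy in the contour integral is $\int_\g h_0(z)\, \mc S[g^{-1}](z)\,\dd z$; deforming the contour to a large circle $|z| = R$ and exploiting the decay $\mc S[g^{-1}](z) = O(z^{-4})$ at infinity (against polynomial growth of $h_0$) shows this discrepancy is $O(R^{-2}) \to 0$. Hence the boundary integral depends only on the boundary values of $v$ on $\g$, making the computation above unambiguous.
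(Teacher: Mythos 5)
Your proposal is correct and follows essentially the same route as the paper's proof: realize the equipotential flow as a quasiconformal variation, apply the variational formula \eqref{eq:int_vector}, use the Schwarzian chain rule for $\varphi = g^{-1}\circ f$ on $\m S^1$, and discard the $\mc S[f]$ contribution by holomorphy (your Fourier-mode observation is the paper's Cauchy-theorem step), finishing with the reality of $\Isch(\varphi)$ from Lemma~\ref{lem:sch_real}. The only difference is bookkeeping: the paper pre-normalizes the family via $f_\vare(z)=f((1-\vare)z)/(1-\vare)$ (using M\"obius invariance of $I^L$) and extends the vector field by inversion so that $v=v_\mu$ directly, whereas you correct the un-normalized field a posteriori using the at-most-linear holomorphic discrepancy against the $O(z^{-4})$ decay of $\mc S[g^{-1}]$ --- the same mechanism by which the paper kills its extra $+z$ term.
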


\begin{proof}
For $\vare \in [0,1)$, let 
$f_{\vare} (z) = f((1-\vare) z)/(1-\vare)$ such that $\g^\vare = (1-\vare) f_\vare (S^1)$. The vector field $v := \dd f_{\vare} / \dd \vare|_{\vare = 0} \circ f^{-1}$ is holomorphic  on $\O$. Since $f$ is $C^{3,\a}$ on  $\overline{\m D}$, $v$ is a holomorphic vector field on $\O$ and is $C^{2,\a}$ on $\overline{\O}$.

Since $f_\vare (0) = 0$, $f_\vare' (0) = 1$, we have $v (z) = O(z^2)$ as $z \to 0$. We now extend $v$ into a $C^2$ vector field such that $v (z) = O(z)$ as $z \to \infty$ and that $\mu : = \bar \partial v$ satisfies $\mu \in  L^2 (\O^*, \rho_{\O^*}) \cap L^\infty (\O^*)$.  In other words, $v = v_\mu$ as described in Section~\ref{subsec:var_I_L}.

Such an extension exists since we may extend the vector field $\tilde v (z): = \iota^* v (z) := v (1/z) z^2$ to the bounded domain $\iota (\O^*)$ as a $C^2$ vector field, which we also denote by $\tilde v$, such that $\tilde v(0) = 0$. We define $v$ in $\O^*$ as $\iota^* \tilde v$, then it satisfies the conditions above.
In fact, $\tilde \mu = \bar \partial \tilde v = \iota^* \mu \in  L^2 (\iota (\O^*), \rho_{\iota(\O^*)}) \cap L^\infty (\iota(\O^*))$, where 
$$\mu = \bar \partial v \qquad \text{and} \qquad \iota^* (\mu) (z) = \mu \left(\frac1z\right) \frac{z^2}{\bar z^2}.$$
As $\mu \mapsto \iota^* \mu$ preserves the $L^2$ and $L^\infty$ norms of Beltrami differentials, we also obtain this way that $\mu \in  L^2 (\O^*, \rho_{\O^*}) \cap L^\infty (\O^*)$.

Now, we apply Theorem~\ref{thm:var_I} to the vector field $v$.
Using 
\begin{align*}
    v(z) &= -w f'(w) + z \quad  \text { where }z = f(w) \text{ and }w \in \m S^1,\\
    \dd z & = f'(w) \dd w,\\
    \dd w & = \ii w \,\dd \t  \quad \text{ where } w = \ee^{\ii \t}\\
    \mc S[g^{-1}] (z) & = \mc S[\varphi] \circ f^{-1} (z)(f^{-1})'(z)^2 + \mc S[f^{-1}] (z), 
\end{align*}
we obtain
   \begin{align*} 
    \frac{\dd}{\dd \vare}\bigg|_{\vare=0} I^L (\g^{\vare}) & = \frac{\dd}{\dd \vare}\bigg|_{\vare=0} I^L \left(f_\vare (\m S^1) \right) \\
    &= -\frac{2}{\pi} \, \mathrm{Im}  \left[\int_{\partial \O^*} -w f'(w) \mc S [g^{-1}](z) \,\dd z\right] \\
    & = - \frac{2}{\pi} \, \mathrm{Im}  \left[\int_{\partial \O} w f'(w) \mc S [g^{-1}](z) \,\dd z\right] \\
    & = - \frac{2}{\pi} \, \mathrm{Im}  \left[\int_{\partial \O} w f'(w) \mc S[\varphi] \circ f^{-1} (z)(f^{-1})'(z)^2 \,\dd z\right]\\
    & = - \frac{2}{\pi} \, \mathrm{Re}  \left[\int_{0}^{2\pi} \ee^{2\ii \t}  \mc S[\varphi] (\ee^{\ii \t}) \,\dd \t\right].
\end{align*}
We used the M\"obius invariance of $I^L$ in the first equality above, Theorem~\ref{thm:var_I} and that $z \mc S[g^{-1}] (z)$ is holomorphic in $\O^*$ in the second, changed the orientation of $\g = \partial \O$ in the third, used the fact that $z\mapsto w f'(w) \mc S[f^{-1}] (z)$ is holomorphic in $\O$, and change of variables in the last equality. 

Since we knew that $\Isch (\varphi)$ is real from Lemma~\ref{lem:sch_real}, the ``real part'' above is redundant and the proof of Theorem~\ref{thm:var_IL_Sch} is completed.
\end{proof}

\section{Conformal distortion}

Here, we show that the Schwarzian action is also related to the change of area bounded by asymptotical circles under a conformal distortion. 
Let $\rho_{\m D} = 4(1-|z|^2)^{-2} |\dd z|^2$ denote the hyperbolic metric on the disk and let $\m S_r := \{ re^{\ii \theta} :\theta\in [0,2\pi]\}$ denote the standard foliation of the disk by circles. Let $K\subset \m D$ be compact, and suppose that $\varphi:\m D \setminus K \to \m D$ is locally conformal such that the analytic function $\varphi\mid_{\m S^1}$ sends the circle to itself (note however that we do not require $\varphi\mid_{\m S^1}$ to be injective).  

    Let $\rho = \varphi^\ast(\rho_{\m D})$ be the pullback of the hyperbolic metric by $\varphi$. 

\begin{thm}\label{thm:conformal_distortion}
    Let $\dd s, \dd s_{\rho}$ denote the length measure on $\m S_r$ for $\rho_{\m D},\rho$ respectively, and similarly let $k_{\rho_{\m D}},k_{\rho}$ denote the respective geodesic curvatures. Then
    \begin{equation}\label{eq:conformal_distortion}
        \lim_{r\to 1^-} \bigg[ \int_{\m S_r} k_{\rho}(s) \,\dd s_{\rho} - \int_{\m S_r} k_{\rho_{\m D}}(s)\, \dd s \bigg]= -\frac{2}{3} \int_0^{2\pi} \mc{S}[\varphi](\mathrm{e}^{\ii \theta}) \mathrm{e}^{2 \mathfrak{i} \theta} \, \dd \theta = -\frac{2}{3} \Isch(\varphi).
    \end{equation}
\end{thm}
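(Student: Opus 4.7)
My plan is a direct calculation that parallels the Epstein curve computation earlier in the paper: rewrite both metrics as conformal deformations of the Euclidean metric, reduce the difference of total geodesic curvatures to a boundary integral of a normal derivative, and then expand this normal derivative in a Taylor series in $\epsilon := 1-r$ around $\mathbb S^1$.

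\emph{Step 1 (Reduction to a boundary integral).} Write $\rho_{\m D} = e^{2\lambda}|\dd z|^2$ with $\lambda(z) = \log 2 - \log(1-|z|^2)$ and $\rho = e^{2\mu}|\dd z|^2$ with $\mu(z) = \lambda(\varphi(z)) + \log|\varphi'(z)|$. Using the standard conformal change formula
$k_g\,\dd s_g = (k_E - \partial_{\nu_0}\sigma)\,\dd s_E$
for $g = e^{2\sigma}g_E$, valid in 2D (with $\nu_0$ the inward Euclidean unit normal to the curve), applied on $\m S_r$ where $k_E = 1/r$, $\dd s_E = r\,\dd\theta$, and $\partial_{\nu_0} = -\partial_r$, the Euclidean contributions cancel in the difference and one obtains
\begin{equation*}
 \int_{\m S_r} k_{\rho}\,\dd s_{\rho} - \int_{\m S_r} k_{\rho_{\m D}}\,\dd s = \int_0^{2\pi} r\,\partial_r\bigl(\mu(re^{\ii\theta}) - \lambda(re^{\ii\theta})\bigr)\,\dd\theta = \int_0^{2\pi} r\,\partial_r\log\psi\,\dd\theta,
\end{equation*}
where $\psi(z) := |\varphi'(z)|(1-|z|^2)/(1-|\varphi(z)|^2)$ (a ratio of Schwarz--Pick type).

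\emph{Step 2 (Taylor expansion near $\m S^1$).} Fix $\zeta = e^{\ii\theta}$, set $\epsilon = 1-r$ so that $z = (1-\epsilon)\zeta$, and expand $\varphi(z) = \varphi(\zeta)\bigl(1 - \epsilon\zeta A + \tfrac{\epsilon^2\zeta^2}{2}B - \tfrac{\epsilon^3\zeta^3}{6}C + O(\epsilon^4)\bigr)$ with $A = \varphi'/\varphi$, $B = \varphi''/\varphi$, $C = \varphi'''/\varphi$ at $\zeta$. Since $|\varphi(\zeta)|^2 = 1$, successive differentiation in $\theta$ produces identities on $\m S^1$, most notably $\zeta\varphi'(\zeta)\overline{\varphi(\zeta)} = |\varphi'(\zeta)|$. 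Using these, one computes
\begin{equation*}
1-|\varphi(z)|^2 = 2\epsilon|\varphi'| - \epsilon^2\bigl(|\varphi'|^2 + \Re(\zeta^2\varphi''\overline{\varphi})\bigr) + \epsilon^3\bigl(\tfrac{1}{3}\Re(\zeta^3\varphi'''\overline{\varphi}) + |\varphi'|\Re(\zeta^2\varphi''\overline{\varphi})\bigr) + O(\epsilon^4),
\end{equation*}
and an analogous expansion for $|\varphi'(z)|$. Combining these with $1-r^2 = 2\epsilon(1-\epsilon/2)$ shows $\log\psi$ vanishes on $\m S^1$ and has a Taylor expansion
$\log\psi(re^{\ii\theta}) = \epsilon\,\gamma_1(\theta) + \epsilon^2\,\gamma_2(\theta) + O(\epsilon^3)$, whose coefficients are polynomial expressions in $|\varphi'|,\ \zeta\varphi''/\varphi',\ \zeta^2\varphi'''/\varphi'$ and their conjugates.

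\emph{Step 3 (Identifying the Schwarzian).} Substituting into the boundary integral,
\begin{equation*}
\int_0^{2\pi} r\,\partial_r\log\psi\,\dd\theta = -\int_0^{2\pi}(1-\epsilon)\bigl(\gamma_1(\theta) + 2\epsilon\gamma_2(\theta) + O(\epsilon^2)\bigr)\,\dd\theta.
\end{equation*}
The plan is then to use the $\m S^1$-identities, obtained by differentiating $|\varphi(\zeta)|^2 = 1$ in $\theta$ up to third order, to reorganize $\gamma_1(\theta)$. These identities express $\Re(\zeta^2\varphi''\overline{\varphi})$ and $\Re(\zeta^3\varphi'''\overline{\varphi})$ in terms of $|\varphi'|$, $\zeta\varphi''/\varphi'$, and $\zeta^2\varphi'''/\varphi'$, and after an integration by parts in $\theta$ (to absorb the terms containing $\partial_\theta\log|\varphi'| = -\Im(\zeta\varphi''/\varphi')$) the integrand collapses into $\tfrac{2}{3}\,\Re\bigl(e^{2\ii\theta}\mc S[\varphi](e^{\ii\theta})\bigr)$, where the Schwarzian $\mc S[\varphi] = (\varphi''/\varphi')' - \tfrac12(\varphi''/\varphi')^2$ emerges from the combination of a third-derivative term and a squared second-derivative term. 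Since $\Isch(\varphi)$ is real by Lemma~\ref{lem:sch_real}, this yields the desired limit $-\tfrac{2}{3}\Isch(\varphi)$, and the higher-order terms contribute $O(\epsilon)$ and vanish.

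\emph{Main obstacle.} The serious work is algebraic: the raw Taylor expansion of $\gamma_1(\theta)$ mixes many terms involving $\varphi,\varphi',\varphi'',\varphi'''$ and their conjugates, and only after carefully applying the hierarchy of boundary identities (and one integration by parts in $\theta$) does this expression collapse to a scalar multiple of $\Re(e^{2\ii\theta}\mc S[\varphi])$. Checking this collapse is the heart of the proof; by contrast, steps 1 and 2 are essentially bookkeeping once one recognizes that the Schwarz--Pick factor $\psi$ is the right intermediate quantity.
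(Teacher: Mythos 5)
Your Step 1 is a correct identity, but it is \emph{not} the reduction the paper's proof uses, and the discrepancy is fatal to your Step 3. Writing both metrics over the Euclidean base and cancelling the Euclidean contributions does give
\begin{equation*}
\int_{\m S_r} k_{\rho}\,\dd s_{\rho} - \int_{\m S_r} k_{\rho_{\m D}}\,\dd s \;=\; \int_0^{2\pi} r\,\partial_r\log\psi\;\dd\theta ,
\end{equation*}
with $\log\psi$ equal to the paper's conformal factor $\sigma$. However, the first Taylor coefficient in your expansion $\log\psi(r\ee^{\ii\theta})=\epsilon\gamma_1+\epsilon^2\gamma_2+O(\epsilon^3)$ vanishes \emph{identically}: $\gamma_1=-\partial_r\sigma|_{r=1}$, and the boundary identities $z\varphi'/\varphi=|\varphi'|$ and $\Re(z\varphi''/\varphi')=|\varphi'|-1$ force the order-one terms of $\partial_r\sigma$ to cancel pointwise on $\m S^1$ (this is exactly the cancellation the paper records when it combines $\tfrac{1}{1+r}$ with $\tfrac12\Re(\varphi''z/\varphi')-\tfrac12\varphi' z/\varphi$). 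Hence $\int_0^{2\pi} r\,\partial_r\log\psi\,\dd\theta=O(1-r)$, and the Schwarzian only appears in $\gamma_2$, i.e.\ in the coefficient of $(1-r)$. Your Step 3, which needs $\int_0^{2\pi}\gamma_1\,\dd\theta=\tfrac23\Isch(\varphi)$, therefore cannot go through: carried out honestly, your plan proves that your Step-1 quantity tends to $0$.

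The paper's proof differs precisely at the point where your Step 1 diverges from it: it pairs the coordinate radial derivative $\partial_r\sigma$ with the \emph{hyperbolic} length element $\dd s=\tfrac{2r}{1-r^2}\dd\theta$, so its integrand carries the divergent prefactor $\tfrac{2r}{1-r^2}$, which is exactly what extracts the order-$(1-r)$ coefficient $\int\gamma_2$ and produces $-\tfrac23\Isch(\varphi)$. The two reductions differ by the factor $\tfrac{2}{1-r^2}$, which comes down to whether $\partial_{n_0}$ in the formula $k_\rho=\ee^{-\sigma}(k_{\rho_{\m D}}+\partial_{n_0}\sigma)$ is the $\rho_{\m D}$-unit normal (then $\partial_{n_0}=\tfrac{1-r^2}{2}\partial_r$ and one recovers your $r\,\partial_r\sigma\,\dd\theta$) or the coordinate field $\partial_r$ as the paper takes it. You should resolve this before doing any of the Taylor algebra; a clean test case is $\varphi(z)=z^2$, which is allowed by the hypotheses: there $\sigma=\log\tfrac{2r}{1+r^2}$, the left-hand side of \eqref{eq:conformal_distortion} evaluates in closed form to $\tfrac{2\pi(1-r^2)}{1+r^2}\to 0$, while $-\tfrac23\Isch(z\mapsto z^2)=2\pi$, and equality of limits is restored only after multiplying the left-hand side by $\tfrac{2}{1-r^2}$. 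So to obtain the Schwarzian you must either insert that normalizing factor (thereby reproducing the paper's computation) or push your expansion one order further and restate the limit with a $\tfrac{1}{1-r}$ rescaling.
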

By Remark~\ref{rem:phi_complex_derivative}, we may interpret the derivatives of $\varphi$ in the action as the complex derivative with respect to $z = \ee^{\ii \theta}$. Note that Theorem~\ref{thm:conformal_distortion} applies even when $\varphi$ is not a circle homeomorphism, e.g.\ it applies to $\varphi(z) = z^n$ for any $n\geq 1$. However, when $\varphi: \m S^1\to \m S^1$ is a homeomorphism, we have the following corollary.
\begin{cor}\label{cor:conformal_distortion_area}
    Suppose that $\varphi: \m D \to \m D$ is quasiconformal with Beltrami coefficient supported in the compact set $K\subset \m D$. Then 
    \begin{align}
        \lim_{r\to 1^-} \bigg[A(\varphi(\m S_r)) - A(\m S_r)\bigg] = -\frac{2}{3}\Isch(\varphi),
    \end{align}
    where $A(\gamma)$ denotes the hyperbolic area of the region enclosed by the curve $\gamma$.
\end{cor}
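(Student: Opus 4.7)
The plan is to reduce the corollary to Theorem~\ref{thm:conformal_distortion} via the Gauss--Bonnet theorem. Since the Beltrami coefficient of $\varphi$ is supported in the compact set $K\subset \m D$, the restriction $\varphi|_{\m D\setminus K}$ is locally conformal; moreover, as $\varphi$ is a self-homeomorphism of $\m D$, it sends $\m S^1$ to itself. Hence the hypotheses of Theorem~\ref{thm:conformal_distortion} are met, so
\begin{equation*}
\lim_{r\to 1^-}\Bigl[\int_{\m S_r} k_\rho\, \dd s_\rho - \int_{\m S_r} k_{\rho_{\m D}}\, \dd s\Bigr] = -\frac{2}{3}\Isch(\varphi),
\end{equation*}
and the task reduces to identifying this difference of curvature integrals with $A(\varphi(\m S_r)) - A(\m S_r)$ in the limit.

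For the Gauss--Bonnet step, fix $r$ close enough to $1$ that $\m S_r \subset \m D \setminus K$. Then $\varphi$ is a homeomorphism from the Euclidean disk $D_r := \{|z|<r\}$ onto its image $\varphi(D_r)$, a Jordan domain in $\m D$; moreover $\varphi$ is conformal in a neighborhood of $\m S_r$, so the restriction $\varphi:(\m S_r, \rho|_{\m S_r})\to (\varphi(\m S_r), \rho_{\m D}|_{\varphi(\m S_r)})$ is an isometry and preserves arc-length and geodesic curvature. Applying Gauss--Bonnet for the smooth hyperbolic metric $\rho_{\m D}$ (of Gaussian curvature $-1$) to the Jordan domain $\varphi(D_r)$ and to $D_r$ yields
\begin{align*}
A(\varphi(\m S_r)) &= \int_{\varphi(\m S_r)} k_{\rho_{\m D}}\, \dd s_{\rho_{\m D}} - 2\pi = \int_{\m S_r} k_\rho\, \dd s_\rho - 2\pi, \\
A(\m S_r) &= \int_{\m S_r} k_{\rho_{\m D}}\, \dd s - 2\pi,
\end{align*}
where the first equality on the top line uses the isometry above. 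Subtracting eliminates the $2\pi$, giving
\begin{equation*}
A(\varphi(\m S_r)) - A(\m S_r) = \int_{\m S_r} k_\rho\, \dd s_\rho - \int_{\m S_r} k_{\rho_{\m D}}\, \dd s,
\end{equation*}
and passing to $r \to 1^-$ concludes.

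There is no substantive obstacle: the analytic content is already packaged in Theorem~\ref{thm:conformal_distortion}, and the reduction above is a routine application of Gauss--Bonnet. The only point needing a little care is that $\rho = \varphi^\ast \rho_{\m D}$ may fail to be smooth across $K$, since $\varphi$ is only quasiconformal there. This is sidestepped by applying Gauss--Bonnet on the image side $(\varphi(D_r), \rho_{\m D})$, where the metric is globally smooth; the quasiconformal map $\varphi$ enters only through the area identity $A(\varphi(D_r))$ and the isometry on $\m S_r$, where $\varphi$ is already conformal.
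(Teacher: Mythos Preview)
Your proposal is correct and follows essentially the same approach as the paper's own proof: apply Gauss--Bonnet to $D_r$ and to $\varphi(D_r)$ with respect to $\rho_{\m D}$, use that pulling back along $\varphi$ identifies $\int_{\varphi(\m S_r)} k_{\rho_{\m D}}\,\dd s_{\rho_{\m D}}$ with $\int_{\m S_r} k_\rho\,\dd s_\rho$, subtract, and invoke Theorem~\ref{thm:conformal_distortion}. Your added remark about working on the image side to avoid the non-smoothness of $\rho$ on $K$ is a nice clarification that the paper leaves implicit.
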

\begin{proof}
    Let $\rho=\varphi^*(\rho_{\m D})$. By the Gauss--Bonnet theorem, 
    \begin{align*} 
    2\pi &= - A(\m S_r) + \int_{\m S_r} k_{\rho_{\m D}}(s) \, \dd s\\ 
        2\pi &= - A(\varphi(\m S_r)) + \int_{\varphi(\m S_r)} k_{\rho_{\m D}}(s) \,\dd s =  - A(\varphi(\m S_r)) + \int_{\m S_r} k_{\rho}(s) \,\dd s_\rho 
    \end{align*}
Rearranging and applying Theorem~\ref{thm:conformal_distortion} gives the result.
\end{proof}

Combined with Theorem~\ref{thm:var_IL_Sch}, we also obtain another corollary. 

\begin{cor}
    Suppose that $\varphi: \m S^1\to \m S^1$ is an analytic homeomorphism that admits a quasiconformal extension $\m D\to \m D$ with Beltrami coefficient compactly supported in $\m D$. Let $\gamma$ be the solution of the welding problem for $\varphi$, and let $\gamma^{\varepsilon}$ be the equipotential foliation for $\gamma$ for some conformal map $\m D\to \Omega$, $\partial \Omega=\gamma$. Then 
    \begin{align}
        \frac{\dd}{\dd \varepsilon}\bigg\lvert_{\varepsilon=0} I^L(\gamma^{\varepsilon}) = \frac{3}{\pi} \lim_{\varepsilon\to 0} \bigg[\mathrm{Area}(\varphi(\m S_{1-\varepsilon})) - \mathrm{Area}(\m S_{1-\varepsilon}) \bigg].
    \end{align}
\end{cor}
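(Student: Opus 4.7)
The statement is a direct consequence of two results already stated in the excerpt: Theorem~\ref{thm:var_IL_Sch}, which identifies the equipotential derivative of $I^L$ with the Schwarzian action of the welding homeomorphism, and Corollary~\ref{cor:conformal_distortion_area}, which identifies the asymptotic hyperbolic area difference along the conformal foliation with $-\tfrac{2}{3}\Isch(\varphi)$. The plan is simply to verify that both hypotheses are met in the present setting and then match the constants $-\tfrac{2}{\pi}$ and $\tfrac{3}{\pi}\cdot(-\tfrac{2}{3})=-\tfrac{2}{\pi}$.

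First I would check regularity. The circle homeomorphism $\varphi$ is assumed analytic, so by the standard Schwarz reflection argument applied to the two welding uniformizations $f:\m D\to\Omega$ and $g:\m D^*\to\Omega^*$, the welding curve $\gamma=\partial\Omega$ is real-analytic; in particular it is $C^{3,\alpha}$ for every $\alpha\in(0,1)$. Hence Theorem~\ref{thm:var_IL_Sch} applies and, with $\varphi_{\gamma}:=g^{-1}\circ f|_{\m S^1}=\varphi$,
\begin{equation*}
\frac{\dd I^{L}(\gamma^{\varepsilon})}{\dd\varepsilon}\bigg|_{\varepsilon=0}=-\frac{2}{\pi}\,\Isch(\varphi_{\gamma})=-\frac{2}{\pi}\,\Isch(\varphi).
\end{equation*}

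Next I would invoke Corollary~\ref{cor:conformal_distortion_area} for the quasiconformal extension $\varphi:\m D\to\m D$, whose Beltrami coefficient is by assumption supported in a compact set $K\subset\m D$, so $\varphi$ is locally conformal on $\m D\setminus K$ and, being a homeomorphism of $\m S^1$ onto itself, satisfies the hypotheses of Theorem~\ref{thm:conformal_distortion}. Writing $r=1-\varepsilon$, the corollary gives
\begin{equation*}
\lim_{\varepsilon\to 0}\Bigl[\mathrm{Area}(\varphi(\m S_{1-\varepsilon}))-\mathrm{Area}(\m S_{1-\varepsilon})\Bigr]=-\frac{2}{3}\,\Isch(\varphi).
\end{equation*}

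Finally, multiplying this second identity by $\tfrac{3}{\pi}$ yields $-\tfrac{2}{\pi}\Isch(\varphi)$, which matches the first identity exactly, proving the corollary. No estimation is required; the only subtlety is the regularity check in the first step, and that is immediate from analyticity of $\varphi$. The reason this works so cleanly is precisely that the two different ``foliations'' of a neighborhood of $\m S^1$ (the equipotential foliation in $\Omega$, pulled back through $f$, and the conformal image of $\m S_r$ under the qc extension of $\varphi$) both see the same leading-order Schwarzian, despite being geometrically distinct as noted after Corollary~\ref{cor:fol}.
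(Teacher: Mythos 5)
Your proposal is correct and is exactly the argument the paper intends (the paper states this corollary without written proof, as an immediate combination of Theorem~\ref{thm:var_IL_Sch} and Corollary~\ref{cor:conformal_distortion_area}, with the constants matching via $\tfrac{3}{\pi}\cdot(-\tfrac{2}{3})=-\tfrac{2}{\pi}$). Your added verification that analyticity of $\varphi$ makes the welding curve $\gamma$ real-analytic, hence $C^{3,\alpha}$ so that Theorem~\ref{thm:var_IL_Sch} applies, is a worthwhile detail the paper leaves implicit.
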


\begin{proof}[Proof of Theorem~\ref{thm:conformal_distortion}]

Since $\varphi$ is conformal on $\m D\setminus K$, in this region the perturbed metric $\rho$ differs from $\rho_{\m D}$ by a Weyl scaling, i.e., there is $\sigma$ such that $\rho = \ee^{2\sigma} \rho_{\m D}$. Since $\rho = \varphi^*(\rho_{\m D})$, $\sigma$ is given explicitly by
\begin{align}
    \sigma(z) = \log \bigg[ |\varphi'(z)| \frac{1-|z|^2}{1-|\varphi(z)|^2}\bigg].
\end{align}
Further we have that $\dd s_{\rho} = \ee^{\sigma} \dd s$ and that the geodesic curvatures $k_{\rho}, k_{\rho_0}$ of a curve $\gamma$ are related by: 
\begin{align}\label{eq:geocurve}
    k_{\rho} = \ee^{-\sigma} (k_{\rho_{\m D}} + \partial_{n_0}\sigma),
\end{align}
where $\partial_{n_0}$ denotes the outer normal directional derivative to the curve $\gamma$. When $\gamma = \m S_r$, $\partial_{n_0} = \partial_r$ is the radial derivative. Hence,
\begin{align}
    \lim_{r\to 1^-} \bigg( \int_{\m S_r} k_{\rho}(s) \,\dd s_{\rho} - \int_{\m S_r} k_{\rho_{\m D}}(s)\, \dd s \bigg) = \lim_{r\to 1^-} \int_{\m S_r} \partial_r \sigma \, \dd s.
\end{align}
For $z\in \m S^1$, let $z_r = r z$ denote a point on $\m S_r$. Noting that $\partial_r = r^{-1} (z \partial_z + \overline{z} \partial_{\overline{z}})$, we find
\begin{align}
    \nonumber \partial_r\sigma(z_r) &= \partial_r \log|\varphi'(z_r)| + \partial_r \log (1-r^2) - \partial_r \log (1-|\varphi(z_r)|^2) \\
    &= \Re \bigg(\frac{\varphi''(z_r)}{\varphi'(z_r)}\frac{z_r}{r}\bigg) - \frac{2r}{1-r^2} + \frac{2 \Re(z_r \varphi'(z_r) \overline{\varphi(z_r)})}{r (1-|\varphi(z_r)|^2)}.
\end{align}
A simple calculation shows that for $\m S_r$, $\dd s = 2r (1-r^2)^{-1} \dd \theta$, for $\theta \in [0,2\pi]$.  With $z = \ee^{\ii \theta}, z_r = r z$, the left hand side of \eqref{eq:conformal_distortion} becomes 
\begin{align}\label{eq:integrand}
    \lim_{r\to 1^-}\frac{2r}{1-r^2} \int_0^{2\pi} \Re \bigg(\frac{\varphi''(z_r)}{\varphi'(z_r)}{z}\bigg) - \frac{2r}{1-r^2} + \frac{2 \Re(z \varphi'(z_r) \overline{\varphi(z_r)})}{(1-|\varphi(z_r)|^2)}\, \dd \theta.
\end{align}
For $\theta$ fixed, we expand the integrand around $z$. Note that $z_r-z = -(1-r) z$. It is further important to note that all three terms in the integrand are real-valued. For the first term, we have 
  \begin{align*}
\Re\bigg(\frac{\varphi''(z_r)}{\varphi'(z_r)}z\bigg) & = \Re\bigg(\frac{\varphi''(z)}{\varphi'(z)}z\bigg) - \bigg( \Re\bigg(\frac{\varphi'''(z)}{\varphi'(z)}z^2\bigg) - \Re\bigg(\frac{\varphi''(z)}{\varphi'(z)}z\bigg)^2 \bigg) (1-r) \\
& \quad + O((1-r)^2).
     \end{align*}
For the third term, we expand the numerator and denominator and then their ratio, using that both are real. To simplify both expressions, we repeatedly use that 
\begin{equation}\label{eq:phi_fact}
    |\varphi'(z)| = \frac{\varphi'(z)}{\varphi(z)} z \qquad \forall z\in \m S^1
\end{equation}
which follows from the fact that $\varphi$ fixes $\m S^1$. 
For the denominator of the third term, we compute 
\begin{align*}
    \nonumber 1 - |\varphi(z_r)|^2 = &|\varphi'(z)| \bigg[ 2(1-r) - \bigg(\frac{\varphi'(z)}{\varphi(z)}z +\Re \bigg( \frac{\varphi''(z)}{\varphi'(z)} z\bigg)\bigg)(1-r)^2
    \\
    &+ \bigg( \frac{1}{3} \Re \bigg(\frac{\varphi'''(z)}{\varphi'(z)} z^2\bigg) + \Re\bigg( \frac{\varphi''(z)}{\varphi(z)} z^2\bigg)\bigg) (1-r)^3 + O((1-r)^4)
    \bigg].
\end{align*}
For the numerator, we find
\begin{align*}
   \nonumber 2 \Re (z \varphi'(z_r) \overline{\varphi(z_r)}) = &|\varphi'(z)|\bigg[2 - \bigg( 2\frac{\varphi'(z)}{\varphi(z)}z + 2\,\Re\bigg(\frac{\varphi''(z)}{\varphi'(z)}z\bigg)\bigg)(1-r) \\
   &+ \bigg( \Re\bigg(\frac{\varphi'''(z)}{\varphi'(z)} z^2\bigg) + 3 \,\Re\bigg(\frac{\varphi''(z)}{\varphi(z)}z^2\bigg)\bigg)(1-r)^2 \bigg] + O((1-r)^3).
\end{align*}
Let $x = (1-r)$, $a=  \frac{\varphi'(z)}{\varphi(z)}z + \Re(\frac{\varphi''(z)}{\varphi'(z)}z)$ and $b = \frac{1}{3} \Re(\frac{\varphi'''(z)}{\varphi'(z)} z^2) + \Re(\frac{\varphi''(z)}{\varphi(z)}z^2)$.  With this notation, we have that 
     \begin{align*}
         \frac{2 \Re (z \varphi'(z_r) \overline{\varphi(z_r)})}{ 1-|\varphi(z_r)|^2} = \frac{2 - 2a x + 3b x^2 + o(x^2)}{2x - a x^2 + b x^3 + o(x^3)}.
     \end{align*}
     As the coefficients $a, b$ are real-valued for all $z\in \m S^1$ it follows that 
     \begin{align*}
         \frac{2 \Re (z \varphi'(z_r) \overline{\varphi(z_r)})}{ 1-|\varphi(z_r)|^2} = \frac{1}{x} - \frac{a}{2} + \frac{1}{4}\bigg(4 b - a^2\bigg) x + o(x).
     \end{align*}
 Note that $a^2 = \bigg(\frac{\varphi'(z)}{\varphi(z)}z\bigg)^2 + \bigg(\Re \frac{\varphi''(z)}{\varphi'(z)}z\bigg)^2 + 2 \Re\bigg(\frac{\varphi''(z)}{\varphi(z)} z^2\bigg)$. Hence Plugging in $x,a,b$ gives
 \begin{align*}
       \nonumber &\frac{2 \Re (z \varphi'(z_r) \overline{\varphi(z_r)})}{ 1-|\varphi(z_r)|^2} =   \frac{1}{1-r} - \frac{1}{2}\bigg(\frac{\varphi'(z)}{\varphi(z)}z + \Re\bigg(\frac{\varphi''(z)}{\varphi'(z)}z\bigg)\bigg)  \\
       &+\bigg( \frac{1}{3} \Re\bigg(\frac{\varphi'''(z)}{\varphi'(z)} z^2\bigg) + \frac{1}{2}\Re\bigg(\frac{\varphi''(z)}{\varphi(z)}z^2\bigg) - \frac{1}{4}\bigg(\frac{\varphi'(z)}{\varphi(z)}z\bigg)^2 - \frac{1}{4}\bigg(\Re \frac{\varphi''(z)}{\varphi'(z)}z\bigg)^2 \bigg) (1-r)\\ & + O((1-r)^2).
 \end{align*}
    Collecting terms by order in $(1-r)$, the integrand in \eqref{eq:integrand} is equal to
     \begin{align*}
         &-\frac{2r}{1-r^2} + \frac{1}{1-r}\\
         &+\Re\bigg(\frac{\varphi''(z)}{\varphi'(z)}z\bigg)-\frac{1}{2}\frac{\varphi'(z)}{\varphi(z)}z -\frac{1}{2} \Re\bigg(\frac{\varphi''(z)}{\varphi'(z)}z\bigg)\\ 
         &+\bigg[ \frac{1}{3} \Re\bigg(\frac{\varphi'''(z)}{\varphi'(z)} z^2\bigg) + \frac{1}{2}\Re\bigg(\frac{\varphi''(z)}{\varphi(z)}z^2\bigg) - \frac{1}{4}\bigg(\frac{\varphi'(z)}{\varphi(z)}z\bigg)^2 - \frac{1}{4}\bigg(\Re \frac{\varphi''(z)}{\varphi'(z)}z\bigg)^2 \\
         &-\Re \bigg(\frac{\varphi'''(z)}{\varphi'(z)}z^2\bigg) + \Re\bigg(\bigg(\frac{\varphi''(z)}{\varphi'(z)}z\bigg)^2\bigg) \bigg](1-r) + O((1-r)^2).
     \end{align*}
    For the first line, we note
    \begin{align*}
        -\frac{2r}{1-r^2} + \frac{1}{1-r} = \frac{1}{1+r}.
    \end{align*}
    By \eqref{eq:phi_fact}, $\log |\varphi'(z)|$ can be extended to a holomorphic function in a neighborhood of $\m S^1$. As such $\partial_\theta = \mathfrak{i} z\partial_z$ and we get the relation
    \begin{align}\label{eq:theta_derivative}
        \partial_\theta \log |\varphi'(z)| = \mathfrak{i} \bigg(\frac{\varphi''(z)}{\varphi'(z)} z + 1 - \frac{\varphi'(z)}{\varphi(z)}z\bigg).
    \end{align}
   Therefore, 
    \begin{align*}
        \int_0^{2\pi}\bigg(\frac{1}{2}\Re\bigg(\frac{\varphi''(z)}{\varphi'(z)}z\bigg)-\frac{1}{2}\frac{\varphi'(z)}{\varphi(z)}z\bigg)\, \dd \theta = - \pi. 
    \end{align*}
    Hence, the first two lines together contribute $\frac{1}{2(1+r)} (1-r)$ to the integrand, and all remaining terms are of order $(1-r)$. Collecting like terms again, up to higher order terms, the integrand in \eqref{eq:integrand} is $(1-r)$ times:
    \begin{align*}
       &  \frac{1}{2(1+r)} + \frac{1}{2}\Re\bigg(\frac{\varphi''(z)}{\varphi(z)}z^2\bigg) - \frac{1}{4}\bigg(\frac{\varphi'(z)}{\varphi(z)}z\bigg)^2 - \frac{1}{4}\bigg(\Re \frac{\varphi''(z)}{\varphi'(z)}z\bigg)^2 \\
         &-\frac{2}{3}\Re \bigg(\frac{\varphi'''(z)}{\varphi'(z)}z^2\bigg) + \Re\bigg(\bigg(\frac{\varphi''(z)}{\varphi'(z)}z\bigg)^2\bigg).
    \end{align*}
    We note that 
    \begin{align*}
        -\frac{2}{3}\frac{\varphi'''(z)}{\varphi'(z)}z^2 + \bigg(\frac{\varphi''(z)}{\varphi'(z)}z\bigg)^2 = -\frac{2}{3} \mc{S}[\varphi](z) z^2 \in \m R. 
    \end{align*}
    In particular, this is already real-valued. We make two observations to show that the remaining terms contribute zero to the integral. First, we note that 
    \begin{align*}
        \partial_\theta \bigg( \frac{\varphi'(z)}{\varphi(z)} z \bigg) = \mathfrak{i} \bigg( \frac{\varphi''(z)}{\varphi(z)} z^2 + \frac{\varphi'(z)}{\varphi(z)}z - \bigg(\frac{\varphi'(z)}{\varphi(z)} z\bigg)^2\bigg).
    \end{align*}
    Combined with the expression above for $\partial_\theta \log |\varphi'(z)|$ in \eqref{eq:theta_derivative}, we infer that 
    \begin{align*}
        \int_0^{2\pi} \frac{1}{4} - \frac{1}{4}\bigg(\frac{\varphi'(z)}{\varphi(z)} z\bigg)^2 + \frac{1}{4} \frac{\varphi''(z)}{\varphi(z)} z^2 + \frac{1}{4} \frac{\varphi''(z)}{\varphi'(z)}z \; \dd \theta = 0.
    \end{align*}
    Therefore under the integral,
    \begin{align*}
         &\int_0^{2\pi} \frac{1}{2(1+r)} + \frac{1}{2}\Re\bigg(\frac{\varphi''(z)}{\varphi(z)}z^2\bigg) - \frac{1}{4}\bigg(\frac{\varphi'(z)}{\varphi(z)}z\bigg)^2 - \frac{1}{4}\bigg(\Re \frac{\varphi''(z)}{\varphi'(z)}z\bigg)^2 \, \dd \theta \\
         =&\int_{0}^{2\pi} \frac{1}{4} \Re\bigg(\frac{\varphi''(z)}{\varphi'(z)}z\bigg) \bigg( \frac{\varphi'(z)}{\varphi(z)}z - \Re\bigg(\frac{\varphi''(z)}{\varphi'(z)}z\bigg) -1 \bigg) + \frac{1}{4(1+r)}(1-r)\, \dd \theta.
    \end{align*}
    And in sum we have reduced the right side of \eqref{eq:conformal_distortion} to the $r\to 1^-$ limit of:
    \begin{align*}
     \frac{2r}{1+r} \int_0^{2\pi} \frac{1}{4} \Re\bigg(\frac{\varphi''(z)}{\varphi'(z)}z\bigg) \bigg( \frac{\varphi'(z)}{\varphi(z)}z - \Re\bigg(\frac{\varphi''(z)}{\varphi'(z)}z\bigg) -1 \bigg) - \frac{2}{3} \mc{S}[\varphi](z) z^2 + O(1-r) \, \dd \theta 
    \end{align*}
    Finally we note that $\log |\varphi'(z)|$ is a function from $\m S^1\to \m R$, hence, $\partial_\theta \log |\varphi'(z)|\in \m R$. As such by \eqref{eq:theta_derivative},
    \begin{align*}
         \frac{\varphi'(z)}{\varphi(z)}z - \frac{\varphi''(z)}{\varphi'(z)}z -1 \in \mathfrak{i} \m R \qquad \forall z\in \m S^1,
    \end{align*}
    and hence $ \frac{\varphi'(z)}{\varphi(z)}z - \Re\bigg(\frac{\varphi''(z)}{\varphi'(z)}z\bigg) -1 = 0$ for all $z\in \m S^1$. This completes the proof.
\end{proof}

\section{Further discussions} \label{sec:discussions}

\subsection{Dual Epstein curves in de Sitter space}\label{subsec:dS}

Using the duality between hyperbolic and de Sitter spaces (see, for instance, \cite{Andreev70}, \cite{HodgsonRivin93}, \cite{Schlenker93}), we can describe the dual Epstein map in de Sitter space, which we now summarize briefly.

In Minkowski space $\mathbb{R}^{1,n}$ we consider $q(x) = -x_0^2+\ldots+x_{n-1}^2+x_{n}^2$, the canonical quadratic form of signature $(-,+,\ldots,+)$. Here, hyperbolic space $\mathbb{H}^n$ and de Sitter space $\dS_n$ are represented by the submanifolds $\lbrace x=  (x_0,\ldots, x_{n}) \,|\, q(x)=-1, x_{0}>0 \rbrace$ and $\lbrace x=  (x_0,\ldots, x_{n}) \,|\, q(x)=1 \rbrace$ (respectively) with the metrics induced by the Lorentzian ambient space $\m R^{1,n}$. Totally geodesic submanifolds of dimension $k$ of $\m H^n$ and $\dS_n$ are given by their intersections with $k+1$ linear subspaces of $\m R^{1,n}$. Moreover, an orientation in a $(k+1)$-space $V^{k+1}\subseteq\m R^{1,n}$ is in one-to-one correspondence with an orientation on its intersection $P$ with $\m H^n$ or $\dS_n$. More precisely, for $x\in V$ with $q(x)=\pm 1$, which is identified with a point in $\m H^n$ or $\dS_n$, an ordered basis $v_1,\ldots,v_k$ in its orthogonal complement $x^\perp\subset V$ is positive if and only if $x,v_1,\ldots,v_k$ is positive in $V$. This choice considers the position vector as the outer-normal to $\m H^n$ or $\dS_n$ in $\mathbb{R}^{1,n}$. In particular, this choice gives an orientation of the tangent space of $P$ at $x$, and $\m H^n$ and $\dS_n$ their standard orientations.

The duality is a one-to-one correspondence between $k$-planes in $\m H^n$ and $(n-k-1)$-planes in $\dS_n$. In fact, given an unoriented $k$-plane $P=\m H^n\cap V^{k+1}$ we define its \emph{dual} $P^\perp$ as the unoriented $(n-k-1)$-plane $P^\perp = \dS_n \cap (V^{k+1})^\perp$, where $(V^{k+1})^\perp$ is the $(n-k)$-dimensional orthogonal complement of $V^{k+1}$ in $\m R^{1,n}$. Observe that by the same process we can define the dual of unoriented $k$-planes in $\dS_n$ as unoriented $(n-k-1)$-planes in $\m H^n$. It is easy to verify that this process is an involution, and if $P\subset Q$ are totally geodesic planes in either $\m H^n$ or $\dS_n$, then $P^\perp\supset Q^\perp$. Moreover, we can also define duality between oriented planes, under the convention that $V\oplus V^\perp$ is a positive basis in $\m R^{1,n}$, where $V$ defines the plane in $\m H^n$ while $V^\perp$ defines the dual in $\dS_n$.

We now come back to the $n = 2$ case. Observe that given $h$ a $C^2$ metric in $\m S^1 = \partial_\infty \m H^2$, the Epstein curve $\Ep_h$ defines a natural map into the set of oriented geodesics in $\m H^2$. Indeed, we can consider the geodesic line at $\Ep_h$ orthogonal to $\widetilde{\Ep}_h$ (defined in \eqref{eq:tilde_Ep} and \eqref{eq:Epsteinnormal}) oriented so that the pair $(v,\widetilde{\Ep}_h)$ (where $v$ is the velocity of the geodesic at $\Ep_h$) is positively oriented in $\m H^2$. 
In this way, $\widetilde{\Ep}_h$ can be viewed as the $(2-1-1)$-plane (namely, point) in $\dS_2$  dual to the geodesic in $\m H^2$ spanned by $v$, and the Epstein map $\m S^1 \to T^1 \m H^2$ is an immersion $\m S^1 \to \m H^2 \times \dS_2$.

By definition of the duality, the oriented geodesic spanned by $v$ is the dual of $\widetilde{\Ep}_h$ and we denote it as $\widetilde{\Ep}_h^\perp$. Under our orientation conventions, the basis $\Ep_h, \widetilde{\Ep}_h^\perp, \widetilde{\Ep}_h$ has the standard orientation of $\mathbb{R}^3$.

\begin{definition}
    Given $h$ a $C^2$ metric in $\m S^1$, we define its \emph{dual Epstein curve} in $\dS_2$ as $\widetilde{\Ep}_h $. 
\end{definition}

\begin{lem}\label{lem:dual_tangent}
    Given $h$ a $C^2$ metric in $\m S^1$ and any $\theta\in \m S^1$, the tangent to the dual Epstein curve at $\widetilde{\Ep}_h(e^{\ii\t}) \in \dS_2$ is parallel to $\widetilde{\Ep}_h^\perp(e^{\ii\t})$.
\end{lem}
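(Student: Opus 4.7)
The plan is to pass to the Minkowski model $\m R^{1,2}$ with its bilinear form $\langle\cdot,\cdot\rangle$ of signature $(-,+,+)$, in which $\m H^2$, $\dS_2$, and their duality all become linear objects. Fix $\t$ and set $p:=\Ep_h(\ee^{\ii\t})\in\m H^2$ and $u:=\widetilde{\Ep}_h(\ee^{\ii\t})\in\dS_2$; let $v\in T_p\m H^2$ be a unit vector spanning the tangent direction at $p$ of the oriented geodesic $\widetilde{\Ep}_h^\perp$. Then $\{p,u,v\}$ is an orthonormal basis of $\m R^{1,2}$ with $\langle p,p\rangle=-1$ and $\langle u,u\rangle=\langle v,v\rangle=1$, and the $2$-plane carrying the geodesic $\widetilde{\Ep}_h^\perp$ is $V=\Span(p,v)=u^\perp$.

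With this dictionary, the claim that $\partial_\t\widetilde{\Ep}_h$ is parallel to $\widetilde{\Ep}_h^\perp$ reduces to showing $\partial_\t\widetilde{\Ep}_h\in\m R v$. I would establish this via two successive orthogonality arguments. First, differentiating $\langle u,u\rangle\equiv 1$ forces $\partial_\t\widetilde{\Ep}_h\in u^\perp=V$, so one can write $\partial_\t\widetilde{\Ep}_h=a\,p+b\,v$. Second, differentiating $\langle u,p\rangle\equiv 0$ gives
\[
\langle \partial_\t\widetilde{\Ep}_h,p\rangle=-\langle u,\partial_\t\Ep_h\rangle.
\]
By the defining envelope property of the Epstein map recalled in Section~\ref{subsec:epstein_general} (the image of $\dd\Ep_h$ is orthogonal to $\widetilde{\Ep}_h$), the right-hand side vanishes; using $\langle p,p\rangle=-1$, this forces $a=0$, and hence $\partial_\t\widetilde{\Ep}_h=b\,v\in\m R v$, as required.

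I do not anticipate a serious obstacle: once the geometric statement is correctly translated into a linear-algebraic one in $\m R^{1,2}$, the proof is just two Leibniz differentiations of the orthogonality relations that define the Epstein data. The main interpretive point is reading ``parallel to $\widetilde{\Ep}_h^\perp$'' as ``proportional (in $\m R^{1,2}$) to the direction $v$ of the hyperbolic geodesic $\widetilde{\Ep}_h^\perp$,'' equivalently ``lying in the $2$-plane $V$ spanning that geodesic.'' Non-immersed points of $\Ep_h$ cause no trouble, since there $\partial_\t\Ep_h=0$ and the orthogonality used in the key step is trivially satisfied.
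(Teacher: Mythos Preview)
Your proof is correct and essentially identical to the paper's: both work in the Minkowski model, differentiate the two constraints $\langle\widetilde{\Ep}_h,\widetilde{\Ep}_h\rangle\equiv 1$ and $\langle\widetilde{\Ep}_h,\Ep_h\rangle\equiv 0$, and use the envelope property (equivalently, that $\partial_\t\Ep_h$ is parallel to $\widetilde{\Ep}_h^\perp$) to kill the remaining term. Your interpretation of ``parallel to $\widetilde{\Ep}_h^\perp$'' as ``proportional to $v$'' matches the paper's usage, where $\widetilde{\Ep}_h^\perp$ is later identified with the unit tangent $T$.
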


\begin{proof}
    As the dual Epstein curve is given by $\widetilde{\Ep}_h$, it suffices to prove that $\partial_\theta \widetilde{\Ep}_h$ is orthogonal with respect to the quadratic form $q$ to $\Ep_h$ and $\widetilde{\Ep}_h$. Denote by $\langle,\rangle$ the Lorentzian inner product associated to $q$.
    
    As $q(\widetilde{\Ep}_h)\equiv 1$ it follows that $\langle \partial_\theta \widetilde{\Ep}_h, \widetilde{\Ep}_h\rangle =0$. On the other hand, as $\widetilde{\Ep}_h^\perp$ is parallel to $\partial_\theta \Ep_h$ it follows that
    \[
    \langle \partial_\theta \widetilde{\Ep}_h, \Ep_h\rangle = -\langle \widetilde{\Ep}_h, \partial_\theta \Ep_h \rangle = 0
    \]
    which concludes the proof.
\end{proof}
By the previous Lemma we have that the tangent and normal lines of the dual Epstein curve $\widetilde{\Ep}_h$ are given by $\widetilde{\Ep}_h^\perp$ and $\Ep_h$, as at every point of the dual curve the position, tangent and normal vectors should be mutually orthogonal vectors in $\mathbb{R}^{1,2}$ with norms equal to $\pm 1$. In particular it follows from Lemma \ref{lem:dual_tangent} that the dual Epstein curve is space-like, as its tangent vector is parallel to $\widetilde{\Ep}_h^\perp$ with $q(\widetilde{\Ep}_h^\perp)=1$.

Given Lemma \ref{lem:dual_tangent}, we make the following choice of unit tangent and unit normal for the dual Epstein curve:

\begin{definition}
     Given $h$ a $C^2$ metric in $\m S^1$. At any point $\widetilde{\Ep}_h(e^{\ii\t})$ we denote by $-\widetilde{\Ep}^\perp_h(e^{\ii\t})$ and $\Ep_h (e^{\ii\t})$ as the unit tangent and unit normal of the curve, respectively.
\end{definition}

Finally, we will explain how to relate the total signed length and the total geodesic curvature for the Epstein curve in $\m H^2$ and its dual in $\dS_2$. Similar results are well-known in arbitrary dimension, we include them here as they are simple to describe explicitly in this special case.

For the Epstein curve in $\m H^2 \subset \m R^{1,2}$, denote $x(\t) : = \Ep_h(e^{\ii\t})$, $T(\t) : = \widetilde{\Ep}_h^\perp(e^{\ii\t})$, $N(\t) : = \widetilde{\Ep}_h(e^{\ii\t})$, which are respectively the position, oriented tangent and unit normal. As the geometry of $\m H^2$ is induced by the geometry of $\m R^{1,2}$, it follows that we can compute the signed length $\dd\ell$ and its product with its geodesic curvature $k\,\dd\ell$ by
\begin{align*}
    \dd\ell &= \langle x', T\rangle,\\
    k\,\dd\ell &= \langle -N', T\rangle
\end{align*}
where $'$ is the standard derivative in $\m R^{1,2}$ and $\langle \cdot, \cdot \rangle$ is the Lorentzian metric defined by the quadratic form $q$.

For the dual Epstein curve in $\dS_{2}$, our conventions for the orientation of position, unit tangent, and unit normal are given by $N$, $-T$, and $x$, respectively. It also follows that the signed length $\dd\ell^\perp$ and its product with its geodesic curvature $k^\perp\dd\ell^\perp$ are given by
\begin{align*}
    \dd\ell^\perp &: = \langle N', -T\rangle = k\,\dd\ell,\\
    k^\perp\dd\ell^\perp &: = 
    \langle -x', -T\rangle
    = \dd\ell.
\end{align*}
Hence, by Theorem~\ref{thm:SchArea} we can compute the Schwarzian action in terms of the total length and total curvature of the dual Epstein curve.
\begin{prop}\label{prop:dualISch}
    Let $\varphi : \m S^1 \to \m S^1$ be a $C^3$ diffeomorphism. Let $h = \varphi_* \dd \t$ be the pushforward of $\dd \t$ by $\varphi$. Let $k^\perp(\Ep_h)$ denote the total geodesic curvature of the dual Epstein curve associated with $h$ and $L^\perp(\Ep_h)$ the signed de Sitter length of the dual Epstein curve, then we have
    \begin{align*}
        \Isch(\varphi) = k^\perp(\Ep_h) = 2\pi-L^\perp(\Ep_h).
    \end{align*}
\end{prop}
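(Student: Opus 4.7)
The plan is to read off the proposition directly from the two identities
$\dd\ell^\perp = k\,\dd\ell$ and $k^\perp\dd\ell^\perp = \dd\ell$ that were established in the paragraph immediately preceding the statement, combined with Theorem~\ref{thm:SchArea} and the Gauss--Bonnet identity of Lemma~\ref{lem:Gauss_Bonnet}. No new geometric input is needed; everything is a matter of integrating over $\m S^1$ and rearranging.

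First I would handle the identity $\Isch(\varphi) = k^\perp(\Ep_h)$. Integrating the pointwise equality $k^\perp\dd\ell^\perp = \dd\ell$ over $\m S^1$ gives
\[
k^\perp(\Ep_h) \;=\; \int_{\m S^1} k^\perp\dd\ell^\perp \;=\; \int_{\m S^1} \dd\ell \;=\; L(\Ep_h),
\]
and Theorem~\ref{thm:SchArea} identifies $L(\Ep_h)$ with $\Isch(\varphi)$. This takes care of the first equality.

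Next I would treat the identity $\Isch(\varphi) = 2\pi - L^\perp(\Ep_h)$. Integrating $\dd\ell^\perp = k\,\dd\ell$ gives
\[
L^\perp(\Ep_h) \;=\; \int_{\m S^1} \dd\ell^\perp \;=\; \int_{\m S^1} k\,\dd\ell.
\]
By the Gauss--Bonnet identity \eqref{eq:Epstein_Gauss-Bonnet} of Lemma~\ref{lem:Gauss_Bonnet}, the right-hand side equals $A(\Ep_h) + 2\pi$. Using Theorem~\ref{thm:SchArea} again, $A(\Ep_h) = -\Isch(\varphi)$, so $L^\perp(\Ep_h) = 2\pi - \Isch(\varphi)$, which rearranges to the claimed equality.

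Since the two ingredients (Theorem~\ref{thm:SchArea} and Lemma~\ref{lem:Gauss_Bonnet}) are already proven and the pointwise duality relations between $(\dd\ell,k\,\dd\ell)$ and $(\dd\ell^\perp, k^\perp\dd\ell^\perp)$ were just verified from the Lorentzian definition, there is really no obstacle; the only small point of care is to make sure the sign/orientation conventions chosen for the unit tangent $-\widetilde{\Ep}_h^\perp$ and unit normal $\Ep_h$ of the dual curve are used consistently, so that the integrands match without a sign error. Once that is confirmed, the proof is a one-line computation for each equality.
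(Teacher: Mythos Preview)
Your proposal is correct and follows essentially the same approach as the paper: integrate the two pointwise duality relations $\dd\ell^\perp = k\,\dd\ell$ and $k^\perp\dd\ell^\perp = \dd\ell$ over $\m S^1$, then invoke Theorem~\ref{thm:SchArea} (and the Gauss--Bonnet identity $\int_{\m S^1} k\,\dd\ell = A(\Ep_h)+2\pi$, which is already used inside that theorem's proof) to identify the resulting integrals with $\Isch(\varphi)$ and $2\pi-\Isch(\varphi)$.
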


\subsection{Piecewise M\"obius circle homeomorphisms} \label{subsec:pw}

In this section, we show an extension of Theorem~\ref{thm:Ep_Sch} to circle homeomorphisms of lower regularity, in the special case of piecewise M\"obius circle diffeomorphisms. 

\begin{definition}
A map $\varphi:\m S^1\to \m S^1$ is a $C^1$ \emph{piecewise M\"obius} circle diffeomorphism, if $\varphi$ is continuously differentiable and there exists $z_1, \ldots, z_n \in \m S^1$ in counterclockwise order, such that the restriction of $\varphi$ to the circular arc $[z_j, z_{j+1}]$ is in $\PSU(1,1)$. We call the points $z_1,\dots,z_n$ the \textit{breakpoints} of $\varphi$. 
\end{definition}

We note that a $C^1$ piecewise M\"obius circle diffeomorphism is automatically $C^{1,1}$, that is, its derivative is Lipschitz.  
This class of circle diffeomorphisms has been considered as a finite-dimensional approximation of the universal Teichm\"uller space, and can be described using finite shears and diamond shears \cite{SWW_shears,Penner1993UniversalCI}. They are also related to various optimization problems for the Loewner energy \cite{RW,BJRW_pwg,Wang_optimization}.

We can extend the definition of the Schwarzian derivative to piecewise M\"obius circle diffeomorphisms using the distributional derivative. For $z=\ee^{\ii \theta}\in \m S^1$, recall that $\varphi'(z)$ is defined so that $\ii z \varphi'(z) = \varphi_\theta(z)$, where $\varphi_\theta$ is the derivative of $\varphi$ with respect to the argument of $z$. The Schwarzian is $\mc S[\varphi](z) = (\varphi''/\varphi'(z))' - (1/2) (\varphi''/\varphi')^2$.

\begin{lem}\label{lem:schwarzian_piecewise}
Suppose that $\varphi$ is $C^1$ piecewise M\"obius with breakpoints $z_1,\dots,z_n$ in counterclockwise order. In the sense of distributions, 
  \begin{align*}
         \ee^{2\ii\theta} \mc S[\varphi](\ee^{\ii \theta}) = \sum_{j=1}^n \lambda_j  \delta_{z_j},
    \end{align*}
    where 
    $$\lambda_j = -\ii z_j \left(\frac{\varphi''}{\varphi'}(z_j+) - \frac{\varphi''}{\varphi'}(z_j-)\right) = \frac{\varphi_{\theta \theta}}{\varphi_{\theta}}(z_j-) - \frac{\varphi_{\theta\theta}}{\varphi_{\theta}}(z_j+)\in \m R.$$ 
    We therefore define the Schwarzian action of $\varphi$ by 
    \begin{align*}
    	\Isch(\varphi) = \sum_{j=1}^n \lambda_j. 
    \end{align*}
    \end{lem}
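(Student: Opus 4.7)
\textbf{Proof plan for Lemma~\ref{lem:schwarzian_piecewise}.}

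The strategy is to split the Schwarzian into the classical part on each circular arc plus distributional jump corrections supported at the breakpoints. I set $\psi := \varphi''/\varphi'$ and write $z_j = \ee^{\ii \theta_j}$. On each open arc $(z_j,z_{j+1})$, the restriction of $\varphi$ lies in $\PSU(1,1)$, so the classical Schwarzian vanishes there, which is equivalent to the classical identity $\psi'_{\mathrm{cl}} = \tfrac12 \psi^2$ on each arc. Since $\varphi$ is $C^{1,1}$, $\varphi'$ is continuous and nonvanishing, while $\varphi''$ is bounded and may jump only at the $z_j$. Therefore $\psi$ is bounded with jump discontinuities only at the $z_j$, and $\psi^2$ is an ordinary bounded function—so no distributional subtlety enters the $\psi^2$ term.

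Next I compute the distributional derivative of $\psi$. Working in the $\theta$-variable, the distribution $\partial_\theta \psi$ equals its classical pointwise derivative plus $\sum_j \bigl(\psi(z_j+)-\psi(z_j-)\bigr)\,\delta(\theta-\theta_j)$. Using the conventions $\varphi'(z) = \varphi_\theta/(\ii z)$ so that $\partial_z = (\ii z)^{-1}\partial_\theta$, one obtains
\begin{equation*}
    \mc S[\varphi] = \psi' - \tfrac12 \psi^2 = \frac{1}{\ii z}\sum_{j=1}^n \bigl(\psi(z_j+)-\psi(z_j-)\bigr)\,\delta(\theta-\theta_j),
\end{equation*}
the regular part canceling by the Möbius identity above. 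Multiplying by $\ee^{2\ii\theta}$ (which at $z_j$ equals $z_j^2$) immediately yields $\ee^{2\ii\theta}\mc S[\varphi] = \sum_j \lambda_j\,\delta_{z_j}$ with $\lambda_j = -\ii z_j\bigl(\psi(z_j+)-\psi(z_j-)\bigr)$, which is the first claimed expression.

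For the equivalent expression in terms of $\varphi_\theta, \varphi_{\theta\theta}$, I differentiate $\varphi_\theta = \ii z\,\varphi'$ once more to get $\varphi_{\theta\theta} = -z\varphi' - z^2\varphi''$, and hence
\begin{equation*}
\frac{\varphi_{\theta\theta}}{\varphi_\theta} = \ii + \ii z\,\frac{\varphi''}{\varphi'}.
\end{equation*}
The constant $\ii$ cancels in the jump, giving $[\varphi_{\theta\theta}/\varphi_\theta](z_j+)-[\varphi_{\theta\theta}/\varphi_\theta](z_j-) = \ii z_j\bigl(\psi(z_j+)-\psi(z_j-)\bigr) = -\lambda_j$, which is the second formula. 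To see $\lambda_j \in \m R$, I write $\varphi(\ee^{\ii\theta}) = \ee^{\ii\tilde\varphi(\theta)}$ for a real lift $\tilde\varphi$ and compute $\varphi_{\theta\theta}/\varphi_\theta = \ii \tilde\varphi_\theta + \tilde\varphi_{\theta\theta}/\tilde\varphi_\theta$. Because $\varphi$ is $C^1$, $\tilde\varphi_\theta$ is continuous at each $z_j$, so the jump reduces to the real quantity $[\tilde\varphi_{\theta\theta}/\tilde\varphi_\theta]$.

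The main subtlety I expect is purely bookkeeping: keeping straight the two conventions ($z$-derivatives vs.\ $\theta$-derivatives, and delta measures against $d\theta$ vs.\ $dz$), and verifying that the $-\tfrac12\psi^2$ term genuinely contributes no distributional singularity—which uses only that $\psi$ is bounded with pointwise jump discontinuities, so $\psi^2$ is bounded and its distributional value at each $z_j$ is negligible (no $\delta$ mass). Once these conventions are nailed down, the identity of the two expressions for $\lambda_j$ and their reality follow from direct computation.
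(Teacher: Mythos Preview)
Your proposal is correct and follows essentially the same approach as the paper: both recognize that the classical Schwarzian vanishes on each M\"obius arc, that the quadratic term $\tfrac12\psi^2$ (equivalently $\tfrac12(\varphi_{\theta\theta}/\varphi_\theta)^2$) is bounded and carries no singular mass, and that the only distributional contribution comes from the jump of the first-derivative term at each breakpoint. The paper first rewrites $e^{2\ii\theta}\mc S[\varphi]$ in $\theta$-variables as $-\tfrac{d}{d\theta}(\varphi_{\theta\theta}/\varphi_\theta)+\tfrac12(\varphi_{\theta\theta}/\varphi_\theta)^2+\tfrac12$ and then performs an explicit test-function integration by parts on $\varepsilon$-neighborhoods of the $\theta_j$; you instead invoke the standard description of the distributional derivative of a piecewise-smooth function in $z$-variables and convert afterward. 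These are equivalent packagings of the same computation. As a small bonus, you also supply an argument for $\lambda_j\in\m R$ via the real lift $\tilde\varphi$, which the paper asserts but does not prove in its argument.
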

   \begin{proof}
         Let $\theta_j$ denote the argument of $z_j$. First note that since $\partial_\theta = \ii z \partial_z$, 
         $$ \frac{\varphi''}{\varphi'}(z) = \frac{1}{\ii z}\frac{\varphi_{\theta\theta}}{\varphi_\theta} -\frac{1}{z}.$$
         Since $\varphi$ is piecewise M\"obius, for any test function $g$ and any $\varepsilon>0$ sufficiently small, 
         \begin{align*}
              \int_{0}^{2\pi} g(\theta) \ee^{2\ii \theta} \mc S[\varphi](\ee^{\ii \theta}) \,\dd \theta &= \sum_{j=1}^n \int_{\theta_j-\varepsilon}^{\theta_j+\varepsilon} g(\theta) e^{2\ii \theta}\mc S[\varphi](\ee^{\ii \theta}) \, \dd \theta\\
              &=\sum_{j=1}^n \int_{\theta_j-\varepsilon}^{\theta_j+\varepsilon} g(\theta) \bigg( - \frac{\dd}{\dd \theta}\bigg( \frac{\varphi_{\theta\theta}}{\varphi_{\theta}}(\ee^{\ii \theta})\bigg) + \frac{1}{2}\bigg( \frac{\varphi_{\theta\theta}}{\varphi_\theta}(\ee^{\ii \theta})\bigg)^2 + \frac12  \bigg)\,\dd\theta  
         \end{align*}

          Integrating by parts, it follows that 
    \begin{align*}
       &\int_{\theta_j-\varepsilon}^{\theta_j+\varepsilon} g(\theta) \bigg( - \frac{\dd}{\dd \theta}\bigg( \frac{\varphi_{\theta\theta}}{\varphi_{\theta}}(\ee^{\ii \theta})\bigg) + \frac{1}{2}\bigg( \frac{\varphi_{\theta\theta}}{\varphi_\theta}(\ee^{\ii \theta})\bigg)^2 + \frac12  \bigg)\,\dd\theta  \\
       =& -g(\theta) \frac{\varphi_{\theta\theta}}{\varphi_\theta}(\ee^{\ii \theta})\bigg\lvert_{\theta=\theta_j-\vare}^{\theta=\theta_j+\vare} + \int_{\theta_j-\vare}^{\theta_j+\vare} g_\theta(\theta) \frac{\varphi_{\theta\theta}}{\varphi_\theta}(\ee^{\ii \theta}) + g(\theta) \bigg( \frac{1}{2}\bigg( \frac{\varphi_{\theta\theta}}{\varphi_\theta}(\ee^{\ii \theta})\bigg)^2 + \frac12 \bigg) \,\dd \theta.
    \end{align*}
    Since $\varphi_{\theta\theta}/\varphi_{\theta} = \ii z (\varphi''/\varphi')$ is piecewise smooth, taking $\varepsilon\to 0$ gives
    \begin{align*}
        \int_{0}^{2\pi} g(\theta) \ee^{2 \ii \theta}\mc S[\varphi](\ee^{\ii \theta}) \,\dd \theta = \sum_{j=1}^n g(\theta_j) \bigg( \frac{\varphi_{\theta\theta}}{\varphi_{\theta}}(\ee^{\ii \theta_j}-) - \frac{\varphi_{\theta\theta}}{\varphi_{\theta}}(\ee^{\ii \theta_j}+)\bigg).
    \end{align*}
    This completes the proof since $g$ is arbitrary.
    \end{proof}

For $\varphi$ which is $C^1$ piecewise M\"obius with breakpoints $z_1,\dots,z_n$, consider the metric $h = \varphi_* \dd \theta$. The horocycles $(H_{\varphi(z)})_{z\in \m S^1}$ corresponding to $h$ are well-defined since $\varphi$ is $C^1$, but since $\varphi$ is not $C^2$, the image of $\Ep_h$ in $\m D$ is disconnected. These both, however, have simple explicit descriptions, from which we can see that there is a natural way to construct an Epstein curve $\Sigma_h$ which extends the image of $\Ep_h$. 

Note that in previous sections, where we work with $C^2$ metrics, we have interchangeably used $\Ep_h$ to denote both the map and its connected image curve $\Sigma_h$. In this section we use $\Ep_h$ to refer specifically to the map and its image, and $\Sigma_h$ to refer to the curve we construct which extends the image of $\Ep_h$.

To construct $\Sigma_h$, let $(H_z^0)_{z\in \m S^1}$ denote the horocycles for the identity map and let $\alpha_1,\dots,\alpha_n\in \PSU(1,1)$ be the M\"obius transformations such that $\varphi\mid_{[z_j,z_{j+1}]}=\alpha_j$. By Lemma~\ref{lem:naturality}, we have
\begin{itemize}
	\item for any $z\in [z_j,z_{j+1}]$, $H_{\varphi(z)} = \alpha_j(H_z^0)$;
	\item the image of $\Ep_h$ is the collection of points $\{a_1,\dots, a_n\}\subset \m D$, where $a_j = \alpha_j(0)$. 
\end{itemize}
Since $\varphi$ is $C^1$, for each breakpoint $z_j$, the horocycle $H_{\varphi(z_j)} \ni \{a_{j-1}, a_{j}\}$. We define $\Sigma_h^j$ to be the arc of $H_{\varphi(z_j)}$ from $a_{j-1}$ to $a_{j}$ (where we identify $a_{-1}=a_n$), and define the Epstein curve for $\varphi$ to be $$\Sigma_h:= \Sigma_h^1\cup \dots\cup \Sigma_h^n.$$ 

See Figure~\ref{fig:piecewise_mobius} for an example. We parametrize $\Sigma_j$ by arclength, oriented from $a_{j-1}$ to $a_j$, and let $T_j$ denote its unit tangent vector. Let $N_j$ be the outward pointing unit normal vector to $H_{\varphi(z_j)}\supset \Sigma_h^j$, and let $\beta_j$ be the angle between $N_j,N_{j+1}$ at $a_j$ (we describe these in Lemma~\ref{lem:angle_sum} below). We define the signed arclength $\dd \ell_j$ along $\Sigma_h^j$ to be standard arclength times the sign of the orientation of the basis $(T_j,N_j)$. 
In particular, if $a_{j}$ is clockwise after $a_{j-1}$ along $H_{\varphi(z)}$ then the sign is positive (respectively if $a_j$ is counterclockwise after $a_{j-1}$ along $H_{\varphi(z)}$ then the sign is negative).

\begin{figure}
    \centering
    \includegraphics[width=0.4\linewidth]{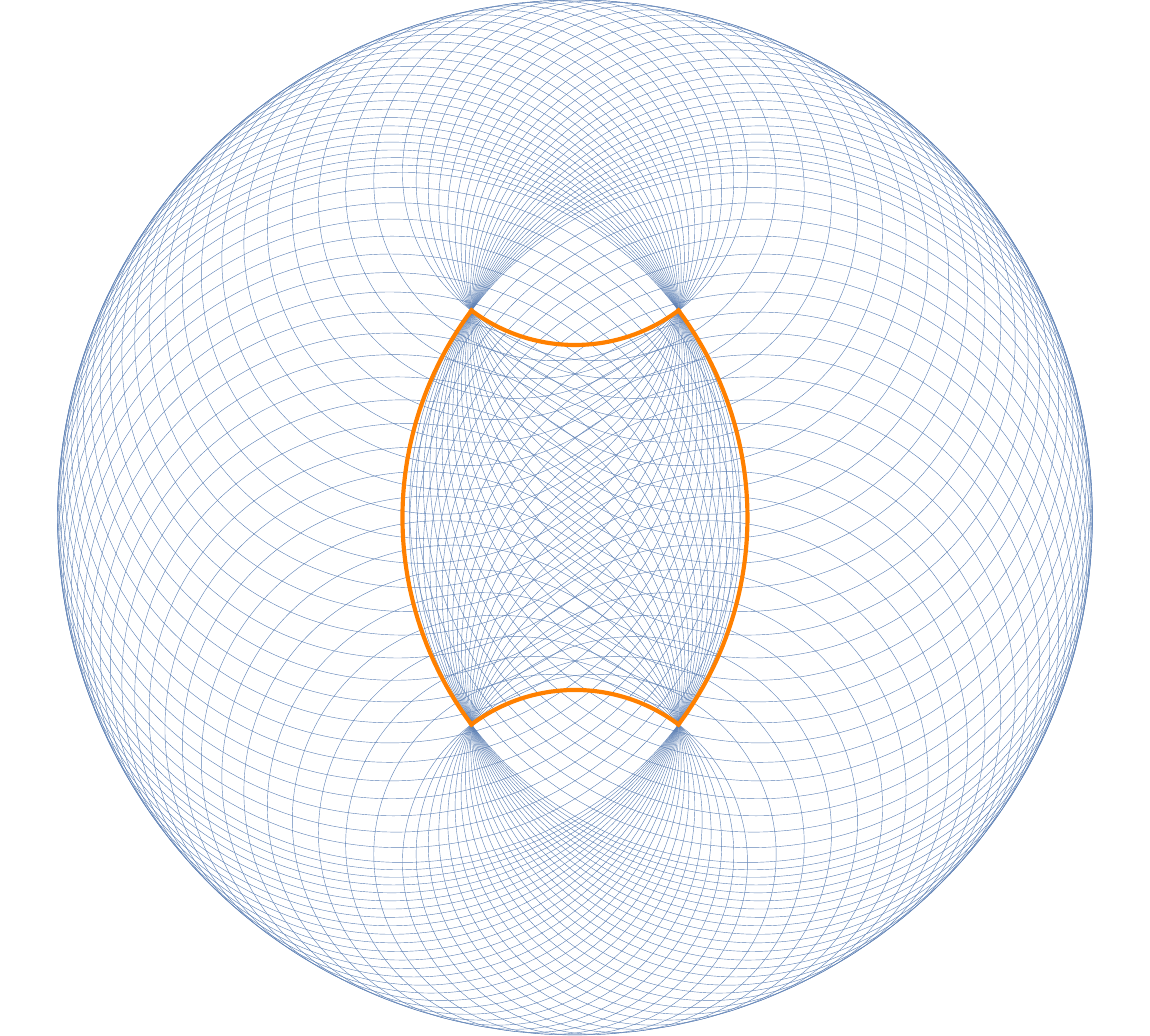}\includegraphics[width=0.4\linewidth]{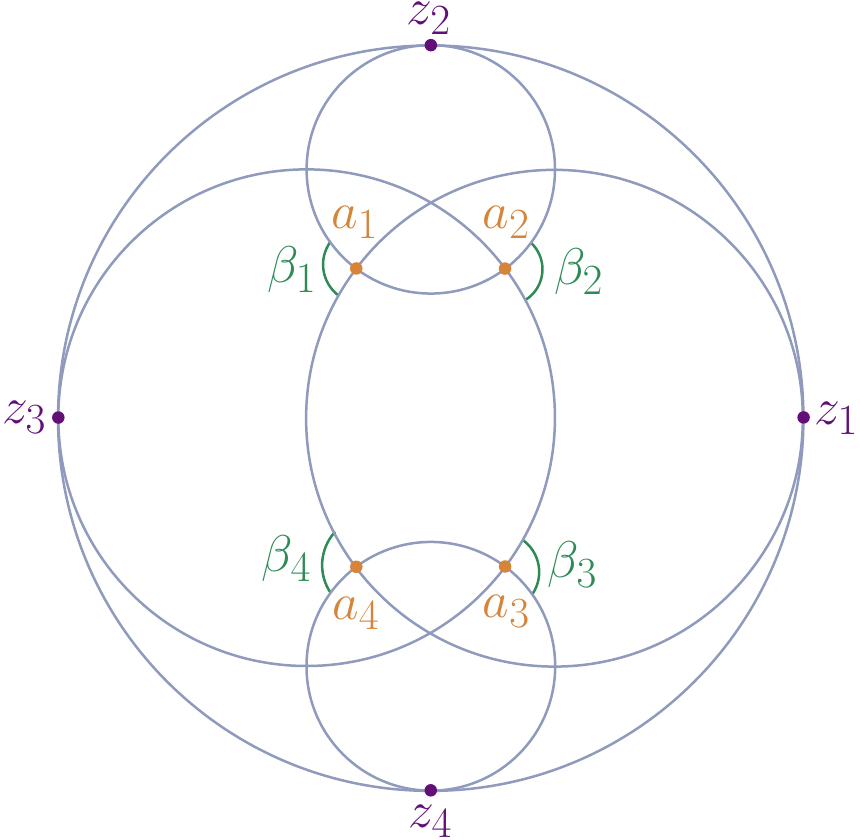}
    \caption{Let $h=\varphi_*\dd \theta$, $\varphi$ piecewise M\"obius with breakpoints $1,\ii,-1,-\ii$. Here are its horocycles in increments of $\pi/25$ and completed Epstein curve $\Sigma_h$ (left) and its horocycles at $z_j=1,\ii,-1,-\ii$, the intersection points $a_j = \alpha_j(0)$, and the angles $\beta_j$ between the outward normal vectors to the horocycles at $a_j$ (right).}
    \label{fig:piecewise_mobius}
\end{figure}

We now give a geometric interpretation of the ``jumps'' $\lambda_j$ at the breakpoints of $\varphi$, which appear in Lemma~\ref{lem:schwarzian_piecewise}, as the signed length of the horocyclic arcs $\Sigma_h^j$. 

\begin{lem}\label{lem:horocycle_arclengths}
    Let $\dd \ell_j$ denote the signed arclength measure along $\Sigma_h^j$. The total signed length is $$L(\Sigma_h^j) = \int_{\Sigma_h^j} \dd \ell_j =\lambda_j.$$ 
\end{lem}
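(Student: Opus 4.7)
The plan is to reduce the computation to a single horocyclic arc in the upper half-plane model, where hyperbolic length is trivial to compute. Fix a breakpoint $z_j$ and let $\alpha_{j-1},\alpha_j\in\PSU(1,1)$ be the two M\"obius pieces of $\varphi$ adjacent to $z_j$. The $C^1$ assumption on $\varphi$ gives $\alpha_{j-1}(z_j)=\alpha_j(z_j)=\varphi(z_j)$ and $\alpha_{j-1}'(z_j)=\alpha_j'(z_j)=\varphi'(z_j)$. By Lemma~\ref{lem:naturality2}, both pieces send the reference horocycle $H_{z_j}^0$ (centered at $z_j$ and through the origin) to $H_{\varphi(z_j)}$. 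Hence $g_j:=\alpha_{j-1}^{-1}\circ\alpha_j\in\PSU(1,1)$ preserves $H_{z_j}^0$, fixes $z_j$ with $g_j'(z_j)=1$, and is therefore the parabolic element sending $0$ to $\alpha_{j-1}^{-1}(a_j)$.

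Next I would apply the orientation-preserving isometry $\alpha_{j-1}^{-1}$: it carries $\Sigma_h^j$ to an arc of $H_{z_j}^0$ running from $0$ to $g_j(0)$, preserves signed arclength, and sends the outward normal of the horoball at $\varphi(z_j)$ to that of the horoball at $z_j$. A Cayley map $\cayley_j\colon\m D\to\HH$ with $\cayley_j(z_j)=\infty$ and $\cayley_j(0)=\ii$ then sends $H_{z_j}^0$ to the horizontal line $\{y=1\}$, whose hyperbolic length element is $\dd x$ and whose outward normal is $-\hat y$ (the horoball at $z_j$ becoming the upper region $\{y>1\}$). In these coordinates the parabolic $g_j$ is conjugate to a real translation $w\mapsto w+t_j$, so the image arc runs from $\ii$ to $\ii+t_j$. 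Checking the orientation of the basis $(T_j,N_j)$ separately in the cases $t_j\gtrless 0$ yields a uniform sign
\begin{equation*}
    L(\Sigma_h^j) \;=\; -t_j.
\end{equation*}

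It then remains to identify $t_j$ with $\lambda_j$. An explicit computation of the disk-conjugate of $w\mapsto w+t_j$, composed with the rotation $z\mapsto z/z_j$ sending $z_j$ to $1$, gives
\begin{equation*}
    t_j \;=\; \ii\, z_j\, g_j''(z_j),
\end{equation*}
and the chain rule applied to $g_j=\alpha_{j-1}^{-1}\circ\alpha_j$ at $z_j$, together with $\alpha_{j-1}'(z_j)=\alpha_j'(z_j)=\varphi'(z_j)$, yields
\begin{equation*}
    g_j''(z_j) \;=\; \frac{\alpha_j''(z_j)-\alpha_{j-1}''(z_j)}{\varphi'(z_j)} \;=\; \frac{\varphi''(z_j+)-\varphi''(z_j-)}{\varphi'(z_j)}.
\end{equation*}
Combining the two displays, $L(\Sigma_h^j)=-\ii z_j\bigl(\varphi''/\varphi'(z_j+)-\varphi''/\varphi'(z_j-)\bigr)=\lambda_j$ as defined in Lemma~\ref{lem:schwarzian_piecewise}; note also that this value is manifestly real, consistent with the fact that $\lambda_j\in\m R$.

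The main obstacle I anticipate is purely the sign and orientation bookkeeping: verifying that $\cayley_j$ is orientation-preserving, that the outward normal of the horoball at $z_j$ is $-\hat y$ in $\HH$, and that the sign of the basis $(T_j,N_j)$ combined with the unsigned arclength $|t_j|$ produces $-t_j$ irrespective of the sign of $t_j$. Once these conventions are pinned down, the remaining analytical content is a two-line chain-rule calculation.
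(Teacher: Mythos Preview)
Your proposal is correct and follows essentially the same route as the paper: reduce by $\alpha_{j-1}^{-1}$ to the reference horocycle $H_{z_j}^0$, pass to the upper half-plane via the Cayley map $\mathfrak c_j$ sending $z_j\mapsto\infty$ and $0\mapsto\ii$, and read off the signed length as minus the real part of the image of $g_j(0)$. The only cosmetic difference is that the paper writes down the explicit M\"obius formula for $g_j$ in terms of $\lambda_j$ and computes $\mathfrak c_j(g_j(0))=\ii-\lambda_j$ directly, whereas you identify the translation parameter $t_j$ via the second-derivative relation $t_j=\ii z_j g_j''(z_j)$ and a chain-rule computation; both give $t_j=-\lambda_j$ and hence $L(\Sigma_h^j)=\lambda_j$, and your anticipated orientation check (outward normal $=-\hat y$ on $\{y=1\}$) is exactly the one the paper carries out.
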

\begin{proof}
We compute the signed hyperbolic distance along $H_{\varphi(z_j)}$ from $a_{j-1}=\alpha_{j-1}(0)$ to $a_{j}=\alpha_j(0)$; note that this is M\"obius invariant. Define $\alpha:= \alpha_{j-1}^{-1}\circ \alpha_j$. By Lemma~\ref{lem:naturality}, it therefore suffices to compute the signed hyperbolic distance along $H_{z_j}^0$ from $0$ to $\alpha(0)$. By construction, $\alpha(z_j)=z_j$, $\alpha'(z_j) = 1$, and $-\ii z_j \alpha''(z_j) = \lambda_j$. These three conditions imply that $\alpha$ is determined uniquely by $z_j,\lambda_j$ and given by the formula:
\begin{align*}
     \alpha(z) = \frac{(2\ii+\lambda_j)z_j z-z_j^2 \lambda_j}{\lambda_j z + (2\ii -\lambda_j) z_j}.
\end{align*}
Let $\mathfrak c_j :\m D \to \m H$ be the Cayley map
\begin{align*}
    \mathfrak c_j(z) = \ii \frac{1+z/z_j}{1-z/z_j}.
\end{align*}
This Cayley map sends $H_{z_j}^0$ to the horocycle at $\infty$ through $\ii=\mathfrak c_j(0)$, i.e., the horizontal line $\{y=1\}$, and $\Sigma_h^j$ to the segment of the horizontal line from $\ii$ to $\mathfrak c_j(\alpha(0))$. Thus,
\begin{align*}
    L(\Sigma_j) = -\text{Re}(\mathfrak c_j(\alpha(0))).
\end{align*}
The sign comes from considering orientation; the outward normal vector to the horocycle at $\infty$ is parallel to $(0,-1)$, so the signed arclength is positive when the horocycle is oriented clockwise. We compute 
\begin{align*}
     \mathfrak c_j\circ \alpha(z) = \frac{-z(\lambda_j+\ii)-z_j(\ii -\lambda_j)}{z-z_j}
\end{align*}
from which we see that $\mathfrak c_j(\alpha(0)) = \ii - \lambda_j$. Therefore, $L(\Sigma_h^j) = \lambda_j$. 
\end{proof}

\begin{lem}\label{lem:angle_sum}
    Let $\varphi$ be $C^1$ piecewise M\"obius with breakpoints $z_1,\dots,z_n$. The horocycles $H_{\varphi(z_j)},H_{\varphi(z_{j+1})}$ intersect at $a_j$, and the angle $\beta_j$ between their outward pointing normal vectors $N_j,N_{j+1}$ at $a_{j}$ (see Figure~\ref{fig:piecewise_mobius}) equals the Lebesgue measure of the circular arc $[z_j, z_{j+1}]$.
\end{lem}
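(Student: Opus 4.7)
The plan is to use the naturality of the Epstein construction (Lemma~\ref{lem:naturality}) to push both computations back to the origin, where the angle is visibly equal to the arc length. The key observation is that both horocycles $H_{\varphi(z_j)}$ and $H_{\varphi(z_{j+1})}$ meeting at $a_j$ can be described by applying the \emph{same} Möbius transformation $\alpha_j$ to reference horocycles at $0$.

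First I would unpack the description of the horocycles. Since $\varphi = \alpha_j$ on the closed arc $[z_j, z_{j+1}]$, Lemma~\ref{lem:naturality} gives $H_{\varphi(z_j)} = \alpha_j(H^0_{z_j})$ and $H_{\varphi(z_{j+1})} = \alpha_j(H^0_{z_{j+1}})$. Both reference horocycles $H^0_{z_j}$ and $H^0_{z_{j+1}}$ pass through $0 \in \m D$ (Example~\ref{ex:round}), and Example~\ref{ex:round} together with the defining property \eqref{eq:tilde_Ep} identifies their outward unit normals at $0$ as $\widetilde{\Ep}_{\dd\theta}(z_j) = -z_j$ and $\widetilde{\Ep}_{\dd\theta}(z_{j+1}) = -z_{j+1}$, viewed as unit vectors in $T_0^1\m D$. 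Applying $\alpha_j$ sends $0$ to $a_j$ and, by the functoriality of $\widetilde{\Ep}$ (Lemma~\ref{lem:naturality}), pushes these normals to $N_j$ and $N_{j+1}$ respectively at $a_j$.

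Next I would conclude using conformality. Since $\alpha_j \in \PSU(1,1)$ is an orientation-preserving isometry of $\m D$, in particular a conformal map, its differential at $0$ preserves oriented angles (the hyperbolic metric agrees with the Euclidean one up to a positive conformal factor at any single point, so the angle between two tangent vectors in either metric coincides). Therefore the angle between $N_j$ and $N_{j+1}$ at $a_j$ equals the angle between $-z_j$ and $-z_{j+1}$ at $0$, which is $\arg(z_{j+1}/z_j) = \theta_{j+1}-\theta_j \pmod{2\pi}$, precisely the Lebesgue measure of the circular arc $[z_j, z_{j+1}]$.

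The only subtle point is the bookkeeping for the orientation convention on $\beta_j$: I would interpret $\beta_j$ as the oriented angle measured in the sense inherited from the counterclockwise orientation of $\m S^1$, so that $\beta_j \in (0, 2\pi)$ and matches the arc length. Since $\alpha_j$ is orientation-preserving, this convention is consistent between the image at $a_j$ and the preimage at $0$. There is no real difficulty here, as the only input needed is the $C^1$ compatibility of $\varphi$ at the breakpoints (which ensures that the two horocycles assigned to $z_j$ from either side of the breakpoint coincide, so that $a_j$ is unambiguously the intersection point) together with the naturality and conformality of the Epstein construction.
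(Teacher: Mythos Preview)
Your proposal is correct and follows essentially the same approach as the paper's proof: both use Lemma~\ref{lem:naturality} to write $H_{\varphi(z_j)}=\alpha_j(H^0_{z_j})$ and $H_{\varphi(z_{j+1})}=\alpha_j(H^0_{z_{j+1}})$, observe that these reference horocycles meet at $0$, and then invoke conformality of $\alpha_j$ to transport the angle computation back to the origin where it is simply the arc length of $[z_j,z_{j+1}]$. Your version adds the explicit identification of the outward normals at $0$ as $-z_j$ and $-z_{j+1}$ via Example~\ref{ex:round}, which is a nice touch the paper leaves implicit.
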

\begin{proof}
	Let $\alpha_j=\varphi\mid_{[z_j,z_{j+1}]} \in \PSU(1,1)$. Since $\varphi$ is $C^1$, $\varphi'(z_j) = \alpha_j'(z_j)$ and $\varphi'(z_{j+1})=\alpha_j'(z_{j+1})$. Hence, by Lemma~\ref{lem:naturality}, $H_{\varphi(z_j)} = \alpha_j(H_{z_j}^0)$ and $H_{\varphi(z_{j+1})} = \alpha_j(H_{z_{j+1}}^0)$. Since $H_{z_j}^0, H_{z_{j+1}}^0$ intersect at $0$, these intersect at $a_j=\alpha_j(0)$. Since $\alpha_j$ is conformal, the angle of intersection between $N_j,N_{j+1}$ at $a_j$ is the same as the angle of between the normal vectors $N_j^0,N_{j+1}^0$ to $H_{z_j}^0, H_{z_{j+1}}^0$ respectively at $0$. These are the horocycles based at $z_j,z_{j+1}$ through $0$ respectively, and hence the angle between their outward normal vectors at $0$ is given by the arclength of $[z_j,z_{j+1}]$. 
\end{proof}

We combine the above with the Gauss--Bonnet theorem to obtain an analog of Theorem~\ref{thm:Ep_Sch} for $C^1$ piecewise M\"obius circle diffeomorphisms.
\begin{cor}
    Let $\varphi$ be a $C^1$ piecewise M\"obius circle diffeomorphism with breakpoints $z_1,\dots,z_n\subset \m S^1$ in counterclockwise order, and define the metric $h = \varphi_* \dd \theta$ and its completed Epstein curve $\Sigma_h$ as above. Then 
    \begin{align*}
    	\Isch(\varphi) = L(\Sigma_h) = -A(\Sigma_h). 
    \end{align*}
\end{cor}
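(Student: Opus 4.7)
The first identity $\Isch(\varphi) = L(\Sigma_h)$ follows immediately by combining the two preceding lemmas: Lemma~\ref{lem:schwarzian_piecewise} gives $\Isch(\varphi) = \sum_j \lambda_j$, and Lemma~\ref{lem:horocycle_arclengths} gives $L(\Sigma_h^j) = \lambda_j$, so summing yields $L(\Sigma_h) = \sum_j L(\Sigma_h^j) = \sum_j \lambda_j = \Isch(\varphi)$.

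For the second identity $L(\Sigma_h) = -A(\Sigma_h)$, the plan is to apply the Gauss--Bonnet theorem for piecewise smooth closed curves in the hyperbolic plane to $\Sigma_h$. Three ingredients will enter. First, each smooth piece $\Sigma_h^j$ lies on a horocycle and so has constant geodesic curvature of absolute value $1$ (as in Example~\ref{ex:horocycle_2}), which reduces $\int_{\Sigma_h^j} k\,\dd\ell$, up to a sign fixed by the orientation convention for $(T_j,N_j)$, to $L(\Sigma_h^j) = \lambda_j$. Second, at each corner $a_j$ the exterior angle of $\Sigma_h$ is controlled by the angle $\beta_j$ of Lemma~\ref{lem:angle_sum}, with total $\sum_j \beta_j = 2\pi$ equal to the full arclength of $\m S^1$. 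Third, the constant sectional curvature $K=-1$ contributes $-A(\Sigma_h)$ from the area integral. Feeding these into the piecewise Gauss--Bonnet identity should yield $L(\Sigma_h) + A(\Sigma_h) = 0$.

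The main obstacle I expect is the careful bookkeeping of signs in this Gauss--Bonnet argument. The signed arclength $\dd\ell_j$ depends on the orientation of $(T_j, N_j)$, the geodesic curvature depends on the choice of normal, and the exterior angle at each corner has a sign determined by the turning direction of the tangent. Matching the conventions pinned down in the proof of Lemma~\ref{lem:horocycle_arclengths} (via the Cayley map to the upper half plane) against the natural orientation of $\Sigma_h$ as a closed curve traversed from $a_{j-1}$ to $a_j$ along $H_{\varphi(z_j)}$ is the delicate step.

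An alternative and cleaner route I would pursue as a fallback is a smoothing argument. I would approximate $\varphi$ by $C^3$ diffeomorphisms $\varphi_\vare$ agreeing with $\varphi$ outside $\vare$-neighborhoods of the $z_j$'s and smoothly interpolating within; then verify that $\Ep_{h_\vare} \to \Sigma_h$ in a sense strong enough for the signed length and signed area to converge, and that $\Isch(\varphi_\vare) \to \sum_j \lambda_j$ distributionally. Applying Theorem~\ref{thm:SchArea} to each $\varphi_\vare$ gives $\Isch(\varphi_\vare) = L(\Ep_{h_\vare}) = -A(\Ep_{h_\vare})$, and passing to the limit yields both identities of the corollary at once, circumventing any direct confrontation with the corner sign conventions.
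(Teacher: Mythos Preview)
Your proposal is correct and follows essentially the same route as the paper: the first identity via Lemmas~\ref{lem:schwarzian_piecewise} and~\ref{lem:horocycle_arclengths}, and the second via piecewise Gauss--Bonnet using Lemma~\ref{lem:angle_sum} for $\sum_j\beta_j=2\pi$ and Example~\ref{ex:horocycle_2} for the horocycle curvature. The sign bookkeeping you flag as the obstacle is resolved in the paper by the single observation that $N_j$ is the \emph{outward} normal to the horocycle $H_{\varphi(z_j)}$, so $k\equiv -1$ along each $\Sigma_h^j$ and hence $\int_{\Sigma_h^j} k\,\dd\ell_j = -L(\Sigma_h^j)$; no smoothing fallback is needed.
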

\begin{proof}
	It follows from Lemma~\ref{lem:horocycle_arclengths} that $\Isch(\varphi) = L(\Sigma_h)$. For each $j$, let $\beta_j$ be the angle between the outward normal vectors $N_j,N_{j+1}$ to $H_{\varphi(z_{j})}$, $H_{\varphi(z_{j+1})}$ respectively as in Lemma~\ref{lem:angle_sum}. By the Gauss--Bonnet theorem, 
	\begin{align*}
		-A(\Sigma_h) + \sum_{j=1}^n\int_{\Sigma_h^j} k \, \dd \ell_j + \sum_{j=1}^n \beta_j = 2\pi.
	\end{align*}
By Lemma~\ref{lem:angle_sum}, $\sum_{j=1}^n \beta_j = 2\pi$. For each $j$, $\Sigma_j$ is a piece of a horocycle with outward pointing normal vector. In Example~\ref{ex:horocycle_2} we saw that the horocycle with \textit{inward} pointing normal vector has $k\equiv 1$. Since $\Sigma_h^j$ is oriented with the outward pointing normal vector, for all $j$, $\Sigma_h^j$ has geodesic curvature $k\equiv -1$. Hence altogether, $\sum_{j=1}^n \int_{\Sigma_h^j} k \, \dd \ell_j =-L(\Sigma_h)$, and rearranging the equation completes the proof. 
\end{proof}
   \begin{figure}
        \centering
        \includegraphics[width=0.4\linewidth]{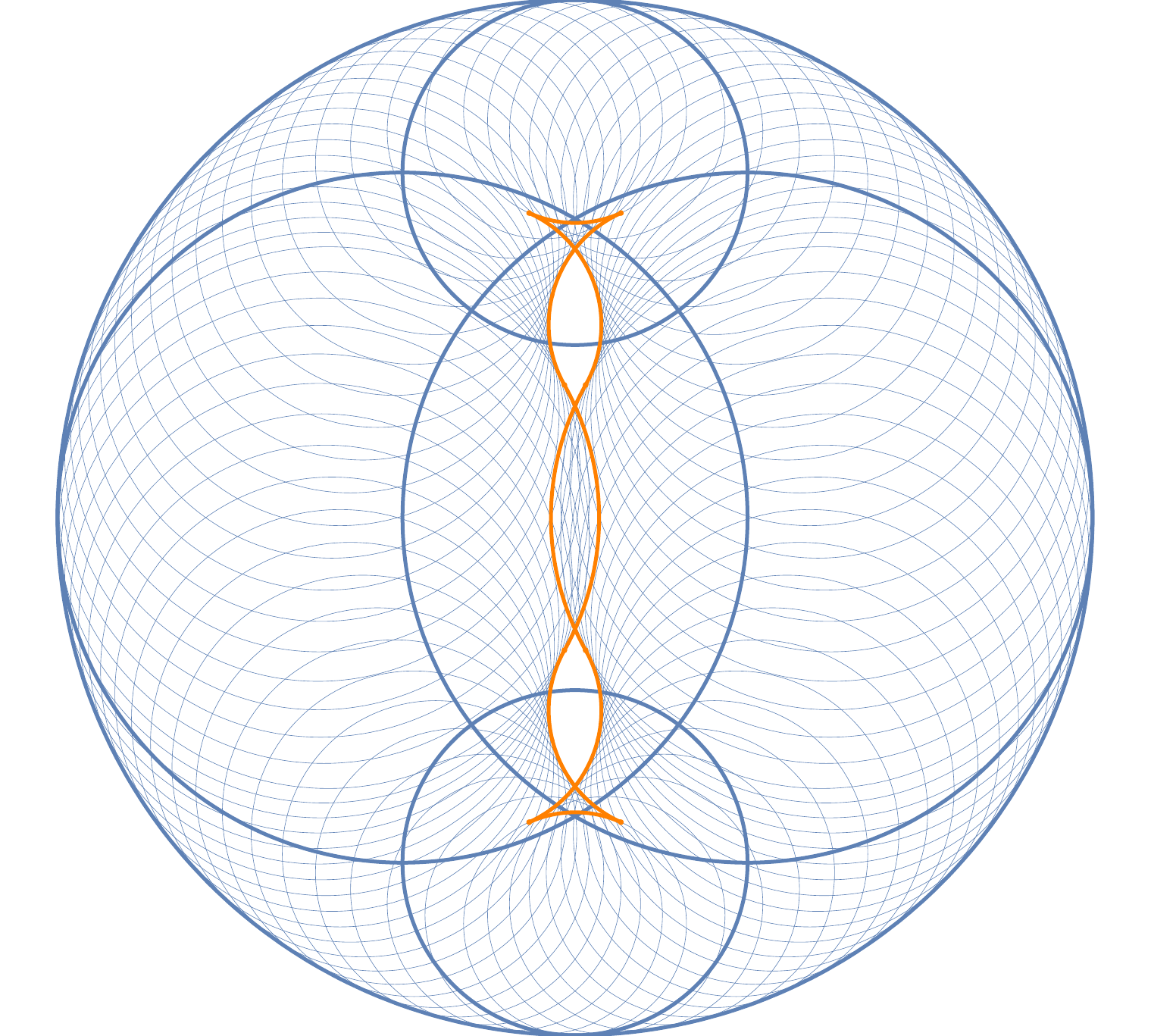}
        \caption{The orange curve here is the completed Epstein curve for $h_t=e^{t} h$ for $t=3/5$, where $h$ is as in Figure~\ref{fig:piecewise_mobius}. The orange dots mark the transitions between horocycle arcs and pieces of the image of $\Ep_{h_t}$. The four darker blue horocycles are for the original metric $h$ at the breakpoints, and the lighter horocycles correspond to the metric $h_t$.}
        \label{fig:piecewise_mobius_scaled}
    \end{figure}
\begin{remark}
    One can also consider the family of metrics $h_t = e^t h$, $h=\varphi_* \dd \theta$ for $\varphi$ which is $C^1$ piecewise M\"obius. Let $H_{\varphi(z)}^t$ denote the horocycles for $h_t$, which are the same as $h$ by flowing inward by a distance of $t$ along geodesics normal to the horocycle. 
    For any $t>0$, $\Ep_{h_t}$ is a union of circular arcs, where each arc $\Ep_{h_t}^j$ is obtained by flowing along geodesics towards $\m S^1$ by distance $t$ from the point $a_j=\alpha_j(0)$ in $\Ep_h$. The endpoints of $\Ep_{h_t}^j$ lie on $H_{\varphi(z_j)}^t,H_{\varphi(z_{j+1})}^t$. 
    We define $\Sigma_{h_t}$ as before to be the completion of $\Ep_{h_t}$ by adding the arcs on the horocycles based at the breakpoints, which are also the images of the horocycle arcs in $\S_h$ flowed by distance $t$. See Figure~\ref{fig:piecewise_mobius_scaled}. 
    We also obtain that for all $t >0$, $\S_{h_t}$ has only $0$ or $2\pi$ angles.
    \end{remark}    
    

\appendix

\section{Examples of Epstein curves}\label{sec:examples}

Here, we include some examples of the horocycles produced by Epstein's construction, as well as the associated Epstein curve and foliation. These are made for the measure $h=\ee^{\sigma(\theta)} \, \dd \theta$. Mathematica code, which can be used to generate other examples, is below.

\begin{ex} 
Let $\alpha(z) = (z+a)/(1+\overline{a} z)$, where $a=\ee^{\ii \pi/3}/3$, which is a M\"obius transformation that sends $0$ to $a$. Below are the horocycles and Epstein curves for the metrics $h_t = e^t \alpha_* \dd \theta$, where $t$ ranges from $0$ to $1.4$ in intervals of $0.2$. Note that $\Ep_{h_0}$ is the single point $\alpha(0) = a$. The Epstein foliation $(\Ep_{h_t})_{t\geq 0}$ consists of circles centered at $a$ with radius $t$. This can be seen from Example~\ref{ex:Ep_point} and Lemma~\ref{lem:naturality}.
    \begin{center}
\includegraphics[width=\textwidth]{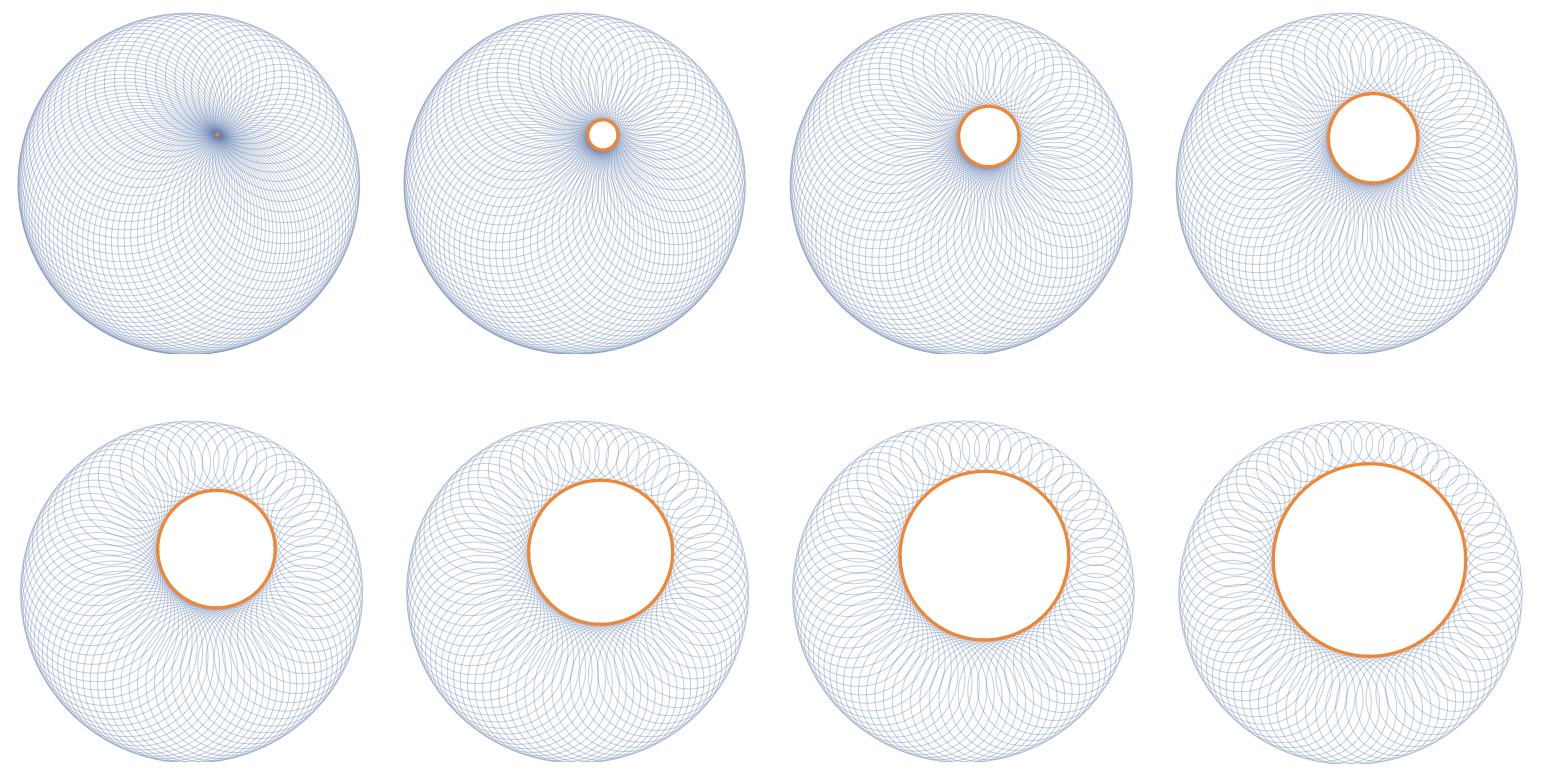}
    \end{center}
\end{ex}

\begin{ex}
    Let $\varphi(\theta) = \frac{1}{2}\sin(\theta) + \theta$. This is a diffeomorphism of the circle. Below are the horocycles and Epstein curve for $h=\varphi_*\dd \theta$ and the Epstein curves for $h_t= e^t h$, $t\in [0,5]$.
    \begin{center}
         \includegraphics[width=0.75\textwidth]{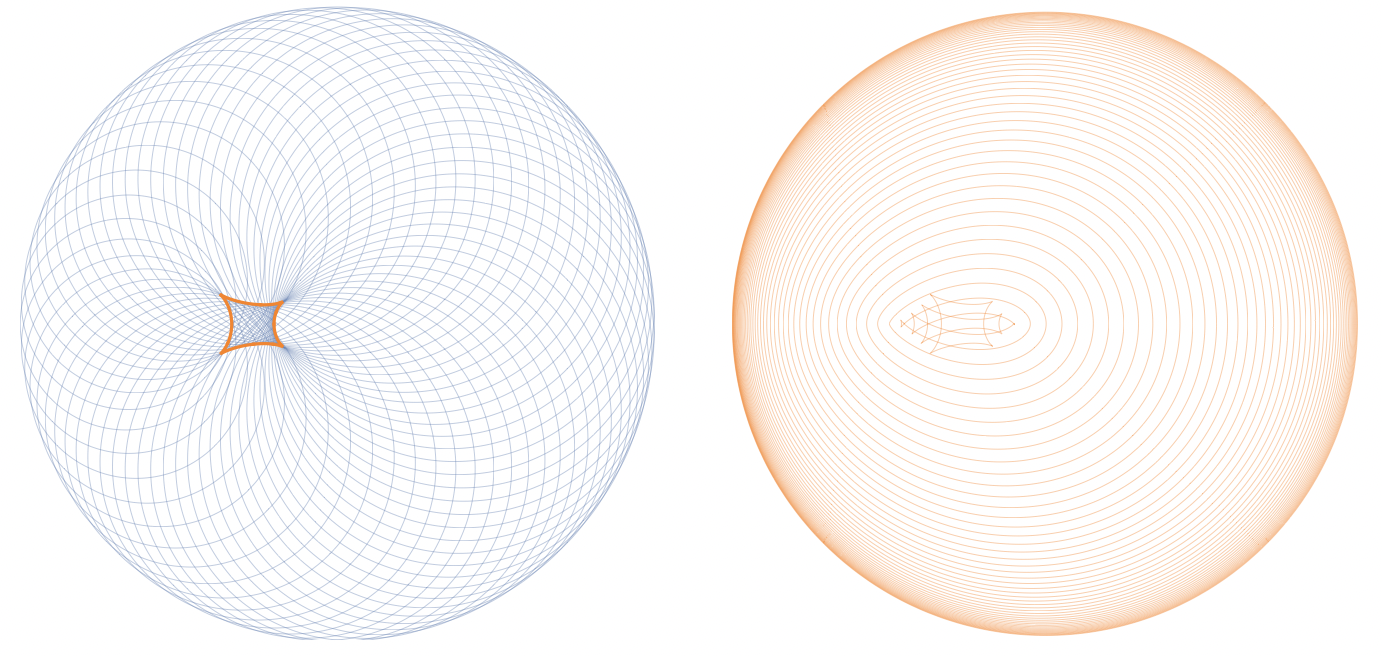}
    \end{center}
    Let $\phi=\varphi^{-1}$; this is also a diffeomorphism of the circle. Below are the horocycles and Epstein curve for $\hat{h}=\phi_*\dd \theta$ and the Epstein curves for $\hat{h}_t=e^t \hat{h}$, $t\in [0,5]$.
    \begin{center}
        \includegraphics[width=0.75\textwidth]{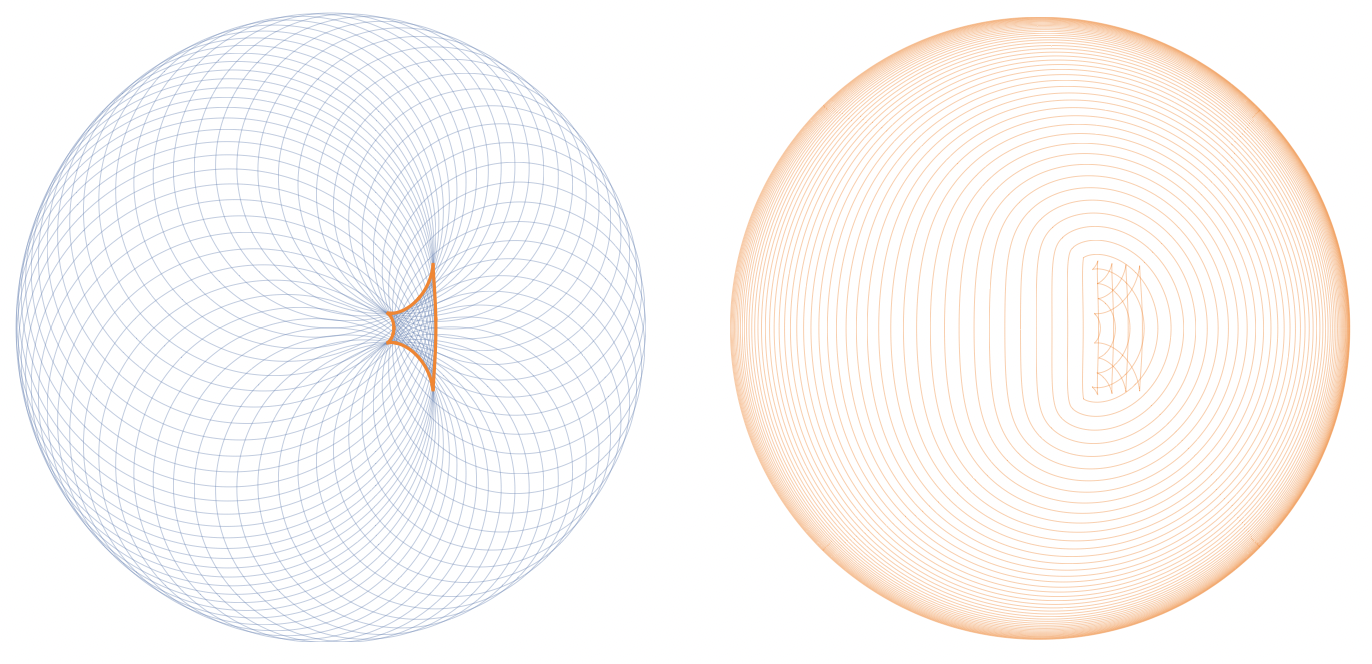}
    \end{center}
    Note that $\Ep_h$ and $\Ep_{\hat h}$ both have non-immersed points as required by Proposition~\ref{prop:non-immersed}.
\end{ex}

\begin{ex}
    Let $h=\exp(\cos^2(\theta/2))\, \dd \theta$. The total length of this metric is $\approx 8.19$, in particular it is greater than $2\pi$.
    \begin{center}
        \includegraphics[width=0.76\textwidth]{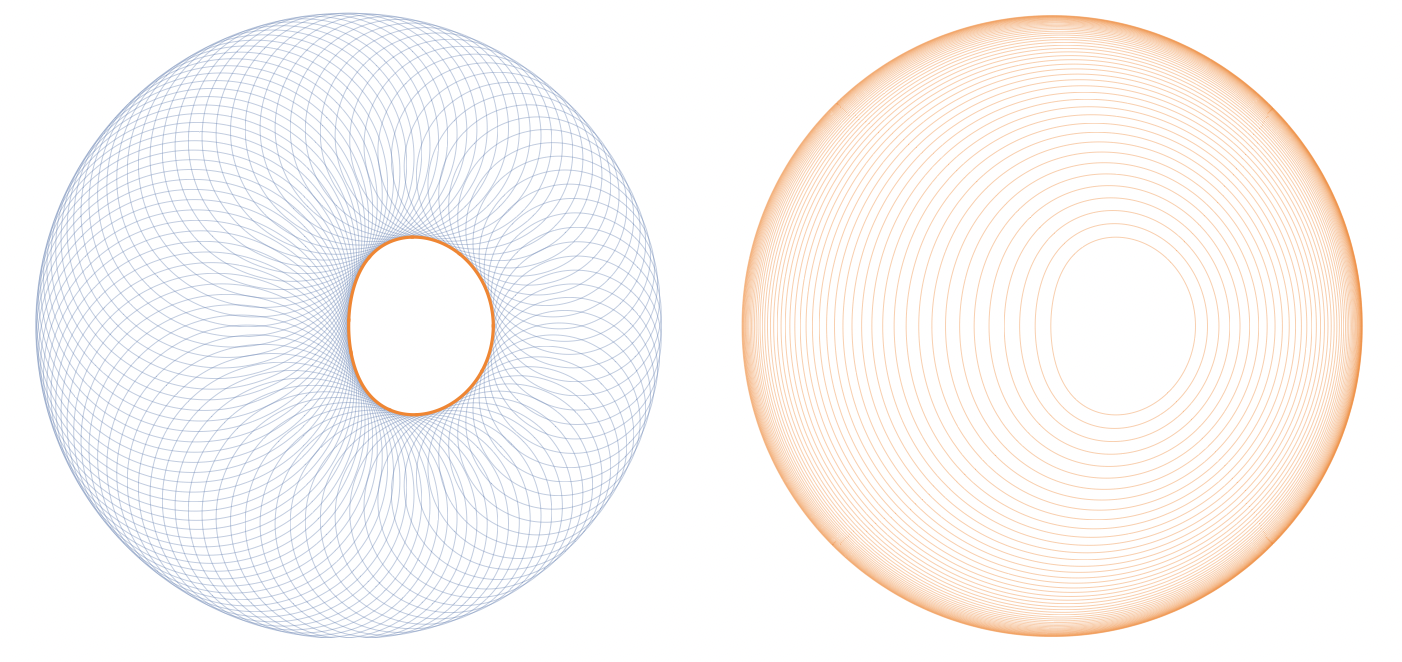}
    \end{center}
\end{ex}

\begin{ex}\label{ex:z3_conjugated}
    Let $h=6\sin^2(\theta)(5+3\cos(2\theta))^{-1}\, \dd \theta = |\phi'(\theta)| \dd \theta$, where $\phi(z) = (1 + 3 z^2)/(3 z + z^3)$ for $z\in \m S^1$. In particular $\phi=\mathfrak{c}^{-1}\circ f \circ \mathfrak{c}$ where $f:\m R\to \m R:x\mapsto x^3$ and $\mathfrak{c}(z) = \ii \frac{1-z}{1+z}$ is the Cayley map. Note that $\int h = 2\pi$, but $\phi$ is not a diffeomorphism and $h$ is degenerate at $-1, 1$. 
    The Epstein curve $\Ep_h$ contains the points $1$ and $-1$, has infinite length, and encloses infinite area.
    \begin{center}
        \includegraphics[width=0.39\textwidth]{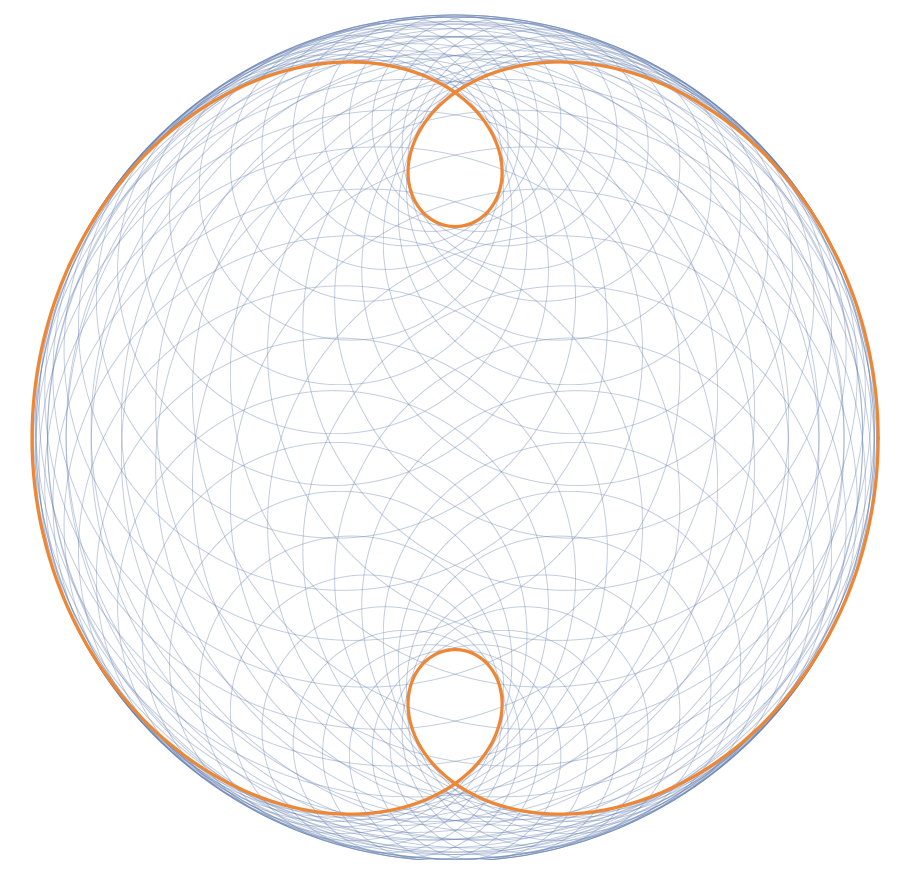}
    \end{center}
    For $a\in[0,1]$, we can define the metrics $h^a=((1-a)6\sin^2(\theta)(5+3\cos(2\theta))^{-1} +a)\dd \theta$. These all have $\int h^a=2\pi$ and are non-degenerate for $a>0$. Below, we plot the horocycles and Epstein curves for $h^a$ for $a=0.1,0.3,0.5,0.7.$
    \begin{center}
        \includegraphics[width=\textwidth]{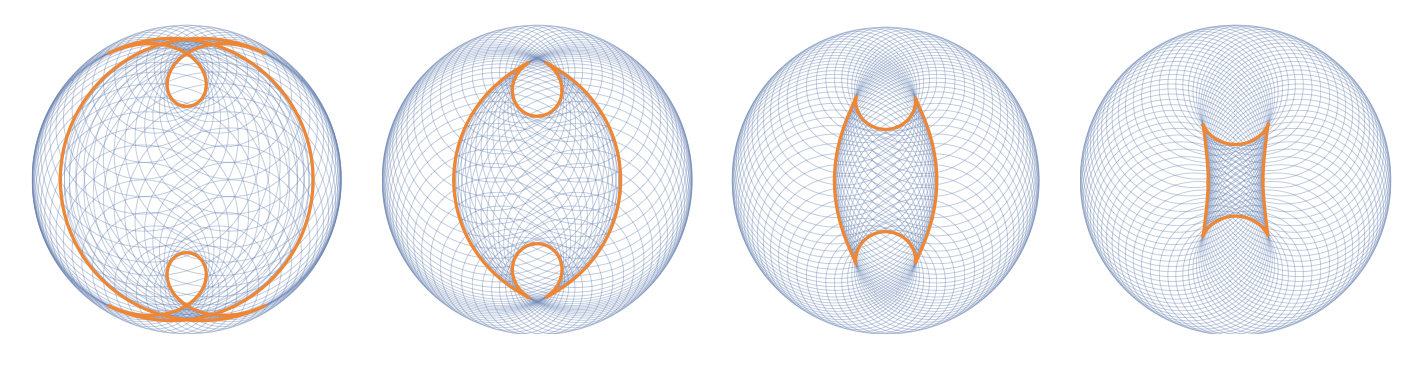}
    \end{center}
\end{ex}

For any $C^2$ metric $h=\ee^{\sigma(\theta)} \dd \theta$, these pictures are generated by plotting in Mathematica with the formulas from Section~\ref{subsec:eps_1}. The first gives the equation for the horocycle $H_{\ee^{\ii \theta}}$ as a function of $x\in[0,2\pi]$, and the second gives the equation for $\Ep_h$. 

\begin{verbatim}
  horocycle[\[Sigma]_, \[Theta]_] :=  
 E^(\[Sigma])/(1 + E^(\[Sigma])) E^(I \[Theta])  + 
  1/(1 + E^(\[Sigma])) E^(I x)
\end{verbatim}

\begin{verbatim}
epstein[\[Sigma]_, \[Theta]_] := (D[\[Sigma], \[Theta]]^2 + 
      E^(2 \[Sigma]) - 
      1)/(D[\[Sigma], \[Theta]]^2 + (E^(\[Sigma]) + 
         1)^2) E^(I \[Theta]) + 
  2 D[\[Sigma], \[Theta]]/(D[\[Sigma], \[Theta]]^2 + (E^(\[Sigma]) + 
         1)^2) I E^(I \[Theta])
\end{verbatim}

To plot, e.g., $\Ep_h$ (in orange) and the horocycles at $\ee^{\ii \theta}$ for $\theta$ a multiple of $\pi/40$ (in blue), one can plug the appropriate $\sigma(\theta)$ into the following. 

\begin{verbatim}
Show[{ParametricPlot[#, {x, 0, 2 Pi}, PlotRange -> {-1, 1}, 
     PlotStyle -> Thin, Axes -> False] & /@ 
   Table[ReIm[horocycle[\[Sigma], \[Theta]]], {\[Theta], 0, 2 Pi, 
     Pi/40}], 
  ParametricPlot[
   Evaluate[ReIm[epstein[\[Sigma], \[Theta]]]], {\[Theta], 0, 2 Pi}, 
   Axes -> False, PlotStyle -> Orange]}]
\end{verbatim}

The pictures of Epstein curves $(\Ep_{h_t})_t$ for scaled metrics $h_t = \ee^t \ee^{\sigma} \, \dd \theta$ can be made similarly by plotting the Epstein curves for $\sigma(\theta) + t$, e.g., for $t\in [0,5]$ in increments of $0.1$. Note that removing the ``Show'' function at the beginning will generate a single plot for each Epstein curve instead of plotting them on top of each other in one picture.

\begin{verbatim}
    Show[Table[ParametricPlot[Evaluate[
    ReIm[epstein[\[Sigma] + t, \[Theta]]]], {\[Theta],0, 2 Pi},
    Axes -> False, PlotStyle -> {Orange, Thin}, 
    PlotRange -> {-1, 1}], {t, 0, 5, 0.1}]]
\end{verbatim}

\bibliographystyle{plain}
\bibliography{ref}

\end{document}